\documentclass[11pt]{article}
\usepackage{fullpage}
\usepackage{url}
\usepackage[colorlinks=true,linkcolor=Emerald,citecolor=RoyalBlue]{hyperref}
\usepackage{amsmath,nccmath, amsfonts,amsthm,multirow,mdframed,amssymb}
\usepackage{thmtools,thm-restate}

\usepackage{times}
\usepackage{tikz}
\usepackage{paralist}
\usepackage{graphicx}
\usepackage{floatpag}
\usepackage{float}
\DeclareMathOperator{\supp}{supp}
\usepackage{bbold}
\usepackage{enumitem}
\usepackage{subcaption} 
\usepackage{soul}
\usepackage{calc}
\usepackage{upgreek}
\usepackage{mathtools}
\usepackage{color}
\usepackage[dvipsnames]{xcolor}
\usepackage{xspace}
\usepackage{tcolorbox}
\usepackage{stackengine}
\usepackage{tocloft}
\usepackage{physics}
\usepackage{cleveref}
\usepackage{cleveref-forward}
\AddToHook{cmd/appendix/before}{\crefalias{section}{appendix}}

\usepackage{algorithm}
\usepackage{algorithmic}

\usepackage{authblk}

\newcommand{\symbolfootnote}[2]{%
    \begingroup
    \renewcommand{\thefootnote}{\fnsymbol{footnote}}%
    \footnotetext[#1]{#2}%
    \endgroup
}

\stackMath
\allowdisplaybreaks

\usepackage{xcolor}
\newtheorem{theorem}{Theorem}
\newtheorem*{theorem*}{Theorem}
\newtheorem{lemma}[theorem]{Lemma}
\newtheorem{corollary}[theorem]{Corollary}
\newtheorem{definition}{Definition}[section]

\newtheorem{fact}{Fact}[section]

\newtheorem{remark}{Remark}[section]

\newcommand{\set}[1]{\left\{#1\right\}}

\newcommand{\poly}{\operatorname{poly}}
\newcommand{\pr}[1]{\mathrm{Pr}\!\left[#1\right]}
\newcommand{\lr}[1]{\left(#1\right)}
\newcommand{\E}[1]{\mathbb{E}\!\left[#1\right]}
\newcommand{\EE}{\mathbb{E}}

\newcommand{\C}{\mathbb{C}}

\newcommand{\R}{\mathbb{R}}
\newcommand{\cA}{\mathcal{A}}

\newcommand{\cD}{\mathcal{D}}

\newcommand{\cF}{\mathcal{F}}
\newcommand{\cG}{\mathcal{G}}
\newcommand{\cH}{\mathcal{H}}

\newcommand{\cP}{\mathcal{P}}

\newcommand{\cT}{\mathcal{T}}
\newcommand{\cU}{\mathcal{U}}

\newcommand{\Span}{\mathrm{span}}

\newcommand{\sgn}{\operatorname{sgn}}

\newcommand{\arctanh}{\operatorname{arctanh}}
\newcommand{\STAB}{\mathrm{STAB}}
\newcommand{\StabCone}{\mathrm{StabCone}}
\newcommand{\wh}[1]{\widehat{#1}}
\newcommand{\wt}[1]{\widetilde{#1}}
\newcommand{\ad}{\operatorname{ad}}
\newcommand{\ee}{\mathrm{e}}

\newcommand{\Sep}{\operatorname{Sep}}

\title{
When quantum thermal states look classical%
\thanks{We use ``when'' in the sense of \emph{imaginary} time.}}

\author[1,$\dagger$]{Harald Putterman}
\author[2,$\dagger$]{Alexander Zlokapa}
\author[1,3]{Jordan Cotler}

\affil[1]{%
Department of Physics,
Harvard University,
Cambridge, MA 02138, USA}

\affil[2]{%
Center for Theoretical Physics---a Leinweber Institute,
MIT}

\affil[3]{%
Harvard Quantum Initiative,
60 Oxford St.,
Cambridge, MA 02138, USA}

\date{}

\begin{document}

\maketitle

\symbolfootnote{2}{These authors contributed equally.}

\thispagestyle{empty}
\begin{abstract}
At high temperature, quantum Gibbs states retain several classical features of the maximally mixed state: the absence of entanglement, the absence of magic, analyticity of the partition function, and efficient classical estimation of thermal observables. We prove new and sharp bounds showing that these features persist down to finite temperatures independent of system size, but fail at distinct inverse-temperature scales, forming a hierarchy of classical-to-quantum transitions. Our bounds apply to long-range Pauli Hamiltonians of locality $k$ and local strength $s$, meaning that the total absolute strength of all terms acting on any one qubit is at most $s$. This class includes power-law interactions with summable tails. Our main results are as follows.
\begin{itemize}[leftmargin=1em]
    \item Despite long-range interactions, the separability transition still occurs at a constant temperature $\beta_{\rm sep} = \Theta(1/(sk))$, which is tight even among commuting Hamiltonians. This resolves an open question of~\cite{rouze2025efficient}; we also make the original result~\cite{bakshi2024high} tight for bounded-degree Hamiltonians.
    \item At the same scale, we give a polynomial-time classical algorithm that prepares the Gibbs state as a mixture of pure product stabilizer states. This holds below temperatures at which prior quasipolynomial-time classical algorithms~\cite{bakshi2024high} and quantum Glauber dynamics~\cite{rouze2026optimal,bergamaschi2026fast} succeed.
    \item For Hamiltonians $\epsilon$-close to commuting, the Gibbs state can be represented as a mixture of stabilizer states up to $\beta_{\rm stab} = \Theta(\log(1/\epsilon)/(sk))$, parametrically below the separability scale.
    \item The infinite-temperature phase persists to even colder temperatures due to a zero-free disk of radius $|z|=\Theta(1/(s\sqrt{k}))$. For geometrically local Hamiltonians with bounded interaction range, this implies exponential decay of correlations between any two observables, closing an open question of \cite{harrow2020classical}.
    \item In this same regime, we give polynomial-time classical algorithms for estimating $\log Z(\beta)$ and local thermal expectations, ruling out the proposed superpolynomial advantage for this task in long-range Pauli systems~\cite{sanchez2025high}.
\end{itemize}
Our proofs leverage new cluster expansions to evaluate the partition function of the Gibbs state, of the post-selected Gibbs state after measuring qubits, and of Gibbs-like quantities obtained by moving to the interaction picture. These produce polynomial-time algorithms via a new randomized approach that samples polymers in the cluster expansion.
\end{abstract}

\newpage
\tableofcontents
\newpage

\section{Introduction}
\label{sec:intro}

At infinite temperature, a quantum system is described by the maximally mixed state.  In this limit there is little distinction between quantum and classical statistical mechanics.  For instance, the infinite temperature state is unentangled, it is a mixture of stabilizer states, and the usual computational questions about thermal observables are trivial.  One may therefore ask in what ways a quantum system continues to be classical at finite temperature.  A striking feature of high-temperature quantum systems is that many notions of classicality persist uncompromised down to finite temperatures.  As the system is cooled further, different forms of classicality fail at sharply defined, but generally distinct, transition scales.  In this work, we establish a hierarchy of such classical-to-quantum transitions, both sharpening and generalizing bounds on transitions from previous work as well as introducing new transitions.

For a Hamiltonian $H$, the finite-temperature Gibbs state is
$\rho_\beta = e^{-\beta H}/\Tr(e^{-\beta H})$, where $\beta = 1/T$ is the inverse temperature.  For small $\beta$ (equivalently, large $T$), the Gibbs state is evidently close to the maximally mixed state.  There are several different ways of characterizing classicality, leading to different kinds of classical-to-quantum transitions.  A natural characterization is for a state to be unentangled, or \textit{separable}, namely a classical mixture of product states.  The work of~\cite{bakshi2024high} showed that at sufficiently high temperatures, $\rho_\beta$ is completely unentangled, while at a finite temperature independent of the system size the state can sharply transition to being entangled.  However, that work did not tightly characterize the transition temperature in terms of the interaction strength and locality of the Hamiltonian, and its results applied only to local Hamiltonians, not to the physically salient setting of long-range interactions. Both of these shortcomings left open the possibility of quantum advantage at high temperature: quantum Gibbs samplers were shown to mix in polynomial time at temperatures lower than those achieved by known polynomial-time classical algorithms \cite{bakshi2024high,rouze2026optimal,bergamaschi2026fast}. In the case of long-range interactions, it also remained conceptually unclear whether a sudden death of entanglement occurred; indeed, \cite{rouze2025efficient,rouze2026optimal,bergamaschi2026fast} explicitly asked if long-range systems have a death of entanglement transition at constant temperature, motivating the conjecture that such Hamiltonians admitted superpolynomial quantum advantage~\cite{sanchez2025high}.

We resolve this question and identify tight bounds on the separability temperature given long-range interactions. We also give a polynomial-time classical algorithm that prepares the thermal state as a mixture of pure product stabilizer states all the way to the separability transition temperature (up to constants), ruling out superpolynomial quantum advantage. Our results hold at asymptotically lower temperatures than known quantum Gibbs samplers~\cite{rouze2026optimal,bergamaschi2026fast}, at lower temperatures than the quasipolynomial-time classical algorithm of~\cite{bakshi2024high}, and even when interactions are sufficiently long-range for Lieb-Robinson bounds to no longer appear quasilocal~\cite{kuwahara2020strictly}.

The separable Gibbs states above are classical in a resource-theoretic sense: because there is no entanglement, the Gibbs state can be written as a probability distribution over product states, which are classically succinct to represent. A Gibbs state may similarly be considered classical if it can be expressed as a mixture of \emph{stabilizer} states. These states lack magic and therefore cannot supply the resource needed to promote Clifford operations to universal quantum computation.  Nonetheless, stabilizer states may be highly entangled, and indeed commuting Pauli Hamiltonians have stabilizer Gibbs states at every temperature.  We show that for Hamiltonians which are $\epsilon$-close to commuting, the corresponding classical-to-quantum transition occurs at a parametrically colder scale than the separability transition. The proof of this result departs conceptually from~\cite{bakshi2024high}, which produces states that are both product and stabilizer. We rework the methods of~\cite{bakshi2024high} in the interaction picture and adopt a new pinning strategy that preserves stabilizerness but not separability.

Finally, we consider more \emph{physical} grounds for referring to a Gibbs state as classical. The thermodynamic notion of an infinite-temperature phase has both physical implications (decay of correlations) and computational implications (estimating thermal expectations). This phase is defined in terms of the partition function $Z(z) = \Tr(e^{-zH})$, which defines the free energy and is related to thermal expectations. Lee--Yang theory identifies phase transitions with complex zeros of $Z(z)$ approaching the real axis.  Thus a zero-free region around $z=0$ means that the system remains in the analytic phase connected to infinite temperature, allowing thermodynamic data to be accessed by convergent high-temperature expansions.  We determine the largest universal zero-free disk around $z=0$ for quantum Pauli Hamiltonians, thereby identifying the scale at which the infinite-temperature phase can first break down. Notably, this temperature is asymptotically \emph{colder} than the separability transition.

Although defined in terms of complex temperatures, the high-temperature phase is physically interpretable. Throughout the entire phase, we show that for geometrically local Hamiltonians with bounded interaction range, correlations between any pair of observables decay exponentially. This resolves an open question of~\cite{harrow2020classical}, which asked if the result holds in dimensions above 1D. Indeed, a computational property morally similar to correlation decay holds even in the absence of geometric locality or bounded interaction range. We show that a related truncation leads to a polynomial-time classical algorithm for estimating the partition function and local thermal expectations. This improves on the quasipolynomial-time classical algorithms of~\cite{sanchez2025high} for Pauli Hamiltonians with long-range interactions and reaches colder temperatures.

To state our results quantitatively, we need a Hamiltonian class in which the temperature scales remain meaningful as the system size grows.  We use the standard thermodynamic normalization, in which the Hamiltonian has bounded energy density, so that $\norm{H}_{\rm op} = \Theta(n)$ for an $n$-qubit system.  Previous high-temperature results for quantum Gibbs states often assume bounded degree, where each qubit participates in only a bounded number of local terms~\cite{bakshi2024high,harrow2020classical}.  This is a useful mathematical model, but it excludes many natural systems.  Power-law and exponentially decaying interactions generally couple every site to every other site, even when the total interaction strength seen by any fixed site remains bounded.  Since the questions we ask are exact questions about separability, stabilizerness, and complex zeros, it is not enough to truncate these long-range tails and appeal to a bounded-degree approximation.  Indeed, in the context of high-temperature Gibbs states, it was previously unclear whether constant-temperature classicality transitions survive in genuinely long-range systems~\cite{rouze2025efficient,sanchez2025high,rouze2026optimal,bergamaschi2026fast}.

We therefore work with Pauli Hamiltonians of bounded local interaction strength.  Let
\begin{align}
H = \sum_{a \in \mathcal{A}} c_a P_a \,,
\end{align}
where $\mathcal{A}$ indexes the interaction terms, $c_a \in \mathbb{R}$, and each $P_a$ is a Pauli string.  We assume that
\begin{align}
|\supp(P_a)|\le k \quad \text{for all } a\in\mathcal A\,, \qquad \max_{x\in[n]}\sum_{a\,:\,x\,\in\,\supp(P_a)} |c_a| \le s\,.
\end{align}
We call such an $H$ an $(s,k)$-long-range Pauli Hamiltonian. This condition permits a qubit to interact with arbitrarily many other qubits, but requires the sum of the absolute strengths of all terms involving that qubit to be bounded by $s$.  For example, for pair interactions on a $D$-dimensional lattice whose strength decays as $(1 + |i-j|)^{-\alpha}$, the parameter $s$ remains bounded whenever $\alpha > D$.  Thus the model includes the long-range tails present in many physical Hamiltonians, rather than only their bounded-degree truncations.  In particular, this includes interactions longer-ranged than those for which Lieb-Robinson bounds retain an essentially local form~\cite{kuwahara2020strictly}. \Cref{fig:main_figure} depicts the thresholds discussed above in terms of parameters $s$ and $k$, and in terms of $\epsilon$ that measures how close a Hamiltonian is to commuting; we postpone the formal definition of $\epsilon$ to \Cref{def:epscomm}. We proceed to describe our results more formally in the remaining of the introduction, before providing more detailed proof sketches in \Cref{sec:overview} and the full proofs in subsequent sections.

In the rest of \Cref{sec:intro} we provide a high-level overview of our results and discuss open questions. In \Cref{sec:overview} we present our results more formally and outline our proofs. Our main results for the death of entanglement, death of magic, classical estimation of thermal expectations, and classical preparation of separable Gibbs states are proven in \Cref{sec:entanglement,sec:magic,sec:expect,sec:prep}. Finally, we show these bounds are tight in \Cref{sec:upper} and show a more general separability result for nonlocal Hamiltonians in \Cref{sec:nonlocal}.

\begin{figure}
    \centering
    \includegraphics[width=1\linewidth]{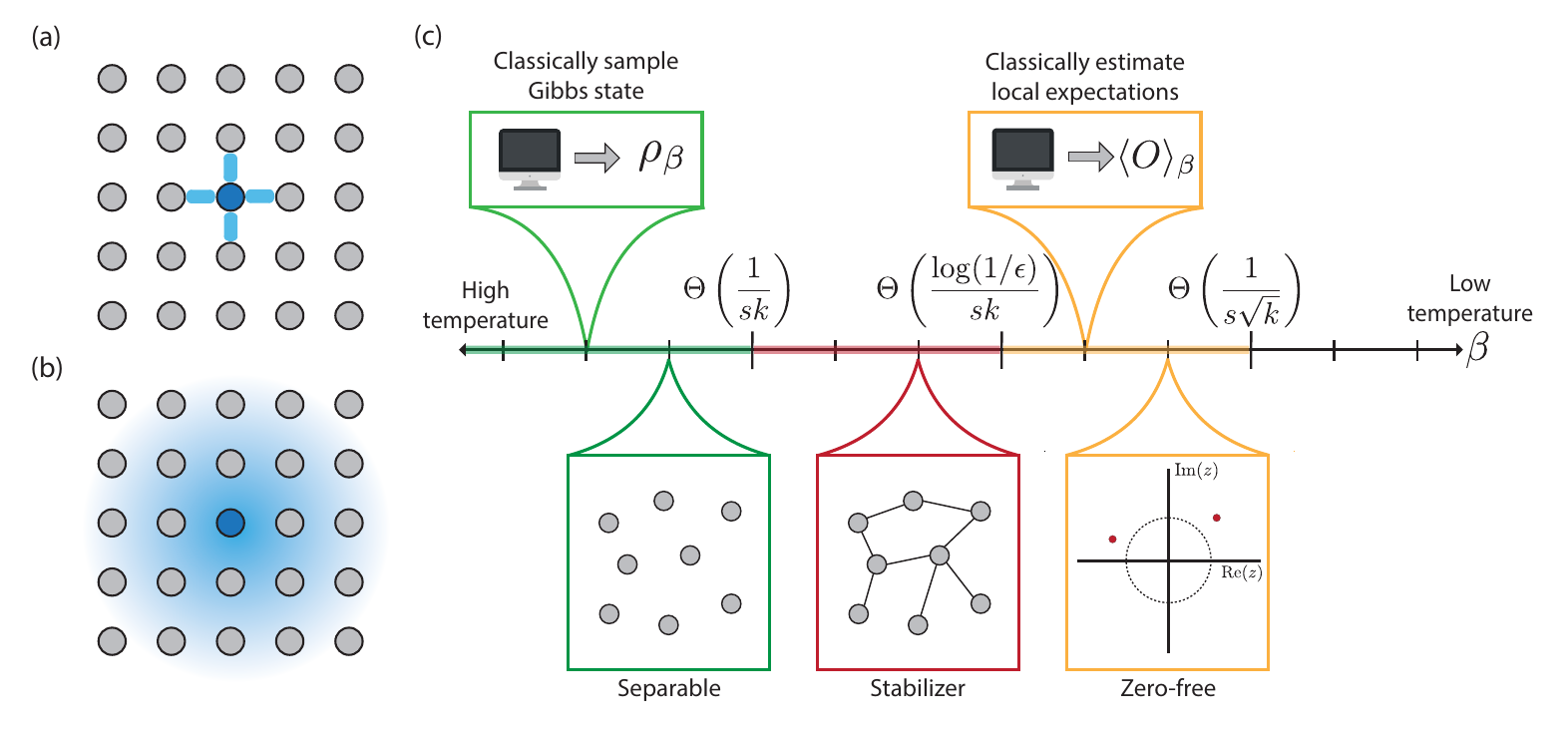}
    \caption{Properties of long-range interacting systems. (a) Depiction of a local Hamiltonian with bounded degree, where each site only interacts with a finite number of sites. (b) Schematic of a long-range interacting system where each site can interact with all of the sites in the lattice. (c) Inverse temperatures $\beta$ at which all long-range Hamiltonians are guaranteed to be physically or computationally classical.}
    \label{fig:main_figure}
\end{figure}

\subsection{Classical representations of quantum Gibbs states}

We describe our results showing separability and stabilizerness. A separable state can be said to be ``classical'' in a representational sense: it is described by a probabilistic mixture of product states, each of which possesses a succinct classical description. Given access to this representation, one can efficiently estimate thermal expectations or even sample measurement outcomes of the Gibbs state. Separable states are only one possible family of states that provide such a classical representation of the Gibbs state. Below, we also study \emph{stabilizer} states and show a ``death of magic'' transition. This naturally occurs at a colder temperature than the death of entanglement: for example, commuting Pauli Hamiltonians always have stabilizer thermal states, although their eigenstates may be entangled (e.g., a stabilizer code state). We will tightly bound the temperature of stabilizerness in terms of how commuting a Hamiltonian is, leading to a distinct transition temperature from separability.

\begin{theorem*}[Classical representations of Gibbs states, informal]
    Let $\cH(s,k;\epsilon)$ denote the family of $(s,k)$-long-range Pauli Hamiltonians that are $\epsilon$-close to commuting (\Cref{def:epscomm}).
    \begin{itemize}
        \item The death of entanglement transition occurs at $\beta_{\rm sep} = \Theta(1/sk)$, independent of $\epsilon$.
        \item The death of magic transition occurs at $\beta_{\rm stab} = \Theta(\log(1/\epsilon)/sk)$.
    \end{itemize}
\end{theorem*}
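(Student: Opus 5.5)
Each bullet splits into a lower bound (the Gibbs state is separable or stabilizer for all $\beta$ below the scale) and a matching upper bound (some Hamiltonian in the class is entangled or magical just above it). The constructions below are what I expect the paper to formalize in the sections on entanglement, magic and tightness.

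\textbf{Separability lower bound.} I would follow the pinning strategy of \cite{bakshi2024high}, redone with a cluster expansion that charges weight per qubit rather than per term, so that only the local strength $s$ enters and never the degree.
\begin{enumerate}
\item Pick a qubit $x$ and measure it in a random Pauli basis $\sigma\in\{X,Y,Z\}$.
\item Write the unnormalized post-measurement operator $\langle \pm_\sigma|e^{-\beta H}|\pm_\sigma\rangle$ via its Taylor expansion in $\beta$. Group the Pauli words into connected polymers, with weight bounded by $\prod |c_a|\beta$.
\item Show that for $\beta\le c/(sk)$ the probabilities of the outcomes $(\sigma,\pm)$ can be chosen to sum to one. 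The pinned operator on the remaining qubits must stay a positive multiple of a Gibbs-like state $e^{-\beta H'}$ for a modified Hamiltonian $H'$ still in the $(s',k)$ class with $s'\le s(1+O(\beta sk))$.
\item Iterate over all $n$ qubits. This yields a mixture of pure product stabilizer states.
\end{enumerate}
The key estimate is convergence of the polymer sum. Under a Koteck\'y--Preiss criterion, each qubit of a polymer can start at most $k$ new terms, each weighted by $\beta|c_a|$ and summing to $\le \beta s$, so the series converges when $\beta sk e<1$. For the tightness direction, I would take a commuting $k$-local example such as $s\cdot Z^{\otimes k}$-type stabilizer terms (e.g.\ a GHZ- or toric-code-like stabilizer on $k$ qubits with strength $s$). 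For $\beta\gtrsim 1/(sk)$, an entanglement witness, e.g.\ a fidelity with the code state exceeding the separable bound, certifies entanglement.

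\textbf{Stabilizerness at $\beta_{\rm stab}$.} I would pass to the interaction picture. Split $H=H_c+V$ with $H_c$ commuting and $V$ of size $\epsilon$ in the sense of \Cref{def:epscomm}, and write
\[
e^{-\beta H}=e^{-\beta H_c}\,\mathcal{T}\exp\!\Big(-\int_0^\beta V(\tau)\,d\tau\Big).
\]
Here $V(\tau)=e^{\tau H_c}Ve^{-\tau H_c}$ stays a combination of Paulis, with growth $e^{O(\tau s k)}$ from the conjugation. The cluster expansion in $V$ then has effective activity $\epsilon\, e^{O(\beta sk)}\cdot \beta sk$. It converges when $\beta\le c\log(1/\epsilon)/(sk)$.

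The pinning step changes. Instead of measuring single qubits in product bases, I would measure stabilizer generators or Pauli observables of $H_c$, projecting onto stabilizer subspaces. This keeps each branch a stabilizer state and the Clifford structure intact, but gives up product structure. The final states are stabilizer states, so $\rho_\beta$ is a stabilizer mixture.

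For tightness, I would use a commuting stabilizer Hamiltonian plus an $\epsilon$ noncommuting perturbation, for which a magic monotone (e.g.\ stabilizer fidelity or a robustness witness) becomes nonzero once $e^{-\Theta(\beta sk)}\lesssim\epsilon$.

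\textbf{Main obstacle.} I expect it to be the pinning step in the long-range setting: proving that the conditional operator after each measurement remains positive and of the same controlled form, uniformly without a degree bound. What I would try first is tracking the effective Hamiltonian $\log$ of the pinned state by cluster expansion and proving a closed invariant on its local strength.
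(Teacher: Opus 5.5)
Your separability lower bound has a genuine gap at exactly the step you flag as the main obstacle. The closure claim in step~3 is false. Take the commuting $H=-\sum_y J_yZ_xZ_y$. Then $\langle +_X|e^{-\beta H}|+_X\rangle=\cosh(\beta\sum_y J_yZ_y)$, and its logarithm contains Pauli terms on arbitrarily large subsets of the (unboundedly many) neighbors of $x$. So $k$-locality is lost after a single pinning. For noncommuting $H$, even defining and controlling $H'=-\beta^{-1}\log\langle\pm_\sigma|e^{-\beta H}|\pm_\sigma\rangle$ needs an operator-valued expansion of a logarithm. You also give no invariant that stops $s'\le s(1+O(\beta sk))$ from compounding over $n$ steps.

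Moreover, a random-basis measurement is not a decomposition of $\rho$: it shrinks the Bloch vector of qubit $x$ by $1/3$. Reproducing $\rho$ requires positivity of operators such as $\Tr_x[(I/3\pm\sigma_x)\rho]$. A Koteck\'y--Preiss bound on a polymer series only gives nonvanishing and analyticity, not operator positivity.

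The paper never normalizes or tracks an effective Hamiltonian. It keeps the unnormalized invariant $e^{-\beta H}=\mathbb{E}\bigl[e^{-\beta H^{(S)}/2}\bigotimes_l(I+c_lX_l)\,e^{-\beta H^{(S)}/2}\bigr]$. It removes all terms touching a chosen term $a^*$ by writing the propagator $e^{-\frac{\beta}{2}H^{(S)}}e^{\frac{\beta}{2}(H^{(S)}-H^{(S)}_{(a^*)})}=\mathbb{E}[I+bE]$ with $|b|\le(3\beta sk)^t$. This comes from the recursion $f_{t+1}=-[H,f_t]-f_tH_{(a^*)}$, charging each new term to a site of the cluster so only $s$ enters. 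Positivity is postponed to the very end. The potential $|c_l|\le q^{|u_S(X_l)|}$ is maintained by choosing $a^*$ among the terms already composing the current monomial; it yields $|c_l|\le 1$, and \Cref{lem:sep-pauli-factor} finishes. That adaptive choice is also what gives $1/(sk)$ rather than a worse $k$-dependence.

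The matching examples also need different ideas.
\begin{itemize}
\item Diagonal $Z^{\otimes k}$-type terms give a classical Gibbs state, which is separable at every temperature.
\item A fidelity witness with a code state fires far too late. For the GHZ stabilizer Hamiltonian $-\sum_jJ_jS_j$ on $k$ qubits, the Gibbs fidelity is $2^{-k}\prod_j(1+\tanh\beta J_j)$. This beats the separable value $1/2$ only once $\beta s\gtrsim\log k$.
\end{itemize}
The paper's example (\Cref{thm:sep-upper}) uses $m\approx k/2$ commuting terms $Q_i=P_{i,1}P_{i,2}$ whose halves pairwise anticommute on each side of a bipartition. Cauchy--Schwarz plus $\sum_i\langle C_i\rangle^2\le 1$ for anticommuting $C_i$ gives $\sum_i\langle Q_i\rangle\le 1$ on separable states. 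The Gibbs state of $-\frac{s}{2}\sum_iQ_i$ instead gives $m\tanh(\beta s/2)$, which exceeds $1$ for $\beta>\frac{2}{s}\operatorname{arctanh}\frac{1}{m}\approx\frac{4}{sk}$.

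Your magic lower bound repeats the positivity gap. Projecting onto eigenspaces of generators of $H_c$ discards the off-diagonal blocks created by $V$, so it is not a decomposition of $e^{-\beta H}$. The needed mechanism is \Cref{lem:free-slack}: $e^{-\beta H_0/2}\prod_j(I+\lambda_jX_j)e^{-\beta H_0/2}$ is a stabilizer mixture when the $X_j$ are pairwise incompatible and $|\lambda_j|\le e^{-\beta\kappa(X_j)}$, since then $e^{-\beta H_j}-|\lambda_j|I\succeq 0$. This slack absorbs the $e^{\beta\kappa}$ growth of $Q_\mu(\tau)$ in the interaction picture, and it is where $\log(1/\epsilon)$ comes from. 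The paper reaches this form by pinning perturbative terms $Q_\mu$ one at a time, not qubits or generators, using the Dyson-expanded $\mathcal{U}_S$ and the potential $\Phi_S$.

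Your scale heuristic for $\beta_{\rm stab}$ is right. For tightness, the paper makes it concrete by reducing via stabilizer operations to the one-qubit Gibbs state of $-(w_{\rm free}Z+w_{\rm pert}X)$. Its Bloch vector leaves the octahedron once $e^{-2\beta w_{\rm free}}\lesssim\epsilon$.
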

\noindent
As in~\cite{bakshi2024high}, these transition temperatures are guaranteed for all Hamiltonians in the family $\cH$; e.g., there exists a constant $c > 0$ such that for \emph{all} $(s,k)$-long-range Pauli Hamiltonians, the Gibbs state is separable for all $\beta \leq c/(sk)$. Unlike~\cite{bakshi2024high}, our thresholds are also \emph{tight}: there exists another constant $C > c$ such that for some $H \in \cH$, the Gibbs state is entangled for all $\beta \geq C/(sk)$ for all sufficiently large $s, k$. Our results for $\beta_{\rm stab}$ are similarly tight up to constants for asymptotically large $s, k, 1/\epsilon$. Examples of models that are $\epsilon$-close to commuting include an Ising model with a transverse field of strength $\epsilon$, or a toric code with local Pauli perturbations of strength $\epsilon$.

Compared to the proof of~\cite{bakshi2024high}, for which the bounded-degree assumption is essential, our separability result requires a different strategy for iteratively pinning qubits to produce product states. Our modification also leads to a tight threshold for the death of entanglement for the Hamiltonian family considered in the original separability work (which does not admit long-range interactions); further modifications also lead to a separability threshold for \emph{nonlocal} Hamiltonians as well.

\begin{remark}[Low-intersection Hamiltonians]
    \cite{bakshi2024high} considered a different family of \emph{low-intersection} Hamiltonians defined in terms of a dual degree $\mathfrak d$ and locality $k$.\footnote{The dual degree $\mathfrak d$ is the maximum degree of the graph defined with vertex set $\cA$ and edges between $a, b\in\cA$ such that $\supp(P_a) \cap \supp(P_b) \neq \emptyset$.} In these parameters, we show in \Cref{thm:sep}(b) the tight result $\beta_{\rm sep} = \Theta(1/\mathfrak d)$, which improves upon the result of \cite{bakshi2024high} that $\rho_\beta$ is separable for all $\beta \lesssim 1/\mathfrak d k$.
\end{remark}

\begin{remark}[Nonlocal Hamiltonians]
    We show a tight separability transition for $s$-strength long-range Pauli Hamiltonians that allow nonlocal terms with sufficiently small strength. In \Cref{thm:sep}(c), we show that $\rho_\beta$ is separable at constant temperature if $\sum_{a\ni x}|c_a|\exp[\Theta(|\supp(P_a)|)]$ is bounded. Conversely, for any $\sum_{a\ni x}|c_a|\poly(|\supp(P_a)|)$, the Gibbs state can be entangled at all constant temperatures.
\end{remark}
\noindent
To establish a death of magic temperature that is different from the death of entanglement, we require a conceptually different proof. The technique of~\cite{bakshi2024high} produced separable states that were \emph{also} stabilizer states, whereas we now wish to avoid showing separability. We move to the interaction picture to isolate the part of the Hamiltonian that is far from commuting; we then pin entire noncommuting terms instead of individual qubits.

We refer to the death of entanglement and death of magic as ``sharp'' transitions because the Gibbs state suddenly goes from having an exact classical representation (above some temperature) to not having such a representation (below that temperature). However, these are not \emph{phase} transitions in the rigorous sense provided by thermodynamics, and for instance cannot be detected by local parameters. We turn to thermodynamic phase transitions next.

\subsection{Infinite-temperature phase and decay of correlations}

In the limit of large system size, a phase transition in $\beta$ is defined as a non-analyticity of the free energy $-\frac{1}{\beta}\log Z(\beta)$, where $Z(\beta) = \Tr(e^{-\beta H})$ denotes the partition function. At first glance, such a non-analyticity seems impossible: since $Z(\beta)$ is a sum of positive numbers $e^{-\beta E_i}$ for eigenvalues $E_i$ of the Hamiltonian, the quantity seems analytic. A phase transition is defined by the theory of Lee-Yang or Fisher zeros by considering the values of complex $\beta$ for which $Z(\beta) = 0$. If these complex zeros approach the real-$\beta$ axis in the thermodynamic limit $n\to\infty$, a non-analyticity appears (\Cref{fig:zf}).

Conversely, if there are no complex zeros for any $|\beta| \leq \beta_*$, then the thermal state is said to be in the \emph{infinite-temperature phase} for all real $0 \leq \beta < \beta_*$. This defines another notion of ``classical'': for any $\beta < \beta_*$, one can smoothly interpolate from $\beta H$ to $\beta H'$ by passing through $\beta = 0$ without running into any non-analyticities, even if $H$ is classical and $H'$ is quantum.

\begin{theorem*}[Infinite-temperature phase, informal]
    Let $\cH(s,k)$ denote the family of $(s,k)$-long-range Pauli Hamiltonians. Then the largest zero-free disk of the partition function satisfies
    \begin{align}
        \beta_{\rm phase} = \sup_{\beta \geq 0} \left\{\beta \in \R \,:\, \Tr(e^{-z H}) \neq 0 \text{ for all } |z| < \beta \text{ and every }H \in \cH(s,k)\right\} = \Theta\lr{\frac{1}{s\sqrt k}}.
    \end{align}
\end{theorem*}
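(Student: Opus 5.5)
The statement has two halves. The lower bound says $Z(z)\neq 0$ for $|z|\le c/(s\sqrt k)$ for every $H\in\cH(s,k)$. The upper bound exhibits, for large $s,k$, some $H\in\cH(s,k)$ with a zero at $|z|\le C/(s\sqrt k)$.

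\emph{Lower bound.} I would use a cluster expansion of $\log \Tr(e^{-zH})/2^n$ in which the Pauli terms are the polymer labels. Expanding $e^{-zH}=\sum_m \frac{(-z)^m}{m!}H^m$, a word $P_{a_1}\cdots P_{a_m}$ has nonzero normalized trace only if every qubit is acted on an even number of times with consistent Pauli labels. I would group words by the multiset of terms used and then take connected components of the overlap graph, where $a\sim b$ if their supports intersect. The trace does not factorize over components, because noncommuting terms within a component reorder. It does factorize across disjoint-support components, since $2^{-n}\Tr$ is multiplicative on disjoint tensor factors. This gives a polymer gas in which polymers are connected multisets $\gamma$ of terms, with weights
\[ w(\gamma)=\frac{(-z)^{|\gamma|}}{\gamma!}\sum_{\text{orderings}} 2^{-n}\Tr\Big(\prod P\Big), \]
and compatibility means disjoint support. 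I would then verify the Koteckỳ--Preiss criterion with $a(\gamma)=\theta|\gamma|$ (or $\propto|\supp\gamma|$). This requires bounding $\sum_{\gamma\ni x}|w(\gamma)|e^{a(\gamma)}$ by counting connected multisets containing qubit $x$. Branching over neighbors contributes roughly $s$ per term in weight, times a factor $k$ for the choice of qubit in the new term. Naively this only gives radius $1/(sk)$, which is the separability scale.

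\emph{Key step and main obstacle: gaining $\sqrt k$.} The vanishing-trace constraint must be exploited. Every qubit in $\supp\gamma$ must be covered at least twice, so a polymer of $m$ terms covers at most $mk/2$ qubits. Equivalently, each term must be ``matched'' on its $k$ sites. I would organize the count by first choosing a spanning structure in which each new term attaches through one shared site (cost $\le ks|z|$). I would then note that a nonzero trace forces pairings, and that the $m!$ coming from orderings is compensated by $1/m!$. The $\sqrt k$ should emerge from a second-moment style bound. I would write $2^{-n}\Tr(e^{-zH}) = \mathbb E$ over a random product stabilizer or a Gaussian-like comparison, and use that the trace of a product of Paulis is $\pm1$ or $0$. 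The number of valid closed walks of length $2m$ then behaves like $(sk)^{m}\cdot$(pairings) rather than $(sk)^{2m}$. Taking square roots per term gives $|z|^{2m}(s^2k)^m$, so convergence holds for $|z|\lesssim 1/(s\sqrt k)$. Concretely, I would expand in pairs of terms: the weight per pair is about $|z|^2 s\cdot(sk)$, since the first term costs $s$ at a fixed site and its partner costs $s$ per each of its $k$ sites. This pair-level counting inside the Koteckỳ--Preiss sum is the part I expect to be delicate. Terms appearing an odd number of times (e.g.\ $P_aP_bP_c=\pm I$) must also be handled, by charging each such term to a neighboring one.

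\emph{Upper bound.} I would construct an explicit commuting $H$, so that $Z$ factorizes and its zeros are explicit. For a single block, take $k$-body Paulis on roughly $k$ qubits forming a mutually commuting family of size about $\sqrt{\cdot}$. Alternatively, take $H=\frac{s}{\sqrt{k}}\sum_j Z$-type parity checks arranged so that each qubit lies in $\sim\sqrt k$ terms of strength $s/\sqrt k$. Another option is $H=\lambda\sum_{i}Z_i\cdots$, chosen so that $Z(z)=\prod\cosh$-type factors, since commuting Paulis give $Z=2^n\prod_a\cosh(zc_a)$ when independent. Independence yields zeros only at $|z|\sim 1/|c_a|\ge 1/s$, which is not enough. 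I would therefore use a dependent commuting family, for instance $H=\frac{s}{k}\sum_{i<j\le m} X_iX_j$-style stabilizer interactions inside blocks of $k$ qubits. Then $Z(z)$ equals a one-dimensional sum $\sum_M \binom{m}{M}e^{-z\lambda M^2}$-like expression, whose zeros lie at $|z|\sim 1/(s\sqrt k)$ by a Gaussian/CLT scaling argument (a Curie--Weiss block). The block partition functions multiply, so a zero of one block is a zero of the whole system.
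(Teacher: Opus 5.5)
\textbf{Lower bound.} Your setup matches the paper's: Pauli terms as building blocks, factorization of $2^{-n}\Tr$ over disjoint-support components, compatibility by disjoint supports, and Koteck\'y--Preiss with $a(\gamma)\propto|\gamma|$. The gap is the step that carries all the content, namely beating the naive $(|z|sk)^m$ count by $\sqrt k$. You do not supply it, and the heuristic you build it on is false. A nonzero trace does not require each qubit to be hit an even number of times, and it does not pair up terms. For example, $X_1X_2\cdot Z_1Z_2\cdot Y_1Y_2=-I$ hits each qubit three times. Likewise, the parity checks $T_i=\prod_{j\neq i}Z_{\{i,j\}}$ on qubits labelled by pairs $\{i,j\}\subseteq[k+1]$ satisfy $\prod_{i=1}^{k+1}T_i=I$ with every term used once. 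The observation that a polymer covers at most $mk/2$ qubits does not help either: the factor $k$ comes from choosing an attachment site, not from the support size. Moment comparisons for $\Tr H^{2m}$ also do not control the connected quantities that Koteck\'y--Preiss needs.

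The mechanism that works (\Cref{lem:polymer-samp}) is sequential. Grow the polymer by a depth-first exploration; the stack bookkeeping costs only a factor $C^m$. Call a qubit \emph{defective} if the current partial product acts on it by a non-identity Pauli. If some qubit is defective, the next term is forced to act on a canonically chosen defective qubit, which costs at most $|z|s$. Only when the partial product is $\propto I$ may the next term attach at any of the $\le k$ sites of the stack top, which costs at most $|z|sk$. Such a free step always creates a defect, and the final product has none. Hence free steps are never consecutive and number at most $m/2$. With the $1/m!$ absorbing orderings, this gives $\sum|w|\le(4|z|s\sqrt k)^m$, and triples and odd multiplicities need no special treatment.

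\textbf{Upper bound.} This half does not go through: none of your commuting candidates has a zero at the $1/(s\sqrt k)$ scale. Your Curie--Weiss block is $2$-local, so the lower bound applied with locality $2$ already rules out zeros for $|z|\lesssim 1/s$. Concretely, with $\lambda=s/k$ and $S^2\sim k$, the exponent $z\lambda S^2$ is of order $|z|s$, so your own CLT scaling puts the zeros at $|z|\sim 1/s$, not $1/(s\sqrt k)$.

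Even the commuting analogue of the right construction fails. Take $H=\frac s2\sum_{i=1}^{k+1}T_i$ with the parity checks above. Then $2^{-n}\Tr e^{-zH}=\cosh^{k+1}(zs/2)+(-\sinh(zs/2))^{k+1}$, which vanishes only when $|\tanh(zs/2)|=1$, i.e.\ at $|z|\ge\pi/(2s)$.

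The $\sqrt k$ comes from \emph{anticommutation}. Replace the $T_i$ by $P_i=\prod_{j<i}Z_{\{j,i\}}\prod_{j>i}X_{\{i,j\}}$. These are pairwise anticommuting weight-$k$ strings, with each qubit in exactly two of them. Set $H=\frac s2\sum_{i=1}^{k+1}P_i$. All cross terms cancel, so $H^2=\frac{s^2(k+1)}{4}I$ and $\Tr e^{-zH}=2^n\cosh\bigl(zs\sqrt{k+1}/2\bigr)$. This vanishes at $z=i\pi/(s\sqrt{k+1})$, as in \Cref{thm:zf-upper}.
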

\noindent
Notably, the infinite-temperature phase extends to asymptotically \emph{colder} temperatures than separability. Prior work (under the additional assumption of bounded degree) showed bounds with locality dependence $\beta_{\rm phase} \gtrsim 1/k$~\cite{harrow2020classical,mann2021efficient,yao2022polynomial,mann2024algorithmic,zlokapa2026syk}. Given that we showed $\beta_{\rm sep} = \Theta(1/sk)$ above, it would a priori be possible that the onset of entanglement coincides with the thermodynamic phase transition. Our result $\beta_{\rm phase} = \Theta(1/s\sqrt k)$ rules this out and in fact yields a physically observable zero: in \Cref{thm:zf-upper}, we give an example of a system with $\Tr(e^{iHt}) = 0$ for $t \sim 1/s\sqrt k$. Our lower bound on $\beta_{\rm phase}$, similarly to prior work, uses a cluster expansion and the Kotecky--Preiss criterion~\cite{kotecky1986cluster}. However, since our Hamiltonian family is decomposed into Pauli terms, we can use a more careful strategy to count polymers since traceless terms do not contribute to the partition function in the Taylor expansion of $Z(\beta)$ around $\beta=0$.

The infinite-temperature phase identified above also carries physical meaning. For geometrically local Hamiltonians equipped with a distance $d$, the entire zero-free disk is characterized by an exponential decay of correlations between any pair of observables. In particular, these observables may be arbitrarily placed on the lattice; this resolves an open question of~\cite{harrow2020classical}, which required the observables to be separated at least logarithmically in system size for lattices beyond 1D. While prior work studied correlation decay at high temperature~\cite{kliesch2014locality,frohlich2015some,nguyen2024high,capel2025decay}, our result tightly relates correlation decay to the optimal zero-free disk (up to a constant).

\begin{theorem*}[Exponential decay of correlations]
    For any geometrically $k$-local Pauli Hamiltonian with interaction range $R$ and interaction strength $s$, every pair of observables $A, B$ supported on disjoint regions of the lattice satisfies for all $\beta = O(1/(s\sqrt k))$ that
    \begin{align}
        \abs{\langle AB \rangle_\beta - \langle A \rangle_\beta \langle B \rangle_\beta} \leq C \norm{A} \norm{B} \exp[-\frac{C' d(\supp(A),\supp(B))}{R}]
    \end{align}
    for constants $C, C' > 0$ that depend on the locality of the observables.
\end{theorem*}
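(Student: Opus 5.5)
We will deduce the correlation bound from the zero-free disk for the \emph{perturbed} partition function, using the standard analyticity-to-correlation-decay route. The key idea is to make the zero-free statement robust to inserting the observables $A$ and $B$ as extra ``interaction terms''.

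\textbf{Step 1: perturbed partition function.} Without loss of generality take $\norm{A}=\norm{B}=1$, and expand $A$ and $B$ in Paulis on their supports; this step is where the dependence on the observables' locality enters the constants. Consider
\begin{align}
Z(z,\lambda,\mu) = \Tr\!\left(e^{-zH}(I+\lambda A)(I+\mu B)\right),
\end{align}
or equivalently the deformed Hamiltonian $H_{\lambda,\mu}=H - \lambda A/z - \mu B/z$ truncated appropriately. Then $\partial_\lambda\partial_\mu \log Z|_{0}$ equals the connected correlator $\langle AB\rangle_\beta-\langle A\rangle_\beta\langle B\rangle_\beta$ at $z=\beta$.

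\textbf{Step 2: polymer representation.} I would rerun the cluster expansion used for the zero-free disk $|z|<c/(s\sqrt k)$, now with the two extra marked insertions. Polymers are connected multisets of Pauli terms. A polymer whose Pauli product is traceless contributes zero, and the insertions $A$ and $B$ are treated as decorations on polymers that touch $\supp(A)$ or $\supp(B)$. For $|\lambda|,|\mu|\le \delta$ small, the Kotecký--Preiss criterion established in the zero-free-disk proof continues to hold, because the insertions only change the weight of polymers touching a bounded region by a bounded factor. This gives a convergent expansion of $\log Z(z,\lambda,\mu)$ as a sum over connected clusters.

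\textbf{Step 3: mixed derivative and decay.} In the cluster expansion of $\log Z$, the mixed derivative $\partial_\lambda\partial_\mu$ retains only clusters containing both an $A$-insertion and a $B$-insertion. By geometric locality with range $R$, any such connected cluster must contain at least $m \ge d(\supp A,\supp B)/R$ Hamiltonian terms. The Kotecký--Preiss bound with an extra weight $e^{a|\gamma|}$ then shows that the total weight of clusters of size at least $m$ anchored at $\supp(A)$ is at most $C e^{-a m}$, provided $|z|$ is a constant factor inside the zero-free radius. Cauchy's estimate in $(\lambda,\mu)$ on the polydisk of radius $\delta$ converts this into the stated bound $C e^{-C' d/R}$.

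\textbf{Main obstacle.} The hard step is Step 2, the $\sqrt k$ regime. The improved polymer counting behind $\Theta(1/(s\sqrt k))$ exploits tracelessness cancellations. Inserting $A$ and $B$ breaks this structure near their supports, since polymers that were previously traceless can become non-traceless once the observables are inserted. My first attempt would be to quarantine the effect: treat the $O(|\supp A|+|\supp B|)$ sites near the observables as a boundary whose polymers are counted crudely, using $1/k$-type counting that is lossy only locally and therefore contributes just a constant factor depending on $|\supp A|$ and $|\supp B|$. The bulk would keep the refined counting, and I would verify that the exponential tail weight $e^{a|\gamma|}$ still fits within the slack left by taking $\beta$ a constant factor below the radius.
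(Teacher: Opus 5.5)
\textbf{Gap in Step 2.} Your architecture (two-parameter generating function, cluster expansion of $\log Z(z,\lambda,\mu)$, $X$--$Y$ connecting clusters needing $\gtrsim d/R$ terms, KP tail, Cauchy estimate) is a legitimate route. However, the one nontrivial estimate it needs is missing, and both your justification for it and your fallback are off. The insertions do not ``change the weight of polymers touching a bounded region by a bounded factor.'' Writing $A=\sum_b a_bQ_b$, they create a new family of decorated polymers $(Q_b;a_1,\dots,a_m)$ whose Hamiltonian part multiplies to something proportional to $Q_b$ rather than to $I$. These have no counterpart in the undecorated gas and have unbounded length. So Step 2 needs a genuine counting bound $\sum_{|\gamma|=m}|w_A(\gamma)|\le B_A(C|\beta|s\sqrt{k_*})^m$, with $B_A=\sum_b|a_b|$. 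Your quarantine does not supply this. Decorated polymers are not confined near $\supp(A)$, so counting them crudely costs $(C|\beta|sk)^m$. At $|\beta|\sim c/(s\sqrt k)$ this is $(Cc\sqrt k)^m$, which is not a constant-factor loss, and the KP sum diverges for large $k$.

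\textbf{The missing idea.} The obstacle you identify is illusory. $\tr(Q_bP_{a_1}\cdots P_{a_m})\neq0$ exactly when the full product, including $Q_b$, is proportional to the identity on every qubit. Hence the defect/repair argument runs verbatim with $Q_b$ at the bottom of the stack:
\begin{itemize}
\item the sites of $\supp(Q_b)$ start out defective;
\item every birth creates a defect that a later repair must fix, so there are at most $m/2$ births;
\item the only change is that a birth off $Q_b$ costs $|\supp(Q_b)|\,s$ instead of $ks$.
\end{itemize}
This is the observable case of \Cref{lem:polymer-samp}, which gives $B_O(4|\beta|s\sqrt{\max\{k,k_O\}})^m$. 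It is also where the observables' locality enters the temperature threshold, not just the constants: the formal version needs $\beta\le 1/(128e\,s\sqrt{\max\{k,|X|+|Y|\}})$.

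With this bound, your enlarged gas satisfies KP and your Step 3 goes through. Concretely: first subtract the identity components of $A$ and $B$ (otherwise identity-decorated polymers need not touch $X$ or $Y$, which breaks the geometric argument). Make decorated polymers of each type mutually hard-core incompatible, take $a_\gamma=|\gamma|+1$ for them, and take $|\lambda|\lesssim 2^{-|X|}$, $|\mu|\lesssim 2^{-|Y|}$. Separately, drop the ``equivalently'' remark. $\partial_\lambda\partial_\mu\log\Tr e^{-zH+\lambda A+\mu B}$ is a Duhamel (Kubo--Mori) correlation, not $\langle AB\rangle_\beta-\langle A\rangle_\beta\langle B\rangle_\beta$, so keep the product insertion.

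\textbf{Comparison with the paper.} The paper never builds a two-insertion gas.
\begin{itemize}
\item It proves only the single-observable expansion of $\langle O\rangle_\beta$, with tail $B_O8^{-L}$ beyond total size $L$ (\Cref{lem:clustertrunc}).
\item By range-$R$ locality the truncated expansion sees only terms within $LR$ of $\supp O$. So deleting distant terms moves $\langle O\rangle_\beta$ by at most $2B_O8^{-L}$ (\Cref{cor:distant-obs}).
\item It compares $H$ with the decoupled $H'=H_X+H_Y+H_0$ built from the $d/3$-neighborhoods of $X$ and $Y$, for which the connected correlator vanishes identically. It then applies the corollary to $A$, $B$ and $AB$ with $L=d/(6R)$ and uses the triangle inequality.
\end{itemize}
That route reuses machinery already needed for the algorithm and avoids mixed derivatives, hard-core decorations and Cauchy estimates, at the price of treating $AB$ as one observable of locality $|X|+|Y|$. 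Yours expresses the correlator directly as a sum over connecting clusters. That is the more standard picture, but it requires setting up the enlarged gas and verifying KP for it.
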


The proof follows almost immediately from the cluster expansion we use to show the zero-free disk. The cluster expansion controls truncation of a polymer representation of the partition function; for thermal expectations, this truncation permits one to only consider a small neighborhood around the observable. The neighborhood is small enough to imply exponential correlation decay. It also gives rise to efficient classical algorithms, which we proceed to describe next---notably, these algorithms do not require any notion of geometric locality or bounded interaction range.

\subsection{Classically easy quantum Gibbs states}

The infinite-temperature phase is closely related to \emph{computational} questions. One expects on physical grounds that the Gibbs state in the infinite-temperature phase can be efficiently prepared by system-bath dynamics. More surprisingly, the work of \cite{harrow2020classical} realized that the absence of a physical phase transition yields efficient \emph{classical} algorithms for quantum thermal states. They give \emph{quasi}polynomial-time algorithms that estimate the log partition function, which were later improved to polynomial-time algorithms, albeit assuming bounded degree and at asymptotically warmer temperatures than $1/\beta_{\rm phase}$~\cite{mann2021efficient,yao2022polynomial}. We give polynomial-time classical algorithms for both estimating the log partition function and thermal expectations of local Pauli observables for all $\beta \lesssim \beta_{\rm phase}$.

\begin{figure}[t!]
    \centering
    \includegraphics[width=0.75\linewidth]{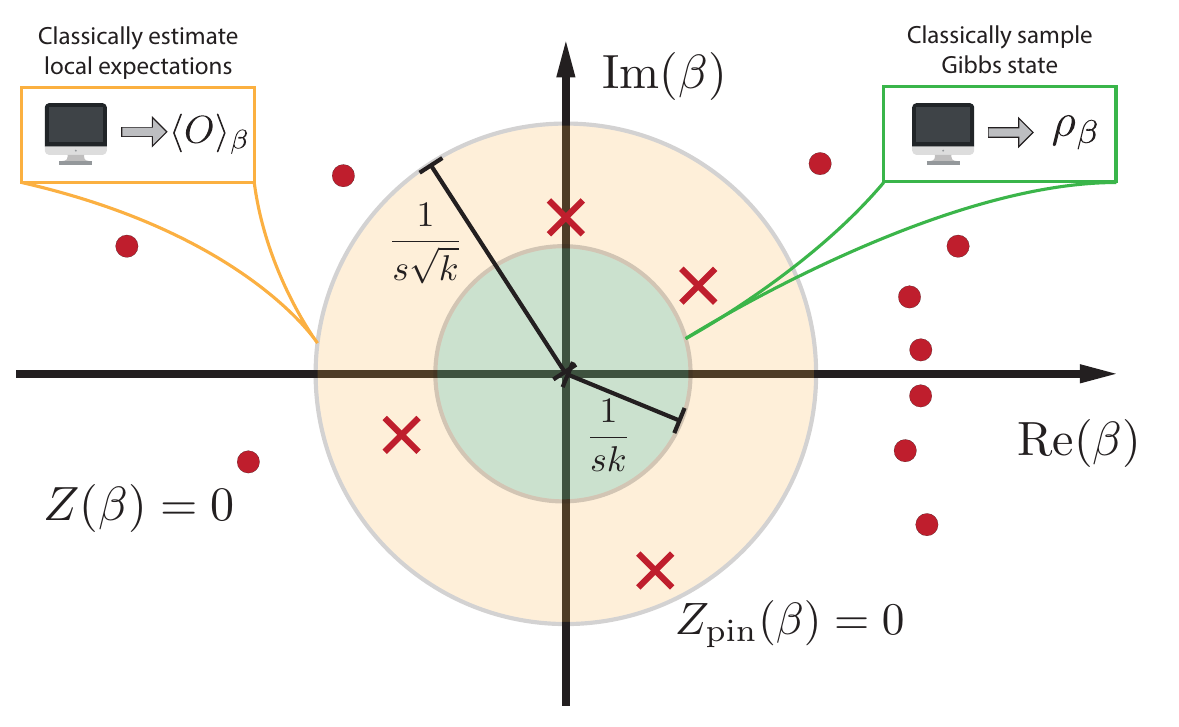}
    \caption{Implications of zero-freeness for phase transitions and classical algorithms. Red dots indicate where the partition function evaluates to zero, $Z(\beta)=\Tr(e^{-\beta H})=0$. A phase transition is defined by Lee-Yang theory to be an extensive density of complex zeros of $Z(\beta)$ approaching the real axis at some $\beta_*$, causing $\lim_{n\to\infty}\frac{1}{n}\log Z(\beta_*)$ to become non-analytic. The partition function of every long-range Pauli Hamiltonian has a zero-free disk of radius $\Theta(1/s\sqrt k)$, ensuring that for all $\beta \lesssim 1/s\sqrt k$ the model is in the same phase as infinite temperature ($\beta=0$) and that polynomial-time classical algorithms can efficiently estimate thermal expectations. Red crosses indicate where pinned partition functions $Z_{\rm pin}(\beta)$ vanish; these take the form $\Tr_{[n]\setminus Y} \bra{y}e^{-\beta H} \ket{y} = 0$ for $y\in\{0,1\}^Y$ and $Y \subseteq [n]$. The zero-free disk of the pinned partition function occurs at a strictly smaller radius that coincides with the separability transition temperature $\beta_{\rm sep}=\Theta(1/sk)$.}
    \label{fig:zf}
\end{figure}

\begin{theorem*}[Classical algorithms for the quantum infinite-temperature phase, informal]
    Let $H$ be an $(s,k)$-long-range Pauli Hamiltonian, let $O$ be a bounded-norm $k'$-local observable, and for some universal constant $c_{\rm q} > 0$, let $0\leq\beta\leq c_{\rm q}/s\sqrt{\max\{k,k'\}}$. Then there are randomized classical algorithms that, given $\epsilon,\delta\in(0,1)$, estimate $\log Z(\beta)$ and $\Tr(O\rho_\beta)$ to additive error $\epsilon$ with probability at least $1-\delta$ in time polynomial in $n,|\cA|,1/\epsilon, k, \log(1/\delta)$, and the number of Pauli strings in the Pauli decomposition of $O$.
\end{theorem*}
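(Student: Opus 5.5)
We plan to derive both algorithms from the cluster expansion that underlies the zero-free disk of radius $\Theta(1/(s\sqrt k))$. Expanding $Z(\beta)/2^n = 2^{-n}\Tr(e^{-\beta H}) = \sum_{m}\frac{(-\beta)^m}{m!}\sum_{a_1,\dots,a_m} 2^{-n}\Tr(P_{a_1}\cdots P_{a_m}) c_{a_1}\cdots c_{a_m}$, only words whose Pauli product is proportional to the identity contribute. We will group the multiset of terms in each word into connected components of the overlap graph (terms adjacent when their supports intersect). The normalized trace factorizes over components up to signs coming from reorderings. We will handle these signs by defining polymers as connected multisets $\gamma$ of terms whose product is $\propto I$, with weight
\[
w(\gamma)=\frac{(-\beta)^{|\gamma|}}{\gamma!}\sum_{\text{orderings}}2^{-n}\Tr\Big(\prod P\Big)\prod c.
\]
This requires checking that interleaving terms from disjoint-support components commutes, which holds since disjointly supported Paulis commute. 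We then get $Z/2^n=\sum_{\text{compatible families}}\prod w(\gamma)$, with compatibility meaning disjoint supports.

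The next step is the Kotecký--Preiss criterion $\sum_{\gamma\ni x}|w(\gamma)|e^{a|\gamma|}\le a$ for each qubit $x$. The key counting is the one sketched in the introduction. A connected identity-product word of length $m$ can be built by a tree exploration in which each new term touches the current support; the support grows to at most $mk$. However, requiring every qubit to be covered at least twice (tracelessness) halves the effective support, so it is $O(mk/2)$. Each step then costs a factor $s\cdot(\text{number of qubits touched})$. This gives $|w|\lesssim (C\beta s)^m\,(\text{combinatorial factor})/m!$. The main obstacle is obtaining $\sqrt k$ rather than $k$. We expect the ordering sum and the double-cover constraint to combine so that the count is like the number of pairings of the $mk$ qubit-slots, giving $(\beta s\sqrt k)^m$ after dividing by $m!$. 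We would try to prove this by bounding $|2^{-n}\Tr|\le 1$ and counting sequences in which the $j$-th term must overlap a qubit that is still unpaired. We assume that, as for the zero-free disk theorem stated earlier, this yields convergence for $\beta\le c_{\rm q}/(s\sqrt k)$ and a cluster-expansion tail bound $\sum_{\text{clusters}\ni x,\ |\text{cluster}|\ge L}|\Phi|\le e^{-\Omega(L)}$.

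For $\log Z$, we write $\log Z=n\log 2+\sum_{\text{clusters}}\Phi(X)$ and truncate at size $L=O(\log(n/\epsilon))$. Enumerating clusters would give only quasipolynomial time, so we instead use randomized sampling of polymers. Writing $\sum_{X}\Phi(X)=\sum_x\sum_{X\ni x}\Phi(X)/|{\rm supp}X|$, we sample a site $x$ uniformly. We then grow a random connected word from $x$ by the same tree exploration, choosing each next term with probability proportional to $|c_a|$ among terms touching the current support. The resulting importance-weighted estimator is unbiased for the truncated series, and the Kotecký--Preiss bound controls its variance: each step's normalization is at most $s\cdot|\text{support}|$, which is exactly the quantity the convergence bound absorbs. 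Evaluating $\Phi$ for a sampled cluster requires the Ursell function and traces of Pauli products of length $O(L)$. To keep this polynomial, we will use Penrose-type tree bounds and sample the tree as part of the randomness. Median-of-means then gives additive error $\epsilon$ with $\poly(n,1/\epsilon)\log(1/\delta)$ samples.

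For $\Tr(O\rho_\beta)$, we decompose $O$ into Paulis $Q$. For each $Q$ we use $\Tr(Qe^{-\beta H})/Z$, which equals a ratio of the modified partition function, in which $Q$ is a marked polymer that must be included, to $Z$. By the same expansion, this is a sum over clusters touching $\supp Q$, with locality $k'$ entering through the $\max\{k,k'\}$ in the radius. Expectations therefore reduce to derivatives of $\log Z(\beta H+\lambda Q)$ at $\lambda=0$. Since the disk of analyticity persists for small complex $\lambda$, Cauchy's estimate lets us obtain the derivative from the $\log Z$ estimator at a few $\lambda$ values. We expect it to be cleaner, though, to directly sample clusters rooted at $\supp Q$ and apply the same estimator.
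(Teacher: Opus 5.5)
Your overall architecture matches the paper's: polymers with disjoint-support compatibility, Kotecky--Preiss, truncation at $L=O(\log(n/\epsilon))$, importance sampling of polymers with a sampled spanning tree and Ursell reweighting, then a concentration bound. \textbf{The gap is the $\sqrt{k}$ in the radius, which is the content of the statement: you assume it rather than prove it, and your heuristics do not produce it.}

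Restricting to doubly covered qubits only improves the per-step factor from $sk$ to roughly $sk/2$. Requiring the $j$-th term to overlap \emph{some} still-unpaired qubit still costs order $ks$ per step, because one new term can leave up to $k$ unpaired qubits. Also, the order of factors inside the trace need not have connected prefixes, so ``the $j$-th term'' must refer to a canonical discovery order decoupled from the product order.

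The missing idea is the repair/birth exploration of \Cref{lem:polymer-samp}:
\begin{itemize}
\item Discover the terms of a polymer in a canonical order.
\item Whenever the partial product acts nontrivially on some qubit, select \emph{one} such defective qubit $x$ by a fixed deterministic rule. Force the next discovered term to act on $x$. This costs only $S(x)\le s$, with no factor of $k$.
\item Only when the partial product is $\propto I$ do you attach a new term to one of the at most $k$ qubits of the term on top of a DFS stack. This costs at most $2ks$, plus a factor $2$ per pop.
\item Each birth creates a defect and the final product is $\propto I$, so at most $m/2$ steps are births. This gives $s^m k^{m/2}$.
\item The $1/m!$ of the ordered-word activity is cancelled by choosing the product order uniformly among the at most $m!$ orderings consistent with the transcript. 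You reject if the chosen word's canonical transcript differs.
\end{itemize}
This yields $\sum_{|\gamma|=m,\ \supp\gamma\cap W\neq\emptyset}|w(\gamma)|\le(4|\beta|s\sqrt{\max\{k,|W|\}})^m$, which is exactly what Kotecky--Preiss and the absolute tail bound need. Zero-freeness of $Z$ alone would not give you an absolute bound on cluster sums.

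The same idea is needed in the algorithm, and your estimator fails without it. You choose each new term with probability proportional to $|c_a|$ among all terms touching the current support, with normalization up to $s|\text{support}|$. This controls the importance weights only by about $(C|\beta|sk)^m$. For polymers with few exploration orders it can be worse. An example is a long cycle of commuting terms, whose multiset activity is $\beta^m\prod|c_a|$ with no $1/m!$. There the growing-support normalization adds a factorial factor unless each new term is attached to a designated parent term.

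So your claim that this normalization is ``exactly the quantity the convergence bound absorbs'' holds only at the naive radius $\Theta(1/(sk))$. For $1/(sk)\ll\beta\lesssim 1/(s\sqrt{k})$ the weights grow like $(C'\sqrt{k})^m$. At depth $L=\Theta(\log(n/\epsilon))$, your weight bound then requires $(n/\epsilon)^{\Theta(\log k)}$ samples, which is not polynomial in $k$.

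The fix is to make the proposal distribution follow the repair/birth transcript itself:
\begin{itemize}
\item repair the deterministic defect $x$ by drawing $b$ with probability $|c_b|/S(x)$;
\item pop the stack with fair coins;
\item birth new terms from the top-of-stack term.
\end{itemize}
Then every accepted polymer has weight $|Y|\le(4|\beta|s\sqrt{\max\{k,|W|\}})^m$. Drawing the total size $h$ geometrically with ratio proportional to $|\beta|s\sqrt{k_*}$ makes each output bounded by $B_O$ (resp.\ $n$), and Hoeffding gives the polynomial runtime.
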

\noindent
Our approach for obtaining a polynomial-time algorithm differs from prior approaches in the zero-freeness literature~\cite{mann2021efficient,yao2022polynomial} and instead more closely resembles the recently used randomized approaches related to diagrammatic Monte Carlo that were rigorously analyzed for cumulant expansions~\cite{chen2025convergence}.

Because one cannot in general ``prepare'' a Gibbs state classically, the above result for estimating thermal expectations is a reasonable computational sense in which quantum Gibbs states can be considered ``classical''. However, if the Gibbs state has a classical representation, such as a mixture of product states, one can ask about classical preparability of the quantum state. We show that throughout the separable phase $\beta \lesssim \beta_{\rm sep}$, classical preparation is possible in polynomial time. (One cannot expect an analogous result of efficient classical algorithms for $\beta \lesssim \beta_{\rm stab}$, as classical Hamiltonians are commuting but can be hard to sample from at low temperature.)

\begin{theorem*}[Classical algorithms for separable quantum Gibbs states, informal]
    There exists a universal constant $c_{\rm sep}>0$ such that for every $(s,k)$-long-range Pauli Hamiltonian $H$, every $0\leq\beta\leq c_{\rm sep} \beta_{\rm sep} = O(1/sk)$, and every $\epsilon\in(0,1)$, there is a randomized polynomial-time classical algorithm that outputs a pure product stabilizer state $\Phi$ satisfying $\frac12\norm{\EE\,\Phi-\rho_\beta}_1\leq\epsilon$ with runtime polynomial in $n,|\cA|, k$, and $1/\epsilon$.
\end{theorem*}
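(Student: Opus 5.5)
The plan is to make the separability proof (the pinning argument behind $\beta_{\rm sep}$) algorithmic. The separability proof writes $\rho_\beta$ as a mixture of pure product stabilizer states by sequential pinning. At each step an unpinned qubit $x$ is chosen. Conditioned on the already pinned outcomes $y\in\{0,1\}^Y$ (in the appropriate Pauli bases), one measures $x$ in a Pauli basis whose outcome probabilities are ratios of pinned partition functions $Z_{\rm pin}(\beta)=\Tr_{[n]\setminus Y}\bra{y}e^{-\beta H}\ket{y}$. The pinned partition function is zero-free in a disk of radius $\Theta(1/sk)$ (the red crosses in \Cref{fig:zf}). For $\beta\le c_{\rm sep}\beta_{\rm sep}$ this gives a convergent cluster expansion of $\log Z_{\rm pin}$ satisfying the Kotecky--Preiss criterion with a margin. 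The margin is uniform over all pin patterns $(Y,y)$, so each conditional probability is bounded away from $0$ and $1$ by a constant. Exact sampling from the product-state decomposition therefore reduces to sampling $n$ sequential conditionals. Each conditional is a ratio of two pinned partition functions differing in one pinned qubit, and equals $\exp$ of a difference of cluster expansions.

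First I will express each conditional marginal $p(y_x\mid y_Y)$ as $\exp\bigl(\sum_{\text{clusters } \gamma \ni x}\Phi(\gamma)\bigr)$, restricted to clusters touching $x$. All clusters not touching $x$ cancel in the ratio. The KP bound with margin gives $\sum_{\gamma\ni x,\,|\gamma|\ge m}|\Phi(\gamma)|\le e^{-\Omega(m)}$ using weighted polymer sizes. This holds even though there are long-range interactions, because the local strength $s$ controls the weight of polymers attached to any qubit.

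Second, the clusters touching $x$ cannot be enumerated in polynomial time when degrees are unbounded. Instead, as with the randomized approach for $\log Z$, I will estimate $\sum_{\gamma\ni x}\Phi(\gamma)$ by importance sampling. Draw a cluster from the distribution proportional to the KP weights $|w(\gamma)|e^{a(\gamma)}$. This is done by a branching process: from $x$, pick terms $a\ni x$ with probability $\propto|c_a|\beta$ and grow along their supports. Then return $\Phi(\gamma)$ divided by its sampling probability. The KP margin bounds the variance by a constant, so $\poly(n,1/\epsilon)$ samples give each conditional to additive error $\epsilon/n$. Computing $\Phi(\gamma)$ for a sampled cluster of size $O(\log(n/\epsilon))$ requires the Ursell function and traces of products of Pauli terms. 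The Pauli traces are exact, and the Ursell function costs $\poly$ for logarithmic size via the tree (Penrose) bound, or can itself be sampled.

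Third, I will combine the steps. Run the pinning procedure with estimated conditionals and output the resulting pure product stabilizer state $\Phi$. A hybrid argument bounds the total-variation error between the true and approximate sequential samplers by $\sum_x\|\hat p_x-p_x\|_{\rm TV}\le\epsilon$. This bounds $\frac12\|\EE\,\Phi-\rho_\beta\|_1$. Failure probabilities are handled with a union bound using median-of-means.

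The main obstacle is the variance control of the unbiased cluster estimator uniformly over pin patterns. Truncation is unavoidable, since large clusters have tiny probability but possibly large $\Phi$. I would first try sampling from weights $e^{a'|\gamma|}|w(\gamma)|$ with $a'$ strictly less than the KP exponent, so that $|\Phi(\gamma)|/q(\gamma)$ stays bounded. Clusters of size beyond $O(\log(n/\epsilon))$ would be discarded, at a deterministic bias of $\epsilon/n$.
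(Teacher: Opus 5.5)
There is a genuine gap in your first paragraph, and the rest of the plan depends on it. The separable decomposition behind $\beta_{\rm sep}$ is not produced by sequentially measuring qubits of $\rho_\beta$.

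The pinning procedure (\Cref{alg:sep-iterative-pinning}) repeatedly samples propagators $e^{-\frac{\beta}{2}H^{(S)}}e^{\frac{\beta}{2}(H^{(S)}-H^{(S)}_{(a^*)})}=\EE[I+bE]$. It outputs a random \emph{unnormalized} operator $\sigma_T=\prod_j(I+\lambda_jX_j)$ with $|\lambda_j|\le1$ and $\EE\,\sigma_T=e^{-\beta H}$. Each factor is then split into product stabilizer states by \Cref{lem:sep-pauli-factor}. None of the probabilities in this construction are ratios of pinned partition functions.

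Conversely, suppose a scheme measures qubits in (adaptively chosen) Pauli bases with Born conditionals $\Tr_{[n]\setminus Y}\bra{y}e^{-\beta H}\ket{y}$ and outputs the observed eigenstates. That is a measure-and-prepare channel applied to $\rho_\beta$. Its average output is a dephased version of $\rho_\beta$, not $\rho_\beta$ itself. Already for the single qubit $H=h(X+Z)$, any mixture of such schemes yields a Bloch vector $(q_Xr_X,0,q_Zr_Z)$ with $q_X+q_Z\le1$. This is at trace distance at least $\tanh(\sqrt2\beta h)/4$ from $\rho_\beta$, no matter how accurately the conditionals are estimated.

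The missing idea is the normalization step. Since $\rho_\beta=\EE\bigl[\tfrac{\Tr\sigma_T}{\Tr e^{-\beta H}}\cdot\tfrac{\sigma_T}{\Tr\sigma_T}\bigr]$, sampling $\sigma_T/\Tr\sigma_T$ from the pinning law is biased. The pinning path must be reweighted by $\Tr\sigma_T/\Tr e^{-\beta H}$.

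Write $\sigma_t=e^{-\beta H_{S_t}/2}\prod_j(I+\lambda_jX_j)e^{-\beta H_{S_t}/2}$. The martingale property $\EE[\sigma_{t+1}\mid\sigma_t]=\sigma_t$ makes this weight telescope into $\prod_t\Tr\sigma_{t+1}/\Tr\sigma_t$. The paper enforces it by rejection sampling each proposed pinning step with acceptance probability about $\hat r/5$, using that each ratio lies in $[0,5]$ (\Cref{lem:rh}).

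Each ratio reduces to two kinds of quantities:
\begin{itemize}
\item the partition-function ratio $\tr e^{-\beta H_{S'}}/\tr e^{-\beta H_S}$, estimated with the unpinned cluster expansion (\Cref{thm:expect});
\item thermal expectations of the possibly high-weight Paulis $\tr_{[n]\setminus S}X$ under $e^{-\beta H_S}$.
\end{itemize}
Only the second of these uses pinned zero-freeness. These global expectations are estimated by rotating $X$ to a $Z$-string with single-qubit Cliffords and sampling computational-basis outcomes from the pinning odds (\Cref{thm:globalpauli}).

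Your second and third steps are essentially a correct reconstruction of that subroutine: the cumulant expansion of the odds, tree-based importance sampling truncated at order $O(\log(n/\epsilon))$, and the coupling over $n$ conditionals. But without the outer, reweighted pinning procedure, they sample a dephased version of $\rho_\beta$ rather than $\rho_\beta$.
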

\noindent
Unlike \cite{bakshi2024high}, our classical algorithm remains polynomial-time at lower temperatures than quantum algorithms for bounded-degree Hamiltonians in~\cite{rouze2025efficient,rouze2026optimal,bergamaschi2026fast}. Moreover, our result is tight: at lower temperatures, entanglement prevents a classical mixture of product states from being close to the Gibbs state (\Cref{thm:sep-upper}). In comparison, under the additional assumption of bounded degree, \cite{bakshi2024high} showed polynomial-time preparability for $\beta \lesssim 1/s^2k^3$; in their original parameters, for $\beta \lesssim 1/\mathfrak d^2 k$.

Just as our classical algorithms for computing thermal expectations relied heavily on zero-freeness, this result requires $\beta_{\rm phase} > \beta_{\rm sep}$. Indeed, our proofs require a stronger notion of ``pinned'' zero-freeness, which applies to the partition function of the state defined by measuring some qubits of the Gibbs state in the computational basis and post-selecting on an arbitrary measurement outcome (\Cref{fig:zf}). We note such arguments are reminiscent of zero-freeness conditions used to show rapid mixing of classical Glauber dynamics~\cite{shao2021contraction,regts2023absence,chen2024spectral}.

\subsection{Related work and open questions}\label{sec:open}

Our results suggest that for many well-defined notions of being ``classical'', either physical or computational, the Gibbs state has a transition temperature marking the change from classical to quantum. While all the characterizations we give are tight up to constants, our work points towards a more universal picture of classical transitions in quantum systems. We highlight several open directions and discuss how our work provides useful techniques for addressing each one.

\paragraph{Extending classical algorithms and the infinite-temperature phase to colder temperatures.} We gave an efficient classical algorithm to compute local thermal expectations for $\beta \lesssim 1/s\sqrt k$ for quantum systems and $\beta \lesssim 1/s$ in the classical case. Our algorithm is limited by the size of the largest disk $|\beta| \leq \beta_{\rm phase}$ for which the partition function is zero-free, i.e.~$\Tr(e^{-\beta H})\neq 0$. While the radius of this disk coincides with the onset of computational hardness at $\beta \sim 1/s$ in the classical setting, we find that the disk in the quantum setting has a smaller radius $\beta_{\rm phase} \sim 1/s\sqrt k$. We expect that classical algorithms for quantum thermal states can in fact be improved to $\beta \lesssim 1/s$ by showing a zero-free \emph{strip} (instead of a disk) of height $|\Im \beta| \lesssim 1/s\sqrt k$ but width $|\Re \beta| \lesssim 1/s$. Prior work applied Barvinok's algorithm to such strips via analytic continuation \cite{barvinok2016combinatorics} including for quantum systems~\cite{zlokapa2026syk} to achieve quasipolynomial-time classical algorithms; these can plausibly be improved to polynomial-time algorithms using the techniques of this work. Since thermodynamic phase transitions and NP-hardness are both known to occur at $\beta \sim 1/s$, showing such classical algorithms would establish tight results for the transition to computational hardness. It would also sharply characterize the infinite-temperature phase: a rectangular strip would prevent complex zeros from pinching the $\Re \beta$ axis for all $\beta \lesssim 1/s$, at which point thermodynamic phase transitions are known to exist. We state this conjecture more carefully in \Cref{sec:future}.

\paragraph{Quantum advantage for sampling measurement outcomes.} Besides estimating the log partition function and local thermal expectations, another reasonable computational task is to sample from the distribution of measurement outcomes in the computational basis, $p(x) = \bra{x}\rho_\beta\ket{x}$ for $x \in \{0,1\}^n$. Our results already give a polynomial-time classical algorithm for this task for $\beta \lesssim 1/sk$, where the Gibbs state is separable and can be classically prepared. In the absence of separability, our techniques for showing ``pinned'' zero-freeness are sufficient to imply polynomial-time algorithms for sampling from the measurement distribution; however, this pinned zero-freeness condition fails for $\beta$ larger than $1/sk$. If a Gibbs state can be prepared efficiently on a quantum computer throughout the infinite-temperature phase (as suggested by physics arguments), then sampling measurement outcomes may be quantumly easy and classically hard for, e.g.~$1/sk \lesssim \beta \lesssim 1/s\sqrt k$. Showing this separation may be related to the IQP results of~\cite{bergamaschi2024quantum,rajakumar2024gibbs}, but these constructions have too poor $k$ dependence to resolve the issue.

\paragraph{Larger class of physical systems.} Our work addressed long-range interactions with bounded strength on local systems. Here, ``strength'' was defined as the $\ell_1$ norm of Pauli Hamiltonian terms acting on a qubit. One can also study Hamiltonian terms that are not Pauli operators~\cite{ramkumar2025high} and different notions of strength \cite{hastings2006spectral, tran2021lieb, kim2025thermal}. Many natural systems act locally on \emph{fermions} and have bounded energy density but diverging $\ell_1$ strength. Two such examples are electronic structure Hamiltonians, which have a $1/r$ Coulomb potential, and the Sachdev-Ye-Kitaev (SYK) model, which has diverging $\ell_1$ strength but bounded $\ell_2$ strength. Recent work shows that (somewhat more unwieldy) cluster expansions yield zero-freeness for disordered all-to-all models with unbounded $\ell_1$ strength~\cite{zlokapa2026rigorous}, but the resulting classical algorithms remain quasipolynomial-time and do not extend to all temperatures predicted by non-rigorous physics computations~\cite{zlokapa2026syk}. Polynomial-time algorithms are known for weakly interacting fermionic systems through a careful cumulant expansion~\cite{chen2025convergence}. We expect that combining such techniques with those of the present work can illuminate if all-to-all Pauli models have a sudden death of entanglement at constant temperature, classically preparable Gibbs states, or polynomial-time classical algorithms for estimating thermal expectations. We also note related questions and techniques in the bosonic setting~\cite{tong2024locally,tong2025long}.

\paragraph{Structural properties of Gibbs states.} The algorithms we study here for quantum Gibbs states do not resemble the ``natural'' mixing dynamics of system-bath dynamics. Such dynamics are expected to mix quickly throughout the infinite-temperature phase, but few techniques exist to rigorously control quantum mixing times. High-temperature fast-mixing results for quantum Gibbs samplers have thus far only been shown at separable temperatures where our ``pinned'' zero-freeness condition holds~\cite{rouze2025efficient,rouze2026optimal,bakshi2025dobrushin,bergamaschi2026fast}. This suggests that pinned zero-freeness may be a useful tool for understanding fast mixing and static properties of the Gibbs state, possibly including quantum generalizations of the Dobrushin-Shlosman conditions~\cite{dobrushin1985completely,dobrushin1987completely}. Because existing techniques typically relate quantum fast mixing to structural properties of the Gibbs state, such as approximate Markov properties or conditional mutual information decay~\cite{kuwahara2020clustering,bluhm2025strong,chen2025locally,bakshi2025dobrushin,kuwahara2025clustering,rosa2026static}, pinned zero-freeness may also provide a new tool for further understanding such properties. If pinning plays such a central role in mixing, it is also possible that, contrary to physics arguments of~\cite{harrow2020classical}, quantum generalizations of Glauber dynamics might \emph{not} mix quickly in the entire infinite-temperature phase but instead only for $\beta \lesssim 1/sk$. Indeed, at colder temperatures, we show in \Cref{thm:pin-zf} that pinning can create complex zeros.

\paragraph{Tight separable and stabilizer transitions.} Our results give asymptotically tight bounds on the transition temperatures for separability, stabilizerness, largest zero-free disk, etc. However, we do not identify the optimal \emph{constants} in each of these bounds. This is not always important: e.g.~we show that the thermodynamic phase transition occurs at asymptotically colder temperatures than separability. The death of entanglement and death of magic transitions occur at asymptotically different temperatures only when the Hamiltonian is close to commuting. Generically, however, both occur at $\beta = \Theta(1/sk)$. It is open whether the constants are different; we conjecture that $\beta_{\rm sep} < \beta_{\rm stab}$.

\paragraph{Open quantum dynamics beyond Lieb-Robinson bounds.} Lieb-Robinson bounds quantify the speed of information propagation in a many-body quantum system \cite{lieb1972finite, nachtergaele2006lieb}.  A long series of works, starting with \cite{hastings2006spectral}, has generalized Lieb-Robinson bounds to long-range interacting Hamiltonians. For power-law interacting systems with interactions decaying like $1/r^\alpha$ on a $D$-dimensional lattice, these results have culminated in showing a linear light cone for $\alpha > 2D+1$ and faster operator spreading for $\alpha \leq 2D+1$~\cite{foss2015nearly, else2020improved, kuwahara2020strictly, chen2019finite, tran2021lieb}. In contrast, our results still hold for all $\alpha > D$, since our argument is based on cluster expansions rather than Lieb-Robinson bounds. Since proofs of fast mixing of quantum Gibbs samplers typically rely on Lieb-Robinson bounds, our classical results hold for a larger class of systems~\cite{rouze2025efficient,rouze2026optimal,bakshi2025dobrushin,bergamaschi2025quantum,vsmid2025rapid,tong2025fast}. A notable exception is~\cite{bergamaschi2026fast}, which uses cluster expansions to control mixing times but obtains exponentially worse dependence in $k$ than our classical algorithms. We expect that further developing tools to control Gibbs sampling algorithms via cluster expansions will eventually show that natural quantum algorithms mix quickly throughout temperatures for which it is easy to classically prepare the Gibbs state. We discuss this in more technical detail in \Cref{sec:future}.

\section{Technical overview}\label{sec:overview}

\subsection{Preliminaries}

We first fix notation for the Hamiltonian families and transition temperatures used throughout the paper.  As mentioned above, Hamiltonians in this work are finite-dimensional Pauli Hamiltonians of the form
\begin{align}
    H=\sum_{a\in\cA}c_aP_a \,,
\end{align}
where each $P_a$ is a non-identity Pauli string and $c_a\in\R$.  We write
$\supp(P_a)\subseteq[n]$ for the set of qubits on which $P_a$ acts nontrivially,
and
\begin{align}
    \rho_\beta(H)=\frac{e^{-\beta H}}{\Tr(e^{-\beta H})},
    \qquad
    Z(z)=\Tr(e^{-zH})
\end{align}
for the Gibbs state and the complex-temperature partition function. Throughout, $\Tr$ denotes ordinary trace, $\Tr_S$ denotes the partial trace over a set $S$ of qubits, $\tr$ is normalized by Hilbert space dimension, and we introduce a ``pinned'' trace $\tr_y$ later in \Cref{sec:prep}.

\begin{definition}[Separable and stabilizer states]
\label{def:sepstab}
For $n$ qubits, let
\begin{align}
    \Sep_n
    =
    \operatorname{conv}\left\{
        \rho_1\otimes\cdots\otimes\rho_n:
        \rho_i\in\mathbb C^{2\times 2},\ \rho_i\succeq0,\ \Tr\rho_i=1
    \right\}
\end{align}
denote the set of fully separable states.  Let $\STAB_n$ denote the convex hull
of pure $n$-qubit stabilizer states. Recall that a pure state $\ket{\psi}\in(\mathbb C^2)^{\otimes n}$ is a stabilizer state if there exists an abelian subgroup $S\leq \mathcal P_n$ for the Pauli group $\mathcal P_n$ such that
\begin{align}
    -I\notin S,\qquad |S|=2^n,\qquad
    g\ket{\psi}=\ket{\psi}\quad\forall g\in S .
\end{align}
\end{definition}

\begin{definition}[Transition temperatures]\label{def:transition-radii}
Let $\cF=\{\cF_n\}_{n\geq1}$ be a family of Hamiltonians, where $\cF_n$ is a subset of the set of all Hermitian matrices in $(\C^2)^{\otimes n}$. Define
\begin{align}
    \beta_{\rm sep}(\cF)
    &=
    \sup\left\{
        r\geq0:
        \rho_\beta(H)\in\Sep_n
        \text{ for all } n,\ H\in\cF_n,\ 0\leq\beta\leq r
    \right\},\\
    \beta_{\rm stab}(\cF)
    &=
    \sup\left\{
        r\geq0:
        \rho_\beta(H)\in\STAB_n
        \text{ for all } n,\ H\in\cF_n,\ 0\leq\beta\leq r
    \right\},\\
    \beta_{\rm phase}(\cF)
    &=
    \sup\left\{
        r\geq0:
        Z(z)\neq0
        \text{ for all } n,\ H\in\cF_n,\ |z|<r
    \right\}.
\end{align}
\end{definition}
\noindent
The main Hamiltonian family we consider consists of local Pauli terms with bounded $\ell_1$ interaction strength on each qubit.

\begin{definition}[Long-range Pauli Hamiltonians]\label{def:lr-pauli}
For $s,k>0$, $H=\sum_{a\in\cA}c_aP_a$ is an $(s,k)$-long-range Pauli Hamiltonian if it satisfies
\begin{align}
    |\supp(P_a)|\leq k
    \qquad\text{for all }a\in\cA,
    \qquad
    \max_{x\in[n]}\sum_{a:x\in\supp(P_a)}|c_a|\leq s .
\end{align}
\end{definition}
\noindent
A bounded-degree $k$-local Hamiltonian with uniformly bounded coefficients is
an $(O(d),k)$-long-range Hamiltonian, where $d$ is the maximum number of terms
incident to a qubit.  However, \Cref{def:lr-pauli} also allows all-to-all
interactions, provided the total incident interaction strength remains bounded. We also compare to the low-intersection parameterization used in
\cite{bakshi2024high}.

\begin{definition}[Low-intersection Pauli Hamiltonians]\label{def:low-intersection}
For $\mathfrak d,k>0$, $H=\sum_{a\in\cA}c_aP_a$ is a $(\mathfrak d,k)$-low-intersection Pauli Hamiltonian if it satisfies
\begin{align}
    |c_a|\leq1,
    \qquad
    |\supp(P_a)|\leq k,
\end{align}
and if the dual interaction graph has maximum degree at most $\mathfrak d$. The dual interaction graph has vertex set $\cA$ and an edge between $a,b\in\cA$ whenever $\supp(P_a)\cap\supp(P_b)\neq\emptyset$.
\end{definition}
\noindent
Every $(\mathfrak d,k)$-low-intersection Hamiltonian is an
$(\mathfrak d+1,k)$-long-range Hamiltonian.  If a Hamiltonian has ordinary degree $d$, then its dual degree satisfies $d-1\leq\mathfrak d\leq k(d-1)$. We will also use a nonlocal family in which terms of arbitrary weight are allowed \cite{hastings2006spectral} but their coefficients decay exponentially in the support size \cite{frohlich2015some}.

\begin{definition}[Exponentially decaying nonlocal Pauli Hamiltonians]
\label{def:exp-nonlocal}
For $s,\gamma>0$, $H=\sum_{a\in\cA}c_aP_a$ is a $(s,\gamma)$-nonlocal Pauli
Hamiltonian if it satisfies
\begin{align}
    \max_{x\in[n]}\sum_{a:x\in\supp(P_a)} |c_a|e^{\gamma|\supp(P_a)|} \leq s .
\end{align}
\end{definition}
\noindent

\subsection{Death of entanglement}

In \cite{bakshi2024high}, it is shown that the Gibbs states of $(\mathfrak{d}, k)$-low-intersection Pauli Hamiltonians are separable at $\beta \lesssim 1/\mathfrak{d} k$. Crucially, these Hamiltonians have bounded interaction degree, which is used to establish a convergent series for the quantity
\begin{align}\label{eq:main-prop}
    e^{-\beta H/2} e^{\beta(H- H_{(a^*)})/2}, \qquad H_{(a^*)} =\sum_{a\in\mathcal{A}: \operatorname{supp}(P_a) \cap \operatorname{supp}(P_{a^*})\neq \emptyset} c_a P_a.
\end{align}
This quantity is referred to as a ``propagator'', since it adds support onto Hamiltonian terms that touch the term $P_{a^*}$ \cite{alhambra2023quantum}. In the presence of long-range interactions, it is unclear on physical grounds whether one should expect separability at a constant temperature; operationally, the expansion given in \cite{bakshi2024high} for the propagator fails. Despite these failures, we show that \cref{eq:main-prop} can be expanded in a convergent series. We begin by expanding the propagator in the form $\sum_{t=0}^\infty \beta^t f_t /t!$, where each $f_t$ is a sum of products of $t$ Hamiltonian terms, i.e.,
\begin{align}
    f_t = \sum_{(b_1,\dots,b_t) \in Q_t} \mu_{(b_1,\dots,b_t)}P_{b_1}\cdots P_{b_t}
\end{align}
for some set $Q_t$. Much like a cluster expansion, $Q_t$ captures a sequence of $t$ terms in the Hamiltonian whose supports overlap. Moreover, the support of $P_{b_1}$ must intersect the support of $P_{a^*}$. To construct these clusters, \cite{bakshi2024high} identified the recurrence relation $f_{t+1} = -[H, f_t] - f_t H_{(a^*)}$. Writing out the commutator $-[H, f_t]$ as
\begin{align}
    -\sum_a c_a [ P_a, f_t] 
    &= - \sum_{(b_1,\dots,b_t)\in Q_t} \sum_a c_a \mu_{(b_1,\dots,b_t)} [P_a , P_{b_1}\cdots P_{b_t}],
\end{align}
we see that convergence can be controlled by bounding the norms of the above terms. In particular, one can extract factors of $\sup_x \sum_{a \ni x} |c_a|$ to obtain dependence on the interaction strength $s$ rather than the degree of the interaction.
Once the propagator is appropriately expanded, we sample tuples $(b,E)$ of coefficients and Pauli operators such that 
\begin{align}
    e^{-\frac{\beta}{2} H} e^{\frac{\beta}{2} (H- H_{(a^*)})} = \mathbb{E}[I + b E]
\end{align}
where $|b| \lesssim (\beta k s)^t$ and $t$ is the number of terms of the Hamiltonian which make up $E$. To show separability, one hopes to eventually find a distribution over coefficients $c_j$ and Pauli operators $X_j$ such that the Gibbs state can be represented as
\begin{align}\label{eq:main-sep-decomp}
    e^{-\beta H} = \mathbb{E}_{c,X}\left[\bigotimes_j (I + c_j X_j)\right].
\end{align}
If the coefficients are sufficiently small ($|c_j| \leq 1$), this gives an explicit representation of the Gibbs state as a mixture of product states. To reach this decomposition, the proof of~\cite{bakshi2024high} iteratively constructs decompositions of the form
\begin{align}\label{eq:main-invar}
    e^{-\beta H} = e^{-\frac{\beta}{2} H^{(S)}}\mathbb{E}_{c,X}\left[\bigotimes_j (I + c_j X_j)\right]e^{-\frac{\beta}{2} H^{(S)}},
\end{align}
where $H^{(S)}$ is the Hamiltonian restricted to terms supported only on qubits in a set $S \subseteq [n]$. Each iteration consists of choosing some Hamiltonian term $P_{a^*}$ in $H^{(S)}$, and then removing the Hamiltonian terms touching $a^*$. This introduces the propagator $e^{-\beta H^{(S)}/2} e^{\beta (H^{(S)} - H^{(S)}_{(a^*)})/2}$. Since the propagator is representable in the form $\mathbb{E}[I + b E]$, we can expand it as such and remove $\supp(P_{a^*})$ from $S$. By updating the distribution over $c$ and $A$, we can also maintain the decomposition \cref{eq:main-invar} with $S$ becoming smaller each iteration, until $S=\emptyset$ and the Gibbs state takes the form \cref{eq:main-sep-decomp}.

To obtain the optimal separability temperature, we improve upon the strategy of~\cite{bakshi2024high} for choosing $a^*$ in each iteration. Consider applying the iteration
\begin{align}
    e^{-\beta H} &= e^{-\frac{\beta}{2} H^{(S\setminus a^*)}}e^{\frac{\beta}{2} H^{(S\setminus a^*)}}e^{-\frac{\beta}{2} H^{(S)}}\mathbb{E}_{c,X}\left[\bigotimes_j (I + c_j X_j)\right]e^{-\frac{\beta}{2} H^{(S)}}e^{\frac{\beta}{2} H^{(S\setminus a^*)}}e^{-\frac{\beta}{2} H^{(S\setminus a^*)}}\nonumber \\
    &= \frac{1}{2} e^{-\frac{\beta}{2} H^{(S\setminus a^*)}}\mathbb{E}_{c,X,b,E}\left[(I + b_1 E_1)^\dagger \bigotimes_j(I+c_jX_j) (I+b_2 E_2) + h.c.\right]e^{-\frac{\beta}{2} H^{(S\setminus a^*)}}
\end{align}
where we wrote $S\setminus a^*$ as shorthand for $S \setminus \supp(P_{a^*})$. In~\cite{bakshi2024high}, $a^*$ is chosen such that $P_{a^*}$ intersects the support of the $X_j$. We use a more refined strategy that exploits an explicit decomposition of $X_j$ into Hamiltonian terms. By adaptively choosing $a^*$ according to that decomposition (as shown later in \Cref{fig:new_iterative_pinning_example}), we can reduce the number of iterations required to reach $S = \emptyset$ by a factor of $k$. This ultimately improves our separability temperature for low-intersection Hamiltonians to $\beta_{\rm sep}=\Theta(1/\mathfrak d)$ instead of the bound $\beta_{\rm sep} = \Omega(1/\mathfrak d k)$ of~\cite{bakshi2024high}. For the Hamiltonian families defined above, we state this result with explicit constants below, and we show matching entangled Hamiltonians to demonstrate that these temperatures are tight. Note that in the nonlocal setting of \Cref{thm:sep}(c), a single Hamiltonian term may act on an extensive number of sites. The proof sketch provided above no longer suffices; we use a more sophisticated weighting scheme described in \Cref{sec:nonlocal}.

\begin{theorem}[Death of entanglement]
\label{thm:sep}
The death of entanglement temperature is given for each Hamiltonian family as follows.
\begin{enumerate}[label=(\alph*),leftmargin=*]
    \item For $(s,k)$-long-range Pauli Hamiltonians (\Cref{def:lr-pauli}), $\beta_{\rm sep} = \Theta(1/sk)$. Specifically, for every Hamiltonian $H$ the Gibbs state is separable whenever
    \begin{align}
        0\leq \beta\leq \frac{1}{72sk}.
    \end{align}
    Conversely, for every even $k\geq2$ and every $s>0$, there exists an
    $(s,k)$-long-range Pauli Hamiltonian whose Gibbs state is entangled whenever
    \begin{align}
        \beta>
        \frac{2}{s}\arctanh\frac{3}{k+2} = \Omega\lr{\frac{1}{sk}},
    \end{align}
    even if the Hamiltonian is commuting.
    \item For $(\mathfrak d,k)$-low-intersection Pauli Hamiltonians (\Cref{def:low-intersection}), $\beta_{\rm sep} = \Theta(1/\mathfrak d)$. Specifically, for every Hamiltonian $H$ the Gibbs state is separable whenever
    \begin{align}
        0\leq\beta\leq \frac{1}{96\mathfrak d}.
    \end{align}
    Conversely, there is a sequence of commuting $k$-local Pauli Hamiltonians with
    dual degree $\mathfrak d$ whose Gibbs states are entangled for all
    \begin{align}
        \beta>
        \frac{96 \log 2}{\mathfrak d+2} = \Omega\lr{\frac{1}{\mathfrak d}}.
    \end{align}
    \item For every $(s,\gamma)$-nonlocal Pauli Hamiltonian $H$, the Gibbs state is separable whenever
    \begin{align}
        0\leq\beta\leq
        \min\left\{
            \frac{\gamma}{16s},
            \frac{1}{16s}
        \right\}.
    \end{align}
    For every constant $\beta>0$ and non-decreasing $w(n) = e^{o(n)}$, there is a sequence of Pauli Hamiltonians $H_n=\sum_ac_aP_a$ satisfying
    \begin{align}
        \max_{x\in[n]}\sum_{a:x\in\supp(P_a)} |c_a| w(|\supp(P_a)|) = O(1)
    \end{align}
    but whose Gibbs states $\rho_\beta(H_n)$ are entangled at all constant temperatures (i.e., $\beta$ independent of $n$).
\end{enumerate}
\end{theorem}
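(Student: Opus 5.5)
The proof of \Cref{thm:sep} splits into separability (lower bounds on $\beta_{\rm sep}$) and entangled examples (upper bounds). For separability I follow the iterative pinning scheme sketched above, which maintains the invariant \cref{eq:main-invar}.

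\textbf{Step 1: convergent propagator expansion.} Expand $e^{-\beta H^{(S)}/2}e^{\beta(H^{(S)}-H^{(S)}_{(a^*)})/2}=\sum_t \beta^t f_t/t!$ using the recurrence $f_{t+1}=-[H,f_t]-f_tH_{(a^*)}$. I track $f_t$ as a weighted sum over ordered tuples $(b_1,\dots,b_t)$ in which each new term must overlap the accumulated support. The accumulated support after $t$ steps has size at most $k(t+1)$. Summing $|c_a|$ over terms touching that support therefore costs at most $sk(t+1)$, and the commutator contributes a factor $2$. Hence $\sum_{Q_t}|\mu|\le (2sk)^t (t+1)!$ up to constants, and the $\beta^t/t!$ prefactor yields a geometric series when $\beta sk$ is a small constant. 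Normalizing by the total $\ell_1$ mass converts the expansion into $\mathbb E[I+bE]$ with $E$ a Pauli string, $|b|\lesssim (C\beta sk)^t$, and total weight $1+O(\beta sk)$.

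\textbf{Step 2: adaptive pinning.} I record each sampled $X_j$ together with its decomposition into Hamiltonian terms. The next $a^*$ is chosen among the terms in that decomposition that still lie in $S$, rather than arbitrarily touching $\supp X_j$. The key accounting is that each original Hamiltonian term is pinned at most once. Consequently the growth of a factor $c_j$ is charged term-by-term, gaining $O(\beta s)$ per absorbed term instead of $O(\beta sk)$ per qubit. Combining the two sides $(I+b_1E_1)^\dagger(\cdot)(I+b_2E_2)+\mathrm{h.c.}$ and re-expanding the product $\bigotimes_j(I+c_jX_j)$ into a mixture requires $|c_j|\le1$ at termination. I would prove this by induction, bounding $\mathbb E$ of the total weight by $\prod(1+O(\beta sk))^{\#\text{pins}}$ and renormalizing.

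\textbf{Step 3: the three families.} Bookkeeping with constants should give the threshold $1/(72sk)$ for (a). For (b), I replace "sum of $|c_a|$ over terms touching a support" by dual-degree counting: a cluster of $t$ terms has at most $\mathfrak d$ neighbors per term, so the factor $k$ disappears and the threshold becomes $1/(96\mathfrak d)$. For (c), I weight each term by $e^{\gamma|\supp|}$. The support growth $|\supp E|\le\sum|\supp P_{b_i}|$ is then paid by the $e^{\gamma|\supp|}$ slack, giving $\beta\lesssim\min(\gamma,1)/s$.

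\textbf{Step 4: entangled examples.} For (a), take $k$ even and a commuting stabilizer-type Hamiltonian, for instance $-\frac{s}{2}(X^{\otimes k}+Z^{\otimes k})$ on $k$ qubits. This is an $(s,k)$ Hamiltonian with per-qubit strength $s$, and its Gibbs state is $\propto(I+\tanh(\beta s/2)X^{\otimes k})(I+\tanh(\beta s/2)Z^{\otimes k})$. A bipartite or PPT/entanglement witness across a balanced cut, for example the fidelity with a GHZ-like state compared to its separable bound, shows entanglement once $\tanh(\beta s/2)>3/(k+2)$, which matches the stated constant. For (b), use copies of a commuting code, such as a toric-code-like $\mathfrak d$-degree graph state, with an entanglement witness threshold $\propto 1/\mathfrak d$. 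For (c), take a single nonlocal stabilizer pair on $m$ qubits with coefficient $1/w(m)$. Since $w=e^{o(n)}$, the witness threshold of order $1/m$ is eventually exceeded by $\tanh(\beta/w(m))$ only if weights fit, so the construction should instead pick $m(n)$ with $w(m)\ll m$ appropriately.

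\textbf{Main obstacle.} The hardest part is Step 2. The invariant must show that adaptive pinning keeps every $|c_j|\le1$ while the per-qubit factor-$k$ combinatorial loss is avoided, and controlling cross-terms from both sides of the propagator is the delicate point.
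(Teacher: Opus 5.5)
There are two genuine gaps: the entangled examples for the upper bounds fail, and Step 2 is missing the pointwise coefficient bound. Your Step 1, and the idea of choosing $a^*$ from the term decomposition of the current factor, are the paper's.

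\textbf{The entangled examples fail.} For (a), put $t=\tanh(\beta s/2)$. Your Hamiltonian gives $2^k\rho_\beta=I+tX^{\otimes k}+tZ^{\otimes k}\pm t^2Y^{\otimes k}=t(I+X^{\otimes k})+t(I+Z^{\otimes k})+t^2(I\pm Y^{\otimes k})+(1-2t-t^2)I$. By \Cref{lem:sep-pauli-factor} this is fully separable whenever $2t+t^2\le1$. The bound is sharp, since $|\langle X^{\otimes k}\rangle|+|\langle Y^{\otimes k}\rangle|+|\langle Z^{\otimes k}\rangle|\le1$ on separable states. The threshold $t^*=\sqrt2-1$ does not depend on $k$. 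So for $k\ge6$ the state is separable at $t=3/(k+2)$, and the example only yields $\beta_{\rm sep}=O(1/s)$.

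Getting $1/k$ requires $\Theta(k)$ commuting terms with each qubit in $O(1)$ of them, so that a witness can add up $\Theta(k)$ expectations of size $t$. The paper (\Cref{thm:sep-upper}) takes $m=k/2+1$ pairwise anticommuting strings $P_{i,\ell}$ on each of two copies $\ell=1,2$, and sets $Q_i=P_{i,1}P_{i,2}$. Across the copy cut, Cauchy--Schwarz together with $\sum_i\langle C_i\rangle^2\le1$ for anticommuting $C_i$ gives $|\langle\sum_iQ_i\rangle|\le1$ on separable states. The Gibbs state instead gives $mt$.

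Your (c) example fails the same way. A stabilizer pair with coefficient $1/w(m)\le1/w(1)$ is separable for all sufficiently small constant $\beta$, whatever $m$ is. The paper instead adds weight-one fields $-h\sum_iZ_i$. The resulting splitting of about $2hn$ between $\ket{0^n}$ and $\ket{1^n}$ enhances the coherence generated by $\frac{g}{w(n)}X^{\otimes n}$ by roughly $e^{\beta hn}$. This beats $w(n)=e^{o(n)}$, and a partial-transpose witness detects it.

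For (b) you give no witness at all. At $\beta\sim1/\mathfrak d$, independent stabilizer generators of strength $\le1$ each have expectation only about $\tanh\beta\approx1/\mathfrak d$. The paper's locality-3 example instead couples Bell-pair stabilizers to shared control qubits in a mean-field way, and this collectively orders at $\beta\sim1/\mathfrak d$.

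\textbf{Step 2 lacks the mechanism that keeps $|c_j|\le1$.} Bounding the expected total weight by $\prod(1+O(\beta sk))^{\#\mathrm{pins}}$ and renormalizing cannot work. That product grows exponentially in $n$. More importantly, no bound in expectation forces every sampled factor to satisfy $|c_j|\le1$, which \Cref{lem:sep-pauli-factor} needs pointwise.

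The paper importance-samples one of seven cross terms, each branch of the form $I+(\text{term})/p_J$:
\begin{itemize}
\item With probability $p_1=q$ it keeps $X$ and sets $c\mapsto c/q$.
\item Otherwise the new coefficient is $\frac{6}{1-q}$ times a product of $c$ and the $b$'s, where $|b|\le Lq^{t}$.
\end{itemize}
It maintains the pointwise invariant $|c|\le q^{|u_S(X)|}$, where $u_S(X)$ is the set of labels in $C(X)$ that still intersect $S$. The $1/q$ inflation is paid for because the pinned $a^*\in u_S(X)$ leaves $u_S$. A new product adds at most $t$ labels.

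So each absorbed term costs $O(\beta sk)$, not the $O(\beta s)$ you state, which would contradict the $1/(sk)$ upper bound. The gain over \cite{bakshi2024high} is one inflation per term rather than $k$. Also, $a^*$ must range over partially pinned labels that meet $S$, not only labels lying entirely in $S$. Otherwise a finished factor can keep support in $S$ and fail to commute with later propagators.

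For (c), your ``slack'' sentence misses two further points:
\begin{itemize}
\item Term pinning is unusable, because $\|H_{(a^*)}\|$ can be of order $s|a^*|$ for an extensive term. The paper pins site by site, with invariant $|c|\le q^{|S\cap\supp X|}$.
\item The commutator cost grows with $|R(\vec b)|$ rather than with $t$. The paper controls this through $\partial_\beta B\le(2s\partial_\lambda+s)B$ along characteristics, giving $B(\beta,\lambda)\le e^{s\beta}$ for $0\le\lambda\le\gamma-2s\beta$.
\end{itemize}
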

\noindent
As a physically motivated example, we can apply the above result to the specific case of power-law interacting Hamiltonians on a $D$-dimensional lattice, i.e.~with interaction strengths at most $1/\|x-y\|_2^\alpha$ between lattice sites $x, y$. Our results guarantee separability at some constant temperature when $\alpha > D$, while we show that for every $\alpha \leq D$ there exist Hamiltonians that are entangled at \emph{any} temperature independent of system size.

\subsection{Classical preparability}

Due to the constructive nature of the proof, the separability results allow us to prepare an \emph{unnormalized} separable state proportional to the Gibbs state. Prior work \cite{bakshi2024high} (which applied at higher temperatures $\beta \lesssim 1/\mathfrak d k$) found a polynomial-time algorithm to compute the normalization factor $Z(\beta)$ to convert this into efficient classical state preparation (for $\beta \lesssim 1/\mathfrak d^2 k$). Working in our more general long-range Hamiltonian family, we show that a polynomial-time classical algorithm can in fact cover the entire separable regime (up to constants) given by \Cref{thm:sep}(a).

\begin{theorem}[Classical preparation of separable Gibbs states]
\label{thm:samp}
For every $(s,k)$-long-range Pauli Hamiltonian $H$, every
\begin{align}
    0\leq\beta\leq \frac{1}{4096e\,sk},
\end{align}
and every $\epsilon\in(0,1)$, there is a randomized classical algorithm running
in time
\begin{align}
    \wt O\left(
        n|\cA|k+
        \frac{kn^{11}}{\epsilon^5}\log|\cA|
    \right)
\end{align}
which outputs a pure product stabilizer state
$\Phi=\phi_1\otimes\cdots\otimes\phi_n$ such that $\frac12\left\|\EE\,\Phi - \rho_\beta(H)\right\|_1\leq\epsilon$.
Moreover, there are $(s,k)$-long-range Pauli Hamiltonians such that for any
\begin{align}
    \beta \geq \frac{2}{s}\arctanh\frac{3}{k+2} = \Omega\left(\frac1{sk}\right)
\end{align}
the Gibbs state has constant trace distance from every separable state:
\begin{align}
    \inf_{\sigma\in\Sep_n} \frac12\norm{\rho_{\beta}(H)-\sigma}_1 \geq \frac{1}{2}.
\end{align}
\end{theorem}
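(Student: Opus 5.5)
The plan has two parts, an algorithm and a lower bound; we take the lower bound first.

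\emph{Lower bound.} The tightness statement follows from the same construction that gives the converse of \Cref{thm:sep}(a). That construction is a commuting $(s,k)$-long-range Hamiltonian on disjoint blocks. On each block the Gibbs state is a noisy stabilizer (GHZ-type) state, and the block is entangled once $\tanh(\beta s/2)>3/(k+2)$. To upgrade entanglement to a constant trace distance from $\Sep_n$, we take many independent copies of the block, say $m$ of them. We then use an entanglement witness $W$ with $\Tr(W\sigma)\le 0$ for every separable $\sigma$ and a strictly positive expectation on each block's Gibbs state. A product separable state restricted to each block is still separable, so an averaged, normalized witness (or a counting of fidelities) separates $\rho_\beta^{\otimes m}$ from $\Sep_n$. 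Concentration over the $m$ blocks, via Hoeffding on the measured witness outcomes, then pushes the distinguishing advantage to at least $1/2$. This gives $\frac12\|\rho-\sigma\|_1\ge 1/2$ by Helstrom.

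\emph{Algorithm.} We make the proof of \Cref{thm:sep}(a) constructive. That iterative pinning procedure expresses
\[
e^{-\beta H}=\mathbb E_{c,X}\Big[\bigotimes_j (I+c_jX_j)\Big], \qquad |c_j|\le 1,
\]
and each random choice is sampled by a classical procedure. We sample polymers $(b,E)$ of the propagator expansion, with geometric tails in $t$ since $|b|\lesssim(\beta ks)^t$ and $\beta\ll 1/sk$. Each factor $I+c_jX_j$ with single-qubit Pauli $X_j$ is a positive multiple of a mixture of the two eigenstates of $X_j$, so it is a mixture of pure product stabilizer states. The difficulty is normalization: the sample weights $\prod_j$(scalar factors) vary, so we need $Z(\beta)$, or more precisely the mean weight, to rejection-sample. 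Our plan is to sequentially sample qubit outcomes in a chosen Pauli basis. At each step we estimate a conditional marginal, which is a ratio of pinned partition functions $\Tr_{[n]\setminus Y}\bra{y}e^{-\beta H}\ket{y}$. Each such quantity has a zero-free disk of radius $\Theta(1/sk)$, and this pinned zero-freeness holds precisely in the separability regime; we establish it with a Kotecký–Preiss cluster expansion in which pinned qubits act as boundary conditions. Within the disk we estimate $\log$ of each pinned partition function via the randomized polymer-sampling estimator used for $\log Z$: we sample clusters with probability proportional to their weight and average. Additive error $\epsilon/n$ per log-ratio keeps the total trace-distance error at most $\epsilon$. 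The $n^{11}/\epsilon^5$ comes from $n$ conditional steps times the sample complexity $\poly(n/\epsilon)$ of each estimate, and the $\log|\cA|$ comes from alias sampling of terms incident to a qubit.

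The main obstacle is the pinned zero-freeness and the polymer-sampling analysis. The bound must be uniform over all pinnings and hold for long-range interactions. We therefore control cluster weights using $\sum_{a\ni x}|c_a|\le s$ rather than degree; pinning a qubit makes non-diagonal Paulis on it vanish and diagonal ones become scalars, so it only removes or simplifies terms. We would first verify the Kotecký–Preiss condition with weight $e^{|\mathrm{supp}|}$ per qubit, which is where the constant $4096e$ arises. Variance control of the unbiased estimator similarly requires convergence of the cluster series at a doubled parameter, hence the smaller constant than in \Cref{thm:sep}(a).
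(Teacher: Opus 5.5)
Your lower-bound route is sound and differs from the paper's only in inessential ways. It takes many independent blocks of the \Cref{thm:sep}(a) construction, applies a block entanglement witness, and uses Hoeffding after reducing to fully product $\sigma$ by linearity of $\Pr_\sigma(E)$. Just make sure the witness gap is uniform in $\beta$, as it is for $\sum_i Q_{x,i}$, so that one Hamiltonian serves every $\beta$ above the threshold.

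The algorithm, however, has a genuine gap exactly at the normalization step you flag. As written, you sample outcomes $y$ of $\rho_\beta$ qubit by qubit in a product Pauli basis from the marginals $\Tr_{[n]\setminus Y}\bra{y}e^{-\beta H}\ket{y}$, and output $\ket{y}$. The average output is then a dephasing of $\rho_\beta$, e.g.\ $\sum_y\bra{y}\rho_\beta\ket{y}\ketbra{y}$, not $\rho_\beta$; for $H=-X_1$ measured in the $Z$ basis you get $I/2$. The product states have to come from the pinning decomposition. What must be corrected is the law of the configuration $\chi$, which has to be tilted by $\Tr(\sigma(\chi))/\Tr(e^{-\beta H})$. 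Knowing $Z(\beta)$ does not achieve this with one rejection step. The weight is a product of up to $n$ per-step factors, each of conditional mean one, so the only a priori bound on it is exponential in $n$ and the acceptance probability would be exponentially small.

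The missing idea is to reweight step by step along the pinning martingale. The intermediate operators $\sigma_t=e^{-\beta H_{S_t}/2}\prod_j(I+\lambda_jX_j)e^{-\beta H_{S_t}/2}$ satisfy $\EE[\sigma_{t+1}\mid\sigma_t]=\sigma_t$. Hence the terminal weight telescopes into ratios $\Tr(\sigma_{t+1})/\Tr(\sigma_t)\in[0,5]$. These are bounded because a step removes only the terms touching a single $a^*$, so $\|H_S-H_{S'}\|\le sk$, and the coefficient invariant gives $|\lambda|\le 1/2$ for the active factor. The paper accepts each proposed pinning step with probability $\wh r/5$. Each trial then succeeds with probability about $1/5$, and up to estimation error the accepted chain has the tilted terminal law.

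Each ratio equals $\frac{\tr(e^{-\beta H_{S'}})}{\tr(e^{-\beta H_S})}\cdot\frac{1+\lambda' q_{S'}(X')}{1+\lambda q_S(X)}$, with $q_S(X)=\tr(Xe^{-\beta H_S})/\tr(e^{-\beta H_S})$. The first factor comes from \Cref{thm:expect}. The quantity $q_S(X)$ is the expectation of a Pauli string of weight possibly $\Theta(n)$, and this is where your sequential pinned sampling legitimately belongs. You rotate $X$ to a $Z$-string by single-qubit Cliffords and sample its outcomes from the odds $\log\tr_{y1}(e^{-\beta H'})-\log\tr_{y0}(e^{-\beta H'})$ for the rotated Hamiltonian $H'$, as in \Cref{thm:globalpauli}.

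Two smaller corrections. First, the $X_j$ are multi-qubit Pauli strings, not single-qubit Paulis. So $I+c_jX_j$ is split as $(1-|c_j|)I+|c_j|(I+\sgn(c_j)X_j)$, with the last term decomposed into product eigenstates as in \Cref{lem:sep-pauli-factor}. Second, pinning does not merely delete or simplify terms; for instance $\bra{0}e^{-\beta X}\ket{0}=\cosh\beta$. Off-diagonal Paulis on a pinned qubit can pair up, and defective qubits contribute to $\tr_y$. This is why the repair trick is lost and the pinned radius drops to $\Theta(1/sk)$. \Cref{thm:pin-zf} exhibits a pinned zero at $|z|=\Theta(\log k/(sk))$, inside the unpinned disk, which could not happen if pinning merely produced a simplified $(s,k)$-long-range Hamiltonian.
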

\noindent
In terms of parameters $s,k$, the classical algorithm of \cite{bakshi2024high} to prepare Gibbs states only reaches inverse temperatures $\beta\lesssim1/(s^2k^3)$ under the additional assumption of bounded degree. Beyond classical Gibbs samplers, our result also succeeds at colder temperatures than known \emph{quantum} Gibbs samplers. The results of \cite{rouze2026optimal} for local Hamiltonians are given on $D$-dimensional lattices, obtaining fast mixing for $\beta \lesssim 1/2^{\Theta(D)} sk$ assuming bounded degree. In the long-range setting, \cite{rouze2026optimal} also only achieves fast mixing at constant temperature for power laws with $\alpha > 4D+2$, whereas our result remains efficient at constant temperature for any $\alpha > D$. Finally, the quantum algorithm of \cite{bergamaschi2026fast} mixes quickly for $\beta \lesssim 1/(s 2^{\Theta(k)})$.

The proof of our preparation algorithm relies on the same decomposition $\E{I+cA}$ for $|c| \leq 1$ and Pauli $A$ as the separability result above. To compute the partition function and normalize the Gibbs state, we use cluster expansion. In particular, we estimate the partition function at each iteration of the pinning procedure in the separability proof; this is equivalent to the partition function defined after post-selecting on particular measurement outcomes in the computational basis. While a na\"{i}ve analysis gives a quasipolynomial-time algorithm for estimating these quantities via Barvinok's method~\cite{barvinok2014computing,barvinok2016approximating,barvinok2016combinatorics,barvinok2016computing,barvinok2018approximating}, we show how to obtain a polynomial-time algorithm. Due to the similarity of the proof, we postpone this discussion to \Cref{sec:main-expect}. After the pinned partition function is computed, one can efficiently sample from the distribution of measurement outcomes in the computational basis using methods similar to those of~\cite{yin2023polynomial,bakshi2024high}. We compute iterative normalization factors via rejection sampling at each iteration of the separability proof, and then telescope them to obtain the final partition function of the full Gibbs state.

\subsection{Death of magic}

To formalize the death of magic transition, we need to introduce a measure of being close to commuting.  Write a Hamiltonian as
\begin{align}\label{eq:h0v}
    H=H_0+V,
    \qquad
    H_0=\sum_{\sigma\in\cF}u_\sigma P_\sigma,
    \qquad
    V=\sum_{\mu\in\cG}v_\mu Q_\mu ,
\end{align}
where the Paulis $P_\sigma$ commute pairwise, while the Paulis $Q_\mu$ are
arbitrary.  For Pauli strings $W_1,W_2$, define the compatibility relation
\begin{align}\label{eq:compat}
    W_1\sim W_2
    \quad\Longleftrightarrow\quad
    \{W_1,W_2\}=0
    \text{ or there exists }\sigma\in\cF
    \text{ such that }
    \{W_1,P_\sigma\}=\{W_2,P_\sigma\}=0 .
\end{align}
For the decomposition \cref{eq:h0v}, define
\begin{align}\label{eq:wfp}
    w_{\rm free}
    =
    \max_{\mu\in\cG}
    \sum_{\sigma\in\cF:\{P_\sigma,Q_\mu\}=0}|u_\sigma|, \qquad 
    w_{\rm pert}
    =
    \max_{\mu\in\cG}
    \left(
        |v_\mu|
        +
        \sum_{\nu\in\cG\setminus\{\mu\}:\nu\sim\mu}|v_\nu|
    \right).
\end{align}

\begin{definition}[Close to commuting]\label{def:epscomm}
We say that $H$ is $\epsilon$-close to commuting if
\begin{align}
    \inf_{H=H_0+V}
    \frac{w_{\rm pert}}{w_{\rm free}+w_{\rm pert}} \leq \epsilon \in [0,1],
\end{align}
where the infimum is over all decompositions of the form \cref{eq:h0v}, and we set $\epsilon(H)=0$ when $V=0$.
\end{definition}
\noindent
For example, a transverse-field Ising model $H = -\sum_{\langle i,j\rangle}Z_iZ_j-h\sum_iX_i$ and a toric-code Hamiltonian perturbed by local Pauli fields of strength $h$ are both $O(h)$-close to commuting.

\begin{theorem}[Death of magic for nearly commuting Hamiltonians]
\label{thm:stab}
For $(s,k)$-long-range Pauli Hamiltonians that are
$\epsilon$-close to commuting, $\beta_{\rm stab} = \Theta\left(\frac{\log(1/\epsilon)}{sk}\right)$ holds for asymptotically small $\epsilon$ and large $s,k$.
\end{theorem}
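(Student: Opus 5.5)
Theorem~\ref{thm:stab} has two halves, and I would prove them separately. The lower bound says the Gibbs state is stabilizer for $\beta \le c\log(1/\epsilon)/(sk)$. The upper bound says some nearly commuting Hamiltonian has magic for $\beta \ge C\log(1/\epsilon)/(sk)$.

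\textbf{Lower bound.} Fix a decomposition $H=H_0+V$ attaining $\epsilon$. Since the $P_\sigma$ commute pairwise, $H_0$ is diagonalized by a Clifford $U$ into $\pm Z$-type strings. For every $\beta$, $e^{-\beta H_0}$ is then proportional to a mixture of stabilizer states, namely projectors onto joint eigenspaces of the stabilizer group generated by the $P_\sigma$. I would move to the interaction picture. Write
\begin{align}
e^{-\beta H}=e^{-\beta H_0/2}\,\mathcal{T}(\beta)\,e^{-\beta H_0/2},
\end{align}
where $\mathcal{T}$ is a (non-unitary) Dyson series in the conjugated terms $e^{\tau H_0}Q_\mu e^{-\tau H_0}$.

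Each such conjugate equals $Q_\mu$ times a product over those $\sigma$ with $\{P_\sigma,Q_\mu\}=0$ of factors $\cosh(2\tau u_\sigma)\pm\sinh(2\tau u_\sigma)P_\sigma$. Its norm is therefore at most $e^{2\tau w_{\rm free}}$.

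Mirroring the separability proof (Theorem~\ref{thm:sep}), I would write the Dyson series as an expectation
\begin{align}
\mathbb{E}\big[I+bE\big],
\end{align}
with $E$ a Pauli string built from products of the $Q_\mu$ and anticommuting $P_\sigma$. A term of order $t$ then has weight of order $(\beta w_{\rm pert}e^{\beta w_{\rm free}})^t$. The compatibility relation \eqref{eq:compat} is exactly what controls which $Q_\nu$ can be chained after $Q_\mu$ without the contribution cancelling, which is why $w_{\rm pert}$ sums over $\nu\sim\mu$.

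The pinning step differs from the separability proof. Instead of pinning qubits, I pin entire $V$-terms: each iteration removes the terms of $V$ compatible with a chosen $Q_{\mu^*}$, and I never split $H_0$. The final form is
\begin{align}
e^{-\beta H}=\mathbb{E}\Big[e^{-\beta H_0/2}\textstyle\prod_j(I+c_jX_j)\,e^{-\beta H_0/2}\Big].
\end{align}
I want the $X_j$ pairwise commuting with $|c_j|\le1$, so that $\prod_j(I+c_jX_j)$ is a mixture of stabilizer projectors for the group they generate. Sandwiching by $e^{-\beta H_0/2}$ must preserve stabilizerness; I would handle this by decomposing the $H_0$ factor into stabilizer terms compatible with the group generated by the $X_j$.

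Convergence of the series requires
\begin{align}
\beta\, w_{\rm pert}\,e^{O(\beta w_{\rm free})} \lesssim 1.
\end{align}
Since $w_{\rm free}+w_{\rm pert}\lesssim sk$ and $w_{\rm pert}\le\epsilon(w_{\rm free}+w_{\rm pert})$, this holds for $\beta\lesssim\log(1/\epsilon)/(sk)$. When $\beta\lesssim 1/(sk)$, Theorem~\ref{thm:sep}(a) already gives separable, product-stabilizer decompositions, so that regime is covered.

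\textbf{Upper bound.} I would take $k/2$ disjoint copies, or a tensor power, of a commuting stabilizer Hamiltonian of strength $s$ built from $k$-local Paulis. To it I add a transverse perturbation of strength $\epsilon s$ that makes a small block's ground space magic, for instance a perturbed $[[k,\cdot]]$ code whose perturbed ground state has nonzero stabilizer extent. At $\beta\gtrsim\log(1/\epsilon)/(sk)$ the gap suppression $e^{-\beta\Delta}$ with $\Delta\sim s$ per term beats $\epsilon$. The Gibbs state on the block is then within $o(\epsilon)$ of a fixed magic state, and a magic witness certifies $\rho_\beta\notin\STAB_n$. This mimics Theorem~\ref{thm:sep-upper}, with an entanglement witness replaced by a stabilizer-polytope witness.

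\textbf{Main obstacle.} I expect the hardest step to be showing that the interaction-picture pinning keeps every $X_j$ in a common abelian group compatible with $H_0$. Otherwise the outer $e^{-\beta H_0/2}$ factors could generate magic. I would first try including the anticommuting $P_\sigma$ factors inside $E$, so that everything lies in the Pauli group and commutes up to sign. I would then absorb the sign structure by a Clifford-twirl symmetrization, $\tfrac12(\cdot+\mathrm{h.c.})$, as in the separability iteration. A secondary difficulty is getting the upper-bound constants to match $\log(1/\epsilon)/(sk)$ rather than $\log(1/\epsilon)/s$; this may force a $k$-body construction where a single term already has gap $\sim s/k$.
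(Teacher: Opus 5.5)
Your lower-bound skeleton and your convergence condition $\beta w_{\rm pert}e^{O(\beta w_{\rm free})}\lesssim 1$ are right. The skeleton is: interaction picture in $H_0$, the Dyson series sampled as $\E{I+bE}$, pinning perturbative terms, and the final sandwich $e^{-\beta H_0/2}\prod_j(I+c_jX_j)e^{-\beta H_0/2}$.

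\textbf{Lower bound: the final stabilizer criterion is false.} This is your ``main obstacle'', and pairwise commuting $X_j$ with $|c_j|\le 1$ do not suffice. The outer factors do not commute past the $X_j$, since these contain $Q_\mu$'s that anticommute with free terms (that is the point of $V$). Hermitian symmetrization only fixes phases. Already for one qubit with $H_0=-uZ$,
\[
e^{\beta uZ/2}(I+cX)\,e^{\beta uZ/2}=\cosh(\beta u)\,I+\sinh(\beta u)\,Z+c\,X .
\]
This lies in the one-qubit stabilizer cone (the Bloch octahedron) iff $|c|\le e^{-\beta|u|}$. So $|c|\le 1$ is far from enough when $\beta u\sim\log(1/\epsilon)$.

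The missing idea has two parts.
\begin{enumerate}
\item[(i)] Require the final $X_j$ to be pairwise \emph{incompatible} in the sense of \cref{eq:compat}, not merely commuting. Then the sets $\cF_{\rm anti}(X_j)$ are disjoint, and $H_0=H_{\rm rest}+\sum_jH_j$ with $H_j$ the free terms anticommuting with $X_j$.
\item[(ii)] Require the slack $|c_j|\le e^{-\beta\kappa(X_j)}$.
\end{enumerate}
With both, the sandwich equals $e^{-\beta H_{\rm rest}}\prod_j(e^{-\beta H_j}+c_jX_j)$. Each factor is $(e^{-\beta H_j}-|c_j|I)+|c_j|(I+\sgn(c_j)X_j)$, a positive combination of mutually commuting stabilizer projectors.

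This slack must be carried through the pinning. The paper's invariant is $|c_j|\le 2^{-\Phi_S(X_j)}e^{-\beta\kappa(X_j)}$, where $\Phi_S$ is the activity-weighted count of unpinned labels compatible with $X_j$. The propagator must therefore bound $b\,e^{\beta\kappa(E)}$ rather than $b$. This is the source of $\eta_\beta(\kappa)=e^{\beta\kappa}(e^{\beta\kappa}-1)/\kappa$. The $\log(1/\epsilon)/(sk)$ scaling survives, but without (i) and (ii) your last step fails.

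\textbf{Upper bound: off by a factor $k$.} A per-term gap $\sim s$ only yields magic for $\beta\gtrsim\log(1/\epsilon)/s$, i.e.\ $\beta_{\rm stab}=O(\log(1/\epsilon)/s)$. A term with gap $\sim s/k$ goes the wrong way. You need an \emph{effective} gap of order $sk$ at local strength $s$.

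The paper takes a $k\times k$ grid with $H_0=-\frac{w_{\rm free}}{k}\sum_j\prod_iZ_{i,j}$ and $V=-\frac{w_{\rm pert}}{k}\sum_i\prod_jX_{i,j}$.
\begin{itemize}
\item Each qubit lies in one row and one column, so $sk=w_{\rm free}+w_{\rm pert}$.
\item Every row anticommutes with every column, so \cref{eq:wfp} gives exactly $w_{\rm free}$ and $w_{\rm pert}$.
\item Restrict to the stabilizer code space on which all column strings act identically and all row strings act identically; this space commutes with $H$. There the $k$ columns act as one logical $\bar Z$ and the rows as one $\bar X$, so $H$ becomes $-(w_{\rm free}\bar Z+w_{\rm pert}\bar X)$.
\item Stabilizer postselection, a Clifford, and partial trace all preserve $\STAB_n$.
\end{itemize}
The octahedron test then gives magic once $\frac{w_{\rm free}+w_{\rm pert}}{\sqrt{w_{\rm free}^2+w_{\rm pert}^2}}\tanh\bigl(\beta\sqrt{w_{\rm free}^2+w_{\rm pert}^2}\bigr)>1$. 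For $w_{\rm pert}=\epsilon sk$ this holds once $\beta\gtrsim\log(1/\epsilon)/(2sk)$.
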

\noindent
A key conceptual difficulty in proving this result is that the proof technique of \cite{bakshi2024high} creates separable states that are already stabilizer states, implying death of magic. Obtaining a distinct stabilizer threshold from the separability threshold thus requires a new approach.

To show \Cref{thm:stab}, we move to the interaction picture with respect to the commuting part $H_0$ and no longer pin individual qubits. Instead of $\E{I+cA}$, we decompose the Gibbs state in the form
\begin{align}
    e^{-\beta H_0/2}
    \prod_{j=1}^m(I+\lambda_jX_j)
    e^{-\beta H_0/2},
\end{align}
which results in a stabilizer state if $X_1,\dots,X_m$ are pairwise incompatible in the sense of \cref{eq:compat}, and if
\begin{align}
    |\lambda_j|
    \leq
    \exp[-\beta\sum_{\sigma:\{P_\sigma,X_j\}=0}|u_\sigma|].
\end{align}
Rather than pinning qubits as in the separability proof, we remove perturbing terms $Q_\mu$ in $V$ one at a time. The propagator of \cref{eq:main-prop} gets rewritten in terms of $\cU_S$ that satisfies
\begin{align}
    e^{-\beta(H_0 + V_S)} = e^{-\beta H_0/2} \cU_S(\beta/2, -\beta/2) e^{-\beta H_0/2},
\end{align}
where $V_S$ denotes the terms in $V$ contained in a set $S$ of Hamiltonian terms (rather than qubits). We expand the propagator with a Dyson series to control quantities in terms of nested commutators. The rest of the proof then follows similarly to the separability result. To show the result is tight, we use a simple Hamiltonian of the form $\frac{1}{k}\sum_{i=1}^k Z_i + \epsilon X_1 \cdots X_k$.

\subsection{Infinite-temperature phase, correlation decay, and thermal expectations}
\label{sec:main-expect}
As shown in \Cref{fig:zf}, the first thermodynamic phase transition from the infinite-temperature phase is bounded by the radius of the zero-free disk the partition function around the origin. For long-range Pauli Hamiltonians, we show that this radius is larger than the separability (and preparability) radius by a
factor of $\sqrt{k}$.

\begin{theorem}[Zero-free disk]
\label{thm:zf}
Let $H=\sum_{a\in\cA}c_aP_a$ be an $(s,k)$-long-range Pauli Hamiltonian.  Then $Z(z) \neq 0$ for all $|z| \leq \beta_{\rm phase}$, where
\begin{align}
    \beta_{\rm phase} = \Theta\lr{\frac{1}{s\sqrt k}}.
\end{align}
\end{theorem}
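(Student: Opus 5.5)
The proof has two halves: a lower bound $\beta_{\rm phase}=\Omega(1/(s\sqrt k))$ from a polymer (cluster) expansion with the Koteck\'y--Preiss criterion, and an upper bound from an explicit $(s,k)$-long-range Hamiltonian whose partition function vanishes at $|z|=O(1/(s\sqrt k))$.

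\textbf{Lower bound: polymer representation.} First I rewrite the normalized partition function as a polymer gas. Write
\begin{align}
\tr e^{-zH}=\sum_{t\ge0}\frac{(-z)^t}{t!}\sum_{a_1,\dots,a_t\in\cA}c_{a_1}\cdots c_{a_t}\,\tr(P_{a_1}\cdots P_{a_t}).
\end{align}
For a word $(a_1,\dots,a_t)$, consider the overlap graph on its letters, with an edge whenever two supports intersect. Paulis with disjoint supports commute, so the word can be reordered into its connected components, and the normalized trace factorizes over them. Reorganizing the exponential sum by multinomial counting then gives
\begin{align}
\tr e^{-zH}=\sum_{\{\gamma_1,\dots,\gamma_r\}\ \text{compatible}}\ \prod_i w(\gamma_i).
\end{align}
Here a polymer $\gamma$ is a connected multiset of terms, compatibility means pairwise disjoint supports, and $w(\gamma)$ sums $(-z)^{|\gamma|}\prod c_a\,\tr(\cdots)/|\gamma|!$ over the orderings of $\gamma$. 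The key structural input, special to Pauli terms, is that $\tr(P_{a_1}\cdots P_{a_t})\neq0$ forces the product to be proportional to $I$. In particular, every qubit of $\supp\gamma$ is covered at least twice, and each ordering contributes at most $1$ in absolute value. Zero-freeness then follows from Koteck\'y--Preiss once I show
\begin{align}
\sum_{\gamma\ni x}|w(\gamma)|\,e^{|\supp\gamma|}\le 1
\end{align}
for every qubit $x$.

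\textbf{Lower bound: the counting step (main obstacle).} A naive spanning-tree count attaches each new term at one of at most $k$ already-present qubits, at cost $s$ per term. This gives $(C s k|z|)^t$, i.e.\ only the separability scale $1/(sk)$. To gain $\sqrt k$ I would exploit the double-covering constraint. I would explore the word in order and classify each letter as an \emph{opener} or a \emph{closer}:
\begin{itemize}
\item An opener is anchored to a qubit of an earlier letter, at cost $\le s\cdot(\#\text{candidate qubits})$.
\item A closer must act on a qubit that is currently open (covered an odd number of times so far, or at least not yet paired off).
\end{itemize}
Since all qubits must be closed and each letter has weight $\le k$, roughly half the letters are closers. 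I expect to bound the choices of the closers jointly against the $t!$ denominator, so that they cost only $s$ each, with no factor of $k$. Pairing openers with closers then yields a weight of order $(C s^2 k|z|^2)^{t/2}$, which converges for $|z|\le c/(s\sqrt k)$. The factor $e^{|\supp\gamma|}\le e^{tk/2}$ must also be absorbed. I would handle it by placing the $e^{|\cdot|}$ weight per \emph{term} (activity $e^{k}$ is too costly), so I plan to use the Koteck\'y--Preiss variant with $a(\gamma)=|\gamma|$ (number of terms) and incompatibility via shared qubits, paying one factor $s$ per neighbor. This accounting is the step I expect to be hardest. My first attempt would be a Wick-type bound: order the multiset and pair each letter with the first later letter anticommuting with or sharing its support, then count pairings.

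\textbf{Upper bound: explicit zero.} Take $m=k+1$ terms and qubits indexed by pairs $\{i,j\}\subset[m]$. Term $i$ acts with $X$ on qubit $\{i,j\}$ when $i<j$ and with $Z$ when $i>j$, for all $j\neq i$. Each term then has weight $k$, and terms $i\neq j$ overlap only on qubit $\{i,j\}$, where they anticommute. So $A_1,\dots,A_m$ are pairwise anticommuting Paulis. Each qubit lies in exactly two terms, so $H=\frac s2\sum_iA_i$ is an $(s,k)$-long-range Pauli Hamiltonian. Since $H^2=\frac{s^2m}{4}I$, the spectrum is $\pm\frac s2\sqrt{m}$ with equal multiplicity (the spectrum is symmetric because $A_1$ anticommutes with $H-\frac s2A_1$ and with $A_1$ itself, up to the usual trace-zero argument, i.e.\ $\Tr H=0$). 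Hence
\begin{align}
\Tr e^{-zH}=2^{N}\cosh\!\lr{\tfrac{zs\sqrt{k+1}}2},
\end{align}
which vanishes at $z=i\pi/(s\sqrt{k+1})$. This gives $\beta_{\rm phase}=O(1/(s\sqrt k))$ and realizes a zero of $\Tr(e^{iHt})$ at $t\sim 1/(s\sqrt k)$.
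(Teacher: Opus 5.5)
Your upper bound is the paper's construction, and your lower-bound skeleton matches the paper too: the polymer representation, Koteck\'y--Preiss with $a(\gamma)=|\gamma|$, and the fact that $\tr(P_{a_1}\cdots P_{a_t})\neq0$ forces the product to be $\propto I$. One minor slip: $A_1$ does not anticommute with itself. You only need $H^2=\tfrac{s^2m}{4}I$ and $\Tr H=0$.

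The genuine gap is the step you flag as hardest, and it carries the entire $\sqrt k$ improvement. You never prove a bound $\sum_{|\gamma|=m,\,\supp\gamma\cap W\neq\emptyset}|w(\gamma)|\le (C|z|s\sqrt k)^m$ for $|W|\le k$, and the opener/closer accounting you sketch fails when the word is explored in product order:
\begin{enumerate}
\item Consecutive letters need not overlap. $(X_1,X_2,X_1X_2)$ is a valid polymer whose second letter touches nothing earlier, so an opener need not be anchored to an earlier letter.
\item Closers need not be about half the letters. The ordering $(X_1,\dots,X_k,X_1\cdots X_k)$ has $k$ openers and one closer. Weight $\le k$ is not the reason half the steps are cheap.
\item A closer acting on ``some open qubit'' still pays for choosing among up to $\Theta(tk)$ open qubits. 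Neither ``bounding jointly against $t!$'' nor the unspecified Wick pairing removes this factor.
\end{enumerate}

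The missing idea, which is \Cref{lem:polymer-samp} in the paper, is to decouple a canonical \emph{discovery order} from the product order.
\begin{itemize}
\item \emph{Repair.} If the product of the terms discovered so far is non-identity on some qubit (a defect), choose one such qubit $x$ by a fixed deterministic rule. Require the next discovered term to act on $x$; it must exist because the final product is $\propto I$. This costs $\sum_{b\ni x}|c_b|\le s$, with no factor $k$ and no choice of qubit.
\item \emph{Birth.} Only when there is no defect, add a term anchored at a qubit of the term on top of a DFS stack, which keeps the support connected. This costs $\le ks$, and the pop/stop decisions contribute at most $4^m$ overall.
\end{itemize}
A birth on an identity-proportional product always creates a defect, and the final product has none, so there are at most $m/2$ births. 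The discovery transcript fixes the multiset, which has at most $m!$ orderings, each with $|\tr|\le1$. This cancels the $1/m!$ in the activity and gives $(4|z|s\sqrt k)^m$. Placing $W$ at the bottom of the stack makes the root step an ordinary birth. Then summing over the $|\gamma|$ terms of $\gamma$ in the Koteck\'y--Preiss condition costs no extra factor of $k$, and the criterion holds for $|z|\lesssim 1/(s\sqrt k)$.
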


For this family of Hamiltonians, the above result tightly establishes the largest possible radius for a zero-free disk (up to constants). It leads to concrete physical and computational properties of the Gibbs state. Throughout the entire zero-free disk established above, we show that geometrically local Hamiltonians, for which a notion of distance between sites is well-defined, exhibit correlation decay between any pair of observables. Notably, this resolves open question 1(a) of~\cite{harrow2020classical}, which only showed a similar result for 1D or commuting Hamiltonians.

\begin{theorem}[Exponential decay of correlations]\label{thm:decay}
    Let $H=\sum_{a\in\cA} c_a P_a$ be a geometrically local $(s,k)$-long-range Pauli Hamiltonian on a lattice $\Lambda$ with distance $d$. Assume $H$ has interaction range $R$, i.e., for every $a\in \cA$,
    \begin{align}
        \max_{x,y \in \supp(P_a)} d(x,y) \leq R.
    \end{align}
    Let $A, B$ be arbitrary Hermitian observables supported on disjoint sets $X, Y \subset \Lambda$. Then for
    \begin{align}
        0 \leq \beta \leq \frac{1}{128e\,s\sqrt{\max\{k, |X|+|Y|\}}}
    \end{align}
    it holds that
    \begin{align}
        \abs{\langle AB \rangle_\beta - \langle A \rangle_\beta \langle B \rangle_\beta} \leq 48 \cdot 2^{|X|+|Y|} \norm{A} \norm{B} \exp[-\frac{\log 8}{6R}d(X,Y)].
    \end{align}
\end{theorem}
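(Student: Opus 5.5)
The approach is to derive \Cref{thm:decay} from the same cluster (polymer) expansion that proves the zero-free disk in \Cref{thm:zf}, applied to a deformed partition function. Write $A$ and $B$ in the Pauli basis. Since $A$ is supported on $X$, it has at most $4^{|X|}$ Pauli components, each with coefficient at most $\norm{A}$; the same holds for $B$. This decomposition is where the prefactor $2^{|X|+|Y|}$ comes from, possibly after a Cauchy--Schwarz or Hilbert--Schmidt bound such as $\sum_P |a_P| \le 2^{|X|}\norm{A}$. It therefore suffices to treat Pauli observables $P_A$ and $P_B$. Consider the two-parameter function
\begin{align}
    Z(\lambda,\mu)=\Tr\, e^{-\beta H+\lambda P_A+\mu P_B}.
\end{align}
The connected correlator $\langle P_AP_B\rangle-\langle P_A\rangle\langle P_B\rangle$ is not literally $\partial_\lambda\partial_\mu\log Z$ at zero, because $P_A$ and $P_B$ need not commute with $H$. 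I would use instead the tilted partition function $\Tr\big((I+\lambda P_A)(I+\mu P_B)e^{-\beta H}\big)$, or equivalently insert $P_A$ and $P_B$ as marked polymers. In the Taylor expansion of $\Tr(e^{-\beta H})$, $P_A$ and $P_B$ then appear as extra Pauli factors at fixed positions.

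\textbf{Step 1: polymer representation with marked polymers.} Expanding $e^{-\beta H}$ in Pauli words, only words whose product is proportional to the identity survive the trace. Connected components of the support-overlap graph give polymers. Those components that contain the marked operator $P_A$ or $P_B$ carry the insertion. Factorization over components then gives $\log Z(\lambda,\mu)=\sum_{\text{clusters}}\phi(\Gamma)\prod_\gamma w(\gamma)$. The Kotecký--Preiss estimate used for \Cref{thm:zf} holds for $|\beta|\le 1/(128e\,s\sqrt{k'})$ with $k'=\max\{k,|X|+|Y|\}$. The larger locality $k'$ is needed because the marked polymers have support up to $|X|+|Y|$, and the $\sqrt{k}$ counting trick from the traceless-term argument must also absorb the marked terms. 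Under this bound, $\sum_{\Gamma\ni x}|\Phi(\Gamma)|e^{a|\Gamma|}$ converges with $e^{a}$ close to $8$. I would also check that $|\lambda|,|\mu|$ of order one, which contribute weight only on $X$ and $Y$, keep the criterion intact, perhaps at the cost of constants absorbed into $48$.

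\textbf{Step 2: only linking clusters survive.} The mixed derivative $\partial_\lambda\partial_\mu\log Z|_{0}$ equals the connected correlator. In the cluster expansion, it receives contributions only from clusters that contain both marked polymers, and a cluster is connected. Because the interaction range is $R$, a connected cluster joining $X$ to $Y$ needs at least $d(X,Y)/R$ polymers, or equivalently total size $|\Gamma|\gtrsim d(X,Y)/R$. The tail bound from Kotecký--Preiss then gives $\sum_{\Gamma:|\Gamma|\ge m,\,\Gamma\ni X}|\Phi(\Gamma)|\le C\,e^{-a m}$, which yields the decay $8^{-d/(6R)}$. The factor $6$ in $\log 8/6R$ should come from the ratio between cluster size and the number of hops, together with the weakening of the exponent needed to sum over starting points in $X$. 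Here each $X$-site contributes $e^{a}$-type factors, which are absorbed into $2^{|X|}$.

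\textbf{Main obstacle.} The hard part is making the marked-polymer expansion rigorous for noncommuting $P_A$ and $P_B$. One must define the correct ordered insertion so that the connected correlator equals a sum over clusters connecting both marks. The cleanest route is probably to write $\langle P_AP_B\rangle Z=\Tr(P_AP_Be^{-\beta H})$ and expand both $\Tr(P_AP_Be^{-\beta H})/Z$ and $\Tr(P_Ae^{-\beta H})/Z$ as ratios of polymer partition functions with modified activities for polymers touching $X$ and $Y$. Each ratio is the exponential of a convergent cluster sum. In the difference of logarithms, all clusters avoiding either $X$ or $Y$ cancel, so $\log\frac{\langle P_AP_B\rangle}{\langle P_A\rangle\langle P_B\rangle}$ is bounded by the linking-cluster tail. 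One complication is that $\langle P_A\rangle$ may vanish, so this ratio form cannot be used as stated. I would handle that case by shifting to $I+P_A$, which has nonvanishing expectation, and then converting back to the difference form via $|e^{u}-1|\le 2|u|$. The remaining factor, about $48$ overall, collects the constants from these steps.
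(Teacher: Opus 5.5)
Your linked-cluster route is valid, but it is not what the paper does. The paper never builds a two-mark expansion. For $d(X,Y)>6R$ (smaller distances follow from the trivial bound), it deletes every Hamiltonian term not contained in one of three regions: $\Lambda_X=\{z:d(z,X)<d(X,Y)/3\}$, the analogous $\Lambda_Y$, or the remainder $\Lambda_0$. The Gibbs state of the resulting $H'$ factorizes exactly over these regions, so its connected correlator vanishes. Every deleted term lies at distance $>d(X,Y)/6$ from $X\cup Y$. \Cref{cor:distant-obs} applies here: it is the single-observable expansion of \Cref{lem:clustertrunc} truncated at total size $L=d(X,Y)/(6R)$, and these truncations coincide for $H$ and $H'$. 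It bounds the change in each of $\langle A\rangle$, $\langle B\rangle$, $\langle AB\rangle$ by $2\cdot 8^{-L}$ times $\ell_1$ norms. The triangle inequality then gives $6\cdot 8^{-L}$, and the Hilbert--Schmidt bound gives $2^{|X|+|Y|}$. So the $6$ in the exponent is a geometric buffer ($d/3$ minus $R$), not a cluster-size/hop ratio as you guessed. The $|X|+|Y|$ in the temperature arises because $AB$ is fed into the one-observable lemma.

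In your version, $F(\lambda,\mu)=\Tr((I+\lambda P_A)(I+\mu P_B)e^{-\beta H})$ already gives $\partial_\lambda\partial_\mu\log F|_{0}=\langle P_AP_B\rangle-\langle P_A\rangle\langle P_B\rangle$ exactly. The trace also factorizes over overlap components irrespective of operator order. So noncommutativity is not a real obstacle, and the ratio/shift detour is optional. The doubly marked polymer is simply one rooted at $P_AP_B$, of weight at most $|X|+|Y|$, and that is where the same temperature condition comes from.

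Your route buys a sharper exponent: linking clusters need at least $d(X,Y)/R$ Hamiltonian terms, with no factor $1/6$. It also gives a direct structural picture of the correlator. The cost is building a new polymer system with two marks and re-verifying the Kotecky--Preiss or tree-graph bounds for it. The paper's surgery argument instead reuses the already-proved single-observable machinery as a black box. Any moderate prefactor you obtain is absorbed into $48$ by using the trivial bound when $d(X,Y)\le 6R$.
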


This correlation decay suggests that estimating thermal expectations may be computationally easy by truncating the Hamiltonian to a small neighborhood around the observable. Although that intuition only holds in geometrically local Hamiltonians with bounded interaction range, we show that it in fact extends to all-to-all Hamiltonians with unbounded interaction range. Na\"{i}vely, the zero-free region of \Cref{thm:zf} leads to a quasipolynomial-time algorithm for estimating thermal expectations (via Barvinok's method). We improve this to a polynomial-time algorithm throughout the entire zero-free disk (up to constants).

\begin{theorem}[Polynomial-time estimation of thermal expectations]\label{thm:expect}
    Let
    \begin{align}
        O=\sum_{b\in\cA_O}d_bQ_b,
        \qquad
        B_O=\sum_{b\in\cA_O}|d_b|,
        \qquad
        |\supp(Q_b)|\leq k_O.
    \end{align}
    For every real
    \begin{align}
        0\leq\beta\leq\frac{1}{512e\,s\sqrt{\max\{k,k_O\}}},
    \end{align}
    there is a randomized classical algorithm which estimates
    $\log Z(\beta)$ and $\Tr(O\rho_\beta(H))$ to additive error $\epsilon$ with failure probability at most $\delta$ in time
    \begin{align}
        \wt O\left(
            (|\cA|+|\cA_O|)\max\{k,k_O\}
            +
            \frac{B_O^3+n^3}{\epsilon^3}
            \log(|\cA|+|\cA_O|)
            \log\frac1\delta
        \right),
    \end{align}
    where $\wt O$ suppresses additional logarithmic factors.
\end{theorem}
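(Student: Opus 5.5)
The plan is to build the algorithm on the cluster expansion behind \Cref{thm:zf}. I will write the normalized partition function as a polymer gas and estimate its logarithm by Monte Carlo sampling of clusters rather than by enumeration.

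Expanding
\begin{align}
    \tr(e^{-zH})=\sum_{t\ge0}\frac{(-z)^t}{t!}\sum_{a_1,\dots,a_t}c_{a_1}\cdots c_{a_t}\,\tr(P_{a_1}\cdots P_{a_t}),
\end{align}
only multisets of terms whose Pauli product is $\pm I$ contribute. I group such multisets into connected polymers: connected components of the overlap graph on supports. Because $\tr$ factorizes over disjoint supports, this gives an abstract polymer model. Incompatibility means overlapping supports, and each weight $w(\gamma)$ is the normalized trace of the ordered words on $\gamma$.

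The zero-free analysis of \Cref{thm:zf}, via Kotecky--Preiss, supplies a decay function. For $|z|\le 1/(512e\,s\sqrt{k})$, which is a constant factor inside the disk, I expect a bound of the form
\begin{align}
    \sum_{\gamma\ni x}|w(\gamma)|e^{a(\gamma)+\kappa|\gamma|}\le \eta
\end{align}
with some margin $\kappa>0$. The cluster expansion then gives
\begin{align}
    \log Z = n\log 2+\sum_{\text{clusters }\Gamma}\phi(\Gamma)\prod_{\gamma\in\Gamma}w(\gamma),
\end{align}
with total absolute weight per site at most $\eta$. The key step is to write this absolutely convergent sum as $n\cdot\EE[X]$, where $X$ is bounded. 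To do so:
\begin{enumerate}
    \item Pick a root qubit $x$ uniformly at random.
    \item Grow a random cluster containing $x$ by a branching process. Each step adds a term $a$ touching the current support with probability proportional to $|c_a|/s$, and stops with a geometric probability.
    \item Output the signed, Ursell-weighted contribution divided by its sampling probability, with an overcounting correction for the root.
\end{enumerate}
The point of the margin $\kappa$ is that the importance weights $|w\,\phi|/\text{prob}$ stay bounded by a constant. Each step of the branching process costs $\beta s$ times a combinatorial factor, while the traceless-cancellation counting that gave $\sqrt{k}$ supplies the needed decay. Sampling a term proportional to $|c_a|$ among terms touching a site takes $O(\log|\cA|)$ time after $O(|\cA|k)$ preprocessing.

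With a bounded estimator, Hoeffding's inequality gives additive error $\epsilon$ for $\log Z$ from $O(n^2/\epsilon^2\cdot\log(1/\delta))$ samples. Evaluating each sampled word's trace and the Ursell function $\phi(\Gamma)$ is polynomial because clusters have $O(\log(n/\epsilon))$ size with high probability. Computing $\phi$ exactly may be costly, so I may need to sample a spanning tree as well, as in the Penrose tree-graph bound. I will truncate clusters larger than $O(\log(n/\epsilon))$; the tail this discards is at most $n e^{-\kappa L}$.

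For $\Tr(O\rho_\beta)$, I will use linearity to reduce to Paulis. Write $\Tr(O\rho)=\sum_b d_b\,\Tr(Q_b e^{-\beta H})/Z$. Then sample $b$ with probability $|d_b|/B_O$, which contributes the $B_O^3/\epsilon^3$ term. For a single Pauli $Q$, $\Tr(Qe^{-\beta H})/Z$ is the ratio of a polymer model with a marked polymer containing $Q$ to the unmarked model. The cluster expansion of this ratio has the form $\sum_{\Gamma\ni Q}$, so it is again a sum over clusters rooted at $\supp(Q)$. Its convergence radius depends on $\max\{k,k_O\}$, since the marked polymer has weight up to $k_O$. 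The same rooted-sampling estimator then applies without the factor $n$.

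The main obstacle is step 2: designing a sampling distribution whose importance weights are uniformly bounded without losing the $\sqrt{k}$ gain. Naively sampling terms proportional to $|c_a|$ loses the cancellation from traceless words, which gives only the $1/sk$ radius. To keep the gain, I would sample the word together with its pairing structure, meaning that each Pauli must be cancelled by a later occurrence. Alternatively, I would reject non-identity products while keeping an unbiased reweighting. Getting the pairing-based count to match the Kotecky--Preiss bound of \Cref{thm:zf} is where the $512e$ constant would come from.
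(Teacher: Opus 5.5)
Your surrounding architecture matches the paper's: polymers are connected words with nonvanishing normalized trace, controlled by Kotecky--Preiss; $\log Z$ is expanded around a uniformly random qubit and $\langle O\rangle_\beta$ around one observable polymer; the expansion is truncated at $L=O(\log(A/\epsilon))$ with $A\in\{n,B_O\}$; and a bounded unbiased estimator is combined with Hoeffding. But the step that makes the theorem true at the $1/(s\sqrt{k})$ scale is exactly the one you leave open, so there is a genuine gap.

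The branching process you actually specify in step 2 adds a term touching the current support with probability proportional to $|c_a|$. Its importance weight $|w|/p$ is of order $(|\beta| sk)^m$ on a polymer of size $m$, i.e.\ $(c\sqrt{k})^m$ at $\beta\sim 1/(s\sqrt{k})$. With $m$ up to $L\sim\log(n/\epsilon)$, the natural bound on the estimator is $n\,(n/\epsilon)^{O(\log k)}$. That is polynomial only for bounded $k$, not the stated runtime. Your fallback of rejecting words whose product is not proportional to $I$ does not help: rejection zeroes some outputs but leaves $|w|/p$ unchanged on accepted words. Citing \Cref{thm:zf} is not enough either. You need the $\sqrt{k}$ counting realized \emph{as the proposal distribution}, not merely as a bound on $\sum|w|$.

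The missing ingredient is the paper's transcript sampler (\Cref{lem:polymer-samp}), in which the counting argument and the proposal distribution coincide. It runs as follows.
\begin{enumerate}
\item Keep the multiset $S$ of discovered terms and the set $D(S)$ of qubits on which their product is not proportional to $I$; this set is order-independent.
\item If $D(S)\neq\emptyset$, take a deterministically chosen $x\in D(S)$ and draw $b\in\cA(x)$ with probability $|c_b|/S(x)$. This ``repair'' costs only a factor $s$.
\item Only when $D(S)=\emptyset$ is a ``birth'' allowed. Pop a stack of discovered terms using fair coins, which keeps the support connected at a total cost of at most $4^m$. Then pick $x$ in the top term with probability proportional to $S(x)$, and $b\in\cA(x)$, at a cost of $sk$.
\item Every birth creates a defect, so at most $m/2$ steps are births.
\item Order the sampled multiset uniformly at random. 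Each ordering has probability $1/N_r\ge 1/m!$, which absorbs the $1/m!$ in the activity.
\item Accept only if the canonical transcript of that ordered word equals the sampled one, so each polymer has exactly one accepting transcript.
\end{enumerate}
This yields an unbiased $Y$ with $|Y|\le B(4|\beta|s\sqrt{\max\{k,|W|\}})^m$.

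Your ``pairing'' idea is in this spirit. However, it must be organized by qubit-level defects in discovery order rather than by word order, together with the canonical-transcript test; otherwise you lose unbiasedness or boundedness.

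Clusters are then assembled from such polymers:
\begin{enumerate}
\item draw a composition $h=m_0+\cdots+m_k$;
\item draw a rooted tree with probability proportional to $\prod_i\ell_{p_T(i)}$, normalized by \Cref{fac:cayley};
\item sample each child polymer to touch the support of a single uniformly chosen term of its parent, so $|W|\le k$ and the $\sqrt{k}$ survives;
\item weight by $\varphi(G)/\tau(G)\in[-1,1]$, computed exactly.
\end{enumerate}
That exact computation costs $O(3^L)=(A/\epsilon)^{O(1)}$, so it is not an obstacle. Combined with $O(A^2\epsilon^{-2}\log(1/\delta))$ samples, it is what produces the $(B_O^3+n^3)/\epsilon^3$ term. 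That term does not come from sampling $b$ proportionally to $|d_b|$.
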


\noindent
In comparison, \cite{sanchez2025high} analyzed long-range interactions and obtained a quasipolynomial-time algorithm for partition function estimation for $\beta \lesssim 1/k$, causing them to raise the possibility of a superpolynomial quantum advantage for partition function estimation. Our result shows that, at least for Pauli Hamiltonians, a polynomial-time classical algorithm exists for the same long-range interactions and for even colder temperatures than \cite{sanchez2025high}.

Prior polynomial-time algorithms for estimating the partition function \cite{mann2021efficient,yao2022polynomial,mann2024algorithmic} apply only to systems with bounded interaction degree. These works, as well as quasipolynomial-time algorithms~\cite{harrow2020classical}, generally apply to temperatures with locality dependence $\beta \lesssim 1/k$ (or worse). The source of our improved $1/\sqrt k$ scaling is partly unique to the choice of the Pauli basis, as we now explain.
Directly expanding the partition function
\begin{align}
    \Tr(e^{-zH})
    =
    \sum_{m\geq0}
    \frac{(-z)^m}{m!}
    \sum_{a_1,\ldots,a_m}
    \left(\prod_{j=1}^m c_{a_j}\right)
    \Tr(P_{a_1}\cdots P_{a_m}),
\end{align}
one sees that the trace in each term can be factorized over the connected components of the supports of the terms $P_{a_1}, \dots, P_{a_m}$. These form the polymers in the cluster expansion, which we control via the Kotecky-Preiss criterion, similarly to~\cite{mann2021efficient}. Note that if we instead constructed polymers from Trotter steps (which is another standard option, see e.g.~\cite{zlokapa2026syk}), the following argument would not immediately apply.

To construct the polymers, we grow the product from $P_{a_1}\cdots P_{a_j}$ to $P_{a_1}\cdots P_{a_j} P_{a_{j+1}}$ by choosing a new Hamiltonian term with overlapping support on $P_{a_j}$. Na\"{i}vely, there are $k$ choices for a qubit in this overlapping support, since each term is $k$-local. However, the final product
$P_{a_1}\cdots P_{a_m}$ only contributes to the trace if every qubit is acted on trivially overall. Thus, whenever a qubit has a nontrivial Pauli applied to it, a later term must act nontrivially on that same qubit. We can thus confine half of the choices of $P_{a_{j+1}}$ to ``repairing'' a non-identity qubit, reducing the effective combinatorial growth from $(sk)^m$ to
$(s\sqrt{k})^m$.

Once we have a convergent cluster expansion, zero-freeness follows immediately from Kotecky-Preiss. Correlation decay (\Cref{thm:decay}) also follows almost immediately: for geometrically local Hamiltonians, truncating the cluster expansion results in support only on a local neighborhood of polymers around an observable, ensuring that its expectation is independent of Hamiltonian terms further away. The zero-free disk also almost immediately gives a deterministic algorithm for estimating the partition function by truncating the cluster expansion at order $O(\log n/\epsilon)$. However, enumerating all these terms requires quasipolynomial time $n^{O(\log n/\epsilon)}$, as with most generic applications of Barvinok's method. Our polynomial-time algorithm avoids this by constructing a random variable that equals the truncated cluster expansion in expectation; we control its variance to show it succeeds with high probability. We use importance sampling---i.e., if an object with contribution $a$ is selected with probability $p$, then outputting $a/p$ gives an
unbiased contribution---on the polymer lengths, the polymers themselves, and the ways in which the polymers are connected.

We use a similar cluster expansion and sampling procedure to prove the polynomial-time algorithm for Gibbs state preparation given in \Cref{thm:samp}. However, rather than estimating $\Tr(e^{-zH})$, we consider the partition function that normalizes the Gibbs state post-selected on measurement outcomes for individual qubits. This pinned partition function arises naturally in each step of the pinning procedure in the separability proof of \Cref{thm:sep} (similar to~\cite{bakshi2024high}).

\subsection{Future directions}
\label{sec:future}

Besides the broad questions described in \Cref{sec:open}, we briefly discuss two more technical perspectives obtained from our results and proof techniques.

\paragraph{Classical hardness and the infinite-temperature phase.} The quantum zero-free disk in \Cref{thm:zf} has radius $\Theta(1/s\sqrt{k})$, while known hardness thresholds \cite{sly2014counting} occur at the colder scale $\Theta(1/s)$. This leaves a gap between the largest disk centered at the origin and the expected real-temperature computational threshold. A natural conjecture is that the disk in \Cref{thm:zf} can be extended to a zero-free strip: there should exist universal
constants $c,c'>0$ such that for every $(s,k)$-long-range Pauli Hamiltonian,
\begin{align}
    Z(z)\neq0
    \qquad
    \text{whenever}
    \qquad
    -\frac{c}{s} \leq \Re z\leq \frac{c}{s},
    \qquad
    |\Im z|\leq \frac{c'}{s\sqrt{k}}.
\end{align}
If such a strip can be combined with polynomial-time analytic continuation of the cluster expansion, then $\log Z(\beta)$ and local thermal
expectations should be classically computable for all real $\beta=O(1/s)$.

\paragraph{Quantum Gibbs samplers and preparing the Gibbs state.} Although we anticipate that the infinite-temperature phase extends to $\beta = \Theta(1/s)$, it is unclear how to show that such Gibbs states could be classically prepared efficiently: one lacks a classical representation (e.g., mixture of product states), and we do not know how to show relationships between zero-freeness and quantum mixing times. Fixing a boundary condition to show a condition like strong spatial mixing is analogous to the pinned zero-freeness in the proof of \Cref{thm:samp}, where we post-selected on measurement outcomes for some qubits of the Gibbs state. We note that this pinned partition function had a zero-free radius of $\beta \lesssim 1/sk$, which is asymptotically smaller than the unpinned radius of $1/s\sqrt k$. It turns out that this characterization is tight (up to a log factor).

\begin{theorem}[Pinned partition function zeros]
\label{thm:pin-zf}
For every $s>0$ and every integer $k\geq2$, there exist an $(s,k)$-long-range Pauli Hamiltonian $H$, a set of pinned qubits $Y\subseteq[n]$, and a pinning $y\in\{0,1\}^Y$ such that
\begin{align}
    \Tr_{[n]\setminus Y}(\bra{y}e^{-z_*H}\ket{y})=0 \quad \text{for} \quad z_* \in \C \quad\text{satisfying}\quad |z_*| = \Theta\left(\frac{\log k}{sk}\right).
\end{align}
\end{theorem}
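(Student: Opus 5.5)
The plan is to exhibit the zero with an explicit small construction that reduces the pinned trace to a sum of two $2\times2$ matrix exponentials, which can be computed in closed form. Take $n=k$ qubits and
\begin{align}
    H = a\sum_{i=1}^k Z_i + b\, X_1X_2\cdots X_k, \qquad |a|+|b|\le s,
\end{align}
for instance $a=b=s/2$. This is an $(s,k)$-long-range Pauli Hamiltonian. Pin $Y=\{2,\dots,k\}$ to $y=0^{k-1}$, so the pinned quantity is
\begin{align}
    \langle 0^k|e^{-zH}|0^k\rangle+\langle 10^{k-1}|e^{-zH}|10^{k-1}\rangle .
\end{align}
The string $X_1\cdots X_k$ maps $|x\rangle$ to $|\bar x\rangle$, and the field term is diagonal. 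Hence $H$ is block diagonal on the pairs $\{|x\rangle,|\bar x\rangle\}$, and each block has the form $\begin{pmatrix}E&b\\ b&-E\end{pmatrix}$.

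For $|0^k\rangle$ we have $E_1=ka$, and for $|10^{k-1}\rangle$ we have $E_2=(k-2)a$. Writing $\omega_j=\sqrt{E_j^2+b^2}$, the diagonal entry of each block is
\begin{align}
    g_j(z)=\cosh(z\omega_j)-\frac{E_j}{\omega_j}\sinh(z\omega_j).
\end{align}
Therefore the pinned partition function is the explicit entire function $F(z)=g_1(z)+g_2(z)$, a sum of four exponentials $e^{\pm z\omega_j}$. The two dominant coefficients are $\approx 1$. The two subdominant coefficients are $\frac12(1-E_j/\omega_j)\approx b^2/(4E_j^2)=\Theta(1/k^2)$.

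Next I would locate a zero of $F$. The heuristic is that a zero requires the small-coefficient exponentials $\Theta(k^{-2})e^{z\omega_j}$ to balance the dominant ones $e^{-z\omega_j}$. Since $\omega_j=\Theta(sk)$, this forces $\Re z\approx \log(k^2)/(2\omega_j)=\Theta(\log k/(sk))$. Choosing $\Im z=\Theta(1/(sk))$ tunes the phases to cancel.

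Concretely, I would reduce the equation to an equation in the single variable $w=e^{z\omega_1}$. Because $\omega_2=\omega_1-2a+O(b^2/(ka))$, the ratio $e^{z\omega_2}/e^{z\omega_1}$ is close to a fixed factor $e^{-2az}$ of modulus $\Theta(1)$ throughout the region of interest. This turns $F(z)=0$ into an approximate quadratic $\alpha w^2+\gamma=0$, with $|\alpha|=\Theta(k^{-2})$ and $\gamma$ of order one. The approximate quadratic has an explicit root with $|w|=\Theta(k)$, and one can choose which branch of the logarithm defines $z$. I would then apply Rouché's theorem on a small disk of radius $O(1/(sk))$ around the corresponding $z_0$, comparing $F$ to the approximating function. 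This yields an actual zero $z_*$ with $|z_*|=\Theta(\log k/(sk))$. For $k$ below a fixed constant, $\log k/(sk)$ and $1/s$ are comparable, and a direct computation or rescaling gives the bound.

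The main obstacle is this Rouché step. One must check that the error terms are uniformly smaller than the model function on the circle, in particular the corrections $\omega_j-E_j=O(b^2/(ka))$ and the neglected cross terms. They must remain small compared with the $\Theta(k^{-2})$ coefficients after multiplication by $e^{|z|\omega}=\Theta(k)$. If the $a=b$ choice makes constants awkward, I would first adjust the ratio $b/a$, or add more pinned spectator structure, so that the two blocks separate cleanly. Finally, I would remark that the upper bound $|z|\gtrsim 1/(sk)$ from the pinned zero-freeness used in \Cref{thm:samp} shows this is tight up to the $\log k$ factor.
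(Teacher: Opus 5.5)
Your proof is correct, but it takes a longer route than the paper. The paper uses the same Hamiltonian ($a=b=s/2$) but pins \emph{every} qubit, $Y=[k]$ and $y=0^k$. The statement allows this, and then $\Tr_{[n]\setminus Y}$ is empty. The pinned quantity is then exactly your $g_1(z)=\cosh(z\omega_1)-\frac{k}{\sqrt{k^2+1}}\sinh(z\omega_1)$ with $\omega_1=\frac s2\sqrt{k^2+1}$, and its zero is explicit. $g_1=0$ iff $\tanh(z\omega_1)=\sqrt{k^2+1}/k>1$, and $\tanh(u+i\pi/2)=\coth u$ gives $z_*\omega_1=\arctanh\bigl(k/\sqrt{k^2+1}\bigr)+i\pi/2$. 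Since $\arctanh\bigl(k/\sqrt{k^2+1}\bigr)=\log\bigl(k+\sqrt{k^2+1}\bigr)$, this yields $|z_*|=\Theta(\log k/(sk))$ for every $k\ge2$ at once. There is no asymptotic expansion, no Rouch\'e step and no separate small-$k$ case. Leaving qubit $1$ unpinned is what adds the second block $g_2$ and forces your approximate-quadratic analysis. Your own computation already contains the short proof.

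Your extra work does go through. Let $w=e^{z\omega_1}$ and $\delta=\omega_1-\omega_2=2a-O(a/k^2)$, and write $A_j,B_j$ for the coefficients of $e^{\pm z\omega_j}$ in $g_j$. Then
$wF=\alpha_0w^2+\gamma_0+A_2w^2(e^{-z\delta}-1)+B_2(e^{z\delta}-1)$, with $\alpha_0=A_1+A_2$ and $\gamma_0=B_1+B_2$.

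The model term equals $\gamma_0\bigl(1-e^{2\omega_1(z-z_0)}\bigr)$, which is bounded below by a constant on the circle $|z-z_0|=1/(4\omega_1)$. On that circle $|A_2w^2|=O(1)$ and $|z\delta|=O(\log k/k)$, so the perturbation is $O(\log k/k)$ and Rouch\'e applies for large $k$. Note that your factor $e^{-2az}$ is in fact $1+O(\log k/k)$ in this region, not merely of modulus $\Theta(1)$.

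Your small-$k$ remark needs one more sentence. At $k=2$ you have $E_2=0$, so $g_2=\cosh(za)$ and the $\Theta(k^{-2})$ expansion fails. The gap closes as follows: $F$ depends only on $sz$, satisfies $F(0)=2$, and is an order-one exponential polynomial with distinct frequencies, so it is not of the form $ce^{\lambda z}$. By Hadamard it therefore has a nonzero root, and over finitely many $k$ the constants are uniform.

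What your version gains is robustness: it shows the zero is not an artifact of pinning every qubit. The paper's version gains an exact closed-form zero with uniform constants for free.
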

\noindent
This suggests that the quantum analogue of of conditioning on a boundary requires a more subtle approach than pinning in the computational basis. In classical systems, pinned zero-freeness can be used to show strong spatial mixing~\cite{regts2023absence,shao2021contraction}; our example above shows an obstruction to na\"{i}vely using this classical proof route in the quantum setting.

\section{Death of entanglement}
\label{sec:entanglement}
We prove in this section parts (a) and (b) of \Cref{thm:sep}; part (c) is shown in \Cref{sec:nonlocal}. We will also only give the lower bounds on $\beta_{\rm sep}$ here; the upper bounds are shown in \Cref{thm:sep-upper} (\Cref{sec:upper}).

We briefly establish some notation. For Hamiltonian term labels $a, b\in \cA$, we use $a \cap b$ as shorthand for relations between $\operatorname{supp}(P_a)$ and $\operatorname{supp}(P_b)$.  We will use $|a|$ to denote the number of elements in $\operatorname{supp}(P_a)$.  We will use the notation that $H^{(Q)}$ is the Hamiltonian restricted to the terms $Q \subseteq \mathcal{A}$.  Additionally $H^{(S)}$ with $S\subseteq \Lambda$ means all terms in the Hamiltonian such that the term is contained in $S$ 
\begin{align}
    H^{(S)} = \sum_{a\in\mathcal{A}: \supp(P_a) \subseteq S} \lambda_a P_a
\end{align}
and $H_{(S)}$ means all terms in the Hamiltonian such that the term touches $S$
\begin{align}
    H_{(S)} = \sum_{a\in\mathcal{A}: \supp(P_a) \cap S \neq \emptyset} \lambda_a P_a
\end{align}
For an interaction label $a\in\mathcal{A}$, write $H_{(a)}$ for the Hamiltonian restricted to terms touching $a$.  We denote the interaction labels which correspond to unpinned terms of the Hamiltonian as $\mathcal{A}^{(S)} = \{a\in\mathcal{A}: a\subseteq S\}$.

\subsection{Separable decomposition}
To show separability (\Cref{def:sepstab}), we will show that the Gibbs state $\rho_\beta$ can be expressed as a positive mixture of $\bigotimes_l (I + c_{i,l} P_{i,l})$ where the $ P_{i,l}$ are disjoint.  We will work with the unnormalized Gibbs state $e^{-\beta H}$ since the normalization only rescales the $p_i$. Making use of the fact that for $|c| \leq 1$, $I+ cP$ is separable, one can conclude that the Gibbs state can be written as a mixture of products of stabilizer states.  This is captured by the following condition for separability.
\begin{lemma}[Separable Pauli factors]
\label{lem:sep-pauli-factor}
    If a positive operator $\rho$ can be expressed as
    \begin{align}
        \rho = \sum_i p_i \bigotimes_l (I + c_{i,l} P_{i,l})
    \end{align}
    where $p_i\geq 0$, $P_{i, l}$ is a Pauli product on block $l$, $|c_{i,l}| \leq 1$, $c_{i,l}\in \mathbb{R}$, and the $l$ are disjoint, then the operator is separable into a mixture of stabilizer states.  
\end{lemma}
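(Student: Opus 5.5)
The plan is to reduce to a single factor $I + cP$ on one block and show it is a nonnegative combination of product stabilizer projectors; tensoring then gives the claim. The reduction is as follows. Fix $i$. Because the blocks $l$ are disjoint, the operator $\bigotimes_l (I + c_{i,l} P_{i,l})$ is a tensor product over blocks, with the identity on any qubits not covered by a block. If each factor $I + c_{i,l}P_{i,l}$ is a nonnegative combination of pure product stabilizer states on block $l$, then expanding the tensor product gives a nonnegative combination of tensor products of such states. A tensor product of product stabilizer states is again a product stabilizer state. Summing over $i$ with weights $p_i\ge 0$ and normalizing by the trace then exhibits $\rho/\Tr\rho$ in the convex hull of pure product stabilizer states. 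This set lies in both $\Sep_n$ and $\STAB_n$. The identity on uncovered qubits is handled the same way, since $I = \ket0\!\bra0 + \ket1\!\bra1$.

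The key step is the single-block claim. Let $P = \bigotimes_{q\in B}\sigma_q$ be a Pauli product on block $B$; any global sign can be absorbed into $c$, so take each $\sigma_q\in\{I,X,Y,Z\}$. On each qubit $q$ choose the basis eigenprojectors $\Pi_q^{\pm} = (I\pm \sigma_q)/2$, using the $Z$-basis when $\sigma_q=I$. Then every product $\bigotimes_q \Pi_q^{s_q}$ with $s\in\{\pm\}^B$ is a pure product stabilizer state. These projectors sum to $I$, and each is an eigenvector of $P$ with eigenvalue $\prod_{q:\sigma_q\neq I} s_q$. Hence
\[
I + cP = \sum_{s\in\{\pm\}^B}\Bigl(1 + c\prod_{q:\sigma_q\ne I} s_q\Bigr)\bigotimes_{q\in B}\Pi_q^{s_q},
\]
and every coefficient lies in $[1-|c|,1+|c|]\subseteq[0,2]$ exactly because $|c|\le 1$ and $c\in\mathbb R$. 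This is the only place the hypothesis $|c_{i,l}|\le1$ enters.

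I expect no real obstacle. The only points needing care are bookkeeping. First, the reality of $c$ is what makes the coefficients real and nonnegative. Second, disjointness of the blocks is what makes the operator a genuine tensor product, so that the expansions do not interfere. Third, the result is a positive combination of states, which normalizes to a convex combination because $\Tr\rho>0$; positivity of $\rho$ is given.
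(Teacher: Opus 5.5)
Your proposal is correct and takes essentially the same approach as the paper: both expand each factor $I+cP$ in the product eigenprojectors $\bigotimes_q \tfrac12(I\pm\sigma_q)$ of $P$ and use $|c|\le 1$, $c\in\mathbb{R}$ to make the weights nonnegative. The only cosmetic difference is that the paper first splits $I+cP=(1-|c|)I+|c|\bigl(I+\operatorname{sgn}(c)P\bigr)$ and expands the second piece with weights $0$ or $2$, whereas you expand directly with weights $1\pm|c|$; you also handle identity tensor factors of $P$ explicitly, which the paper leaves implicit.
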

\begin{proof}
Clearly, it is a positive combination of terms each of which are PSD.  Now we just need to show that each of the factors can be decomposed into a product of single qubit terms.  We'll focus on showing one $I+cP$ term is separable when $c\neq 0$ ($c=0$ is trivial) which then generalizes to the product that appears in the lemma since the product of disjoint separable operators is separable. To start note that
\begin{align}
    I + cP &= (I - |c|)I + |c| (I + \frac{c}{|c|} P). 
\end{align}
Let $\sigma = \operatorname{sign}(c)$ and expand $I + \frac{c}{|c|} P = I + \sigma P$ with $P = P_1 \otimes \cdots \otimes P_m$ as
\begin{align}
    I + \sigma P 
    &= \frac{1}{2^m} \bigotimes_{r=1}^m \left[(I + P_r) + (I-P_r)\right] + \frac{\sigma}{2^m} \bigotimes_{r=1}^m \left[(I + P_r) - (I-P_r)\right] \nonumber \\
    &= \frac{1}{2^m} \sum_{j \in \{0,1\}^m}\bigotimes_{r=1}^m \left[(I + (-1)^{j_r} P_r) \right] + \frac{\sigma}{2^m} \sum_{j \in \{0,1\}^m}(-1)^{|j|}\bigotimes_{r=1}^m \left[(I + (-1)^{j_r} P_r) \right] \nonumber \\
    &=  \sum_{j \in \{0,1\}^m}(1 +\sigma(-1)^{|j|})\bigotimes_{r=1}^m \frac{1}{2}\left[(I + (-1)^{j_r} P_r) \right]
\end{align}
Since the $P_r$ are individual Pauli operators $\frac{1}{2} \left[I+(-1)^{j_r}P_r\right]$ is a single-qubit operator with trace one.  Furthermore all of the coefficients $1 +\sigma(-1)^{|j|}$ are either 0 or 2 so all the coefficients are nonnegative.  Hence $I + \sigma P$ is separable.  Every term in $\rho$ is a tensor product of disjoint separable operators and is thus separable.  Taking a positive combination of such operators preserves separability, and hence $\rho$ is separable. Explicitly,
\begin{align}
    \rho
    &= \sum_i p_i \bigotimes_l (I_l + c_{i,l} P_{i,l}) \nonumber \\
    &= \sum_i p_i \bigotimes_l \left[(1 - |c_{i,l}|)I_l + |c_{i,l}| \left(I_l + \frac{c_{i,l}}{|c_{i,l}|} P_{i,l}\right)\right] \nonumber \\
    &= \sum_i p_i \bigotimes_l \left[(1 - |c_{i,l}|)I_l + |c_{i,l}| \left(\frac{1}{2^{m_{i,l}}} \sum_{j_l \in \{0,1\}^{m_{i,l}}}(1 + \sigma_{i,l}(-1)^{|j_l|})\bigotimes_{r=1}^{m_{i,l}} \bigl(I + (-1)^{(j_l)_r} P_{i,l,r}\bigr) \right)\right].\label{eq:icp}
\end{align}
Finally, every state appearing in the decomposition is a product of stabilizer states as $\frac{1}{2}\left(I \pm \{X,Y,Z\}\right)$ is a stabilizer state.
\end{proof}

\subsection{Propagator}

Our first step will be to show that the so-called propagator $e^{-\beta H} e^{\beta (H- H_{(a^*)})}$ can be expanded in a convergent power series below a critical temperature.  Then, using the propagator, we will adaptively pin all the sites in the lattice to end up with a separable expression for the Gibbs state.  The key step which needs to be adapted to the case of long-range interactions is the expansion of the propagator.  If one were to na\"{i}vely proceed with the style of argument in \cite{bakshi2024high} it would fail because in the long-range coupled case there is no locality that can be used to control the recursion.  Specifically in \cite{bakshi2024high} they could rely on simple combinatorial properties to show convergence of the series.  This approach would quickly fail because of the all-to-all interactions which would give factors scaling polynomially in $n$.  Despite this apparent failure if the recursion is amortized properly one can still get a convergent series.  By instead using collective bounds of the form
\begin{align}
    \sup_{x\in \Lambda} \sum_{a\ni x} |\lambda_a| \leq s  < \infty
\end{align}
where $s$ is a constant independent of $n$ we can extend the arguments to cover the case of long-range interacting systems.  We will refer to this quantity as the ``on-site'' energy.

\begin{lemma}[Propagator expansion]
\label{lem:sep-propagator-expansion}
    Given an $(s,k)$-long-range Pauli Hamiltonian $H = \sum_a h_a = \sum_a \lambda_a P_a$ and an interaction label of the Hamiltonian $a^*\in\mathcal{A}$, we can expand the propagator $e^{-\beta H} e^{\beta (H- H_{(a^*)})}$ as 
    \begin{align}
        e^{-\beta H} e^{\beta (H- H_{(a^*)})} = \sum_{t=0}^\infty \frac{\beta^t}{t!} f_t(H, H_{(a^*)}),
    \end{align}
    where $f_t(H, H_{(a^*)})$ satisfies a recurrence  $f_{t+1}(H, H_{(a^*)}) = -[H, f_{t}(H, H_{(a^*)})] - f_{t}(H, H_{(a^*)})H_{(a^*)}$ and $f_0(H, H_{(a^*)}) = I$.  Furthermore
    \begin{align}
        f_t(H, H_{(a^*)}) = \sum_{\vec{b}\in Q_{a^*}^{(t)}} \mu_{\vec{b}}P_{\vec{b}}, \label{eq:sep-propagator-series}
    \end{align}
    where $|\operatorname{supp}(P_{\vec{b}})| \leq k t$ and
    \begin{align}
        \sum_{\vec{b} \in Q_{a^*}^{(t)}} |\mu_{\vec{b}}| \leq \prod_{i=1}^t \left(ks + 2 ksi\right) \leq \prod_{i=1}^t \left(3 k s i\right) = (3 k s)^t t!.
    \end{align}
    Here $Q_{a^*}^{(t)}$ for $t\geq 1$ represents all connected products of terms in the Hamiltonian.  Formally $Q_{a^*}^{(t)}$ is a sequence of terms $h_{b_1} , \cdots, h_{b_t}$ with $b_1,\ldots,b_t\in\mathcal{A}$ such that $b_1 \cap a^* \neq \emptyset$ and $b_j \cap (a^* \cup b_1 \cup \cdots \cup b_{j-1}) \neq \emptyset$ for all $j \geq 2$.  
\end{lemma}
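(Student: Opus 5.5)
The plan is to treat the propagator as a matrix-valued function of $\beta$, read off the recurrence from a differential equation, and then control the coefficients $f_t$ by induction on $t$. Define $G(\beta)=e^{-\beta H}e^{\beta(H-H_{(a^*)})}$. Since we work in finite dimension, $G$ is an entire function of $\beta$, so $G(\beta)=\sum_{t\ge0}\frac{\beta^t}{t!}G^{(t)}(0)$ converges for every $\beta$. Differentiating gives
\begin{align}
G'(\beta) = -HG(\beta)+G(\beta)(H-H_{(a^*)}) = -[H,G(\beta)]-G(\beta)H_{(a^*)} .
\end{align}
Differentiating this identity $t$ more times and evaluating at $\beta=0$ shows that $f_t:=G^{(t)}(0)$ satisfies $f_{t+1}=-[H,f_t]-f_tH_{(a^*)}$, with $f_0=G(0)=I$. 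This settles the first part of the statement.

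For the structural claim I will induct on $t$, maintaining three invariants: $f_t$ is a sum $\sum_{\vec b}\mu_{\vec b}P_{\vec b}$ over connected sequences $\vec b\in Q^{(t)}_{a^*}$; each $P_{\vec b}$ is a product of the Paulis $P_{b_1},\dots,P_{b_t}$, so it is a Pauli string up to a phase, which I absorb into the complex coefficient $\mu_{\vec b}$; and $\supp(P_{\vec b})\subseteq \supp(P_{b_1})\cup\cdots\cup\supp(P_{b_t})$, which has size at most $kt$.

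Expanding the recurrence term by term gives
\begin{align}
f_{t+1}=-\sum_{\vec b}\sum_{a\in\cA}\lambda_a\mu_{\vec b}[P_a,P_{\vec b}]-\sum_{\vec b}\sum_{a:\,a\cap a^*\neq\emptyset}\lambda_a\mu_{\vec b}P_{\vec b}P_a .
\end{align}
The two kinds of terms are handled as follows.
\begin{itemize}
\item Commutator terms. For Paulis, $[P_a,P_{\vec b}]$ is either $0$ or $2P_aP_{\vec b}$, and it is nonzero only if $P_a$ anticommutes with $P_{\vec b}$. That requires $\supp(P_a)\cap\supp(P_{\vec b})\neq\emptyset$, and hence that $a$ meets $b_1\cup\cdots\cup b_t$. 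Appending $a$ as $b_{t+1}$ therefore produces a sequence in $Q^{(t+1)}_{a^*}$. For $t=0$ the commutator with $I$ vanishes, so the base case only produces terms with $b_1\cap a^*\neq\emptyset$, as required.
\item Right-multiplication terms. Here $a$ meets $a^*$, so appending $a$ again preserves connectivity.
\end{itemize}
Labelling each new term by the extended sequence $(\vec b,a)$ preserves all three invariants. The support bound is immediate because each appended $P_a$ adds at most $k$ new sites. If two routes produce the same sequence, their coefficients add, and the triangle inequality only helps.

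For the coefficient bound, the triangle inequality gives
\begin{align}
\sum_{\vec b'\in Q^{(t+1)}_{a^*}}|\mu_{\vec b'}|\le\sum_{\vec b}|\mu_{\vec b}|\Big(2\!\!\sum_{a:\,a\cap\supp(P_{\vec b})\neq\emptyset}\!\!|\lambda_a|+\sum_{a:\,a\cap a^*\neq\emptyset}|\lambda_a|\Big).
\end{align}
I then use the on-site bound. Summing over the at most $kt$ sites of $\supp(P_{\vec b})$ bounds the first inner sum by $kts$, and summing over the at most $k$ sites of $a^*$ bounds the second by $ks$. The factor is therefore at most $ks+2kst\le ks+2ks(t+1)$. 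Iterating from $\sum|\mu|=1$ at $t=0$ gives $\prod_{i=1}^t(ks+2ksi)\le\prod_i 3ksi=(3ks)^tt!$.

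The only delicate point is this amortization step. I must not count neighbors of each individual term $b_j$; that would reintroduce degree factors, which may scale with $n$ in the long-range setting. Instead, I bound the total weight of terms touching the current support set by (number of sites)$\times s$, which is exactly where the on-site bound from \Cref{def:lr-pauli} enters. The remaining bookkeeping (phases and product ordering) is routine.
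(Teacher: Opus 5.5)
Your proposal is correct and follows essentially the same route as the paper. The coefficient bound comes from the same induction: a factor $ks$ from right-multiplication by $H_{(a^*)}$ and a factor $2kts$ from the commutator, using the on-site bound summed over the at most $kt$ sites of the current support. The only difference is cosmetic: you obtain the recurrence from the ODE $G'=-[H,G]-GH_{(a^*)}$, whereas the paper expands the product of exponentials as $f_t=\sum_{j}\binom{t}{j}(-H)^j(H-H_{(a^*)})^{t-j}$ and checks the recurrence with Pascal's rule.
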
 

This result is notable since with the most na\"{i}ve expansion one would pick up $\operatorname{poly}(n)^t$ terms and the proof would quickly fail.  One subtlety here to ensure the proof couples properly throughout is that the input to this lemma is the set of sites corresponding to a term $a^*$ and not directly the neighborhood of $h_{a^*}$.  If we instead used the neighborhood the proof would fail as the neighborhood possibly includes polynomially many terms.  
\begin{proof}
First we will show that the $f_t(H, H_{(a^*)})$ abide by a nice recurrence (as is done in \cite{bakshi2024high}).  Recall that here $H_{(a^*)}$ represents the Hamiltonian restricted to those terms that touch the sites in $a^*$.  To begin, we note that the propagator can be re-expressed as 
\begin{align}
    e^{-\beta H} e^{\beta (H - H_{(a^*)})} 
    &= \sum_{t=0}^\infty \sum_{k=0}^t \left( \frac{\beta^k (-H)^k}{k!}\right) \left(\frac{\beta^{t-k} (H - H_{(a^*)})^{t-k}}{(t-k)!}\right) \nonumber \\
    &= \sum_{t=0}^\infty \frac{\beta^t}{t!}\sum_{k=0}^t \left(\frac{(-H)^k (H - H_{(a^*)})^{t-k} t!}{k!(t-k)!} \right) \nonumber \\
    &= \sum_{t=0}^\infty \frac{\beta^t}{t!}\sum_{k=0}^t \binom{t}{k}(-H)^k (H - H_{(a^*)})^{t-k}.
\end{align}
We check that the recurrence relation holds:
\begin{align}
    &-[H, f_{t}(H, H_{(a^*)})] - f_{t}(H, H_{(a^*)})H_{(a^*)} 
    = -H f_{t}(H, H_{(a^*)}) +  f_{t}(H, H_{(a^*)})  (H- H_{(a^*)}) \nonumber \\
    &= \sum_{k=0}^t \binom{t}{k}(-H)^{k+1} (H - H_{(a^*)})^{t-k} +  \sum_{k=0}^t \binom{t}{k}(-H)^k (H - H_{(a^*)})^{t-k+1}  \nonumber \\
    &= (-H)^{t+1} + (H-H_{(a^*)})^{t+1} + \sum_{k=1}^{t} \binom{t+1}{k}(-H)^{k} (H - H_{(a^*)})^{t-k+1}   \nonumber \\
    &= \sum_{k=0}^{t+1} \binom{t+1}{k}(-H)^{k} (H - H_{(a^*)})^{t+1-k} = f_{t+1}(H, H_{(a^*)}).
\end{align}
Using this recurrence we now show the desired bound on the coefficient mass at each order of the propagator expansion.  For the base case we can see directly from the expansion that
\begin{align}
    f_0(H, H_{(a^*)}) = I.
\end{align}
The first step of the recursion is trivial:
\begin{align}
    f_{1}(H, H_{(a^*)}) = -\left[\sum_a \lambda_a P_a, I\right] - I H_{(a^*)} = -H_{(a^*)}.
\end{align}
We can compute the norm at this step which is given by $\sum_{b:b\cap a^* \neq \emptyset} |\lambda_b|\leq \sum_{x\in a^*} \sup_x \sum_{a\ni x} |\lambda_a| \leq k s$.  Now we proceed to the second step of the iteration:
\begin{align}
    f_{2}(H, H_{(a^*)}) = \left[\sum_a \lambda_a P_a, H_{(a^*)}\right] + H_{(a^*)} H_{(a^*)}  .
\end{align}
First, let us look at the second term where we have 
\begin{align}
    H_{(a^*)} H_{(a^*)} = \sum_{b:b\cap a^* \neq \emptyset} \sum_{a:a\cap a^* \neq \emptyset} \lambda_a P_a \lambda_b P_b,
\end{align}
which can have its coefficient mass bounded as $\sum_{b:b\cap a^* \neq \emptyset} \sum_{a:a\cap a^* \neq \emptyset} |\lambda_a|  |\lambda_b| \leq (ks)^2$.  We also expand the first term as
\begin{align}
     \left[\sum_a \lambda_a P_a, H_{(a^*)}\right] =  \sum_a \lambda_a [P_a, H_{(a^*)}] = \sum_{a:a\cap \operatorname{supp}(H_{(a^*)})\neq\emptyset}\lambda_a [P_a, H_{(a^*)}].
\end{align}
Now we can bound the coefficient sum as 
\begin{align}
    \sum_{\vec{b}\in Q_{a^*}^{(2)}} |\mu_{\vec{b}}| 
    &\leq 2\sum_{b:b\cap a^* \neq \emptyset} \sum_{a:a\cap b\neq\emptyset} |\lambda_a \lambda_b|
    \leq  2\sum_{b:b\cap a^* \neq \emptyset} \sum_{a:a\cap b\neq\emptyset} |\lambda_b\|\lambda_a|
    \leq  2\sum_{b:b\cap a^* \neq \emptyset} |\lambda_b| \sum_{x\in b}\sum_{a\ni x} |\lambda_a|
    \nonumber \\
    &\quad\quad\leq  2\sum_{b:b\cap a^* \neq \emptyset} |\lambda_b| \sum_{x\in b} \sup_{x\in \Lambda} \sum_{a\ni x} |\lambda_a|
    \leq  2\sum_{b:b\cap a^* \neq \emptyset} |\lambda_b| \sum_{x\in b} s \leq 2k^2 s^2.
\end{align}
Combined with the second term, we thus have the bound
\begin{align}
    \sum_{\vec{b}\in Q_{a^*}^{(2)}} |\mu_{\vec{b}}| \leq 2 k^2 s^2 + k^2 s^2.
\end{align}
For the inductive step, we start with a series of connected terms $\mu_{\vec{b}}P_{\vec{b}} = h_{b_1} \cdots h_{b_t}$.  Since each term must overlap with one term from earlier in the sum there are at most $t k$ sites in the lattice represented in $\mu_{\vec{b}}P_{\vec{b}}$.  We compute the next order in the commutator using
\begin{align}
    f_{t+1}(H, H_{(a^*)}) = -[H, f_t(H, H_{(a^*)})]  - f_t(H, H_{(a^*)}) H_{(a^*)}.
\end{align}
Expanding out we get
\begin{align}
    f_{t+1}(H, H_{(a^*)}) &= -\sum_a \lambda_a [ P_a, f_t(H, H_{(a^*)})]  - f_t(H, H_{(a^*)}) H_{(a^*)}.
\end{align}
Focusing first on the second term we get 
\begin{align}
    f_t(H, H_{(a^*)}) H_{(a^*)}  = \sum_{\vec{b}\in Q_{a^*}^{(t)}}\sum_{a:a\cap a^* \neq \emptyset} \lambda_a \mu_{\vec{b}} P_{\vec{b}} P_a.
\end{align}
Now we compute a bound on the coefficient mass 
\begin{align}
    \sum_{\vec{b}\in Q_{a^*}^{(t)}}\sum_{a:a\cap a^* \neq \emptyset} | \lambda_a \mu_{\vec{b}}| \leq \sum_{\vec{b}\in Q_{a^*}^{(t)}}\sum_{a:a\cap a^* \neq \emptyset} | \lambda_a \| \mu_{\vec{b}}| &\leq \left(\sum_{a:a\cap a^* \neq \emptyset} | \lambda_a |\right)\left(\sum_{\vec{b}\in Q_{a^*}^{(t)}}| \mu_{\vec{b}}|\right) \leq ks \sum_{\vec{b}\in Q_{a^*}^{(t)}}| \mu_{\vec{b}}|.
\end{align}
Hence for the second part a factor of at most $ks$ is picked up.  Now looking at the first term we have that 
\begin{align}
    -\sum_a \lambda_a [ P_a, f_t(H, H_{(a^*)})] 
    &= - \sum_{\vec{b}\in Q_{a^*}^{(t)}} \sum_a \lambda_a \mu_{\vec{b}} [P_a , P_{\vec{b}}] .
\end{align}
To bound the size of the coefficients we then have 
\begin{align}
    \sum_{\vec{b}\in Q_{a^*}^{(t)}} \sum_{a:a\cap \operatorname{supp}(P_{\vec{b}})\neq\emptyset}|\lambda_a \mu_{\vec{b}}|
    &\leq \sum_{\vec{b}\in Q_{a^*}^{(t)}} \sum_{x \in b_1 \cup \cdots \cup b_t}\sum_{a\ni x}|\lambda_a\| \mu_{\vec{b}}| \nonumber \\
    &\leq \sum_{\vec{b}\in Q_{a^*}^{(t)}} \sum_{x \in b_1 \cup \cdots \cup b_t} \sup_{x \in \Lambda} \sum_{a\ni x}|\lambda_a\| \mu_{\vec{b}}| \nonumber \\
    &\leq k t s \sum_{\vec{b}\in Q_{a^*}^{(t)}} |\mu_{\vec{b}}|,
\end{align}
where we have used that the new term must touch one of the previous terms in the cluster.  Since each $a$ must contain a site in the support of $b_1 \cdots b_t$, and since there are $k t$ possible sites in the support of $\vec{b}$, we have that
\begin{align}
     \sum_{\vec{b}\in Q_{a^*}^{(t+1)}} |\mu_{\vec{b}}| &\leq (2 k t s + ks) \sum_{\vec{b}\in Q_{a^*}^{(t)}} |\mu_{\vec{b}}|\leq \prod_{i=0}^{t}(ks + 2 k s i) \leq \prod_{i=1}^{t+1} \left(ks + 2 k s i\right),
\end{align}
completing the proof of lemma.
\end{proof}
\noindent The preceding lemma shows that for sufficiently small $\beta$ the expansion of the propagator is convergent.  This in turn allows us to show the following lemma allowing the propagator to be expressed as a convex combination of terms of the form $I+c E$ where $c$ is exponentially small in the order $t = |C(E)|$. 
\begin{lemma}[Propagator sampling]
\label{lem:sep-propagator-sampling}
    Under the assumptions of \Cref{lem:sep-propagator-expansion}, there exists a distribution over tuples $(b,E)$ such that 
    \begin{align}
        e^{-\beta H} e^{\beta (H - H_{(a^*)})} = \sum_{t=0}^\infty \frac{\beta^t}{t!}f_t (H, H_{(a^*)}) = \sum_{i} p_i (I + b_i E_i) = \mathbb{E}_i [I + b_i E_i],
    \end{align}
    where $|b_i| \leq (\beta 6 k s)^{t_i}$.  Here $E_i$ decomposes as a product of $a \in C(E_i)$, an ordered list of elements from $\mathcal{A}$, and $t_i = |C(E_i)|\geq 1$ is the order of $E_i$.  If $\beta \leq \frac{1}{6 k s}$ then the coefficients $b_i$ are exponentially suppressed in $t_i$.
\end{lemma}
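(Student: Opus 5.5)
The plan is to start from the expansion of \Cref{lem:sep-propagator-expansion}, namely $\sum_t \frac{\beta^t}{t!} f_t$ with $f_t=\sum_{\vec b\in Q^{(t)}_{a^*}}\mu_{\vec b}P_{\vec b}$. Each $P_{\vec b}$ is, up to a phase, a Pauli string. Since the propagator is not Hermitian in general, a phase may appear in $\mu_{\vec b}$. I will absorb it by letting the coefficient $b$ be complex, or equivalently by working with $\mu_{\vec b}P_{\vec b}$ as a signed/phased Pauli. The later symmetrization with the ``$+h.c.$'' in the pinning step takes care of Hermiticity. For real $b$ one can also absorb $\pm1$ signs into $E$.

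The $t=0$ term is $I$. The idea is to write each higher term as a mixture $I+bE$ and to use the identity to pay for the normalization. Set
\begin{align}
    W=\sum_{t\ge1}\frac{\beta^t}{t!}\sum_{\vec b\in Q^{(t)}_{a^*}}|\mu_{\vec b}|\,(2\cdot\text{weight}_t)^{-1},
\end{align}
with weights $w_t=2^{-t}$, so that $\sum_{t\ge1}w_t=1$. Concretely, I choose $t\ge1$ with probability $w_t=2^{-t}$. Given $t$, I choose $\vec b\in Q^{(t)}_{a^*}$ with probability $|\mu_{\vec b}|/M_t$, where $M_t=\sum_{\vec b}|\mu_{\vec b}|\le(3ks)^t t!$. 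I then output $E=P_{\vec b}$ (with phase $\mu_{\vec b}/|\mu_{\vec b}|$ included) and
\begin{align}
    b=\frac{\beta^t M_t}{t!\,w_t}.
\end{align}
Then
\begin{align}
    \mathbb E[I+bE]=I+\sum_t w_t\sum_{\vec b}\frac{|\mu_{\vec b}|}{M_t}\cdot\frac{\beta^tM_t}{t!w_t}\,\frac{\mu_{\vec b}}{|\mu_{\vec b}|}P_{\vec b}=I+\sum_{t\ge1}\frac{\beta^t}{t!}f_t,
\end{align}
which is exactly the propagator, with $t_i=|C(E_i)|=t\ge1$ as required. The bound follows from $|b|\le 2^t\beta^t(3ks)^t=(6\beta ks)^t$, which is the claimed estimate. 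When $\beta\le 1/(6ks)$ this is at most $1$ and decays geometrically for $\beta$ strictly smaller.

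To justify the manipulation, I need absolute convergence of $\sum_t \frac{\beta^t}{t!}M_t\le\sum_t(3\beta ks)^t$. This holds for $\beta<1/(3ks)$, so in particular in the stated regime, and it legitimizes reordering the sum and Fubini. The ordered list $C(E)$ is simply $\vec b$, which decomposes $E$ into Hamiltonian terms, as the statement requires.

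The main obstacle is bookkeeping rather than analysis. First, $P_{\vec b}$ may carry a complex phase and the operator is non-Hermitian, so I must be explicit that $b_i$ is allowed to be complex with $|b_i|$ bounded, or else pair terms with their adjoints. Second, terms with $\mu_{\vec b}=0$ or cancellations must be handled by defining the distribution on nonzero entries. Neither affects the bound.
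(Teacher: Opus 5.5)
Your proposal is correct and follows the paper's proof essentially line by line. The paper also samples the order $t$ with probability $2^{-t}$, then $\vec b\in Q^{(t)}_{a^*}$ with probability proportional to $|\mu_{\vec b}|$, sets $b=2^t\sum_{\vec b}\|Y_{\vec b}\|=2^t\beta^tM_t/t!$ with the phase absorbed into $E$, and obtains $|b|\le(6\beta ks)^t$ from the bound $M_t\le(3ks)^t\,t!$ of \Cref{lem:sep-propagator-expansion}. Your remark on absolute convergence (valid for $\beta<1/(3ks)$) is a bit of care the paper glosses over, and the unused quantity $W$ in your second paragraph can simply be deleted.
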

\begin{proof}
    Write the non-identity part of \cref{eq:sep-propagator-series} as
    \begin{align}
        \sum_{t=1}^\infty \sum_{\vec{b}\in Q^{(t)}_{a^*}} Y_{\vec{b}}, \qquad Y_{\vec{b}}=\frac{\beta^t}{t!}\mu_{b_1\cdots b_t} P_{b_1}\cdots P_{b_t}.
    \end{align}
    We expand out as follows (using an abbreviated notation where $Y_{\vec{b}}$ represents some $\frac{\beta^t}{t!}\mu_{b_1\cdots b_t} P_{b_1}\cdots P_{b_t}$) 
    \begin{align}
        I + \sum_{t=1}^\infty \sum_{\vec{b}\in Q^{(t)}_{a^*}} Y_{\vec{b}} = 
        \sum_{t=1}^\infty \sum_{\vec{b}\in Q^{(t)}_{a^*}} \frac{1}{2^t} \frac{\|Y_{\vec{b}}\|}{\sum_{\vec{b}\in Q^{(t)}_{a^*}} \|Y_{\vec{b}}\|}\left[I + 2^t (\sum_{\vec{b}\in Q^{(t)}_{a^*}} \|Y_{\vec{b}}\|)\frac{Y_{\vec{b}}}{\|Y_{\vec{b}}\|}\right]
    \end{align}
    Thus we can see that if we define $p_i = \frac{1}{2^t} \frac{\|Y_{\vec{b}}\|}{\sum_{\vec{b}\in Q^{(t)}_{a^*}} \|Y_{\vec{b}}\|}$ and $b_i = 2^t (\sum_{\vec{b}\in Q^{(t)}_{a^*}} \|Y_{\vec{b}}\|)$ then the lemma is proved.  Using \Cref{lem:sep-propagator-expansion} we can see that $b_i$ scales as $(\beta 2 (3 k s))^t$.  If we choose $\beta< \frac{1}{6 k s}$ then the series is convergent and hence terms can be sampled based on $p_i$ to realize the propagator.  Since $\frac{Y_{\vec{b}}}{\|Y_{\vec{b}}\|}$ is a Pauli product up to a phase the lemma is proved.  
\end{proof}
Now we can sample from the propagator in a well-defined way where coefficients are suppressed exponentially with $t$.  Next we present the recursive pinning to repeatedly invoke this sampling primitive and show that the Gibbs state is separable. 

\subsection{Pinning procedure}

The argument that follows is an improved version of that in \cite{bakshi2024high}. By pinning terms based on their presence in the monomial we can get optimal dependence on both $\mathfrak{d}$ and $k$.  

The natural quantities to keep track of during the recursion are products of the following form.
\begin{definition}[Hermitian monomial]
\label{def:sep-hermitian-monomial}
    A Hermitian monomial $X$ of order $\hat{t}$ is an operator recursively constructed by an operation which at each step of the recursion performs
    \begin{align}
        \hat{X} = \frac{1}{2} \left(E_1 X E_2^\dagger + E_2 X E_1^\dagger\right),
    \end{align}
    where $E_1$ and $E_2$ are Pauli products (up to a phase) returned by the sampling primitive (either of which may be identity).  In the first iteration $X$ starts as the identity.  The order and support of the monomial are computed recursively as 
    \begin{align}
        \hat{t} = t+t_{E_1}+t_{E_2}
    \end{align}
    and
    \begin{align}
        \operatorname{supp}(\hat X)=\operatorname{supp}(X)\cup \operatorname{supp}(E_1)\cup \operatorname{supp}(E_2)
    \end{align}
    respectively where $t_{E_1}$, $t_{E_2}$ are the order of $E_1$,$E_2$ from the sampling primitive \Cref{lem:sep-propagator-sampling}.  Here, support denotes the union of the supports of the terms which make up the monomial (which is not the support after multiplying out the monomial).  The terms which make up the monomial can be recursively computed as 
    \begin{align}
        C(\hat{X}) = C(X) \sqcup C(E_1) \sqcup C(E_2),
    \end{align}
    where $\sqcup$ indicates concatenation of the terms.
\end{definition}
\noindent Importantly, after many iterations the Hermitian monomial is still a Pauli product as captured by the following lemma.
\begin{lemma}
    \label{lem:sep-hermitian-monomial-pauli}
    A Hermitian monomial is either 0 or a signed Pauli product, i.e., has the form
    \begin{align}
        X \in \{0\}\cup\{\pm P : P \in \mathcal{P}^{\otimes n}\}
    \end{align}
\end{lemma}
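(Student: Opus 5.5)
The plan is induction on the number of recursion steps in \Cref{def:sep-hermitian-monomial}, with the inductive hypothesis that $X \in \{0\}\cup\{\pm P\}$. The base case is $X=I$, which is trivially a signed Pauli product.

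For the inductive step, assume $X=\pm P$ (if $X=0$ then $\hat X=0$ and we are done). By \Cref{lem:sep-propagator-sampling}, each of $E_1,E_2$ is a Pauli product up to a phase, and it may be the identity. So we can write $E_j=\omega_j Q_j$ with $|\omega_j|=1$ and $Q_j$ a Hermitian Pauli string; a product of Pauli strings is a Pauli string times a power of $i$, so this form is always available.

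The computation then runs as follows. We have
\begin{align}
E_1XE_2^\dagger=\pm\,\omega_1\bar\omega_2\,Q_1PQ_2 .
\end{align}
Multiplying out gives $Q_1PQ_2=\eta R$ for a Pauli string $R$ and $\eta\in\{\pm1,\pm i\}$, so $E_1XE_2^\dagger=\pm\,\omega_1\bar\omega_2\eta R$. The second term satisfies
\begin{align}
E_2XE_1^\dagger=(E_1X^\dagger E_2^\dagger)^\dagger=(E_1XE_2^\dagger)^\dagger,
\end{align}
using that $X=\pm P$ is Hermitian. Since $R$ is also Hermitian, this equals $\pm\,\overline{\omega_1\bar\omega_2\eta}\,R$. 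Writing $\theta=\omega_1\bar\omega_2\eta$, we get
\begin{align}
\hat X=\pm\,\tfrac12(\theta+\bar\theta)R=\pm\,\mathrm{Re}(\theta)\,R .
\end{align}

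The one real obstacle is showing $\mathrm{Re}(\theta)\in\{0,\pm1\}$, which needs the phases $\omega_j$ to be powers of $i$ and not arbitrary unit complex numbers. This holds because $E_j$ arises as $Y_{\vec b}/\|Y_{\vec b}\|$ in the proof of \Cref{lem:sep-propagator-sampling}. There $Y_{\vec b}=\frac{\beta^t}{t!}\mu_{\vec b}P_{b_1}\cdots P_{b_t}$, and the coefficients $\mu_{\vec b}$ are real. Indeed, the recurrence $f_{t+1}=-[H,f_t]-f_tH_{(a^*)}$ only ever multiplies Pauli strings by the real coefficients $\lambda_a$ and by $\pm1$. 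Hence $\mu_{\vec b}/|\mu_{\vec b}|=\pm1$, and the phase of $E_j$ relative to a Hermitian Pauli string is a power of $i$, coming only from multiplying out Pauli strings.

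It follows that $\theta\in\{\pm1,\pm i\}$, so $\mathrm{Re}(\theta)\in\{0,\pm1\}$ and $\hat X\in\{0\}\cup\{\pm R\}$. This closes the induction. Once $\hat X=0$ it remains $0$ in all later steps, which is consistent with the claim.
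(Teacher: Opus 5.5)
Your proof is correct and follows the same route as the paper: induction from $X=I$, writing $E_1XE_2^\dagger=\theta R$ and observing that $E_2XE_1^\dagger$ is its adjoint, so $\hat X=\mathrm{Re}(\theta)R\in\{0,\pm R\}$. The one addition is your justification that the phases of $E_1,E_2$ are powers of $i$, via the reality of the coefficients $\mu_{\vec b}$ in the propagator recurrence; the paper takes this for granted when it asserts $\omega\in\{\pm1,\pm i\}$.
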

\begin{proof}
    For the base case $X=I$ which is a Pauli product.  Assume that $X$ is a signed Pauli product $\pm P$ now let us confirm that at the next order of the recursion it is also a Pauli product.  $E_1$ and $E_2$ are both Pauli products so we know that $E_1 X E_2^\dagger = \omega P$ where $\omega \in \{\pm 1, \pm i\}$.  Hence we have that 
    \begin{align}
        \frac{1}{2} \left(E_1 X E_2^\dagger + E_2 X E_1^\dagger\right)=\frac{1}{2} \left(E_1 X E_2^\dagger + (E_1 X E_2^\dagger)^\dagger\right) = \frac{\omega + \omega^*}{2} P \in \{0, \pm P\}.
    \end{align} 
\end{proof}
\noindent In the algorithm we end up with a configuration of multiple Hermitian monomials.  The configuration is formally defined as 
\begin{definition}[Configuration of Hermitian monomials]
    \label{def:sep-configuration}
     A configuration $\chi$ of Hermitian monomials of length $l$ is described by an ordered set $\{(c_1, X_1), (c_2, X_2), \cdots, (c_l, X_l)\}$ where $X_i$ are disjoint Hermitian monomials and $c_i \in \mathbb{R}$.  The configuration corresponds to an operator given by 
    \begin{align}
        \sigma(\chi)  = \bigotimes_{i=1}^l (I + c_i X_i),
    \end{align}
    where the support is trivial on any element not in the configuration.
\end{definition}
\noindent Our goal then is to show that the Gibbs state can be expressed as a configuration of Hermitian monomials with each $|c| \leq 1$.  If each of the $X$ is then a Pauli then the Gibbs state is separable.  As a reminder before presenting the algorithm the Hamiltonian restrictions are given by 
\begin{align}
    H^{(S)} = \sum_{a\in \mathcal{A}^{(S)}} \lambda_a P_a, \quad H^{(S)}_{(a^*)} = \sum_{a\in\mathcal{A}^{(S)} :\ a\cap a^* \neq \emptyset} \lambda_a P_a.
\end{align}
We keep track of the unpinned labels represented in the current monomial with
\begin{align}
    u_S(X) = \{a\in \mathrm{unique}(C(X)) \,:\, a \cap S \neq \emptyset\}.
\end{align}
See \Cref{alg:sep-iterative-pinning} for a description of the algorithm. Note that in the algorithm it is possible that the sampled $a^*$ is partially pinned.  More carefully, this is when $a^* \in u_S(X_l)$ but $a^* \notin \mathcal{A}^{(S)}$.  In this case we use the convention that $H^{(S)}_{(a^*)}=\sum_{a \subseteq S: a \cap a^* \neq \emptyset} h_a$.  This captures that terms which overlap the already pinned sites, which are already pinned, are ignored. 

\begin{algorithm}
\caption{\textsc{Iterative Pinning}}
\label{alg:sep-iterative-pinning}
\begin{algorithmic}[1]
\REQUIRE Hamiltonian $H = \sum_a h_a = \sum_a \lambda_a P_a$ and sampling primitive satisfying $\mathbb{E}\left[I + b E\right] =  e^{-\eta H^{(S)}} e^{\eta \left(H^{(S)} - H^{(S)}_{(a^*)}\right)}$ with $|b| \leq L(\eta) q^t$.
\ENSURE A configuration $\chi = \{(c_1, X_1), \cdots, (c_j, X_j)\}$ such that $\mathbb{E}[\sigma(\chi)]= e^{-\beta H}$.
\STATE $S = \Lambda$, $\chi = \emptyset$, $l=0$.
\WHILE{$\mathcal{A}^{(S)} \neq \emptyset$}
    \IF{$l \geq 1$ and $\exists \hat{a} \in u_S(X_l)$}
        \STATE $a^* \gets \hat{a}$ and set $\hat{l} \gets l$
    \ELSE
        \STATE $c_{l+1} \gets 0$, $X_{l+1} \gets I$, add $(c_{l+1}, X_{l+1})$ to $\chi$, $\hat{l} \gets l+1$, set $a^*$ to any element in $\mathcal{A}^{(S)}$.
    \ENDIF
    \STATE Sample $b_1$, $E_1$ with $\eta \rightarrow \beta/2$, $H \rightarrow H^{(S)}$, and selected label $a^*$ so that $\mathbb{E}[I + b_1 E_1] = e^{-\frac{\beta}{2}H^{(S)}} e^{\frac{\beta}{2}(H^{(S)} - H^{(S)}_{(a^*)})}$.
    \STATE Sample $b_2$, $E_2$ with $\eta \rightarrow \beta/2$, $H \rightarrow H^{(S)}$, and selected label $a^*$ so that $\mathbb{E}[I + b_2 E_2] = e^{-\frac{\beta}{2}H^{(S)}} e^{\frac{\beta}{2}(H^{(S)} - H^{(S)}_{(a^*)})}$.
    \STATE Sample $J\in \{1,2,3,4,5,6,7\}$ with probabilities $p_1 = q$, $p_{\neq 1} = (1-q)/6$.
    \IF{$J=1$}
        \STATE $\hat{c} \gets (1/p_1) c_{\hat{l}}$,  $\hat{X} \gets X_{\hat{l}}$.
    \ELSIF{$J=2$}
        \STATE $\hat{c} \gets (1/p_2)b_1$, $\hat{X} \gets (E_1+E_1^\dagger)/2$.
    \ELSIF{$J=3$}
        \STATE $\hat{c} \gets (1/p_3)b_2$, $\hat{X} \gets (E_2+E_2^\dagger)/2$.
    \ELSIF{$J=4$}
        \STATE $\hat{c} \gets (1/p_4)b_1 c_{\hat{l}}$, $\hat{X} \gets (E_1^\dagger X_{\hat{l}} +X_{\hat{l}} E_1)/2$.
    \ELSIF{$J=5$}
        \STATE $\hat{c} \gets (1/p_5)b_2 c_{\hat{l}}$, $\hat{X} \gets (E_2^\dagger X_{\hat{l}} +X_{\hat{l}} E_2)/2$.
    \ELSIF{$J=6$}
        \STATE $\hat{c} \gets (1/p_6)b_1 b_2$, $\hat{X} \gets (E_1^\dagger E_2 +E_2^\dagger E_1)/2$.
    \ELSE
        \STATE $\hat{c} \gets (1/p_7)b_1 b_2 c_{\hat{l}}$, $\hat{X} \gets (E_2^\dagger X_{\hat{l}} E_1 +E_1^\dagger X_{\hat{l}} E_2)/2$.
    \ENDIF
    \STATE Set $c_{\hat{l}} \gets \hat{c}$, $X_{\hat{l}} \gets \hat{X}$; if $\hat{X}=0$, set $(c_{\hat{l}}, X_{\hat{l}})$ to $(0,I)$.
    \STATE $l \gets \hat{l}$, $S \gets S \setminus a^*$
\ENDWHILE
\RETURN $\chi$.
\end{algorithmic}
\end{algorithm}

Before formally proving the algorithm works we give some intuition for how our pinning procedure works.  First note that the coefficient $|c|$ grows by a factor of $1/p_1$ with every iteration that we select the case corresponding to $p_1$.  One would correctly worry about settings where we initially sample some long monomial say $I + cX$ where $c \propto (3\beta k s)^t$ but then for every subsequent iteration we sample the case corresponding to $p_1$.  In so doing for every iteration we would pick up a factor of $1/p_1$ and the monomial would update to $I + \left(\frac{1}{p_1}\right)^t cX$.  Then clearly if $\left(\frac{1}{p_1}\right)^t$ were allowed to overcome $c \propto (3\beta k s)^t$ the state would no longer be PSD.

\begin{figure}[H]
    \centering
    \includegraphics[width=.62\linewidth]{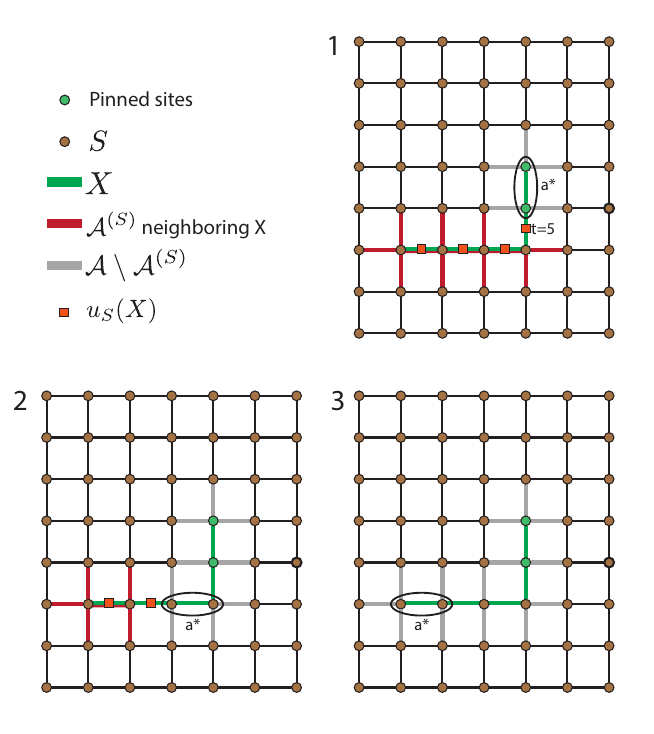}
    \caption{An example of how the pinning algorithm works.  Vertices correspond to qubits and the edges to the terms of the Hamiltonian which are all weight 2.  $S$ corresponds to the unpinned sites, $X$ to the current monomial, $\mathcal{A}^{(S)}$ to the terms of the Hamiltonian which are fully supported on the unpinned sites, and $u_S(X)$ are the terms which make up $X$ that touch an unpinned site.  The situation depicted is the particularly adversarial setting where the algorithm repeatedly samples the branch corresponding to $p_1$, where the monomial stays the same but the coefficient $c$ in $I+cP$ grows.  The initial sampling depicted in (1) selects a monomial of length $t=5$ indicated with the green edges.  Subsequent pinning steps in (2) and (3) choose a term in $u_S(X)$ which can only happen at most $t$ times.  Rather than selecting a term from $u_S(X)$ the previous algorithms select any term in $\mathcal{A}^{(S)}$ neighboring $X$.}
    \label{fig:new_iterative_pinning_example}
\end{figure}

In \Cref{fig:new_iterative_pinning_example} we depict this setting and how the pinning procedure we use would proceed.  We show a 2-local Hamiltonian where each edge of the graph corresponds to a term of the Hamiltonian and each vertex corresponds to a qubit.  We show 5 iterations of the sampling procedure.  In the first panel we can see that we took $a^*$ to be a vertical edge in the middle.  The sampling resulted in the initial monomial having order $t=5$, meaning that it consists of $5$ terms and has
\begin{align}
    |c| \leq L(\beta)\left(\beta q\right)^5.
\end{align}
Now the importance of the adaptivity of the procedure is clear.  For every subsequent iteration of the sampling we always select an $a^*$ which is one of the terms making up the monomial $X$.  Since the number of such terms in $X$ is bounded by $t$, the number of steps where the magnitude of $c$ increases is $< t $ and hence the resulting  $c'$ would be at most
\begin{align}
    |c'| \leq L(\beta)\left(\frac{1}{q}\right)^{t} q^t.
\end{align}
Thus in this example as long as we can make sure that $q \leq 1$ and $L(\beta) \leq 1$ (up to branch probabilities) we can be sure that the $|c'|$ will stay less than 1 and hence the state will be separable.  Our choice of $\beta$ is made exactly so that this will be true.  Our algorithm contrasts with that of \cite{bakshi2024high} where they only select requiring a single site to overlap and hence at most $k t$ iterations may happen.

We now prove that this algorithm works as expected.
\begin{lemma}[Pinning validity]
\label{lem:sep-pinning-validity}
    Every iteration of the algorithm completes and produces a valid configuration if the input is a valid configuration.  Here a valid configuration is one where the $X_i$ making up the configuration are disjoint and each $X_i$ is indeed a Hermitian monomial.  Furthermore if for $i\in [l-1]$ we have $u_S(X_i) = \emptyset$ then after the loop iteration for $i\in [\hat{l}-1]$ we have $u_{\hat{S}}(\hat{X}_i) = \emptyset$.  Also note that if an update gives $\hat{X}=0$ we discard the corresponding factor from the configuration.
\end{lemma}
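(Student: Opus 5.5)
We argue by induction over a single loop iteration of \Cref{alg:sep-iterative-pinning}, checking three things in turn: the iteration is well defined and terminates, the updated $\hat X$ is again a Hermitian monomial disjoint from the other factors, and the invariant $u_{\hat S}(\hat X_i)=\emptyset$ holds for $i\in[\hat l-1]$.

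\emph{Well-definedness.} The loop body runs only when $\mathcal{A}^{(S)}\neq\emptyset$, so if no $\hat a\in u_S(X_l)$ exists, some $a^*\in\mathcal{A}^{(S)}$ can be chosen. In the other branch $a^*\in u_S(X_l)$ touches $S$, possibly only partially, and the convention $H^{(S)}_{(a^*)}=\sum_{a\subseteq S,\,a\cap a^*\neq\emptyset}h_a$ makes the propagator $e^{-\frac\beta2 H^{(S)}}e^{\frac\beta2(H^{(S)}-H^{(S)}_{(a^*)})}$ well defined. Applying \Cref{lem:sep-propagator-expansion,lem:sep-propagator-sampling} to the Hamiltonian $H^{(S)}$, which is still $(s,k)$-long-range, and to the site set of $a^*$ (only the sites of $a^*$ enter that lemma) gives the samples $(b_1,E_1),(b_2,E_2)$. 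Since $a^*$ touches $S$, $S$ strictly shrinks in every iteration. Because $\Lambda$ is finite, the loop terminates.

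\emph{Monomial structure and disjointness.} In each branch $J$, $\hat X$ has the form $\tfrac12(E_1'X'E_2'^\dagger+E_2'X'E_1'^\dagger)$ with $X'\in\{I,X_{\hat l}\}$ and $E_1',E_2'\in\{I,E_1,E_2\}$. For example, $J=4$ is $E_1'=X_{\hat l}$-adjacent with $E_1'=I$, $E_2'=E_1$ after relabeling, and $J=6$ starts from $X'=I$. Hence, by \Cref{def:sep-hermitian-monomial}, $\hat X$ is a Hermitian monomial. By \Cref{lem:sep-hermitian-monomial-pauli} it is $0$ or a signed Pauli; the $0$ case is discarded as stated.

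For disjointness, every term in $C(E_1),C(E_2)$ lies in $\mathcal{A}^{(S)}$, so $\supp(E_j)\subseteq S$. Each earlier factor $X_i$ with $i<\hat l$ satisfies $u_S(X_i)=\emptyset$ by hypothesis, or is $X_l$ when $\hat l=l+1$, in which case $u_S(X_l)=\emptyset$ by the branch condition. So every term of $C(X_i)$ misses $S$, i.e.\ $\supp(X_i)\cap S=\emptyset$. Therefore $\supp(\hat X)\subseteq \supp(X_{\hat l})\cup S$ is disjoint from all $X_i$, $i\neq\hat l$. Here $\supp(X_{\hat l})$ is already disjoint from the others by validity of the input, and for a new factor it is $\emptyset$.

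\emph{Invariant.} Since $\hat S\subseteq S$, we have $u_{\hat S}(X_i)\subseteq u_S(X_i)=\emptyset$ for every unchanged $i<\hat l$. When $\hat l=l+1$, the old $X_l$ now has index $\hat l-1$ and also satisfies $u_S(X_l)=\emptyset$. The remaining subtlety is that factors not selected in the update are left unchanged in the configuration. The main point to verify carefully is exactly this set-monotonicity together with the support claim $\supp(E_j)\subseteq S$, which relies on sampling from $H^{(S)}$ rather than $H$.
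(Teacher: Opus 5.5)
Your proof is correct and takes essentially the same route as the paper's. Termination follows from $a^*\cap S\neq\emptyset$. Disjointness follows because $C(E_1),C(E_2)\subseteq\mathcal{A}^{(S)}$ while the earlier factors have $u_S(X_i)=\emptyset$, which uses the invariant hypothesis exactly as the paper does. The invariant itself follows from monotonicity of $u_S$ in $S$ together with the branch condition when a new factor is opened.

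One small slip: to match the algorithm's branches to \Cref{def:sep-hermitian-monomial} you need the daggered samples. For example, $J=4$ corresponds to $E_1'=I$, $E_2'=E_1^\dagger$; your choice $E_2'=E_1$ flips the sign when $E_1$ anticommutes with $X_{\hat l}$. This is harmless, since $E_j^\dagger$ is also a phased Pauli product over the same term list $C(E_j)$.
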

\begin{proof}
    At every iteration of the algorithm an $a^*$ is chosen that either intersects $S$ or is fully supported in $S$.  In either case $a^* \cap S \neq \emptyset$ so the size of $S$ is strictly decreased and since $S$ is initialized with a finite size the algorithm will eventually terminate.  

    Assume that at the beginning of an iteration of the loop the monomials $X_i$ satisfy that $u_S(X_i) = \emptyset$ for all $i < l$.  During the loop iteration the first case is that there $\exists \hat{a} \in u_S(X_l)$ in which case the final monomial is updated but only using terms which have support fully in $S$, while $u_S(X_i)=\emptyset$ for $i < l$ means the earlier monomials have no terms intersecting $S$.  Note that $a^*$ may be partially pinned already ($a^*\notin \mathcal{A}^{(S)}$) but nonetheless will reduce the support of $S$ because $a^* \cap S \neq \emptyset$.  Alternatively $\nexists \hat{a} \in u_S(X_l)$ and a new monomial is made with $\hat{l} = l+1$ but the previous last monomial $u_{\hat{S}}(X_{l})=\emptyset$ since the new monomial was only made because there were no $a$ intersecting $S$.  
    
    Since the initial configuration in one loop iteration is valid and the loop only modifies the last monomial of the configuration then the new configuration is also valid because the invariant guarantees the final monomial is disjoint from the earlier monomials.  Each $X_i$ is by definition a Hermitian monomial since it is computed by starting with a Hermitian monomial and multiplying by the $E_i$ which are products of terms in the Hamiltonian.
\end{proof}
Next we show that throughout the procedure the expectation of the current monomials is consistent with the Gibbs state.
\begin{lemma}[Gibbs state validity]
\label{lem:sep-gibbs-validity}
    The procedure produces a $\hat{S}$ and new configuration $\hat{\chi}$ such that 
    \begin{align}
        \mathbb{E}\left[e^{-\frac{\beta}{2} H^{(\hat S)}} \sigma(\hat\chi) e^{-\frac{\beta}{2} H^{(\hat S)}}\right] = e^{-\frac{\beta}{2} H^{(S)}} \sigma(\chi) e^{-\frac{\beta}{2} H^{(S)}}.
    \end{align}
\end{lemma}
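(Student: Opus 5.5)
The plan is to verify the identity one loop iteration at a time, conditioning on the randomness already spent (so $S$, $\chi$, and the chosen $a^*$ are fixed) and averaging only over the new samples $(b_1,E_1)$, $(b_2,E_2)$ and the branch $J$. Let $\hat S = S\setminus \supp(P_{a^*})$. The starting point is the exact operator identity
\begin{align}
    e^{-\frac{\beta}{2}H^{(S)}} = e^{-\frac{\beta}{2}H^{(\hat S)}}\, e^{\frac{\beta}{2}H^{(\hat S)}} e^{-\frac{\beta}{2}H^{(S)}} .
\end{align}
We then identify $H^{(\hat S)} = H^{(S)} - H^{(S)}_{(a^*)}$. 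Terms contained in $S$ but not in $\hat S$ are exactly those contained in $S$ that meet $\supp(P_{a^*})\cap S$. Under the stated convention for partially pinned $a^*$, these are the terms of $H^{(S)}_{(a^*)}$: a term contained in $S$ that meets $a^*$ must meet it inside $S$.

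This identification gives
\begin{align}
    e^{\frac{\beta}{2}H^{(\hat S)}}e^{-\frac{\beta}{2}H^{(S)}} = \Bigl(e^{-\frac{\beta}{2}H^{(S)}}e^{\frac{\beta}{2}(H^{(S)}-H^{(S)}_{(a^*)})}\Bigr)^\dagger ,
\end{align}
so by \Cref{lem:sep-propagator-sampling}, applied with $H\to H^{(S)}$ and $\beta\to\beta/2$, we get
\begin{align}
    e^{-\frac{\beta}{2}H^{(S)}}\sigma(\chi)e^{-\frac{\beta}{2}H^{(S)}} = e^{-\frac{\beta}{2}H^{(\hat S)}}\,\EE\bigl[(I+b_1E_1)^\dagger\sigma(\chi)(I+b_2E_2)\bigr]\,e^{-\frac{\beta}{2}H^{(\hat S)}} .
\end{align}
This uses the independence of the two samples. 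Because the left side is Hermitian and the two samples are identically distributed, we may replace the middle expectation by its Hermitian part, $\tfrac12\EE[(I+b_1E_1)^\dagger\sigma(I+b_2E_2)+(I+b_2E_2)^\dagger\sigma(I+b_1E_1)]$. The coefficients $b_i$ are real.

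Next we localize to the last factor. Write $\sigma(\chi)=\sigma_{\rm rest}\otimes(I+c_{\hat l}X_{\hat l})$. In the branch where a fresh factor is created, $c_{\hat l}=0$ and $X_{\hat l}=I$. We must check that $E_1,E_2$ commute with $\sigma_{\rm rest}$. The support of $E_i$ is a union of terms of $H^{(S)}$, hence lies in $S$. \Cref{lem:sep-pinning-validity} gives $u_S(X_i)=\emptyset$ for $i<\hat l$, so the earlier monomials are supported off $S$. Hence the $E_i$ act only on the tensor factor of the last monomial and the identity region, and $\sigma_{\rm rest}$ factors out.

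Expanding $\tfrac12[(I+b_1E_1^\dagger)(I+cX)(I+b_2E_2)+\text{h.c.}]$ produces eight terms. These are $I$ plus seven terms that are exactly $b$-coefficient multiples of the Hermitian monomials $X$, $(E_1+E_1^\dagger)/2$, $(E_2+E_2^\dagger)/2$, $(E_1^\dagger X+XE_1)/2$, and so on, matching branches $J=1,\dots,7$. Importance sampling then finishes the argument: branch $J$ occurs with probability $p_J$ and assigns coefficient $(\text{coefficient})/p_J$, so
\begin{align}
    \EE_J[I+\hat c\hat X] = I+\sum_J(\text{coefficient}_J)(\text{monomial}_J) .
\end{align}
Finally, replacing $\hat X=0$ by $(0,I)$ preserves $I+\hat c\hat X$.

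The main obstacle is bookkeeping rather than analysis. First, the Hermitian-part symmetrization must match the algorithm's specific pairings, for instance $(E_2^\dagger X E_1+E_1^\dagger X E_2)/2$ for $J=7$; this requires care with which of $E_1,E_2$ sits on the left after taking the adjoint, and symmetry in the two samples must be used. Second, the support claim $\supp(E_i)\subseteq S$ must hold even in the partially pinned case. With these in place, the tower property of conditional expectation extends the single-step identity to the whole procedure.
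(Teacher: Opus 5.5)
Your proposal is correct and is essentially the paper's argument. You average over the branch $J$ to get $\tfrac12\bigl[(I+b_1E_1^\dagger)(I+cX)(I+b_2E_2)+(I+b_2E_2^\dagger)(I+cX)(I+b_1E_1)\bigr]$, use independence and \Cref{lem:sep-propagator-sampling} to replace each factor by the (adjoint) propagator, identify $H^{(S)}-H^{(S)}_{(a^*)}=H^{(\hat S)}$, and use $u_S(X_j)=\emptyset$ for $j<\hat l$ to handle the earlier factors. The differences are cosmetic: you start from the right-hand side, commute the $E_i$ past the earlier monomials rather than commuting those monomials past $e^{-\frac{\beta}{2}H^{(S)}}$, and spell out the branch-by-branch importance-sampling check and the partially-pinned identification that the paper leaves implicit.
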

\begin{proof}
    Let us first show that at every step of the iteration the expectation is consistent with the Gibbs state $e^{-\beta H}$.  In other words we would like to show that at every step of the loop 
    \begin{align}\label{eq:sep-pinning-operator-invariant}
        e^{-\beta H} = \mathbb{E} \left[e^{-\frac{\beta}{2} H^{(S)}} \sigma(\chi) e^{-\frac{\beta}{2} H^{(S)}}\right].
    \end{align}
    We can see this is true at the beginning because $S = \Lambda$ and hence $H^{(S)} = H$ and we have that $e^{-\beta H} = e^{-\frac{\beta}{2} H}\left(I\right) e^{-\frac{\beta}{2} H}$ since here $\sigma(\chi)=I$. 
    
    Now let us consider a subsequent iteration where to start the final element of the configuration is $I + c_l X_l$ and at the end of the iteration the final monomial in the configuration is $I + \hat{c}_{\hat{l}} \hat{X}_{\hat{l}}$.  Denote the configuration before a loop iteration as $\chi$ and the configuration after a loop iteration as $\hat{\chi}$.  Likewise denote the set of unpinned sites before the iteration as $S$ and the set after the loop as $\hat{S}$.  Before the iteration we have the guarantee that 
    \begin{align}
        e^{-\beta H} = \mathbb{E} \left[e^{-\frac{\beta}{2} H^{(S)}} \sigma(\chi) e^{-\frac{\beta}{2} H^{(S)}}\right].
    \end{align} 
    Now let us show that if we fix a configuration before the iteration $\sigma(\chi)$ that after the iteration we have 
    \begin{align}
        \mathbb{E}\left[e^{-\frac{\beta}{2} H^{(\hat S)}} \sigma(\hat\chi) e^{-\frac{\beta}{2} H^{(\hat S)}}\right] = e^{-\frac{\beta}{2} H^{(S)}} \sigma(\chi) e^{-\frac{\beta}{2} H^{(S)}},
    \end{align}
    since all of the earlier elements of the monomial are supported disjointly from $H^{(S)}$.  
    Let us first focus on how the last monomial is updated based on the iteration.  Based on the sampling procedure in each iteration of the loop we find that 
    \begin{align}
        \mathbb{E}[I + \hat{c} \hat{X}] &= \frac{1}{2}\left((I + b_1 E_1^\dagger) (I + c_{\hat{l}} X_{\hat{l}}) (I + b_2 E_2) + (I + b_2 E_2^\dagger) (I + c_{\hat{l}} X_{\hat{l}})(I + b_1 E_1) \right).
    \end{align}
    Since $b_1$,$E_1$ and $b_2$,$E_2$ are sampled independently, and $\mathbb{E}\left[I + b_1 E_1\right] =  e^{-\frac{\beta}{2} H^{(S)}} e^{\frac{\beta}{2} \left(H^{(S)} - H^{(S)}_{(a^*)}\right)}$ and similarly for $\mathbb{E}\left[I + b_2 E_2\right]$, we get that 
    \begin{align}
        \mathbb{E}[I + \hat{c}\hat{X}] &= \frac{1}{2}\Bigg(e^{\frac{\beta}{2} \left(H^{(S)} - H^{(S)}_{(a^*)}\right)} e^{-\frac{\beta}{2} H^{(S)}} (I + c_{\hat{l}} X_{\hat{l}}) e^{-\frac{\beta}{2} H^{(S)}} e^{\frac{\beta}{2} \left(H^{(S)} - H^{(S)}_{(a^*)}\right)}  \nonumber \\
        &\quad \quad \quad + e^{\frac{\beta}{2} \left(H^{(S)} - H^{(S)}_{(a^*)}\right)} e^{-\frac{\beta}{2} H^{(S)}} (I + c_{\hat{l}} X_{\hat{l}})e^{-\frac{\beta}{2} H^{(S)}} e^{\frac{\beta}{2} \left(H^{(S)} - H^{(S)}_{(a^*)}\right)}  \Bigg) \nonumber \\
        &= e^{\frac{\beta}{2} \left(H^{(S)} - H^{(S)}_{(a^*)}\right)} e^{-\frac{\beta}{2} H^{(S)}} (I + c_{\hat{l}} X_{\hat{l}}) e^{-\frac{\beta}{2} H^{(S)}} e^{\frac{\beta}{2} \left(H^{(S)} - H^{(S)}_{(a^*)}\right)} \nonumber \\
        &= e^{\frac{\beta}{2}H^{(\hat S)}} e^{-\frac{\beta}{2} H^{(S)}} (I + c_{\hat{l}} X_{\hat{l}}) e^{-\frac{\beta}{2} H^{(S)}} e^{\frac{\beta}{2} H^{(\hat S)}}.
    \end{align}
    By \Cref{lem:sep-pinning-validity} all monomials $(c_j, X_j)$ in $\sigma(\chi)$ with $j < \hat{l}$ commute with $e^{-\frac{\beta}{2} H^{(S)}}$ since $u_S(X_j) = \emptyset$ and hence none of the labels in $C(X_j)$ intersect $S$.  Thus this result applies to the full configuration as well 
    \begin{align}
        \mathbb{E}\left[e^{-\frac{\beta}{2} H^{(\hat S)}} \sigma(\hat\chi) e^{-\frac{\beta}{2} H^{(\hat S)}}\right] = e^{-\frac{\beta}{2} H^{(S)}} \sigma(\chi) e^{-\frac{\beta}{2} H^{(S)}}.
    \end{align}
    The algorithm will continue to completion by \Cref{lem:sep-pinning-validity} and at this point $\mathcal{A}^{(S)}$ is empty and because the invariant is preserved every round $e^{-\beta H} = \mathbb{E}\left[\sigma(\chi) \right]$.
\end{proof}
Lastly we show that the coefficient $|c|$ for each monomial stays less than 1.
\begin{lemma}[Coefficient control]
\label{lem:sep-constant-controlled}
    If the sampling primitive satisfies $|b_i| \leq L(\beta/2) q^{t_i}$ with
    \begin{align}
        0 < q < 1, \quad \frac{6 L(\beta/2)}{1-q}\leq 1,
    \end{align}
    then at each step of the algorithm each monomial satisfies 
    \begin{align}
        |c_i| \leq q^{|u_S(X_i)|}.
    \end{align}
    Furthermore at the end of the algorithm $|c_i| \leq 1$ for all monomials.
\end{lemma}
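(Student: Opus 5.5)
The plan is to prove the invariant $|c_i|\le q^{|u_S(X_i)|}$ by induction over loop iterations, tracking only the last monomial $X_{\hat l}$. By \Cref{lem:sep-pinning-validity}, the earlier monomials are untouched and already satisfy $u_S(X_i)=\emptyset$. Shrinking $S$ can only shrink $u_S(X_i)$, so their bounds $|c_i|\le q^{|u_S(X_i)|}$ keep holding, because $q<1$. At initialization, and whenever a fresh monomial $(0,I)$ is created, the bound is trivial since $c=0$. Discarding a factor when $\hat X=0$ is harmless.

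For the inductive step, fix an iteration with selected label $a^*$ and old pair $(c,X)=(c_{\hat l},X_{\hat l})$. In the first branch of the algorithm, $a^*\in u_S(X)$. After the update, $S\setminus\supp(P_{a^*})$ no longer meets $a^*$, so $a^*\notin u_{\hat S}(\cdot)$. Moreover $C(\hat X)$ is obtained by concatenating some subset of $C(X)$, $C(E_1)$, $C(E_2)$. This gives the counting inequality
\begin{align}
|u_{\hat S}(\hat X)| \le |u_S(X)| - 1 + t_{E_1}\mathbb{1}[E_1\text{ used}] + t_{E_2}\mathbb{1}[E_2\text{ used}],
\end{align}
where the $-1$ appears only in the cases where $X$ is used ($J\in\{1,4,5,7\}$). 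In the cases where $X$ is not used, the bound is simply $t_{E_1}+t_{E_2}$ over the used factors.

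Now check each branch.
\begin{itemize}
\item $J=1$: $|\hat c|=|c|/q\le q^{|u_S(X)|-1}\le q^{|u_{\hat S}(\hat X)|}$.
\item $J=2$: $|\hat c|=\frac{6}{1-q}|b_1|\le \frac{6L}{1-q}q^{t_1}\le q^{t_1}$.
\item $J=3$: symmetric to $J=2$.
\item $J=4$ (and $J=5$ symmetrically): $|\hat c|\le \frac{6L}{1-q} q^{t_1}q^{|u_S(X)|}\le q^{t_1+|u_S(X)|-1}$. Here there is even a spare factor of $q$.
\item $J=6$: $|\hat c|\le \frac{6L^2}{1-q}q^{t_1+t_2}\le q^{t_1+t_2}$, using $L\le 1$, which follows from the hypothesis.
\item $J=7$: the analogous bound with all three factors.
\end{itemize}

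In the second branch, a new monomial is started from $(0,I)$. Only the $E$-cases contribute, and they are bounded exactly as in $J=2,3,6$, without needing any $-1$.

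One subtlety is that labels in $C(E_j)$ lie in $\mathcal A^{(S)}$, so some of them may also already lie in $C(X)$. Since $u$ counts unique labels, duplicates only make $|u_{\hat S}(\hat X)|$ smaller, and the bound improves. Another subtlety is that $u_{\hat S}$ counts labels touching $\hat S$, which is a subset of those touching $S$; this again only helps.

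The final claim follows because at termination every exponent is $\ge 0$ and $q<1$, so $|c_i|\le q^{|u_S(X_i)|}\le 1$.

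The main obstacle I expect is the bookkeeping in the $J=1$ case. Dividing by $p_1=q$ must be compensated by a strict decrease of $|u_S(X)|$. This relies on the adaptive choice $a^*\in u_S(X_l)$ together with the fact that removing $\supp(P_{a^*})$ from $S$ removes $a^*$ from $u$ (even when $a^*$ is only partially pinned). I would state this explicitly as a sublemma before running the case analysis.
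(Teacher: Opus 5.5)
Your proposal is correct and follows the paper's proof. Both maintain the invariant $|c|\le q^{|u_S(X)|}$ through the same seven-branch case analysis, and both pay for the $1/q$ in branch $J=1$ with the strict decrease $|u_{\hat S}(X)|\le|u_S(X)|-1$ that comes from the adaptive choice $a^*\in u_S(X)$. Your explicit use of $L(\beta/2)\le 1$ (which follows from $6L(\beta/2)/(1-q)\le 1$) in branches $6$ and $7$, and your slightly sharper $-1$ in branches $4,5,7$, spell out details that the paper's ``same as $p_2$/$p_4$'' leaves implicit.
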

\begin{proof}
    Now we need to show that every monomial in the configuration has $|c_i| \leq 1$ at the end of the iteration.  In so doing we will be sure that each term is separable and hence the Gibbs state as a whole is separable.  The potential function used in \cite{bakshi2024high} serves as a means to bound the possibilities of these different events.  We consider the following invariant of a given monomial
    \begin{align}
        |c| \leq \left(q\right)^{|u_S(X)|},
    \end{align}
    where $X = h_{a_1} \cdots h_{a_t}$ are the terms of the Hamiltonian in the current monomial and $\operatorname{supp}(X)$ gives the union of the support of each term in the monomial.  $q$ is a parameter we will determine through the calculation.  Let $c$, $X$ correspond to the monomial before one round of the loop and $\hat{c}, \hat{X}$ to after the loop.  We would like to show that if the invariant is satisfied before the loop then it will also be satisfied after.  Using \Cref{lem:sep-propagator-sampling} we have the guarantee that (note the factor of two because we take $\eta \rightarrow\beta/2$)
    \begin{align}
        |b_1| \leq L(\beta/2) q^{t_1}, \quad |b_2| \leq L(\beta/2) q^{t_2}.
    \end{align}
    Now for each of the possible branching cases we show that the invariant holds afterwards.
    \begin{enumerate}
        \item $p_1$:  If the current factor was recently created, trivially $\hat{c} = c=0$.  Otherwise since $a^* \in u_S(X)$ and $\hat{S} = S \setminus a^*$, $a^* \notin u_{\hat{S}}(X)$ and hence
        \begin{align}
            |u_{\hat{S}}(X)| \leq |u_{S}(X)| - 1 
        \end{align}
        and thus
        \begin{align}
            |\hat{c}| = \frac{1}{q} |c| \leq  \frac{1}{q} q^{|u_S(X)|} \leq q^{|u_S(X)|-1} \leq q^{|u_{\hat{S}}(\hat{X})|}.
        \end{align}
        \item $p_2$:  Here we sample a new term and we make use of the guarantees from \Cref{lem:sep-propagator-sampling} that the coefficients are exponentially small in the length of the monomial.  The bound should match the length of the final monomial being $\hat{t}=t_1$ in this case.  We show the bound is met with the following operations 
        \begin{align}
            |\hat{c}| 
            & = \frac{6}{1-q} |b_1| \leq  \frac{6L}{1-q} q^{t_1} \leq q^{t_1} \leq q^{|u_{\hat{S}}(\hat{X})|}.
        \end{align}
        Here we have used that $|u_{\hat{S}}(\hat{X})|\leq t_1$ and that $\frac{6L}{1-q} \leq 1$.  
        \item $p_3$: Same as $p_2$ except with $t_1\rightarrow t_2$. 
        \item $p_4$: Here the length of the final monomial is $\hat{t} = t+t_1$.  This follows with essentially the same manipulations as the $p_2$ case
        \begin{align}
            |\hat{c}|  
            = \frac{6}{1-q} |b_1| |c| 
            \leq  \frac{6L}{1-q} q^{t_1} q^{|u_S(X)|} 
            \leq q^{|u_S(X)|+t_1}  
            \leq q^{|u_{\hat{S}}(\hat{X})|},
        \end{align}
        where in the last step we have used that $|u_{\hat{S}}(\hat{X})| \leq |u_S(X)|+t_1$ since $C(\hat{X})$ can have grown by at most $t_1$ and $\hat{S}$ can only have decreased.  
        \item $p_5$: Same as $p_4$ except with $t_1\rightarrow t_2$.
        \item $p_6$: Same as $p_2$ except with $t_1\rightarrow t_1+ t_2$. 
        \item $p_7$: Same as $p_4$ except with $t_1\rightarrow t_1+t_2$.
    \end{enumerate}
    Throughout this proof the requirements have been the conditions $0 < q < 1$ and $6 L(\beta/2)/(1-q)\leq 1$ assumed in the lemma statement.  Since we have that for all iterations and all monomials $|c_i| \leq q^{|u_S(X_i)|}$, we then have that $|c_i| \leq 1$.  
\end{proof}
Now we prove a sufficient condition for the algorithm to yield separability.
\begin{lemma}[Pinning procedure]
\label{lem:sep-pinning}
    Consider a sampling procedure that takes a set of sites $S$ and a term $a^* \in \mathcal{A}$ such that $a^* \cap S \neq \emptyset$, and returns coefficients $b$ and Paulis $E$ such that
    \begin{align}
        \mathbb{E}\left[I + b E\right] =  e^{-\eta H^{(S)}} e^{\eta \left(H^{(S)} - H^{(S)}_{(a^*)}\right)}.
    \end{align}
    Here $E$ is constructed from $C(E)$ which is an ordered list of terms from $\mathcal{A}^{(S)}$ and the order of $E$ is $t = |C(E)|$.  If $|b| \leq (\beta v / 2)^t$, then the Gibbs state $e^{-\beta H}$ is separable for $\beta \leq 1/(12v)$.
\end{lemma}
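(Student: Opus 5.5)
The plan is to run \Cref{alg:sep-iterative-pinning} with the given sampling primitive and chain together \Cref{lem:sep-pinning-validity}, \Cref{lem:sep-gibbs-validity}, \Cref{lem:sep-constant-controlled} and \Cref{lem:sep-pauli-factor}. The primitive is invoked with $\eta=\beta/2$, so its guarantee reads $|b|\le(\beta v/2)^t$. The only real work is to write this in the form $|b|\le L(\beta/2)\,q^{t}$ and check the two hypotheses of \Cref{lem:sep-constant-controlled}: $0<q<1$ and $6L/(1-q)\le 1$.

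Since every sampled $E$ with $b\neq 0$ has order $t\ge 1$, I would split off one factor. Take $q=\beta v/2$ and $L=1$, or more flexibly $q=2\beta v$ and $L=(1/4)^t\le 1/4$, so that $|b|\le L q^t$. The cleaner choice for matching the constant $12$ is to exploit $t\ge1$:
\[
|b|\le (\beta v/2)^t=(\beta v/2)\,(\beta v/2)^{t-1}.
\]
Then set $q$ so that $q^{t}$ absorbs the geometric part. Concretely, with $\beta\le 1/(12v)$ we have $\beta v/2\le 1/24$. Choosing $q=1/2$, the bound becomes
\[
|b|\le (1/24)^t\le \tfrac1{12}(1/2)^t,
\]
since $(1/12)^t\le 1/12$ for $t\ge1$. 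So $L=1/12$ works, giving $6L/(1-q)=(6/12)/(1/2)=1$. Both hypotheses of \Cref{lem:sep-constant-controlled} therefore hold. If the constants do not line up exactly at the threshold, I would instead tune $q=6\beta v$, $L$ accordingly; the slack is fine either way.

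With these hypotheses in place the argument chains as follows. \Cref{lem:sep-pinning-validity} guarantees termination, disjointness of the monomials, and that each $X_i$ is a Hermitian monomial. By \Cref{lem:sep-hermitian-monomial-pauli}, each $X_i$ is then a signed Pauli product or is discarded. Iterating \Cref{lem:sep-gibbs-validity} from $S=\Lambda$ until $\mathcal{A}^{(S)}=\emptyset$ gives $e^{-\beta H}=\mathbb{E}[\sigma(\chi)]$, a convex combination of $\bigotimes_l(I+c_lX_l)$. \Cref{lem:sep-constant-controlled} gives $|c_l|\le1$ at termination. \Cref{lem:sep-pauli-factor} then yields separability, in fact as a mixture of product stabilizer states.

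The one subtle point I would check is that the primitive is only ever applied to labels $a^*$ with $a^*\cap S\neq\emptyset$, including the partially pinned case. This holds by construction of the algorithm, since $a^*\in u_S(X_l)$ or $a^*\in\mathcal{A}^{(S)}$. A second check is that the $c_i$ are real, which is needed for \Cref{lem:sep-pauli-factor}. This follows because $b_i$ is real by \Cref{lem:sep-propagator-sampling}, where any phase is pushed into $E$ and removed by the Hermitian symmetrization.
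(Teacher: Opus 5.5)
Your proof is correct and is essentially the paper's argument: both chain \Cref{lem:sep-pinning-validity}, \Cref{lem:sep-gibbs-validity}, \Cref{lem:sep-constant-controlled} and \Cref{lem:sep-pauli-factor}, and both check the hypotheses of the coefficient-control lemma by writing $|b|\le L\,q^t$ with $q=1/2$ and $L=\beta v\le 1/12$ (using $t\ge 1$), so that $6L/(1-q)=12\beta v\le 1$. One caveat: the exploratory choices you list first ($q=\beta v/2$ with $L=1$, or $q=2\beta v$ with $L=1/4$) would violate $6L/(1-q)\le 1$, so the proof is carried by the $q=1/2$ choice (or your $q=6\beta v$, $L=1/12$ fallback, which also works for $\beta>0$).
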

\begin{proof}
    By \Cref{lem:sep-pinning-validity} and \Cref{lem:sep-gibbs-validity} the algorithm eventually terminates with $H^{(S)}=0$ and a valid configuration $\chi$ satisfying 
    \begin{align}
        e^{-\beta H} = \mathbb{E}[\sigma(\chi)].
    \end{align}
    This means the expectation of the final distribution equals the unnormalized Gibbs operator.  Since the configuration is valid each term consists of a Hermitian monomial.  By \Cref{lem:sep-hermitian-monomial-pauli} each of the Hermitian monomials is either 0 or a signed Pauli.  By \Cref{lem:sep-constant-controlled} the coefficient on each term has magnitude $\leq 1$ since the lemma provides the required constraints on $q$ and $L(\beta/2)$.  Hence by \Cref{lem:sep-pauli-factor} the state is separable.
    
    In the specific setting when $|b| \leq (\eta v)^t$ we rewrite the bound on $b$ as
    \begin{align}
        |b| \leq (\eta v)^{t} = \left(\frac{\eta v}{q}\right)^{t} q^t \leq \left(\frac{\eta v}{q}\right) q^t ,
    \end{align}
    where we have assumed that $\eta v \leq q$ and  $0<q<1$.  Hence in this setting $L(\eta) = \frac{\eta v}{q}$.  We can now confirm that our choice of $\beta$ and $q$ satisfy the required inequality.  There are three things we need to confirm for this to all work $0 < q < 1$, $\frac{6 L(\beta/2)}{1-q}\leq 1$, and $L(\beta/2) \leq 1$.  The last constraint is implied by the second since $0 < q < 1$ so we focus on $\frac{6 L(\beta/2)}{1-q}\leq 1$ and find $\frac{6 L(\beta/2)}{1-q} = \frac{3 \beta v}{q(1-q)} \leq 1$ which implies
    \begin{align}
    3 \beta v \leq q(1-q) .
    \end{align}
    The right side is maximized at $q=\frac{1}{2}$ which also satisfies $0<q<1$. Hence the final bound is $\beta v \leq 1/12$ as desired.
\end{proof}

Lastly we tie all these results together to prove separability of $(s,k)$-long-range Hamiltonians.
\begin{proof}[Proof of \Cref{thm:sep}(a), (b)]
By \Cref{lem:sep-propagator-sampling} for a $(s,k)$-long-range Hamiltonian one can sample coefficients $b$ and Paulis $E$ such that
\begin{align}
    \mathbb{E}\left[I + b E\right] =  e^{-\eta H^{(S)}} e^{\eta \left(H^{(S)} - H^{(S)}_{(a^*)}\right)} 
\end{align}
with the guarantee that 
\begin{align}
    |b| \leq (\eta 6 k s)^t.
\end{align}
Hence by \Cref{lem:sep-pinning} the Gibbs state is separable at temperature $\beta \leq \frac{1}{72 s k}$.

By \cite[Algorithm 4.6 and Lemma 4.7]{bakshi2024high} for a $(\mathfrak{d},k)$-low-intersection Hamiltonian one can sample coefficients $b$ and Paulis $E$ such that
\begin{align}
    \mathbb{E}\left[I + b E\right] =  e^{-\eta H^{(S)}} e^{\eta \left(H^{(S)} - H^{(S)}_{(a^*)}\right)} 
\end{align}
with the guarantee that 
\begin{align}
    |b| \leq (\eta 8 \mathfrak{d})^t.
\end{align}
Hence by \Cref{lem:sep-pinning} the Gibbs state is separable at temperature $\beta \leq \frac{1}{96 \mathfrak{d}}$.
\end{proof}

\section{Death of magic}
\label{sec:magic}
We prove in this section the lower bound on $\beta_{\rm stab}$ reported in \Cref{thm:stab}; the upper bound is shown in \Cref{thm:stab-upper} (\Cref{sec:upper}). Although we will ultimately report our bounds in terms of the parameters $w_{\rm free}, w_{\rm pert}$, we will prove our results in terms of a somewhat stricter quantity $\Delta(\beta)$, which satisfies
\begin{align}\label{eq:deltabound}
    \Delta(\beta)\leq w_{\rm pert} \,\eta_\beta(w_{\rm free}) \qquad \text{for} \qquad \eta_\beta(\kappa) = \begin{cases} \displaystyle \frac{e^{\beta\kappa}\lr{e^{\beta\kappa}-1}}{\kappa}, &\kappa>0,\\[1.2em]
    \beta,&\kappa=0, \end{cases}.
\end{align}
To track anticommutation with terms $P_\sigma$ in $H_0$, we introduce
\begin{align}
    \cF_{\rm anti}(W) = \set{\sigma\in\cF:P_\sigma W=-WP_\sigma}.
\end{align}
We will also use notation
\begin{align}\label{eq:kappa}
    \kappa(W)=\sum_{\sigma\in \cF_{\rm anti}(W)}|u_\sigma|, \qquad \kappa(0) = 0
\end{align}
so, e.g., $w_{\rm free} = \max_{\mu\in\cG} \kappa(Q_\mu)$ when $\cG \neq \emptyset$. The quantity $\Delta$ is defined as 0 when $\cG = \emptyset$ and otherwise
\begin{align}\label{eq:delta}
    \Delta(\beta) = \max_{\mu\in\cG} \lr{ |v_\mu|\,\eta_\beta(\kappa(Q_\mu)) + \sum_{\nu\in\cG\setminus\set{\mu}:\,\nu\sim\mu} |v_\nu|\,\eta_\beta(\kappa(Q_\nu)) }
\end{align}
To show \cref{eq:deltabound}, it suffices to observe that $\eta_\beta(\kappa)$ is increasing in $\kappa \geq 0$.

For fixed $\beta$, define the activity of a perturbing term $\nu\in\cG$ by
\begin{align}\label{eq:anu}
a_\nu := |v_\nu|\,\eta_\beta\!\bigl(\kappa(Q_\nu)\bigr).
\end{align}
Thus we have
\begin{align}
\Delta(\beta) = \max_{\mu\in\cG} \left(a_\mu + \sum_{\substack{\nu\in\cG\setminus\{\mu\}\\ \nu\sim\mu}} a_\nu \right).
\end{align}

\subsection{Stabilizer decomposition}\label{subsec:stabdecomp}

Recall that in \cite{bakshi2024high}, the Gibbs state is decomposed as a distribution over stabilizer product states, i.e.~$\bigotimes_i A_i$ for $A_i \in \{\frac{1}{2}\lr{I \pm X}, \frac{1}{2}\lr{I \pm Y}, \frac{1}{2}\lr{I \pm Z}\}$. In particular, each $A_i$ is a projector onto a pure stabilizer state (stabilized by the respective Pauli). Here, we will show that the Gibbs state is a mixture of stabilizer states that may not be separable.

We denote the convex hull of pure $n$-qubit stabilizer states by $\STAB_n$ (\Cref{def:sepstab}), and let
\begin{align}
    \StabCone_n=\set{t\rho:t\geq0,\ \rho\in\STAB_n} = \set{\sum_a c_a \ketbra{\phi_a} \,:\, c_a \geq 0, \, \ket{\phi_a}\text{ is an } n\text{-qubit pure stabilizer state}}
\end{align}
be the stabilizer cone. It is enough to prove that $e^{-\beta H}\in\StabCone_n$, since normalization by its trace then gives $\rho_\beta(H)\in\STAB_n$.

In the separability proof, \cite{bakshi2024high} uses the fact that for Pauli $P$, the matrix $I+c P$ is separable for any $|c| \leq 1$. This is not preserved under products: observe that
\begin{align}
    \lr{I + c_X X\otimes X}\lr{I + c_Z Z\otimes Z}
\end{align}
at $c_X = c_Z = 1$ is proportional to the Bell state $\ket{00}+\ket{11}$ and is thus not separable. It is, however, stabilizer. We generalize this to obtain a criterion for being contained in the stabilizer cone.

\begin{lemma}[Stabilizer criterion for commuting Paulis]
\label{lem:stabcrit}
Let $Q_1,\ldots,Q_m$ be pairwise commuting Hermitian Pauli strings, and
let $|\lambda_j|\leq1$ for every $j$. Then
\begin{align}
    \prod_{j=1}^m(I+\lambda_jQ_j)\in\StabCone_n.
\end{align}
\end{lemma}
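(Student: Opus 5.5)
The proof proceeds by linearity, in the same spirit as \Cref{lem:sep-pauli-factor}. The stabilizer cone is a convex cone, so it suffices to write each factor as a nonnegative combination and then show that every product of the resulting pieces lies in $\StabCone_n$. First I will use the decomposition
\begin{align}
    I+\lambda_jQ_j=(1-|\lambda_j|)\,I+|\lambda_j|\,(I+\sigma_jQ_j),\qquad \sigma_j=\sgn(\lambda_j)\in\{\pm1\}.
\end{align}
The coefficients $1-|\lambda_j|$ and $|\lambda_j|$ are nonnegative because $|\lambda_j|\leq 1$. Expanding the product over $j$ then writes $\prod_j(I+\lambda_jQ_j)$ as a nonnegative combination, indexed by subsets $T\subseteq[m]$, of the operators
\begin{align}
    \Pi_T=\prod_{j\in T}(I+\sigma_jQ_j).
\end{align}
Since all $Q_j$ commute, every $\Pi_T$ is well defined independently of ordering. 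Each factor equals $2\cdot\frac{1}{2}(I+\sigma_jQ_j)$, twice the projector onto the $\sigma_j$-eigenspace of $Q_j$. So $\Pi_T=2^{|T|}\prod_{j\in T}\Pi_j$, a positive multiple of a product of commuting projectors, which is itself a projector. The claim is therefore reduced to showing that each such product projector lies in $\StabCone_n$.

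Second, I will identify this projector as a projector onto a stabilizer code space. Let $G$ be the group generated by the signed Paulis $\sigma_jQ_j$ for $j\in T$. This group is abelian. If $-I\in G$, the projector is zero: its image would be a common $+1$ eigenspace of all of $G$, including $-I$, and no nonzero vector satisfies this. Zero lies in the cone trivially. Otherwise $G$ is a stabilizer group with $|G|=2^r$ and $r$ independent generators. In that case $\prod_{j\in T}\Pi_j=\frac{1}{|G|}\sum_{g\in G}g$ is the projector onto the code space $\mathcal C_G$ of dimension $2^{n-r}$. I will derive this identity by expanding the product, noting that each group element is hit equally often by multiplicity. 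Alternatively, one can observe directly that the product of the $\Pi_j$ projects onto the joint $+1$ eigenspace.

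Third, I will show that the code-space projector is a sum of pure stabilizer state projectors. Extend $G$ to a maximal abelian subgroup $G'\supseteq G$ with $-I\notin G'$ and $|G'|=2^n$, by adding $n-r$ further independent commuting Paulis $g_{r+1},\dots,g_n$. Standard symplectic linear algebra guarantees this is possible: the symplectic complement of an isotropic subspace contains a Lagrangian extension. Then
\begin{align}
    \Pi_{\mathcal C_G}=\sum_{\epsilon\in\{\pm1\}^{n-r}}\prod_{i=r+1}^{n}\tfrac12(I+\epsilon_i g_i),
\end{align}
and each summand is the rank-one projector onto the pure stabilizer state with stabilizer group generated by $G$ and $\{\epsilon_ig_i\}$. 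None of these groups contains $-I$: flipping the signs of the added generators preserves commutation and independence. Combining these steps expresses the original operator as a nonnegative combination of pure stabilizer projectors, so it lies in $\StabCone_n$.

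The main obstacle is the bookkeeping in the last step. Two points need care. The first is the existence of the Lagrangian extension. The second is that the added signed generators never create $-I$ within the enlarged group. Both follow from viewing Paulis modulo phases as vectors in $\mathbb F_2^{2n}$ with the symplectic form; independence there rules out a product equal to $\pm I$ other than the trivial one. I will state this step carefully rather than rely on informal counting. The degenerate case $-I\in G$ is handled separately, as noted above.
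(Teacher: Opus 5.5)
Your proof is correct and is essentially the paper's argument. The paper expands via the joint spectral decomposition $\prod_j(I+\lambda_jQ_j)=\sum_{\varepsilon}\prod_j(1+\lambda_j\varepsilon_j)\Pi_\varepsilon$ rather than your subset expansion from $(1-|\lambda_j|)I+|\lambda_j|(I+\sigma_jQ_j)$. Both routes reduce to projectors onto joint eigenspaces of signed commuting Paulis, discard the case $-I\in G$ as zero, and extend to a full stabilizer group in the same way.

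One slip to fix: your display should read $\Pi_{\mathcal C_G}=\sum_{\epsilon}\Pi_{\mathcal C_G}\prod_{i=r+1}^{n}\tfrac12(I+\epsilon_ig_i)$. As written, the right-hand side sums to $I$ and its summands have rank $2^r$ rather than one, though your verbal description of the summands already uses the correct operator.
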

\begin{proof}
Because the $Q_j$ commute and are Hermitian, they are simultaneously diagonalizable. For
$\varepsilon=(\varepsilon_1,\ldots,\varepsilon_m)\in\set{\pm1}^m$, let
$\Pi_\varepsilon$ be the projector onto the joint eigenspace $Q_j\psi=\varepsilon_j\psi$ for $j \in [m]$, so
\begin{align}
    \prod_{j=1}^m(I+\lambda_jQ_j) = \sum_{\varepsilon\in\set{\pm1}^m} \prod_{j=1}^m(1+\lambda_j\varepsilon_j)\Pi_\varepsilon.
\end{align}
Since $|\lambda_j|\leq1$, every coefficient
$1+\lambda_j\varepsilon_j$ is nonnegative. It remains to show that
$\Pi_\varepsilon\in\StabCone_n$.

Let $G_\varepsilon$ be the abelian Pauli subgroup generated by the
operators $\varepsilon_jQ_j$. If $-I\in G_\varepsilon$, then the
constraints $Q_j\psi=\varepsilon_j\psi$ are inconsistent and $\Pi_\varepsilon=0$,
which belongs to $\StabCone_n$. Otherwise choose an independent
commuting generating set $S_1,\ldots,S_k$ for $G_\varepsilon$. Then
\begin{align}
    \Pi_\varepsilon=\prod_{a=1}^k\frac{I+S_a}{2}.
\end{align}
Extend $S_1,\ldots,S_k$ to an independent commuting family of Paulis $S_1,\ldots,S_k,T_{k+1},\ldots,T_n$. For $\eta=(\eta_{k+1},\ldots,\eta_n)\in\set{\pm1}^{n-k}$, let $P_\eta$
be the projector onto the common eigenspace
\begin{align}
    S_a=+1\quad(a=1,\ldots,k), \qquad T_b=\eta_b\quad(b=k+1,\ldots,n).
\end{align}
This is a full independent set of $n$ commuting Pauli constraints, so
$P_\eta$ is a rank-one stabilizer projector. Moreover,
\begin{align}
    \sum_{\eta\in\set{\pm1}^{n-k}}P_\eta = \prod_{a=1}^k\frac{I+S_a}{2} = \Pi_\varepsilon.
\end{align}
Thus $\Pi_\varepsilon$ is a positive sum of pure stabilizer projectors,
and hence $\Pi_\varepsilon\in\StabCone_n$.
\end{proof}

In the absence of such commuting structure, we need a sharper criterion. We give one here based on the compatibility relation.
\begin{lemma}[Stabilizer criterion for incompatible Paulis]
\label{lem:free-slack}
Let $X_1,\ldots,X_m$ be signed Hermitian Pauli strings such that
$X_i\not\sim X_j$ for all $i\ne j$. If $|\lambda_j|\leq e^{-\beta\kappa(X_j)}$ for all $1\leq j\leq m$, then
\begin{align}
    e^{-\beta H_0/2} \prod_{j=1}^m(I+\lambda_jX_j) e^{-\beta H_0/2} \in\StabCone_n.
\end{align}
\end{lemma}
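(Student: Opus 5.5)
The plan is to factor $e^{-\beta H_0/2}$ according to which of the $X_j$ each term of $H_0$ anticommutes with, absorb each anticommuting block into its own factor $I+\lambda_jX_j$, and then reduce everything to \Cref{lem:stabcrit}.

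\textbf{Step 1: disjoint anticommuting blocks.} Recall that $X_i\not\sim X_j$ means two things: $X_i$ and $X_j$ commute, and no $\sigma\in\cF$ anticommutes with both. Hence the sets $\cF_j:=\cF_{\rm anti}(X_j)$ are pairwise disjoint. Write
\begin{align}
H_0=R+\sum_{j=1}^m A_j,\qquad A_j=\sum_{\sigma\in\cF_j}u_\sigma P_\sigma,
\end{align}
where $R$ collects the terms of $H_0$ commuting with every $X_j$. All of $R$ and the $A_j$ commute with one another, since $H_0$ is commuting. Moreover $A_i$ commutes with $X_j$ for $i\neq j$, while every term of $A_j$ anticommutes with $X_j$. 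The last fact gives $X_je^{-\beta A_j/2}=e^{\beta A_j/2}X_j$, and therefore
\begin{align}
e^{-\beta A_j/2}(I+\lambda_jX_j)e^{-\beta A_j/2}=e^{-\beta A_j}+\lambda_jX_j=:M_j .
\end{align}
Using these commutation relations, the target operator becomes
\begin{align}
e^{-\beta R/2}\Bigl(\prod_j M_j\Bigr)e^{-\beta R/2}.
\end{align}
The $M_j$ pairwise commute: for $i\neq j$, $A_i$ commutes with $A_j$ and with $X_j$, and the $X$'s commute with each other.

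\textbf{Step 2: split each $M_j$ into two nonnegative branches.} Expand $e^{-\beta A_j}$ spectrally over the joint eigenprojectors $\Pi^{(j)}_\varepsilon$ of $\{P_\sigma\}_{\sigma\in\cF_j}$. The eigenvalue on $\Pi^{(j)}_\varepsilon$ is $e^{-\beta\sum_\sigma\varepsilon_\sigma u_\sigma}\geq e^{-\beta\kappa(X_j)}$. Hence
\begin{align}
M_j=e^{-\beta\kappa(X_j)}\bigl(I+\lambda_je^{\beta\kappa(X_j)}X_j\bigr)+\sum_\varepsilon\bigl(e^{-\beta\sum_\sigma\varepsilon_\sigma u_\sigma}-e^{-\beta\kappa(X_j)}\bigr)\Pi^{(j)}_\varepsilon .
\end{align}
In this decomposition $|\lambda_je^{\beta\kappa(X_j)}|\leq1$, and all scalar coefficients are nonnegative. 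Each $\Pi^{(j)}_\varepsilon$ is a product of factors $(I\pm P_\sigma)/2$.

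Similarly, write $e^{-\beta R}=\sum_\delta w_\delta\Pi^R_\delta$ with $w_\delta>0$. Since $\Pi^R_\delta$ commutes with every $M_j$, we get
\begin{align}
e^{-\beta R/2}\,Y\,e^{-\beta R/2}=\sum_\delta w_\delta\,\Pi^R_\delta\,Y,\qquad Y=\prod_j M_j.
\end{align}

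\textbf{Step 3: apply \Cref{lem:stabcrit}.} Multiply out $\prod_jM_j$. For each $j$ choose one of the two branches: either the $X_j$-branch or a projector $\Pi^{(j)}_\varepsilon$. This writes the operator as a nonnegative combination of products of factors of the form $(I+\mu Q)$ with $|\mu|\leq1$. In each product, the Paulis $Q$ are the $X_j$ with $j$ in the first branch, the $P_\sigma$ with $\sigma\in\cF_i$ for $i$ in the second branch, and the Paulis appearing in $R$.

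This is the main point to verify. The only possible anticommuting pairs are $X_j$ with $P_\sigma$ for $\sigma\in\cF_j$. Such a pair never co-occurs, because choosing the $X_j$-branch drops $A_j$ entirely. All other pairs commute: $X_j$ commutes with $\cF_i$ for $i\ne j$ and with $R$, and all $P_\sigma$ commute with each other. Thus each product consists of pairwise commuting Hermitian Paulis with coefficients of modulus at most $1$. \Cref{lem:stabcrit} then places each product in $\StabCone_n$. Since the cone is closed under nonnegative combinations, the whole operator lies in $\StabCone_n$.

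The expected obstacle is the bookkeeping in Step 1: checking that the pairwise disjointness of the $\cF_{\rm anti}(X_j)$, which is exactly what incompatibility supplies, lets the $e^{-\beta A_j/2}$ factors be routed to their own $X_j$ in an arbitrary product order.
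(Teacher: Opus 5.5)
Your proof is correct and follows essentially the same route as the paper: you split $H_0$ into the disjoint anticommuting blocks $A_j$ plus a commuting remainder, use the identity $e^{-\beta A_j/2}(I+\lambda_jX_j)e^{-\beta A_j/2}=e^{-\beta A_j}+\lambda_jX_j$, and split each factor into two nonnegative branches. The only differences are cosmetic: you subtract $e^{-\beta\kappa(X_j)}I$ where the paper subtracts $|\lambda_j|I$, and you close by invoking \Cref{lem:stabcrit} explicitly, which makes precise the paper's brief final remark that products of the resulting commuting stabilizer projectors lie in the cone.
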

\begin{proof}
Write
\begin{align}
    H_j=\sum_{\sigma\in \cF_{\rm anti}(X_j)}u_\sigma P_\sigma, \qquad H_{\rm rest}=H_0-\sum_{j=1}^mH_j.
\end{align}
Since $X_j$ anticommutes with $H_j$ by definition, we have the identity
\begin{align}
    e^{-\beta H_j/2} (I+\lambda_j X_j) e^{-\beta H_j/2} = e^{-\beta H_j} + e^{-\beta H_j/2} \lambda_j e^{\beta H_j/2} X_j = e^{-\beta H_j} + \lambda_j X_j.
\end{align}
Since all the free terms that anticommute with $X_j$ were removed into the corresponding $H_j$, $H_{\rm rest}$ commutes with every $X_j$. Since $X_j \not\sim X_k$ ensures that $\cF_{\rm anti}(X_j)$ and $\cF_{\rm anti}(X_k)$ are disjoint for $k \neq j$, every Pauli term in $H_j$ commutes with $X_k$. This gives
\begin{align}\label{eq:restdecomp}
    &e^{-\beta H_0/2} \prod_{j=1}^m(I+\lambda_jX_j) e^{-\beta H_0/2} = e^{-\beta H_{\rm rest}} \prod_{j=1}^m\lr{e^{-\beta H_j}+\lambda_jX_j}.
\end{align}
Since $H_{\rm rest}$ is composed of commuting Pauli strings, $e^{-\beta H_{\rm rest}} \in \StabCone_n$. We will show momentarily that each term $e^{-\beta H_j} + \lambda_j X_j$ is expanded into either a projector from the commuting family
\begin{align}\label{eq:proj1}
    \{P_\sigma\,:\,\sigma\in\cF_{\rm anti}(X_j)\}
\end{align}
or is the Pauli projector
\begin{align}\label{eq:proj2}
    I + \sgn(\lambda_j) X_j.
\end{align}
Since the anticommutation sets $\cF_{\rm anti}(X_j)$ and $\cF_{\rm anti}(X_k)$ are disjoint, and $[X_j, X_k] = 0$, these stabilizer projectors all commute. Finally, since all the terms commute with $H_{\rm rest}$, it only contributes projectors that commute with all of the above. Hence, \cref{eq:restdecomp} is in $\StabCone_n$, completing the proof.

We now show that each factor $e^{-\beta H_j} + \lambda_j X_j$ is given by projectors as claimed in \cref{eq:proj1} and \cref{eq:proj2}. Decompose
\begin{align}
    e^{-\beta H_j} + \lambda_j X_j = \lr{e^{-\beta H_j} - |\lambda_j| I} + |\lambda_j|\lr{I + \sgn(\lambda_j)X_j}.
\end{align}
The second term is a positive multiple of $I \pm X_j$, which is twice a Pauli stabilizer projector. To show that the first term is also in the stabilizer cone, we diagonalize in the basis of $H_j$. Since
\begin{align}
    \kappa(X_j) = \sum_{\sigma\in\cF_{\rm anti}(X_j)} |u_\sigma|,
\end{align}
and since the terms in $H_j$ commute and have eigenvalues $\pm u_\sigma$, every eigenvalue of $H_j$ is in $[-\kappa(X_j), \kappa(X_j)]$. Since $|\lambda_j| \leq e^{-\beta \kappa(X_j)}$ by assumption, we have that $e^{-\beta H_j} - |\lambda_j| I$ is PSD in the same basis of commuting Paulis as $H_j$, and is thus a positive linear combination of stabilizer projectors.
\end{proof}

\subsection{Propagator}
The key to the separability proof is the repeated application of a propagation step that pins qubits individually. Roughly, this takes the form of
\begin{align}
    e^{-\beta H} e^{\beta (H - H_A)} = \sum_{t=0}^\infty \frac{\beta^t}{t!}f_t (H, H_A) = \sum_{i} p_i (I + c_i P_i) = \mathbb{E}_i [I + c_i P_i],
\end{align}
and then the pinning procedure updates a set $S$ of unpinned sites into $\wh S = S \setminus A$. In this subsection, we will show a similar procedure, but we will introduce a propagator $\cU_S(t_1, t_2)$ that will ultimately satisfy
\begin{align}
    \cU_{\wh S}(0, \beta/2) \cU_S(\beta/2, 0) = \E{I+bE}
\end{align}
for some distribution over $b \geq 0$ and Pauli strings $E$.

Let us begin by defining $\cU_S$ and noting some of its properties. For $S\subseteq\cG$, set
\begin{align}
    V_S=\sum_{\mu\in S}v_\mu Q_\mu , \qquad V_S(t)=e^{tH_0}V_Se^{-tH_0}.
\end{align}
Similarly, we will write $Q_\mu(t) = e^{tH_0}Q_\mu e^{-tH_0}$ so $V_S(t) = \sum_{\mu \in S} v_\mu Q_\mu(t)$.
For $t_2,t_1\in[-\beta/2,\beta/2]$, let $\cU_S(t_2,t_1)$ be the solution of
\begin{align}\label{eq:usdef}
    \partial_{t_2}\cU_S(t_2,t_1) = -V_S(t_2)\cU_S(t_2,t_1), \qquad \cU_S(t_1,t_1)=I,
\end{align}
so that
\begin{align}
    e^{-\beta(H_0+V_S)} = e^{-\beta H_0/2} \cU_S(\beta/2,-\beta/2) e^{-\beta H_0/2}.
\end{align}

\begin{lemma}[Properties of $\cU_S$]
\label{lem:propagator-identities}
For every $S\subseteq\cG$, the propagator satisfies
\begin{align}
    \cU_S(t_3,t_2)\cU_S(t_2,t_1) = \cU_S(t_3,t_1), \qquad \cU_S(t_2,t_1)^{-1}=\cU_S(t_1,t_2), \qquad \cU_S(t_2,t_1)^\dagger = \cU_S(-t_1,-t_2).
\end{align}
Moreover, if a phased Pauli string $W$ satisfies
$W\not\sim Q_\mu$ for every $\mu\in S$, then $W$ commutes with
$V_S(t)$ for every $t$ and hence with $\cU_S(t_2,t_1)$ for all
$t_1,t_2$.
\end{lemma}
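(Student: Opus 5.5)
The plan is to first get an explicit closed form for $\cU_S$, and then read off all three identities from it. For the commutation claim I will instead use uniqueness of solutions to the defining ODE \cref{eq:usdef}. Write $H_S=H_0+V_S$. I claim that
\begin{align}
    \cU_S(t_2,t_1)=e^{t_2H_0}\,e^{-(t_2-t_1)H_S}\,e^{-t_1H_0}.
\end{align}
To check this, set $t_2=t_1$, which gives $I$. Then differentiate in $t_2$:
\begin{align}
    \partial_{t_2}\cU_S(t_2,t_1)=e^{t_2H_0}(H_0-H_S)e^{-(t_2-t_1)H_S}e^{-t_1H_0}=-e^{t_2H_0}V_Se^{-t_2H_0}\,\cU_S(t_2,t_1)=-V_S(t_2)\,\cU_S(t_2,t_1).
\end{align}
Since \cref{eq:usdef} is a linear ODE with a smooth (indeed entire) coefficient on a finite-dimensional space, its solution is unique, so this closed form is $\cU_S$. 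As a by-product, setting $t_2=\beta/2$ and $t_1=-\beta/2$ recovers the stated identity $e^{-\beta(H_0+V_S)}=e^{-\beta H_0/2}\cU_S(\beta/2,-\beta/2)e^{-\beta H_0/2}$.

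From the closed form the identities follow directly. For the composition law, in $\cU_S(t_3,t_2)\cU_S(t_2,t_1)$ the inner factors $e^{-t_2H_0}e^{t_2H_0}$ cancel and the $H_S$ exponentials combine into $e^{-(t_3-t_1)H_S}$. Taking $t_3=t_1$ in the composition law gives the inverse identity. For the adjoint, $H_0$ and $H_S$ are Hermitian because all coefficients are real and Pauli strings are Hermitian. Hence
\begin{align}
    \cU_S(t_2,t_1)^\dagger=e^{-t_1H_0}e^{-(t_2-t_1)H_S}e^{t_2H_0},
\end{align}
and this is exactly the closed form evaluated at $(-t_1,-t_2)$. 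Note that one cannot argue via Hermiticity of the generator, since $V_S(t)$ is not Hermitian; instead $V_S(t)^\dagger=V_S(-t)$. The closed form sidesteps this issue.

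For the commutation claim, fix $\mu\in S$ and let $H_\mu=\sum_{\sigma\in\cF_{\rm anti}(Q_\mu)}u_\sigma P_\sigma$. The remaining terms of $H_0$ commute with $Q_\mu$, and the terms of $H_0$ commute among themselves. Therefore
\begin{align}
    Q_\mu(t)=e^{tH_\mu}Q_\mu e^{-tH_\mu}=e^{2tH_\mu}Q_\mu.
\end{align}
The hypothesis $W\not\sim Q_\mu$ says two things. First, $W$ commutes with $Q_\mu$; for Pauli strings, "not anticommuting" means commuting, and the phase on $W$ is irrelevant. Second, no $\sigma$ anticommutes with both $W$ and $Q_\mu$. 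The second point means $W$ commutes with every $P_\sigma$ with $\sigma\in\cF_{\rm anti}(Q_\mu)$, hence with $e^{2tH_\mu}$. So $W$ commutes with each $Q_\mu(t)$, and by linearity with $V_S(t)$.

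The step needing care is passing from this to $\cU_S$, because $W$ need not commute with $H_0$, so the closed form alone does not suffice. Instead, note that $W$ is invertible (indeed unitary up to a phase). Consider $\wt\cU(t_2)=W\cU_S(t_2,t_1)W^{-1}$. It satisfies $\partial_{t_2}\wt\cU=-WV_S(t_2)W^{-1}\wt\cU=-V_S(t_2)\wt\cU$ with $\wt\cU(t_1)=I$. By uniqueness for \cref{eq:usdef}, $\wt\cU=\cU_S(t_2,t_1)$ for all $t_1,t_2$, which is the claimed commutation.
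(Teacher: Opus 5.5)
Your proof is correct. For the commutation claim you follow the paper's route. The paper also first shows $[W,Q_\mu(t)]=0$ by observing that $Q_\mu(t)$ is built from $Q_\mu$ and free Paulis in $\cF_{\rm anti}(Q_\mu)$; your formula $Q_\mu(t)=e^{2tH_\mu}Q_\mu$ makes this explicit. It then applies uniqueness to the linear ODE $\partial_t[W,\cU_S(t,t_1)]=-V_S(t)[W,\cU_S(t,t_1)]$, which is equivalent to your conjugation argument with $W\cU_S W^{-1}$.

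Where you genuinely differ is in the three algebraic identities. The paper derives them abstractly from uniqueness of solutions to the defining ODE. For the adjoint, it checks that $\cU_S(t_2,t_1)^\dagger$ and $\cU_S(-t_1,-t_2)$ solve the same equation using $V_S(t)^\dagger=V_S(-t)$, which tacitly requires knowing how $\cU_S$ varies in its second argument. You instead identify the interaction-picture closed form $\cU_S(t_2,t_1)=e^{t_2H_0}e^{-(t_2-t_1)(H_0+V_S)}e^{-t_1H_0}$ and read everything off by direct algebra.

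Your route gives a fully explicit verification with no hidden steps. As a bonus, it proves the factorization $e^{-\beta(H_0+V_S)}=e^{-\beta H_0/2}\cU_S(\beta/2,-\beta/2)e^{-\beta H_0/2}$, which the paper only asserts. The paper's argument is more structural: its group and adjoint properties hold for any generator with $V_S(t)^\dagger=V_S(-t)$. Here, though, both arguments rely on the same inputs---Hermitian $H_0$ and $V_S$, and real times---so nothing is lost with your route. You are also right that the closed form alone cannot give the commutation claim, since $W$ need not commute with $H_0$.
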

\begin{proof}
The composition and inverse identities follow from uniqueness of
solutions to the defining differential equation. Since
$V_S(t)^\dagger=V_S(-t)$, the adjoint
$\cU_S(t_2,t_1)^\dagger$ and the propagator
$\cU_S(-t_1,-t_2)$ solve the same differential equation with the
same initial condition.

For the final claim, expand $Q_\mu(t)=e^{tH_0}Q_\mu e^{-tH_0}$. Every Pauli string in
this expansion is a product of $Q_\mu$ and free Paulis from
$\cF_{\rm anti}(Q_\mu)$. If $W\not\sim Q_\mu$, then $W$ commutes with
$Q_\mu$ and with every Pauli in $\cF_{\rm anti}(Q_\mu)$. Hence $W$
commutes with $Q_\mu(t)$, and therefore with $V_S(t)$. Applying this to the ODE of the commutator gives
\begin{align}
    \partial_t[W, \cU_S(t, t_1)] &= -WV_S(t) \,\cU_S(t, t_1) + V_S(t) \,\cU_S(t, t_1)W = -V_S(t)[W, \cU_S(t, t_1)].
\end{align}
Since $[W, \cU_S(t_1, t_1)] = [W, I] = 0$, we find that $[W, \cU_S(t, t_1)] = 0$ for all $t$.
\end{proof}

In the separability proof, an invariant such as $|c| \leq q^{|u_S(X)|}$ is maintained throughout the procedure. (In the coarser version, $|u_S(X)|$ is replaced by $|S \cap \supp(X)|$.) In our setting, we replace this with $\Phi_S$, which will count unpinned terms represented in the current monomial $X$. Instead of counting the number of unpinned sites intersecting $X$, it counts the remaining perturbative terms $Q_\mu$ satisfying $Q_\mu \sim X$. We define it as $\Phi_S(W) = 0$ when $\Delta(\beta) = 0$ or $W = 0$, and
\begin{align}\label{eq:phi}
    \Phi_S(W) = \Delta(\beta)^{-1} \displaystyle\sum_{\mu\in S:\,\mu\sim W} |v_\mu|\,\eta_\beta(\kappa(Q_\mu))
\end{align}
otherwise.

\begin{lemma}[Properties of $\Phi$]
\label{lem:potential}
Let $S\subseteq\cG$.

\begin{enumerate}
\item If $W,Z$ are nonzero phased Pauli strings, then
\begin{align}
    \Phi_S(WZ)\leq \Phi_S(W)+\Phi_S(Z).
\end{align}

\item If $D=(\mu_1,\ldots,\mu_t)$ is a list of labels in $S$, and $E$
is, up to a phase, a product of $Q_{\mu_1},\ldots,Q_{\mu_t}$ and
free Paulis from $\bigcup_{j=1}^t\cF_{\rm anti}(Q_{\mu_j})$, then
\begin{align}
    \Phi_S(E)\leq t.
\end{align}
\end{enumerate}
\end{lemma}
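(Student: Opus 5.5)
Both parts reduce to a monotonicity property of the compatibility relation $\sim$ under multiplication. Since $\Phi_S(W)=\Delta(\beta)^{-1}\sum_{\mu\in S:\mu\sim W}a_\mu$ with nonnegative activities $a_\mu$ from \cref{eq:anu}, it suffices to prove the containment
\begin{align}
    \{\mu\in S:\ Q_\mu\sim WZ\}\subseteq\{\mu\in S:\ Q_\mu\sim W\}\cup\{\mu\in S:\ Q_\mu\sim Z\}.
\end{align}
Part 1 then follows by a union bound on the sum. The cases $\Delta(\beta)=0$ are trivial, since then $\Phi_S\equiv 0$. The case of a zero argument is also excluded by hypothesis.

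To prove the containment, suppose $Q_\mu\not\sim W$ and $Q_\mu\not\sim Z$. By the definition \cref{eq:compat}, $Q_\mu$ commutes with $W$ and with $Z$. Moreover, for every $\sigma\in\cF_{\rm anti}(Q_\mu)$, $P_\sigma$ commutes with both $W$ and $Z$. Hence $Q_\mu$ commutes with $WZ$, and every such $P_\sigma$ commutes with $WZ$. That is exactly $Q_\mu\not\sim WZ$. Phases do not affect commutation, so phased Paulis are handled identically. This is the same mechanism used in the last claim of \Cref{lem:propagator-identities}.

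For part 2, I first apply part 1 inductively to the factors of $E$. This gives $\Phi_S(E)\le\sum_j\Phi_S(Q_{\mu_j})+\sum_{\text{free factors}}\Phi_S(P_\sigma)$, which is not yet good enough, because it counts free factors separately. Instead I will group $E$ into blocks $B_j$, where $B_j$ is $Q_{\mu_j}$ times the free Paulis assigned to it from $\cF_{\rm anti}(Q_{\mu_j})$; each free factor is assigned to some $j$ whose anticommutation set contains it. Then I argue directly that $\{\nu: Q_\nu\sim B_j\}\subseteq\{\nu: Q_\nu\sim Q_{\mu_j}\}$.

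\textbf{This is the key point.} Suppose $Q_\nu\not\sim Q_{\mu_j}$. Then $Q_\nu$ commutes with $Q_{\mu_j}$ and with every $P_\sigma$, $\sigma\in\cF_{\rm anti}(Q_{\mu_j})$, so it commutes with $B_j$. It remains to show that every $P_\tau\in\cF_{\rm anti}(Q_\nu)$ commutes with $B_j$. Such a $P_\tau$ commutes with all free Paulis, because $H_0$ is commuting. It also commutes with $Q_{\mu_j}$: otherwise $\tau\in\cF_{\rm anti}(Q_\nu)\cap\cF_{\rm anti}(Q_{\mu_j})$, which would make $Q_\nu\sim Q_{\mu_j}$. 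Hence $Q_\nu\not\sim B_j$.

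With this block containment, $\Phi_S(B_j)\le \Phi_S(Q_{\mu_j})$. Part 1 then gives $\Phi_S(E)\le\sum_{j=1}^t\Phi_S(Q_{\mu_j})$. Finally, since $\mu_j\in S$,
\begin{align}
    \Phi_S(Q_{\mu_j})=\Delta^{-1}\sum_{\nu\in S,\ \nu\sim\mu_j}a_\nu\le \Delta^{-1}\Bigl(a_{\mu_j}+\sum_{\nu\neq\mu_j,\ \nu\sim\mu_j}a_\nu\Bigr)\le 1
\end{align}
by the definition of $\Delta(\beta)$. The possible self-term $\nu=\mu_j$ is covered by the $a_{\mu_j}$ summand. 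This yields $\Phi_S(E)\le t$.

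I expect the main obstacle to be this block argument absorbing the free Paulis. It uses two facts: free Paulis commute pairwise, and shared anticommuting free terms force compatibility. Both hold by construction.
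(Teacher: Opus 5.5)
Your proof is correct and follows the paper's argument. Part 1 is the same containment $\{\mu\sim WZ\}\subseteq\{\mu\sim W\}\cup\{\mu\sim Z\}$. Part 2 likewise reduces to showing that $\mu\sim E$ forces $\mu\sim\mu_j$ for some $j$, using that free Paulis commute pairwise and that a shared anticommuting free term implies compatibility, and then bounds each $\sum_{\nu\in S:\,\nu\sim\mu_j}a_\nu\le\Delta(\beta)$. Your grouping into blocks $B_j$ combined with part 1 is just a more explicit way of organizing the case analysis that the paper states in one line.
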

\begin{proof}
If $\Delta(\beta)=0$, both claims are immediate from the
definition of $\Phi_S$. Assume $\Delta(\beta)>0$. For the first claim, if $Q_\mu$ anticommutes with $WZ$, then it
anticommutes with $W$ or with $Z$. Also
\begin{align}
    \cF_{\rm anti}(WZ) = \cF_{\rm anti}(W)\triangle \cF_{\rm anti}(Z) \subseteq \cF_{\rm anti}(W)\cup \cF_{\rm anti}(Z).
\end{align}
Thus $\mu\sim WZ$ implies $\mu\sim W$ or $\mu\sim Z$, and the
claim follows from the definition of $\Phi$ in \cref{eq:phi}.

For the second claim, if $\mu\sim E$, then either $Q_\mu$
anticommutes with one of the $Q_{\mu_j}$, or $Q_\mu$ shares a free
Pauli with one of the $Q_{\mu_j}$. Hence $\mu\sim\mu_j$
for some $j$. Therefore
\begin{align}
    \begin{aligned} \Delta(\beta) \Phi_S(E) = \sum_{\mu\in S:\,\mu\sim E}|v_\mu|\,\eta_\beta(\kappa(Q_\mu)) &\leq \sum_{j=1}^t \sum_{\mu\in S:\,\mu\sim\mu_j}|v_\mu|\,\eta_\beta(\kappa(Q_\mu)) \leq t\Delta(\beta). \end{aligned}
\end{align}
\end{proof}

We record a few reused lemmas for convenience.
\begin{lemma}\label{lem:graph-sample-source-activity}
    For $\mu \in \cG$, the Pauli expansion
    \begin{align}
        Q_\mu(s) = \sum_{A\subseteq \cF_{\rm anti}(Q_\mu)} q_{\mu,A}(s)R_{\mu,A}
    \end{align}
    holds for Pauli strings $R_{\mu,A}$ that are (up to a phase) products of $Q_\mu$ with Paulis in $\cF_{\rm anti}(Q_\mu)$, and for coefficients $q_{\mu,A}(s)$ that satisfy
    \begin{align}
        \sum_{A\subseteq \cF_{\rm anti}(Q_\mu)} \int_0^{\beta/2} |v_\mu|\,|q_{\mu,A}(s)|\, e^{\beta\kappa(R_{\mu,A})}\,ds = \frac12\,|v_\mu|\,\eta_\beta(\kappa(Q_\mu)).
    \end{align}
\end{lemma}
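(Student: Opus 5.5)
The plan is to compute $Q_\mu(s)=e^{sH_0}Q_\mu e^{-sH_0}$ in closed form, exploiting that every free Pauli either commutes or anticommutes with $Q_\mu$. Split $H_0=H_{\rm anti}+H_{\rm comm}$, where
\begin{align}
    H_{\rm anti}=\sum_{\sigma\in\cF_{\rm anti}(Q_\mu)}u_\sigma P_\sigma .
\end{align}
All $P_\sigma$ commute pairwise, so $e^{sH_0}=e^{sH_{\rm comm}}e^{sH_{\rm anti}}$. Since $H_{\rm comm}$ commutes with $Q_\mu$, its factors cancel. Each term of $H_{\rm anti}$ anticommutes with $Q_\mu$, so $Q_\mu e^{-sH_{\rm anti}}=e^{sH_{\rm anti}}Q_\mu$. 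Hence
\begin{align}
    Q_\mu(s)=e^{2sH_{\rm anti}}Q_\mu=\prod_{\sigma\in\cF_{\rm anti}(Q_\mu)}\bigl(\cosh(2su_\sigma)I+\sinh(2su_\sigma)P_\sigma\bigr)\,Q_\mu ,
\end{align}
where the product formula uses $P_\sigma^2=I$ and that the factors commute.

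Expanding the product over subsets $A\subseteq\cF_{\rm anti}(Q_\mu)$ gives the claimed form with
\begin{align}
    q_{\mu,A}(s)=\prod_{\sigma\in A}\sinh(2su_\sigma)\prod_{\sigma\notin A}\cosh(2su_\sigma),
    \qquad
    R_{\mu,A}=\Bigl(\prod_{\sigma\in A}P_\sigma\Bigr)Q_\mu .
\end{align}
Each $R_{\mu,A}$ is, up to a phase, a Pauli string formed from $Q_\mu$ and Paulis in $\cF_{\rm anti}(Q_\mu)$, as required. If different $A$ happen to give the same string, we simply keep them as separate labeled terms; nothing in the statement requires distinctness.

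The key observation is that $\kappa(R_{\mu,A})=\kappa(Q_\mu)$. For any $\sigma'\in\cF$, $P_{\sigma'}$ commutes with every $P_\sigma$ with $\sigma\in A$. So $P_{\sigma'}$ anticommutes with $R_{\mu,A}$ exactly when it anticommutes with $Q_\mu$, giving $\cF_{\rm anti}(R_{\mu,A})=\cF_{\rm anti}(Q_\mu)$. Write $\kappa=\kappa(Q_\mu)$. For $s\geq0$ we have $\cosh x+|\sinh x|=e^{|x|}$, so
\begin{align}
    \sum_A|q_{\mu,A}(s)|=\prod_{\sigma\in\cF_{\rm anti}(Q_\mu)}e^{2s|u_\sigma|}=e^{2s\kappa}.
\end{align}

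It remains to integrate. Multiplying by $|v_\mu|e^{\beta\kappa}$ and integrating over $s\in[0,\beta/2]$ gives, for $\kappa>0$,
\begin{align}
    |v_\mu|\,e^{\beta\kappa}\,\frac{e^{\beta\kappa}-1}{2\kappa}=\tfrac12|v_\mu|\,\eta_\beta(\kappa).
\end{align}
For $\kappa=0$ the same integral equals $|v_\mu|\beta/2$, which matches the $\kappa=0$ case of $\eta_\beta$. Both cases give the stated identity.

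There is no real obstacle here. The only points needing care are the invariance $\kappa(R_{\mu,A})=\kappa(Q_\mu)$, which follows from the pairwise commutativity of the free terms, and the identity $\cosh x+|\sinh x|=e^{|x|}$, which makes the bound hold with equality rather than merely as an inequality.
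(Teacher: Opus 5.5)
Your proof is correct and follows the paper's argument. Both write the conjugation as a $\cosh/\sinh$ expansion over the anticommuting free terms, observe that $\kappa(R_{\mu,A})=\kappa(Q_\mu)$, use $\cosh x+|\sinh x|=e^{|x|}$ to get $\sum_A|q_{\mu,A}(s)|=e^{2s\kappa(Q_\mu)}$ exactly, and then integrate; the only difference is cosmetic, in that you write the conjugation globally as $e^{2sH_{\rm anti}}Q_\mu$ while the paper conjugates one free term at a time.
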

\begin{proof}
    Since the free Paulis all commute, we can evaluate the conjugation one free term at a time. If $P_\sigma$ commutes with $Q_\mu$, then
    \begin{align}
        e^{su_\sigma P_\sigma}Q_\mu e^{-su_\sigma P_\sigma} = Q_\mu. \label{eq:graph-sample-commuting-free-conjugation}
    \end{align}
    If $P_\sigma$ anticommutes with $Q_\mu$, then
    \begin{align}
        e^{su_\sigma P_\sigma}Q_\mu e^{-su_\sigma P_\sigma} = \cosh(2su_\sigma)Q_\mu + \sinh(2su_\sigma)P_\sigma Q_\mu. \label{eq:graph-sample-anticommuting-free-conjugation}
    \end{align}
    Hence, in the Pauli expansion, each $R_{\mu,A}$ is, up to phase, the Pauli string
    \begin{align}
        \lr{\prod_{\sigma\in A}P_\sigma}Q_\mu , \label{eq:graph-sample-dressed-pauli-term}
    \end{align}
    and
    \begin{align}
        \sum_{A\subseteq \cF_{\rm anti}(Q_\mu)} |q_{\mu,A}(s)| = \prod_{\sigma\in\cF_{\rm anti}(Q_\mu)} \lr{|\cosh(2su_\sigma)|+|\sinh(2su_\sigma)|} &= e^{2s\kappa(Q_\mu)}. \label{eq:graph-sample-dressed-coefficient-l1}
    \end{align}
    since for every $A\subseteq \cF_{\rm anti}(Q_\mu)$, multiplying $Q_\mu$ by
    free Paulis does not change which free Paulis anticommute with it, so $\kappa(R_{\mu,A})=\kappa(Q_\mu)$.
    This gives
    \begin{align}
        &\sum_{A\subseteq \cF_{\rm anti}(Q_\mu)} \int_0^{\beta/2} |v_\mu|\,|q_{\mu,A}(s)|\, e^{\beta\kappa(R_{\mu,A})}\,ds  = |v_\mu|e^{\beta\kappa(Q_\mu)} \int_0^{\beta/2} e^{2s\kappa(Q_\mu)}\,ds  = \frac12\,|v_\mu|\,\eta_\beta(\kappa(Q_\mu)). \label{eq:graph-sample-source-activity}
    \end{align}
\end{proof}

We will use the following standard facts.
\begin{fact}[Dyson expansion]\label{fac:dyson}
Let $A:[0,T]\to \mathbb C^{d\times d}$ be continuous. The solution of
\begin{align}
    G'(t)=A(t)G(t), \qquad G(0)=I,
\end{align}
is given, for $0\leq t\leq T$, by
\begin{align}
    G(t) = I+ \sum_{m\geq1} \int_{0\leq s_m\leq\cdots\leq s_1\leq t} A(s_1)\cdots A(s_m)\, ds_1\cdots ds_m.
\end{align}
\end{fact}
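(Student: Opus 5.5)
The Dyson expansion stated in \Cref{fac:dyson} is standard, and I would prove it by Picard iteration on the equivalent integral equation. First, I rewrite the initial value problem as
\begin{align}
    G(t)=I+\int_0^t A(s)G(s)\,ds ,
\end{align}
which is equivalent to the ODE for continuous $G$ by the fundamental theorem of calculus. Next, I define the partial sums
\begin{align}
    G_M(t)=I+\sum_{m=1}^M\int_{0\leq s_m\leq\cdots\leq s_1\leq t}A(s_1)\cdots A(s_m)\,ds_1\cdots ds_m .
\end{align}
A direct relabeling of variables, with $s_1=s$ as the outermost integration, shows that
\begin{align}
    G_{M}(t)=I+\int_0^t A(s)G_{M-1}(s)\,ds .
\end{align}

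The key estimate is the bound on the $m$-th term. Let $\alpha=\sup_{[0,T]}\|A\|$, which is finite because $A$ is continuous on a compact interval. The $m$-th term then has operator norm at most $\alpha^m t^m/m!$, since the ordered simplex $\{0\leq s_m\leq\cdots\leq s_1\leq t\}$ has volume $t^m/m!$. The series therefore converges absolutely and uniformly on $[0,T]$, by comparison with $e^{\alpha T}$, to a continuous limit $G_\infty$. Passing to the limit in the recursion is justified by uniform convergence, and gives
\begin{align}
    G_\infty(t)=I+\int_0^tA(s)G_\infty(s)\,ds .
\end{align}
Hence $G_\infty$ is $C^1$ and solves $G'=AG$ with $G(0)=I$.

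Finally, I would check uniqueness with Grönwall's inequality. If $G$ and $\tilde G$ are two solutions, then $D=G-\tilde G$ satisfies
\begin{align}
    \|D(t)\|\leq\alpha\int_0^t\|D(s)\|\,ds ,
\end{align}
so $D\equiv 0$. Therefore $G=G_\infty$, which is the claimed formula.

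I do not expect a real obstacle. The only care needed is the indexing convention: the expansion multiplies $A(s_1)\cdots A(s_m)$ with the latest time on the left, and this matches $G'=AG$ rather than $G'=GA$. The recursion step above checks exactly this. In the paper's application, $\cU_S$ solves $\partial_{t_2}\cU_S(t_2,t_1)=-V_S(t_2)\,\cU_S(t_2,t_1)$. This fits the same setup with $A(t)=-V_S(t+t_1)$, where the time shift moves the initial condition to $t=0$.
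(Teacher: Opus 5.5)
Your argument is correct and complete. The recursion $G_M(t)=I+\int_0^t A(s)G_{M-1}(s)\,ds$ matches the time-ordering in the statement, the simplex-volume bound $\alpha^m t^m/m!$ gives uniform convergence, and Gr\"onwall gives uniqueness.

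The paper does not prove this fact at all; it is quoted as standard. So your Picard-iteration proof supplies exactly what the paper omits, and there is nothing to compare it against.

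One small correction to your closing remark concerns where the paper uses the fact. It invokes it for $G(t)=\cU_{\wh S}(0,t)\cU_S(t,0)$ with $A(t)=-v_{\mu_*}\cU_{\wh S}(0,t)Q_{\mu_*}(t)\cU_{\wh S}(t,0)$, not for $\cU_S$ itself. That $A$ is also continuous, so your proof covers the paper's use verbatim.
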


\begin{fact}[Tree function]
\label{fac:tfunc}
    The function $T(x) = \sum_{n \geq 1} \frac{n^{n-1}}{n!} x^n$ satisfies $T(1/3) \in (0,1)$.
\end{fact}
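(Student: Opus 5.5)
The plan is to identify $T$ with the principal branch of the inverse of $y\mapsto ye^{-y}$, and then use monotonicity to compare $T(1/3)$ with $T(1/e)=1$.

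\emph{Convergence and positivity.} By Stirling's bound $n!\geq \sqrt{2\pi n}\,(n/e)^n$, we have
\begin{align}
    \frac{n^{n-1}}{n!}\leq \frac{e^n}{\sqrt{2\pi}\,n^{3/2}} .
\end{align}
Hence the series converges absolutely for $|x|\leq 1/e$, and $T$ is continuous on $[0,1/e]$. Since $1/3<1/e$ and every term is positive, $T(1/3)>0$. Moreover $T$ is strictly increasing on $[0,1/e]$.

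\emph{Functional equation.} Let $g(y)=ye^{-y}$. Then $g$ is analytic with $g(0)=0$ and $g'(0)=1$. Lagrange inversion applied to $y=xe^{y}$ gives the coefficients of the local inverse:
\begin{align}
    [x^n]\,y=\frac1n[y^{n-1}]e^{ny}=\frac{n^{n-1}}{n!}.
\end{align}
So $T(x)e^{-T(x)}=x$ near $0$. Both sides are continuous on $[0,1/e]$ (the left by Abel's theorem and absolute convergence), so by analytic continuation along $[0,1/e)$ and continuity the identity holds on all of $[0,1/e]$. On $[0,1]$ the map $g$ is strictly increasing from $0$ to $1/e$. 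Because $T$ is continuous, increasing, starts at $0$, and satisfies $g(T(x))=x$, it must be the inverse of $g|_{[0,1]}$ and cannot jump past $y=1$. In particular $T(1/e)=1$, and strict monotonicity gives $T(1/3)<1$.

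\emph{Main obstacle and fallback.} The only delicate step is ensuring that $T$ stays on the branch $[0,1]$ rather than crossing to the other preimage branch $y>1$. This is handled by continuity together with $g'(y)>0$ for $y<1$. As a sanity check, $T(1/3)$ is the smallest positive root of $y=e^{y}/3$, which lies in $(0.6,0.7)$ by direct evaluation.

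If one prefers to avoid inversion entirely, there is an elementary route. Compute the first few terms exactly: $1/3$, $1/9$, $1/18$, $8/243$, and so on. Bound the tail $n\geq N$ by $\sum_{n\geq N}(e/3)^n/(\sqrt{2\pi}\,n^{3/2})\leq (e/3)^N/\big(\sqrt{2\pi}\,N^{3/2}(1-e/3)\big)$. Then choose $N$ moderately large, around $10$–$15$, so that the total is visibly below $1$.
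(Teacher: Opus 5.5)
Your proposal is correct and follows the same route as the paper: Lagrange inversion gives $T(x)e^{-T(x)}=x$, and the monotonicity of $te^{-t}$ on $[0,1]$ places $T(1/3)$ in $(0,1)$. Your continuity argument ($T$ starts at $0$ and cannot reach $1$ before $x=1/e$) makes explicit the branch selection that excludes the second root of $te^{-t}=1/3$ above $1$, a step the paper's intermediate-value argument leaves implicit.
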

\begin{proof}
    Lagrange inversion gives $x=T(x)e^{-T(x)}$ for $0 \leq x < 1/e$. On $[0,1]$, the function $f(t) = t e^{-t}$ satisfies $f'(t) \geq 0$, where the inequality is strict for $t \in [0,1)$; moreover, it has endpoints $f(0) = 0$ and $f(1) = 1/e > 1/3$. By the intermediate value theorem, $T(1/3)$ is thus given by a unique $t \in (0,1)$ such that $te^{-t}=1/3$.
\end{proof}

We can now show the main propagation lemma.

\begin{lemma}[Propagator sampling]
\label{lem:graph-sample}
Let $S\subseteq\cG$, let $\mu_*\in S$, and assume
$\Delta(\beta)>0$. Put
\begin{align}
    \wh S=S\setminus\set{\mu_*}, \qquad \theta=\frac{|v_{\mu_*}|\,\eta_\beta(\kappa(Q_{\mu_*}))}{\Delta(\beta)}.
\end{align}
Then $0<\theta\leq1$. There is a distribution over $b\geq0$, phased
Pauli $E$, and list of labels $D$ in $S$ such that
\begin{align}
    \E{I+bE} = \cU_{\wh S}(0,\beta/2)\cU_S(\beta/2,0).
\end{align}
Whenever $b\ne0$, the list
$D=(\mu_1,\ldots,\mu_t)$ has length $t\geq1$ and satisfies:

\begin{enumerate}
\item $\mu_1=\mu_*$;
\item for every $j>1$, either $\mu_j=\mu_*$ or
$\mu_j\sim\mu_i$ for some $i<j$;
\item $E$ is, up to a phase, a product of
$Q_{\mu_1},\ldots,Q_{\mu_t}$ and free Paulis from
$\bigcup_{j=1}^t\cF_{\rm anti}(Q_{\mu_j})$;
\item $\Phi_S(E)\leq t$;
\item $b e^{\beta\kappa(E)}
        \leq \theta\lr{3\Delta(\beta)}^t.$ 
\end{enumerate}
For the exceptional $b=0$ outcome one may take $E=I$ and $D=\emptyset$.
\end{lemma}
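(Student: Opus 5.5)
The plan is to reduce the product of propagators to a single Dyson series whose ``source'' is the removed term $\mu_*$. First, $0<\theta\leq1$ holds because $a_{\mu_*}$ is one of the summands in the maximum defining $\Delta(\beta)$ in \cref{eq:delta}, and it is positive whenever $\Delta(\beta)>0$ and $v_{\mu_*}\neq0$.

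Set $G(t)=\cU_{\wh S}(0,t)\,\cU_S(t,0)$ for $t\in[0,\beta/2]$. By \cref{eq:usdef} and \Cref{lem:propagator-identities},
\begin{align}
    G'(t) = \cU_{\wh S}(0,t)\bigl(V_{\wh S}(t)-V_S(t)\bigr)\cU_S(t,0) = -A(t)\,G(t), \qquad A(t)=\cU_{\wh S}(0,t)\,v_{\mu_*}Q_{\mu_*}(t)\,\cU_{\wh S}(t,0).
\end{align}
So $G(\beta/2)$ is given by \Cref{fac:dyson} as a time-ordered series in the dressed source operators $A(s_j)$.

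Next I would expand each $A(s)$ in two stages. The first stage writes $Q_{\mu_*}(s)$ in Paulis using \Cref{lem:graph-sample-source-activity}. The second stage writes the conjugation by $\cU_{\wh S}$ as a Dyson series of nested commutators with $v_\nu Q_\nu(s')$, $\nu\in\wh S$, each $Q_\nu(s')$ again expanded via \Cref{lem:graph-sample-source-activity}. The key locality input is the last claim of \Cref{lem:propagator-identities}. Any $Q_\nu(s')$ with $\nu$ not compatible with the Pauli strings built so far commutes with them, so its commutator vanishes. Hence only labels with $\mu_j\sim\mu_i$ for some earlier $i$ survive, or fresh copies of $\mu_*$ coming from new factors $A(s_j)$. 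This gives items 1--3 of the lemma, and item 4 then follows from \Cref{lem:potential}(2).

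The expected main obstacle is item 5. One must bound the total $\ell_1$ mass of all terms of order $t$, each weighted by $e^{\beta\kappa(E)}$, by $\theta(3\Delta)^t$. I would handle the weight via subadditivity, $\kappa(WZ)\leq\kappa(W)+\kappa(Z)$ (from $\cF_{\rm anti}(WZ)=\cF_{\rm anti}(W)\triangle\cF_{\rm anti}(Z)$), and bound $e^{\beta\kappa(E)}\leq\prod_j e^{\beta\kappa(R_{\mu_j,A_j})}$. Each label then carries exactly the integrand of \Cref{lem:graph-sample-source-activity}, which integrates to at most $\tfrac12 a_{\mu_j}$.

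The remaining issue is combinatorial. Nested commutators attach each new label to an earlier one, so the surviving terms are indexed by rooted labeled trees (or forests rooted at $\mu_*$-copies). The factor $2$ from each commutator cancels the $\tfrac12$ from the integral. Summing over attachment points, each child contributes at most $\sum_{\nu\sim\mu_i}a_\nu\leq\Delta$, and the root contributes $a_{\mu_*}=\theta\Delta$. The time orderings of the Dyson integrals produce $1/t!$, which against the number of increasing tree labelings gives the Cayley-type count $t^{t-1}/t!$. Bounding $t^{t-1}/t!\leq e^t$ is too lossy, so instead I would use \Cref{fac:tfunc}: put per-order mass $\theta\Delta^t\,t^{t-1}/t!$ and renormalize by $3^t$. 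This gives total sampling probability $\sum_t \frac{t^{t-1}}{t!}3^{-t}=T(1/3)<1$, with the leftover probability $1-T(1/3)$ assigned to the exceptional outcome $b=0$, $E=I$, $D=\emptyset$.

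Finally, as in \Cref{lem:sep-propagator-sampling}, I would sample a term $Y$ with probability proportional to its weighted mass within each order $t$, and set $b$ equal to the inverse probability times $|Y|$. This rescales each order's contribution to meet the bound $be^{\beta\kappa(E)}\leq\theta(3\Delta)^t$. If the exact tree count proves delicate, I would fall back to bounding the recursion order by order, as in \Cref{lem:sep-propagator-expansion}, with $\Delta$ replacing $ks$.
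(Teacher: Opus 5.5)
\textbf{Verdict: genuine gap in item 5.} Your setup through items 1--4 and the final importance-sampling step essentially match the paper. There $b(h)=\theta(3\Delta)^{|D(h)|}e^{-\beta\kappa(E(h))}$, with probability proportional to $|c(h)|e^{\beta\kappa(E(h))}/(\theta(3\Delta)^{|D(h)|})$. So everything reduces to showing this measure has total mass at most $1$, which is the normalization behind item 5. That is where your accounting breaks.

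\textbf{There is no global $1/t!$.} $G(\beta/2)$ is an outer Dyson series in $m$ source insertions at times $s_1\geq\cdots\geq s_m$. Each insertion carries its own nested-commutator series, whose $r_i$ times are ordered only among themselves and below $s_i$. Summed over these block structures, the time volume at order $t$ is $B_t(\beta/2)^t/t!$, where $B_t$ is the Bell number, not $(\beta/2)^t/t!$.

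\textbf{Within a block, the attachment count does not combine with the simplex volume.} The order-dependent count ``each new label is compatible with an earlier one'' cannot be paired with the $1/r_i!$ of the ordered simplex. The resulting integrand is not symmetric in the times, yet each label's weight must be integrated over all of $[0,\beta/2]$ to produce $a_\nu$. The paper's device is an order-independent bound: the nonvanishing indicator of block $i$ is at most $\sum_{T}\prod_{\{p,q\}\in E(T)}1\{\nu_p=\nu_q\text{ or }\nu_p\sim\nu_q\}$, summed over all spanning trees $T$ on $\{0,\dots,r_i\}$. This is symmetric in $\tau_1,\dots,\tau_{r_i}$, so the ordered integral is $1/r_i!$ of the cube integral. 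Cayley's formula then gives normalized block mass $\sum_r M_r\leq\tfrac{\theta}{2}T(1/3)$.

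\textbf{The source insertions must then be combined separately, and the root's $\tfrac12$ matters.} The source is not a commutator, so nothing cancels the $\tfrac12$ in \Cref{lem:graph-sample-source-activity}. Each root therefore contributes $\tfrac12 a_{\mu_*}=\theta\Delta/2$, and your root weight $a_{\mu_*}=\theta\Delta$ discards exactly this factor.

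\textbf{How the paper closes the bound, and how your target can be recovered.} The paper drops the outer ordering and sums
\[
\int dp\leq\frac{1}{\theta}\sum_{m\geq1}(\theta/2)^m=\frac{1}{2-\theta}\leq1 .
\]
With root weight $a_{\mu_*}$ this becomes $T(1/3)/(1-\theta T(1/3))$, which exceeds $1$ for $\theta$ near $1$. Alternatively, keep the outer $1/m!$ and use $[x^t]T(x)^m\leq m\,[x^t]T(x)$. The same block bounds then give per-order mass
\[
W_t\leq\Delta^t\,[x^t]\bigl(e^{\theta T(x)/2}-1\bigr)\leq\tfrac{\theta}{2}e^{\theta/2}\,\Delta^t\,\frac{t^{t-1}}{t!}.
\]
This does recover your claimed $\theta\Delta^t t^{t-1}/t!$ and hence total probability at most $T(1/3)<1$. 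But it goes through the block-by-block symmetrization and the per-root $\tfrac12$, which is the missing idea in your proposal. A single tree over all $t$ labels with a global $1/t!$ does not give it.
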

\begin{proof}
The object that we want to express as an expectation is $G(\beta/2)$ for
\begin{align}
    G(t)=\cU_{\wh S}(0,t)\cU_S(t,0), \qquad 0\leq t\leq \beta/2.
\end{align}
Using the defining equation \cref{eq:usdef}
\begin{align}\label{eq:deriv1}
    \partial_t\,\cU_S(t,0) = -V_S(t)\,\cU_S(t,0),
\end{align}
and \Cref{lem:propagator-identities} to get
\begin{align}\label{eq:deriv2}
    \partial_t\cU_{\wh S}(0,t) = \cU_{\wh S}(0,t)V_{\wh S}(t),
\end{align}
we find
\begin{align}
    G'(t) &= \cU_{\wh S}(0,t)\bigl(V_{\wh S}(t)-V_S(t)\bigr)\cU_S(t,0)  = -v_{\mu_*}\, \cU_{\wh S}(0,t)Q_{\mu_*}(t)\cU_S(t,0) = -v_{\mu_*}\, \cU_{\wh S}(0,t)Q_{\mu_*}(t)\cU_{\wh S}(t,0)G(t). \label{eq:graph-sample-G-ode}
\end{align}
The Dyson expansion (\Cref{fac:dyson}) of \cref{eq:graph-sample-G-ode} gives
\begin{align}
    G(\beta/2) &= I+ \sum_{m\geq1} \int_{0\leq s_m\leq\cdots\leq s_1\leq \beta/2} (-v_{\mu_*})^m \prod_{i=1}^m \lr{ \cU_{\wh S}(0,s_i)Q_{\mu_*}(s_i)\cU_{\wh S}(s_i,0) } \,ds_1\cdots ds_m. \label{eq:graph-sample-G-dyson}
\end{align}
Next we expand each factor in the product in terms of $\ad_A(B)=[A,B]$. For fixed $B$ and $0 \leq s \leq \beta/2$, observe that \cref{eq:deriv1} and \cref{eq:deriv2} give
\begin{align}
    \partial_s \cU_{\wh S}(0, s) B \cU_{\wh S}(s,0) = \cU_{\wh S}(0,s) [V_{\wh S}(s), B] \,\cU_{\wh S}(s,0)
\end{align}
and thus
\begin{align}
    \cU_{\wh S}(0,s)B\cU_{\wh S}(s,0) = B + \int_0^s \cU_{\wh S}(0,\tau)\ad_{V_{\wh S}(\tau)}(B)\,\cU_{\wh S}(\tau,0)\,d\tau
\end{align}
Iterating this argument gives
\begin{align}
    &\cU_{\wh S}(0,s)Q_{\mu_*}(s)\,\cU_{\wh S}(s,0) = \sum_{r\geq0} \int_{0\leq \tau_r\leq\cdots\leq\tau_1\leq s} \ad_{V_{\wh S}(\tau_r)} \cdots \ad_{V_{\wh S}(\tau_1)} \bigl(Q_{\mu_*}(s)\bigr) \,d\tau_1\cdots d\tau_r. \label{eq:graph-sample-conjugation-expansion}
\end{align}
Combining
\cref{eq:graph-sample-G-dyson},
\cref{eq:graph-sample-conjugation-expansion} and the Pauli expansion
\begin{align}
    Q_\mu(s) = \sum_{A\subseteq \cF_{\rm anti}(Q_\mu)} q_{\mu,A}(s)R_{\mu,A}
\end{align}
given in \Cref{lem:graph-sample-source-activity}, we obtain (after rearranging sums)
\begin{align}\label{eq:g1}
    G(\beta/2) &= I+ \sum_{m\geq1} \sum_{r_1,\dots,r_m\geq0} \sum_{A_1,\dots,A_m \subseteq \cF_{\rm anti}(Q_{\mu_*})} \int_{0\leq s_m\leq\cdots\leq s_1\leq \beta/2}\int_{0\leq \tau_{i,r_i}\leq\cdots\leq\tau_{i,1}\leq s_i} \notag \\
    &\quad\times (-v_{\mu_*})^m\lr{\prod_{i=1}^m \ad_{V_{\wh S}(\tau_{i,r_i})} \cdots \ad_{V_{\wh S}(\tau_{i,1})} \lr{q_{\mu_*,A_i}(s_i)R_{\mu_*,A_i}} \,d\tau_{i,1}\cdots d\tau_{i,r_i}}ds_1\cdots ds_m.
\end{align}
We continue expanding this. To make notation simpler, relabel the sum over $A_i \subseteq \cF_{\rm anti}(Q_{\mu_*})$ as a sum over $A_{i,0} = A_i$, and relabel $s_i$ as $\tau_{i,0}$. Since $V_{\wh S} = \sum_{\nu\in\wh S}v_\nu Q_\nu$, we can expand the nested commutators
\begin{align}
    &\ad_{V_{\wh S}(\tau_{i,r_i})} \cdots \ad_{V_{\wh S}(\tau_{i,1})}\lr{q_{\mu_*,A_{i,0}}(\tau_{i,0})R_{\mu_*,A_{i,0}}} \notag\\
    &= \sum_{\nu_{i,1},\cdots,\nu_{i,r_i} \in \wh S} \lr{\prod_{j=1}^{r_i} v_{\nu_{i,j}}} \ad_{Q_{\nu_{i,r_i}}(\tau_{i,r_i})} \cdots \ad_{Q_{\nu_{i,1}}(\tau_{i,1})}\lr{q_{\mu_*,A_{i,0}}(\tau_{i,0})R_{\mu_*,A_{i,0}}}.
\end{align}
We further decompose
\begin{align}
    Q_{\nu_{i,j}}(\tau_{i,j}) = \sum_{A_{i,j} \subseteq \cF_{\rm anti}(Q_{\nu_{i,j}})} q_{\nu_{i,j},A_{i,j}}(\tau_{i,j})R_{\nu_{i,j},A_{i,j}}
\end{align}
to obtain
\begin{align}
    &\ad_{V_{\wh S}(\tau_{i,r_i})} \cdots \ad_{V_{\wh S}(\tau_{i,1})}\lr{q_{\mu_*,A_{i,0}}(\tau_{i,0})R_{\mu_*,A_{i,0}}}\notag\\
    &= \sum_{\nu_{i,1},\cdots,\nu_{i,r_i} \in \wh S} \sum_{A_{i,1}\subseteq \cF_{\rm anti}(Q_{\nu_{i,1}})} \hspace{-0.5em}\cdots\hspace{-0.5em} \sum_{A_{i,r_i}\subseteq \cF_{\rm anti}(Q_{\nu_{i,r_i}})} \lr{\prod_{j=1}^{r_i} v_{\nu_{i,j}}}\lr{\prod_{j=0}^{r_i} q_{\nu_{i,j},A_{i,j}}(\tau_{i,j})}\notag\\
    &\qquad \times \ad_{R_{\nu_{i,r_i},A_{i,r_i}}}\hspace{-1em}\cdots\ad_{R_{\nu_{i,1},A_{i,1}}}\lr{R_{\mu_*,A_{i,0}}},
\end{align}
so \cref{eq:g1} becomes, also setting notation $\nu_{i,0} = \mu_*$ for all $i$,
\begin{align}\label{eq:g2}
    G(\beta/2) &= I+ \sum_{m\geq1} \sum_{r_1,\dots,r_m\geq0} \int_{0\leq \tau_{m,0}\leq\cdots\leq \tau_{1,0}\leq \beta/2}\int_{0\leq \tau_{i,r_i}\leq\cdots\leq\tau_{i,1}\leq \tau_{i,0}} \lr{\prod_{i=1}^m\prod_{j=0}^{r_i} d\tau_{i,j}}(-1)^m\notag \\
    &\qquad\times\prod_{i=1}^m\sum_{\nu_{i,1},\cdots,\nu_{i,r_i} \in \wh S} \sum_{A_{i,0}\subseteq \cF_{\rm anti}(Q_{\nu_{i,0}})} \cdots \sum_{A_{i,r_i}\subseteq \cF_{\rm anti}(Q_{\nu_{i,r_i}})} \lr{\prod_{j=0}^{r_i} v_{\nu_{i,j}} q_{\nu_{i,j},A_{i,j}}(\tau_{i,j})}\notag\\
    &\qquad\times \ad_{R_{\nu_{i,r_i},A_{i,r_i}}}\cdots\ad_{R_{\nu_{i,1},A_{i,1}}}\lr{R_{\nu_{i,0},A_{i,0}}}.
\end{align}
Observe that either
\begin{align}
    \ad_{R_{\nu_{i,r_i},A_{i,r_i}}}\cdots\ad_{R_{\nu_{i,1},A_{i,1}}}\lr{R_{\nu_{i,0},A_{i,0}}} = 0
\end{align}
or else
\begin{align}\label{eq:rprod}
    \ad_{R_{\nu_{i,r_i},A_{i,r_i}}}\cdots\ad_{R_{\nu_{i,1},A_{i,1}}}\lr{R_{\nu_{i,0},A_{i,0}}} = 2^{r_i} \prod_{j=r_i}^0 R_{\nu_{i,j},A_{i,j}}.
\end{align}
As shorthand, denote all the objects being summed over (e.g., $\tau_{i,j}, \nu_{i,j}, A_{i,j}$, etc.) by a history $h$. The non-vanishing contributions can be written as a phased Pauli string $E(h)$ given by
\begin{align}\label{eq:ec}
    E(h) = \frac{c(h)}{|c(h)|}\prod_{i=1}^m \prod_{j=r_i}^0 R_{\nu_{i,j}, A_{i,j}}, \quad c(h) = (-1)^m \prod_{i=1}^m 2^{r_i} \prod_{j=0}^{r_i} v_{\nu_{i,j}} q_{\nu_{i,j}, A_{i,j}}(\tau_{i,j}).
\end{align}
For a generic history, note that $c(h)$ vanishes unless $\ad_{R_{\nu_{i,r_i},A_{i,r_i}}} \cdots \ad_{R_{\nu_{i,1},A_{i,1}}} \lr{R_{\nu_{i,0},A_{i,0}}} \neq 0$ for all $i$.
We will later specify the coefficient $b$ and the probability distribution under which $\E{I+bE} = G(\beta/2)$.
For each non-vanishing contribution with history $h$, we record the labels by
\begin{align}\label{eq:dh}
    D(h) = (\nu_{1,0},\dots,\nu_{1,r_1},\nu_{2,0},\dots,\nu_{2,r_2},\dots,\nu_{m,0},\dots,\nu_{m,r_m}).
\end{align}
Since the total number of sets $A_{i,j}$ that appear is $t = \sum_{i=1}^m (r_i+1)$, the length of $D(h)$ is $t$, where $t \geq 1$ because the first entry is $\nu_{1,0} = \mu_*$. This shows the first property.

We now prove the second property. If $j=0$ then $\nu_{i,j} = \mu_*$ by definition. If $\nu_{i,j}$ with $j > 0$ appears in a nonzero nested commutator, then $R_{\nu_{i,j}, A_{i,j}}$ anticommutes with the Pauli string produced by the earlier factors by \cref{eq:rprod}. Hence, it anticommutes with at least one other $R_{\nu_{i,\ell},A_{i,\ell}}$ for $\ell < j$. That is, by \Cref{lem:graph-sample-source-activity},
\begin{align}
    \left\{\lr{\prod_{\sigma \in A_{i,j}} P_\sigma} Q_{\nu_{i,j}}, \lr{\prod_{\sigma \in A_{i,\ell}} P_\sigma} Q_{\nu_{i,\ell}}\right\} = 0.
\end{align}
Since the free Paulis $P_\sigma$ commute pairwise, this implies that either $Q_{\nu_{i,j}}$ and $Q_{\nu_{i,\ell}}$ anticommute (giving $Q_{\nu_{i,j}} \sim Q_{\nu_{i,\ell}}$), or that $Q_{\nu_{i,j}}$ anticommutes with some of the free Paulis in $A_{i,\ell}$, which implies that
\begin{align}
    \cF_{\rm anti}(Q_{\nu_{i,j}}) \cap \cF_{\rm anti}(Q_{\nu_{i,\ell}}) \neq \emptyset
\end{align}
since $A_{i,\ell}\subseteq\cF_{\rm anti}(Q_{\nu_{i,\ell}})$, and thus $Q_{\nu_{i,j}} \sim Q_{\nu_{i,\ell}}$. This shows the second property.

To show the third property, we observe by \Cref{lem:graph-sample-source-activity} that each term is, up to a phase, a product of perturbative Paulis $Q_{\nu_{i,j}}$ and free Paulis from the sets $\cF_{\rm anti}(Q_{\nu_{i,j}})$. The fourth property then follows directly from \Cref{lem:potential}.

It remains to show the fifth property and that $\E{I+bE} = G(\beta/2)$. We will show that the appropriate choice is
\begin{align}
    b(h) = \theta(3\Delta)^{|D(h)|} e^{-\beta \kappa(E(h))}
\end{align}
and
\begin{align}\label{eq:dp}
    dp(h) = \frac{|c(h)| e^{\beta \kappa(E(h))}}{\theta(3\Delta)^{|D(h)|}} d\pi(h),
\end{align}
where $\pi$ is the measure of the integrals and sums in \cref{eq:g2}, i.e., for test function $F(h)$ we define for $\nu_{i,0}=\mu_*$
\begin{align}
    \int_\cH F(h) d\pi(h) &= \sum_{m\geq1} \sum_{r_1,\ldots,r_m\geq0} \sum_{\substack{\nu_{i,j}\in\widehat S\\ 1\leq i\leq m,\ 1\leq j\leq r_i}} \sum_{\substack{A_{i,j}\subseteq\cF_{\rm anti}(Q_{\nu_{i,j}})\\ 1 \leq i \leq m, 0 \leq j \leq r_i}}\notag \\
    &\quad\times \int_{0\leq \tau_{m,0}\leq\cdots\leq \tau_{1,0}\leq \beta/2} \prod_{i=1}^m \left[ \int_{0\leq \tau_{i,r_i}\leq\cdots\leq\tau_{i,1}\leq\tau_{i,0}} \prod_{j=1}^{r_i}d\tau_{i,j} \right] \prod_{i=1}^m d\tau_{i,0}\, F(h).
\end{align}
Observe that this trivially satisfies the fifth property,
\begin{align}
    b(h) e^{\beta \kappa(E(h))} \leq \theta(3\Delta)^{|D(h)|}.
\end{align}
Moreover, by construction,
\begin{align}
    \E{bE} = \int_\cH b(h) E(h) dp(h) = \int_\cH |c(h)| E(h) d\pi(h) = G(\beta/2)-I,
\end{align}
i.e., $\E{I+bE} = G(\beta/2)$ as desired.

To complete the proof, it thus suffices to check that $\int dp \leq 1$; we can assign any remaining probability to $b=0, E=I, D=\emptyset$, which leaves $\E{I+bE} = G(\beta/2) = \cU_{\wh S}(0, \beta/2)\cU_S(\beta/2,0)$ unchanged. Accordingly, we bound
\begin{align}\label{eq:dpint}
    \int dp &= \sum_{m\geq1} \sum_{r_1,\ldots,r_m\geq0} \sum_{\substack{\nu_{i,j}\in\widehat S\\ 1\leq i\leq m,\ 1\leq j\leq r_i}} \sum_{\substack{A_{i,j}\subseteq\cF_{\rm anti}(Q_{\nu_{i,j}})\\ 1 \leq i \leq m, 0 \leq j \leq r_i}}\notag \\
    &\quad\times \int_{0\leq \tau_{m,0}\leq\cdots\leq \tau_{1,0}\leq \beta/2} \prod_{i=1}^m \left[ \int_{0\leq \tau_{i,r_i}\leq\cdots\leq\tau_{i,1}\leq\tau_{i,0}} \prod_{j=1}^{r_i}d\tau_{i,j} \right] \prod_{i=1}^m d\tau_{i,0} \frac{|c(h)| e^{\beta \kappa(E(h))}}{\theta(3\Delta)^{|D(h)|}}.
\end{align}
Using \cref{eq:ec}, applying the definition of \cref{eq:kappa} to $\kappa(E(h)) \leq \sum_{i=1}^m\sum_{j=0}^{r_i}\kappa(R_{\nu_{i,j},A_{i,j}})$, and $|D(h)| \leq \sum_{i=1}^m (r_i+1)$ by \cref{eq:dh}, we can bound the last term in \cref{eq:dpint} by
\begin{align}
    \frac{|c(h)| e^{\beta \kappa(E(h))}}{\theta(3\Delta)^{|D(h)|}} \leq \frac{1}{\theta} \prod_{i=1}^m \frac{2^{r_i}\prod_{j=0}^{r_i} |v_{\nu_{i,j}}|\,|q_{\nu_{i,j},A_{i,j}}(\tau_{i,j})|
    e^{\beta\kappa(R_{\nu_{i,j},A_{i,j}})}}{(3\Delta)^{r_i+1}},
\end{align}
for non-vanishing contributions $c(h) \neq 0$, where as before we let $\nu_{i,0} = \mu_*$ for all $i$. Since the RHS is nonnegative, we can bound $\int dp$ by dropping the ordering $0\leq \tau_{m,0}\leq\cdots\leq \tau_{1,0}\leq \beta/2$ to obtain a product of separately ordered $0 \leq \tau_r \leq \cdots \leq \tau_0 \leq \beta/2$, i.e.,
\begin{align}
    \int dp \leq  \frac{1}{\theta} \sum_{m \geq 1} \sum_{r_1,\dots,r_m \geq 0} \prod_{i=1}^m M_{r_i}
\end{align}
for
\begin{align}
    M_r &= \frac{1}{(3\Delta)^{r+1}} \sum_{\nu_1,\dots,\nu_r\in\wh S} \sum_{A_j \subseteq \cF_{\rm anti}(Q_{\nu_j}), 0 \leq j \leq r} \int_{0 \leq \tau_r \leq \cdots \leq \tau_0 \leq \beta/2} 2^r\prod_{j=0}^r |v_{\nu_j}|\,|q_{\nu_j,A_j}(\tau_j)|
    e^{\beta\kappa(R_{\nu_j,A_j})} d\tau_j\notag\\
    &\quad \times 1\{\ad_{R_{\nu_r,A_r}}\cdots \ad_{R_{\nu_1,A_1}}(R_{\nu_0,A_0}) \neq 0\},
\end{align}
where the indicator function checks if the history leads to a non-vanishing contribution, as discussed below \cref{eq:ec}. For the set $\cT_{r+1}$ of labeled trees on $\{0,\dots,r\}$ and edge set $E(T)$ for $T \in \cT_{r+1}$, we claim that $M_r$ can be bounded by
\begin{align}
    M_r &\leq \frac{1}{r!(3\Delta)^{r+1}} \sum_{T\in\cT_{r+1}} \sum_{\nu_1,\dots,\nu_r\in\wh S} \sum_{A_j \subseteq \cF_{\rm anti}(Q_{\nu_j}), 0 \leq j \leq r} \int_{[0,\beta/2]^{r+1}} 2^r \prod_{j=0}^r |v_{\nu_j}|\,|q_{\nu_j,A_j}(\tau_j)|
    e^{\beta\kappa(R_{\nu_j,A_j})} \notag\\
    &\quad \times\prod_{\{p,q\}\in E(T)} 1\{\nu_p = \nu_q \text{ or } \nu_p \sim \nu_q\} \prod_{j=0}^r d\tau_j \label{eq:mr1}\\
    &\leq \frac{1}{r!(3\Delta)^{r+1}} \sum_{T\in\cT_{r+1}} \frac{a_{\mu_*}}{2}\Delta^r\\
    &\leq  \frac{a_{\mu_*}}{2(3\Delta)^{r+1}} \frac{(r+1)^{r-1}}{r!} \Delta^r, \label{eq:mr3}
\end{align}
where the third inequality follows from Cayley's formula, and the second inequality follows from traversing the tree rooted at 0, applying to each parent label $\mu \in S$ \Cref{lem:graph-sample-source-activity} and the definition of $\Delta$ to 
\begin{align}
    \sum_{\substack{\nu\in\wh S\\ \nu=\mu\text{ or }\nu\sim\mu}} \sum_{A\subseteq \cF_{\rm anti}(Q_\nu)}\int_0^{\beta/2} 2|v_\nu|\,|q_{\nu,A}(\tau)|
    e^{\beta\kappa(R_{\nu,A})} d\tau = \sum_{\substack{\nu\in\wh S\\ \nu=\mu\text{ or }\nu\sim\mu}} a_\nu \leq a_\mu + \sum_{\substack{\nu\in\cG\setminus\{\mu\}\\\nu\sim\mu}} a_\nu \leq \Delta
\end{align}
for all $r$ non-root vertices, and applying \Cref{lem:graph-sample-source-activity} to the root at $\nu_0 = \mu_*$ to obtain
\begin{align}
    \sum_{A_0\subseteq \cF_{\rm anti}(Q_{\mu_*})} \int_0^{\beta/2} |v_{\mu_*}|\,|q_{\mu_*,A_0}(\tau)|
    e^{\beta\kappa(R_{\mu_*,A_0})} d\tau = \frac{a_{\mu_*}}{2}.
\end{align}
To show the first inequality \cref{eq:mr1}, we observe that $\ad_{R_{\nu_r,A_r}}\cdots \ad_{R_{\nu_1,A_1}}(R_{\nu_0,A_0}) \neq 0$ implies that every partial nested commutator is nonzero, and thus for every $j \geq 1$, the Pauli $R_{\nu_j, A_j}$ must anticommute with some $R_{\nu_\ell, A_\ell}$ with $\ell < j$. Decomposing $R_{\nu_j, A_j} = \prod_{\sigma \in A_j} P_{\sigma} Q_{\nu_j}$ and similarly for $\ell$, this anticommutation must originate from either $\{Q_{\nu_j}, Q_{\nu_\ell}\} = 0$, or anticommutation between $Q_{\nu_j}$ and some $P_\sigma$ for $\sigma \in A_\ell$ (or vice versa with $j \leftrightarrow \ell$). This ensures that either $\nu_j = \nu_\ell$ or $\nu_j \sim \nu_\ell$, implying that the graph on $\{0,\dots,r\}$ with an edge $\{j,\ell\}$ whenever $\nu_j = \nu_\ell$ or $\nu_j \sim \nu_\ell$ is connected. We sum over spanning trees to obtain
\begin{align}
    1\{\ad_{R_{\nu_r,A_r}}\cdots \ad_{R_{\nu_1,A_1}}(R_{\nu_0,A_0}) \neq 0\} \leq \sum_{T\in\cT_{r+1}}\prod_{\{p,q\}\in E(T)} 1\{\nu_p = \nu_q \text{ or } \nu_p \sim \nu_q\} ,
\end{align}
which gives \cref{eq:mr1} after introducing $1/r!$ for the volume of the integral, which we enlarged from $[0,\tau_0]^r$ to $[0,\beta/2]^r$.

Given \cref{eq:mr3}, we can finally show $\int dp \leq 1$. Using $a_{\mu_*} = \theta \Delta$
\begin{align}
    \sum_{r \geq 0} \frac{a_{\mu_*}}{2(3\Delta)^{r+1}} \frac{(r+1)^{r-1}}{r!} \Delta^r \leq \frac{\theta}{6} \sum_{r \geq 0} \frac{(r+1)^{r-1}}{r! 3^r} = \frac{\theta T(1/3)}{2} < \frac{\theta}{2}
\end{align}
by \Cref{fac:tfunc}. Hence,
\begin{align}
    \int dp &\leq \frac{1}{\theta} \sum_{m\geq1} \sum_{r_1,\ldots,r_m\geq0} \prod_{i=1}^m \left[ \frac{a_{\mu_*}}{2(3\Delta)^{r_i+1}}  \frac{(r_i+1)^{r_i-1}}{r_i!} \Delta^{r_i} \right]\\
    &= \frac{1}{\theta} \sum_{m\geq1} \left[ \sum_{r\geq0} \frac{a_{\mu_*}}{2(3\Delta)^{r+1}} \frac{(r+1)^{r-1}}{r!} \Delta^r \right]^m\\
    &< \frac{1}{\theta} \sum_{m\geq1} \left(\frac{\theta}{2}\right)^m\\
    &= \frac{1}{2(1-\theta/2)} \leq 1\,,
\end{align}
where the last inequality uses $0 < \theta \leq 1$.
\end{proof}

\subsection{Pinning procedure}
We give the main procedure for obtaining the stabilizer expansion.

\begin{lemma}[Pinning procedure]
\label{lem:stab-lower}
Let
\begin{align}
    H=H_0+V = \sum_{\sigma\in\cF}u_\sigma P_\sigma + \sum_{\mu\in\cG}v_\mu Q_\mu,
\end{align}
where the $P_\sigma$ commute pairwise. If
\begin{align}
    \Delta(\beta)\leq \frac1{72},
\end{align}
then
\begin{align}
    \rho_\beta(H)\in\STAB_n.
\end{align}
\end{lemma}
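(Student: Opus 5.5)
We mirror the separability argument of \Cref{sec:entanglement}, now working in the interaction picture and pinning perturbative terms rather than qubits. If $\Delta(\beta)=0$ then $\cG=\emptyset$ or every $a_\nu=0$, so $H$ is effectively commuting and the result follows from \Cref{lem:stabcrit}. Otherwise we maintain, for a set $S\subseteq\cG$ of unpinned perturbative labels (initially $S=\cG$), an invariant of the form
\begin{align}
    e^{-\beta H} = \EE\left[e^{-\beta H_0/2}\,\cU_S(\beta/2,0)^\dagger\Big(\prod_{j}(I+\lambda_jX_j)\Big)\cU_S(\beta/2,0)\,e^{-\beta H_0/2}\right],
\end{align}
using $\cU_S(\beta/2,-\beta/2)=\cU_S(\beta/2,0)\cU_S(0,-\beta/2)$ together with the adjoint identity of \Cref{lem:propagator-identities}, which gives $\cU_S(0,-\beta/2)=\cU_S(\beta/2,0)^\dagger$. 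Here the $X_j$ are Hermitian monomials (signed Paulis, by the argument of \Cref{lem:sep-hermitian-monomial-pauli}) that are pairwise incompatible. All earlier monomials satisfy $X_j\not\sim Q_\mu$ for every $\mu\in S$, so by \Cref{lem:propagator-identities} they commute with $\cU_S$.

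In each step we choose $\mu_*\in S$ adaptively. If the current last monomial $X_l$ has some $\mu\in S$ with $\mu\sim X_l$, we take $\mu_*$ to be such a $\mu$; otherwise we open a new monomial $(0,I)$ and pick any $\mu_*\in S$. We then set $\wh S=S\setminus\{\mu_*\}$ and insert
\begin{align}
    \cU_S(\beta/2,0)=\cU_{\wh S}(\beta/2,0)\,\big[\cU_{\wh S}(0,\beta/2)\cU_S(\beta/2,0)\big],
\end{align}
so that the bracketed factor can be sampled as $\EE[I+bE]$ by \Cref{lem:graph-sample}. We take two independent samples, one for each side, symmetrize as in \Cref{alg:sep-iterative-pinning}, and use the same seven-branch update with probabilities $p_1=q$ and $p_{\neq1}=(1-q)/6$. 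Expectation consistency follows exactly as in \Cref{lem:sep-gibbs-validity}.

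The potential is $|\lambda|e^{\beta\kappa(X)}\le q^{\Phi_S(X)}$, which replaces $q^{|u_S(X)|}$. We need three facts about it.
\begin{itemize}
    \item In the $p_1$ branch, removing $\mu_*\sim X$ lowers $\Phi$ by $a_{\mu_*}/\Delta=\theta$. The amplification $1/q$ must therefore be offset by $q^{\theta}$. This is the main obstacle, since $\theta$ can be small.
    \item My first attempt is to make the $p_1$ probability depend on $\theta$, taking $p_1=q^{\theta}$, so the loss is exactly $q^{-\theta}$. The remaining branches then receive probability $(1-q^\theta)/6$.
    \item In those branches, item 5 of \Cref{lem:graph-sample} supplies a factor $\theta(3\Delta)^t$. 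Since $1-q^\theta\ge\theta(1-q)$ for $q\in(0,1)$, the factor $\theta$ cancels, and the condition reduces to something like $6\cdot3\Delta/((1-q)q)\le1$.
\end{itemize}
For the remaining branches we also need the following.
\begin{itemize}
    \item Subadditivity of $\Phi$ and $\Phi_S(E)\le t$ from \Cref{lem:potential}.
    \item Submultiplicativity of $e^{\beta\kappa(\cdot)}$ under products, which holds because $\kappa(WZ)\le\kappa(W)+\kappa(Z)$ by the symmetric-difference identity.
\end{itemize}
With these, $|\hat\lambda|e^{\beta\kappa(\hat X)}\le q^{\Phi_{\wh S}(\hat X)}$ whenever $72\Delta\le1$ at $q=1/2$, matching the constant $1/72$ in the statement.

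At termination $S=\emptyset$, so $\cU_\emptyset=I$ and $e^{-\beta H}=\EE[e^{-\beta H_0/2}\prod_j(I+\lambda_jX_j)e^{-\beta H_0/2}]$ with $|\lambda_j|\le e^{-\beta\kappa(X_j)}$. Before applying \Cref{lem:free-slack}, two points must be checked.
\begin{itemize}
    \item Pairwise incompatibility of the $X_j$ must actually hold. A new monomial is only started once no $\mu\in S$ is compatible with the previous one, and later monomials are built from $Q_\mu$ with $\mu\in S$ together with free Paulis in their $\cF_{\rm anti}$ sets. By the reasoning in item 2 of \Cref{lem:potential}, incompatibility with every generator is inherited by the product, so the new monomial is incompatible with the earlier ones.
    \item A zero monomial is replaced by $(0,I)$, as in the separability proof.
\end{itemize}
\Cref{lem:free-slack} then places each sample in $\StabCone_n$, and since the cone is convex the average is too; normalizing gives $\rho_\beta(H)\in\STAB_n$.
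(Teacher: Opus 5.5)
Your proposal is essentially the paper's proof. It uses the same interaction-picture invariant and the same adaptive choice $\mu_*\sim X_m$ (otherwise opening a fresh $(0,I)$ factor). Branch~1 is taken with probability $2^{-\theta}$, which is your $q^\theta$ at $q=1/2$. The potential $|\lambda|\le 2^{-\Phi_S(X)}e^{-\beta\kappa(X)}$ is the same, and the finish via \Cref{lem:free-slack} is the same. Your concavity bound $1-q^\theta\ge\theta(1-q)$ plays the role of the paper's $6\theta/(1-2^{-\theta})\le 12$ and yields the same constant $1/72$.

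One slip needs fixing: your displayed invariant has the two propagators in the wrong order. Since $\cU_S(\beta/2,-\beta/2)=\cU_S(\beta/2,0)\,\cU_S(0,-\beta/2)=\cU_S(\beta/2,0)\,\cU_S(\beta/2,0)^\dagger$, the invariant must read $e^{-\beta H_0/2}\,\cU_S(\beta/2,0)\prod_j(I+\lambda_jX_j)\,\cU_S(\beta/2,0)^\dagger\,e^{-\beta H_0/2}$.

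As you wrote it, the invariant is false already at $S=\cG$, because $\cU_S(\beta/2,0)^\dagger\cU_S(\beta/2,0)=\cU_S(0,-\beta/2)\,\cU_S(\beta/2,0)\neq\cU_S(\beta/2,-\beta/2)$ in general. Moreover, splitting $\cU_S(\beta/2,0)=\cU_{\wh S}(\beta/2,0)R$ in your version would leave $R^\dagger$ and $R$ on the outside rather than next to the monomials, so the update could not be carried out.

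With the correct order, $R=\cU_{\wh S}(0,\beta/2)\cU_S(\beta/2,0)$ sits directly next to the product. The seven branches must then average to $R(I+\lambda X)R^\dagger$. Concretely, they should use $\wh X=(E_1X+XE_1^\dagger)/2$, $(E_1XE_2^\dagger+E_2XE_1^\dagger)/2$, and so on. This is the opposite orientation from \Cref{alg:sep-iterative-pinning}, so ``symmetrize as in'' that algorithm needs this adjustment.
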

\begin{proof}
It is enough to show that $e^{-\beta H}\in\StabCone_n$. If $\beta=0$,
then $e^{-\beta H}=I\in\StabCone_n$. Hence assume $\beta>0$. If
$\Delta(\beta)=0$, then $\cG=\emptyset$, so $V=0$.
In that case
\begin{align}
    e^{-\beta H_0} = \prod_{\sigma\in\cF} \cosh(\beta u_\sigma)\lr{I-\tanh(\beta u_\sigma)P_\sigma} \in\StabCone_n
\end{align}
by \Cref{lem:stabcrit}. Thus assume
$0<\Delta(\beta)\leq1/72$.

We define a randomized pinning procedure. At every stage the procedure
keeps a subset $S\subseteq\cG$ and a finite list
\begin{align}
    \chi=((\lambda_1,X_1),\ldots,(\lambda_m,X_m)),
\end{align}
where each $X_j$ is either $0$ or a signed Hermitian Pauli string. The
procedure keeps invariant the following properties:
\begin{enumerate}
    \item the thermal state (unnormalized),
        \begin{align}
        e^{-\beta H} = \E{ e^{-\beta H_0/2} \cU_S(\beta/2,0) \prod_{j=1}^m\lr{I+\lambda_jX_j} \cU_S(0,-\beta/2) e^{-\beta H_0/2} }, \label{eq:pinning-operator-invariant}
    \end{align}
    \item either $X_j = 0$ or
    \begin{align}
        X_j\in \pm\langle P_\sigma,Q_\mu:\sigma\in\cF,\mu\in\cG\rangle, \label{eq:pinning-generated-invariant}
    \end{align}
    \item whenever $X_i, X_j$ are both nonzero and $i \neq j$,
    \begin{align}
        X_i\not\sim X_j, \label{eq:pinning-comm-invariant}
    \end{align}
    \item all but possibly the last factor are incompatible with every unpinned perturbing label,
    \begin{align}
        \Phi_S(X_j)=0 \qquad \lr{j<m}, \label{eq:pinning-finished-invariant}
    \end{align}
    \item every coefficient satisfies
    \begin{align}
        |\lambda_j|\leq 2^{-\Phi_S(X_j)}e^{-\beta\kappa(X_j)} \qquad \lr{1\leq j\leq m}. \label{eq:pinning-coeff-invariant}
    \end{align}
\end{enumerate}
Assume that $S\ne\emptyset$. If $m\geq1$ and
$\Phi_S(X_m)>0$, choose a label $\mu_*\in S$ such that
\begin{align}
    \mu_*\sim X_m.
\end{align}
Otherwise append the inactive factor $(0,I)$ to $\chi$, relabel it as
the last factor, and choose any $\mu_*\in S$.

Set
\begin{align}
    \wh S=S\setminus\set{\mu_*}, \qquad \theta=\frac{|v_{\mu_*}|\,\eta_\beta(\kappa(Q_{\mu_*}))}{\Delta(\beta)}.
\end{align}
Using \Cref{lem:graph-sample}, independently sample
$(b_1,E_1,D_1)$ and $(b_2,E_2,D_2)$ such that
\begin{align}\label{eq:expbe}
    \E{I+b_iE_i} = \cU_{\wh S}(0,\beta/2)\cU_S(\beta/2,0).
\end{align}
Whenever $b_i\ne0$ and $D_i$ has length $t_i$, the sampling lemma gives
\begin{align}
    b_i e^{\beta\kappa(E_i)} \leq \theta\lr{3\Delta(\beta)}^{t_i}, \qquad \Phi_S(E_i)\leq t_i. \label{eq:sampled-propagator-bounds}
\end{align}
Replace the last factor $(\lambda_m,X_m)$ by a new factor $(\wh\lambda,\wh X)$ chosen from the following seven branches. For convenience, we write $\lambda = \lambda_m$ and $X = X_m$.
Branch $J=1$ is chosen with probability $2^{-\theta}$, and each branch
$J=2,\ldots,7$ is chosen with probability $(1-2^{-\theta})/6$:
\begin{align}
    \begin{array}{c|c|c} J&\wh\lambda&\wh X\\ \hline 1&\lambda/2^{-\theta}&X\\
    2&\dfrac6{1-2^{-\theta}}b_1&\dfrac{E_1+E_1^\dagger}{2}\\
    3&\dfrac6{1-2^{-\theta}}b_2&\dfrac{E_2+E_2^\dagger}{2}\\
    4&\dfrac6{1-2^{-\theta}}b_1\lambda&\dfrac{E_1X+XE_1^\dagger}{2}\\
    5&\dfrac6{1-2^{-\theta}}b_2\lambda&\dfrac{E_2X+XE_2^\dagger}{2}\\
    6&\dfrac6{1-2^{-\theta}}b_1b_2&\dfrac{E_1E_2^\dagger+E_2E_1^\dagger}{2}\\
    7&\dfrac6{1-2^{-\theta}}b_1b_2\lambda& \dfrac{E_1XE_2^\dagger+E_2XE_1^\dagger}{2}. \end{array}
\end{align}

\noindent\textbf{First invariant} \cref{eq:pinning-operator-invariant}.
Averaging only over the branch
$J$ gives
\begin{align}
    \EE_J\!\lr{I+\wh\lambda\wh X} = \frac12 \Big[ (I+b_1E_1)(I+\lambda X)(I+b_2E_2)^\dagger + (I+b_2E_2)(I+\lambda X)(I+b_1E_1)^\dagger \Big].
\end{align}
Averaging over the two independent samples gives by \cref{eq:expbe}
\begin{align}
    \E{I+\wh\lambda\wh X} = R(I+\lambda X)R^\dagger, \qquad R= \cU_{\wh S}(0,\beta/2)\cU_S(\beta/2,0).
\end{align}
Let
\begin{align}
    A=\prod_{j=1}^{m-1}(I+\lambda_jX_j)
\end{align}
be the product of the previously finished factors. By
\cref{eq:pinning-finished-invariant}, every $X_j$ with $j<m$ is
incompatible with every perturbing label in $S$, hence commutes with $R$
by \Cref{lem:propagator-identities}. By
\cref{eq:pinning-comm-invariant}, $A$ also commutes with $X$. By
\Cref{lem:propagator-identities},
\begin{align}
    \cU_S(\beta/2,0) = \cU_{\wh S}(\beta/2,0)R, \qquad \cU_S(0,-\beta/2) = R^\dagger\cU_{\wh S}(0,-\beta/2),
\end{align}
so
\begin{align}
    & e^{-\beta H_0/2} \cU_{\wh S}(\beta/2,0) A\,\E{I+\wh\lambda\wh X}\, \cU_{\wh S}(0,-\beta/2) e^{-\beta H_0/2}  \notag\\
    &= e^{-\beta H_0/2} \cU_S(\beta/2,0) A(I+\lambda X) \,\cU_S(0,-\beta/2) e^{-\beta H_0/2}.
\end{align}
Thus invariant~\cref{eq:pinning-operator-invariant} is preserved after
replacing $S$ by $\wh S$.

\noindent\textbf{Second invariant} \cref{eq:pinning-generated-invariant}. \Cref{lem:graph-sample} implies that, up to an overall phase,
\begin{align}\label{eq:eiprod}
E_i \in \left\langle \bigcup_{a=1}^{t_i} \left( \{Q_{\mu_a^{(i)}}\} \cup \{P_\sigma: \sigma\in\cF_{\rm anti}(Q_{\mu_a^{(i)}})\} \right) \right\rangle,
\end{align}
where $D_i=(\mu_1^{(i)},\ldots,\mu_{t_i}^{(i)})$ is the list supplied by \Cref{lem:graph-sample}.

By the inductive hypothesis, the trivial branch $\wh X = X$ satisfies \cref{eq:pinning-generated-invariant}, i.e., $X \in \pm\langle P_\sigma,Q_\mu:\sigma\in\cF,\mu\in\cG\rangle$. Observe that each of the six nontrivial branches has the form $(Y + Y^\dagger)/2$, where $Y$ is a phased Pauli string that is a product of $X, E_i, E_i^\dagger$. Since $E_i, E_i^\dagger \in \pm\langle P_\sigma,Q_\mu:\sigma\in\cF,\mu\in\cG\rangle$ by \cref{eq:eiprod}, writing $Y = \omega R$ for an unphased Pauli $R$, we have
\begin{align}\label{eq:yr}
    \frac{Y+Y^\dagger}{2} \in \left\{0, +R, -R\right\}.
\end{align}
This shows that $\wh X \in \pm\langle P_\sigma,Q_\mu:\sigma\in\cF,\mu\in\cG\rangle$, i.e., that \cref{eq:pinning-generated-invariant} remains invariant.

\noindent\textbf{Third invariant} \cref{eq:pinning-comm-invariant}. Fix a nonzero old factor $X_j$ with $j < m$. To show $\wh X \not\sim X_j$, we will first show that $X_j \not\sim X$, $X_j\not\sim E_1$ and $X_j\not\sim E_2$. The first claim is immediate: $X_j\not\sim X$ either by the induction
hypothesis \cref{eq:pinning-comm-invariant} or trivially since $X = I$. The second claim, $X_j\not\sim E_i$, is only trivial when $b_i=0$, in which case \Cref{lem:graph-sample} allows us to take $E_i=I$. In the nontrivial case $b_i \neq 0$, it will be useful to use the fact that $\Phi_S(X_j)=0$ implies that $X_j \not\sim Q_\nu$ for all $\nu \in S$. Equivalently, for every $\nu \in S$,
\begin{align}
    [X_j,Q_\nu]=0, \qquad \cF_{\rm anti}(X_j)\cap \cF_{\rm anti}(Q_\nu)=\emptyset. \label{eq:old-factor-incompatible-with-sampled-labels}
\end{align}
We now establish $X_j \not\sim E_i$ by showing that $[X_j, E_i] = 0$ and then that $\cF_{\rm anti}(X_j)\cap \cF_{\rm anti}(E_i)=\emptyset$.
\begin{itemize}
    \item To show that $[X_j, E_i] = 0$, it suffices to show that $X_j$ commutes with every Pauli factor appearing in the product description of $E_i$.
    Since each $\mu^{(i)}_a$ lies in $S$ by \Cref{lem:graph-sample}, we have that $[X_j, Q_{\mu_a^{(i)}}] = 0$ by \cref{eq:old-factor-incompatible-with-sampled-labels}. Similarly, by \cref{eq:old-factor-incompatible-with-sampled-labels}, since $\cF_{\rm anti}(X_j) \cap \cF_{\rm anti}(Q_{\mu_a^{(i)}}) = \emptyset$, we have that $X_j$ commutes with every free Pauli $P_\sigma$ satisfying $\sigma \in \cF_{\rm anti}(Q_{\mu_a^{(i)}})$. Hence, $[X_j, E_i] = 0$.
    \item To show that $\cF_{\rm anti}(X_j)\cap \cF_{\rm anti}(E_i)=\emptyset$, we use the property
    \begin{align}
        \cF_{\rm anti}(W_1 W_2) \subseteq \cF_{\rm anti}(W_1) \cup \cF_{\rm anti}(W_2)
    \end{align}
    for Pauli strings $W_1, W_2$. This gives
    \begin{align}
        \cF_{\rm anti}(E_i) \subseteq \bigcup_{a=1}^{t_i}\cF_{\rm anti}(Q_{\mu^{(i)}_a}),
    \end{align}
    which is disjoint from $\cF_{\rm anti}(X_j)$ by \cref{eq:old-factor-incompatible-with-sampled-labels}.
\end{itemize}
We conclude that $X_j \not\sim E_i$; the same holds for $E_i^\dagger$ since the adjoint only changes the phase. Hence, we find that $X_j$ is incompatible with every Pauli product that appears in the seven possible branch definitions of $\widehat X$. The first branch is trivial, since we already established $X \not\sim X_j$. To show $X_j \not\sim \wh X$ in the six nontrivial branches, we write each branch as $(Y+Y^\dagger)/2$ and reuse \cref{eq:yr}: since we already have $X_j \not\sim Y$ and the compatibility relation is independent of phase, we conclude that $X_j \not\sim \wh X$, completing the proof of \cref{eq:pinning-comm-invariant}.

\noindent\textbf{Fourth invariant} \cref{eq:pinning-finished-invariant}. Since all coefficients of $\Phi$ are nonnegative, this follows directly from the fact that $\wh S$ is smaller than $S$, i.e.,
\begin{align}
    \Phi_{\widehat S}(X_j) = \Delta(\beta)^{-1} \sum_{\nu\in\widehat S:\,\nu\sim X_j} |v_\nu|\,\eta_\beta(\kappa(Q_\nu)) \leq \Delta(\beta)^{-1} \sum_{\nu\in S:\,\nu\sim X_j} |v_\nu|\,\eta_\beta(\kappa(Q_\nu)) = \Phi_S(X_j) = 0.
\end{align}

\noindent\textbf{Fifth invariant} \cref{eq:pinning-coeff-invariant}.  Since
$\Delta(\beta)\leq1/72$ and $0<\theta\leq1$, for every integer
$T\geq1$,
\begin{align}
    \frac{6\theta}{1-2^{-\theta}} \lr{3\Delta(\beta)}^T \leq 12\lr{\frac{1}{24}}^T \leq 2^{-T}. \label{eq:theta-decay-bound}
\end{align}
We now check case by case that
\begin{align}
    |\wh\lambda| \leq 2^{-\Phi_{\wh S}(\wh X)} e^{-\beta\kappa(\wh X)}
\end{align}
regardless of which branch $J$ was selected.

\begin{itemize}
    \item Suppose first that branch $1$ occurs. If the last factor was appended
    in the inactive case, then $\lambda=0$ and the bound is trivial. If the
    last factor was active, then $\mu_*\sim X$, so
    \begin{align}
        \Phi_{\wh S}(X) = \Phi_S(X)-\theta.
    \end{align}
    Using the induction hypothesis,
    \begin{align}
        |\wh\lambda| = \frac{|\lambda|}{2^{-\theta}} \leq 2^{-\Phi_S(X)+\theta}e^{-\beta\kappa(X)} = 2^{-\Phi_{\wh S}(X)}e^{-\beta\kappa(X)}.
    \end{align}

    \item Now consider branches $2$ and $3$. If $b_i=0$, the coefficient bound is
    trivial. Otherwise, if $D_i$ has length $t_i$, then
    \cref{eq:sampled-propagator-bounds} and \cref{eq:theta-decay-bound} give
    \begin{align}
        |\wh\lambda| = \frac6{1-2^{-\theta}}\,b_i \leq 2^{-t_i}e^{-\beta\kappa(E_i)}.
    \end{align}
    By \Cref{lem:potential},
    \begin{align}
        \Phi_{\wh S}(\wh X) \leq \Phi_S(E_i) \leq t_i, \qquad \kappa(\wh X)\leq\kappa(E_i).
    \end{align}
    Therefore
    \begin{align}
        |\wh\lambda| \leq 2^{-\Phi_{\wh S}(\wh X)} e^{-\beta\kappa(\wh X)}.
    \end{align}

    \item Branches $4$ and $5$ are similar but include the old factor $X$. Again
    the claim is trivial if $b_i=0$ or $\lambda=0$. Otherwise, by
    \cref{eq:sampled-propagator-bounds}, \cref{eq:theta-decay-bound}, and the
    induction hypothesis,
    \begin{align}
        |\wh\lambda| = \frac6{1-2^{-\theta}}b_i|\lambda| \leq 2^{-t_i-\Phi_S(X)} e^{-\beta(\kappa(E_i)+\kappa(X))}.
    \end{align}
    \Cref{lem:potential} gives
    \begin{align}
        \Phi_{\wh S}(\wh X) \leq \Phi_{\wh S}(E_i)+\Phi_{\wh S}(X) \leq t_i+\Phi_S(X), \qquad \kappa(\wh X)\leq\kappa(E_i)+\kappa(X).
    \end{align}
    Therefore
    \begin{align}
        |\wh\lambda| \leq 2^{-\Phi_{\wh S}(\wh X)} e^{-\beta\kappa(\wh X)}.
    \end{align}

    \item For branch $6$, if either $b_1=0$ or $b_2=0$, the claim is trivial.
    Otherwise let $T=t_1+t_2$. Since $\theta\leq1$,
    \cref{eq:sampled-propagator-bounds} gives
    \begin{align}
        b_1b_2 e^{\beta(\kappa(E_1)+\kappa(E_2))} \leq \theta^2\lr{3\Delta(\beta)}^T \leq \theta\lr{3\Delta(\beta)}^T.
    \end{align}
    Using \cref{eq:theta-decay-bound},
    \begin{align}
        |\wh\lambda| = \frac6{1-2^{-\theta}}b_1b_2 \leq 2^{-T}e^{-\beta(\kappa(E_1)+\kappa(E_2))}.
    \end{align}
    Also by \Cref{lem:potential},
    \begin{align}
        \Phi_{\wh S}(\wh X) \leq \Phi_{\wh S}(E_1)+\Phi_{\wh S}(E_2) \leq t_1+t_2 = T, \qquad \kappa(\wh X)\leq\kappa(E_1)+\kappa(E_2).
    \end{align}
    Thus
    \begin{align}
        |\wh\lambda| \leq 2^{-\Phi_{\wh S}(\wh X)} e^{-\beta\kappa(\wh X)}.
    \end{align}

    \item Finally consider branch $7$, which proceeds similarly to the previous
    case. The claim is trivial if $b_1b_2\lambda=0$. Otherwise, with
    $T=t_1+t_2$, \cref{eq:sampled-propagator-bounds},
    \cref{eq:theta-decay-bound}, and the induction hypothesis give
    \begin{align}
        |\wh\lambda| = \frac6{1-2^{-\theta}}b_1b_2|\lambda| \leq 2^{-T-\Phi_S(X)} e^{-\beta(\kappa(E_1)+\kappa(X)+\kappa(E_2))}.
    \end{align}
    By \Cref{lem:potential},
    \begin{align}
        \Phi_{\wh S}(\wh X) \leq \Phi_{\wh S}(E_1) +\Phi_{\wh S}(X) +\Phi_{\wh S}(E_2) \leq t_1+\Phi_S(X)+t_2 = T+\Phi_S(X),
    \end{align}
    and
    \begin{align}
        \kappa(\wh X) \leq \kappa(E_1)+\kappa(X)+\kappa(E_2).
    \end{align}
    Thus again
    \begin{align}
        |\wh\lambda| \leq 2^{-\Phi_{\wh S}(\wh X)} e^{-\beta\kappa(\wh X)}.
    \end{align}
\end{itemize}
This proves invariant~\cref{eq:pinning-coeff-invariant}.

Having shown all invariants are maintained, we can now finish the proof. Each step removes exactly one label from $S$, so the procedure terminates
after at most $|\cG|$ steps. At termination $S=\emptyset$, and
invariant~\cref{eq:pinning-operator-invariant} gives
\begin{align}
    e^{-\beta H} = \E{ e^{-\beta H_0/2} \prod_{j=1}^m\lr{I+\lambda_jX_j} e^{-\beta H_0/2} }.
\end{align}
Since $S=\emptyset$, invariant~\cref{eq:pinning-coeff-invariant} gives
\begin{align}
    |\lambda_j|\leq e^{-\beta\kappa(X_j)}
\end{align}
for every final factor. By invariant~\cref{eq:pinning-comm-invariant},
the final nonzero $X_j$ are pairwise incompatible. Hence
\Cref{lem:free-slack} implies that every final state lies in
$\StabCone_n$. The number of pinning steps is finite. The final random operator is
integrable and always lies in the closed convex cone $\StabCone_n$, so
its expectation also lies in $\StabCone_n$. Therefore
\begin{align}
    e^{-\beta H}\in\StabCone_n.
\end{align}
After dividing by $\Tr(e^{-\beta H})>0$, we obtain
$\rho_\beta(H)\in\STAB_n$.
\end{proof}

We can now prove the main result of this section.
\begin{proof}[Proof of \Cref{thm:stab}]
Since $\Delta(\beta) \leq w_{\rm pert} \eta_\beta(w_{\rm free})$, we can define
\begin{align}
    \beta_{\rm stab}(w_{\rm pert}, w_{\rm free}) = \sup\left\{\beta \geq 0 \,:\, w_{\rm pert} \eta_\beta(w_{\rm free}) \leq \frac{1}{72}\right\}
\end{align}
and apply \Cref{lem:stab-lower} to obtain $\rho_\beta(H) \in \STAB_n$ for all $\beta \leq \beta_{\rm stab}$. Simplifying gives
\begin{align}
    \beta_{\rm stab}(w_{\rm pert}, w_{\rm free}) \geq \begin{cases} +\infty,&w_{\rm pert}=0,\\[0.8em]
    \displaystyle \frac{1}{72w_{\rm pert}}, &w_{\rm pert}>0,\ w_{\rm free}=0,\\[1.1em]
    \displaystyle \frac1{w_{\rm free}} \log\lr{ \frac{1+\sqrt{1+w_{\rm free}/(18w_{\rm pert})}}{2} }, &w_{\rm pert},w_{\rm free}>0. \end{cases}
\end{align}
In particular, when $0 < w_{\rm pert} \ll w_{\rm free}$, this gives $\beta_{\rm stab} = \Omega\lr{\frac{1}{w_{\rm free}} \log \frac{w_{\rm free}}{w_{\rm pert}}}$ and thus when $H$ is $\epsilon$-close to commuting, $\beta_{\rm stab} = \Omega\lr{\log(1/\epsilon)/sk}$.
Note that we can apply this to any Hamiltonian by setting $H_0=0$ and $V=H$ so $w_{\rm free}=0$ and $\eta_\beta(w_{\rm free})=\beta$.
Moreover,
\begin{align}
    w_{\rm pert} &\leq \max_{a\in\cA}\lr{ |c_a|+ \sum_{b\ne a:\,P_bP_a=-P_aP_b}|c_b|} \leq \max_{a\in\cA} \sum_{b:\,\supp(P_b)\cap \supp(P_a)\ne\emptyset}|c_b|  \\
    &\leq \max_{a\in\cA} \sum_{x\in \supp(P_a)}\sum_{b:\,x\in \supp(P_b)}|c_b| \leq \max_{a\in\cA}|\supp(P_a)|s \leq ks.
\end{align}
\end{proof}

\section{Infinite-temperature phase}
\label{sec:expect}
In \Cref{subsec:zf}, we will show the lower bound on the zero-free disk radius $\beta_{\rm phase}$ reported in \Cref{thm:zf}; the upper bound is given in \Cref{thm:zf-upper} (\Cref{sec:upper}). We will also show \Cref{thm:decay} in \Cref{subsec:correlation-decay} and \Cref{thm:expect} in \Cref{subsec:alg}. These results are all based on a polymer representation of the partition function (or of $\Tr(O e^{-\beta H})$) constructed similarly to prior work~\cite{hastings2006solving,kliesch2014locality,harrow2020classical,mann2021efficient}. Expanding the normalized partition function
\begin{align}
    \tr(e^{-\beta H})
    =
    \sum_{m\geq0}\frac{(-\beta)^m}{m!}
    \sum_{a_1,\ldots,a_m}
    \left(\prod_{j=1}^m c_{a_j}\right)
    \tr(P_{a_1}\cdots P_{a_m}),
\end{align}
we identify the connected components of the overlap graph of the supports of $P_{a_1},\cdots, P_{a_m}$, which decomposes $\Tr(P_{a_1}\cdots P_{a_m})$ into a product of traces. Each connected component is a polymer (\Cref{lem:polymer-rep}) and thus $\log Z$ can be controlled by the Kotecky-Preiss criterion~\cite{kotecky1986cluster}.

In the na\"{i}ve argument, one applies Kotecky-Preiss by adding Hamiltonian terms successively to construct polymers: whenever a new Hamiltonian term is attached, it can overlap with one of at most $k$ sites in the current term and costs one interaction incident to that site. For a $d$-degree Hamiltonian, this bounds the order-$m$ polymer mass by $(C|\beta|dk)^m$ and converges for $\beta \lesssim 1/(dk)$. In the long-range setting, the interactions instead can sum up to $s$, giving a polymer mass of $(C|\beta|sk)^m$ and convergence for $\beta \lesssim 1/(sk)$.

We improve upon this na\"{i}ve argument by observing that $\tr(P_{a_1}\cdots P_{a_m}) = 0$ unless $P_{a_1}\cdots P_{a_m}$ is proportional to the identity on every qubit. During the construction of a polymer, call a qubit \emph{defective} when the current product acts on it by a nonidentity Pauli. Whenever a defect is present, we can force the next Hamiltonian term to fix the defective qubit rather than allowing it to overlap on any of the $k$ sites. If there are no defects and we append a new Hamiltonian term, it \emph{must} create at least one defect. Hence, defects occur in at least half the steps and the polymer mass becomes $(C|\beta|s)^{m/2} (C|\beta|sk)^{m/2}$, giving the final threshold of $\beta \lesssim 1/(s\sqrt k)$.

Once we show that the polymer representation of the partition function converges with Kotecky-Preiss, we can conclude that truncating it after $O(\log n/\epsilon)$ terms results in an additive $\epsilon$-approximation to $\log Z$. For geometrically local Hamiltonians, this truncation also ensures that local observables are independent from distant terms in the Hamiltonian, leading to correlation decay. To algorithmically estimate the partition function, enumerating all these contributing polymers requires adding $n^{O(\log n/\epsilon)}$ terms. To turn this into a polynomial-time algorithm, we construct a random variable that instead samples polymers. This proceeds by writing a transcript that corresponds to the aforementioned procedure for choosing Hamiltonian terms and sites at which their overlaps must intersect (\Cref{lem:polymer-samp}). By generating random transcripts, we can generate random polymers. We show this results in an unbiased estimator of the $O(\log n/\epsilon)$-truncated partition function with variance controlled by Hoeffding's inequality.

\subsection{Preliminaries}
In addition to an $(s,k)$-long-range Pauli Hamiltonian $H = \sum_{a \in \cA} c_a P_a$, we consider in this section an observable
\begin{align}
    O = \sum_{b \in \cA_O} b_b Q_b, \quad B_O = \sum_{b\in\cA_O}|b_b|, \quad |\supp(Q_b)| \leq k_O, \qquad k_* = \max\{k,k_O\},
\end{align}
where we take all $Q_b$ to be nonidentity. We denote the unnormalized partition function by $Z(\beta) = \Tr(e^{-\beta H})$. We will also use the notation, for qubit $x \in [n]$, that
\begin{align}\label{eq:axsx}
    \cA(x) = \{a\in\cA:x\in\supp(P_a)\}, \quad S(x)=\sum_{a\in\cA(x)}|c_a| \leq s.
\end{align}
We also require some elementary properties of graphs. For a graph $G=(V,E)$, set
\begin{align}
    \varphi(G) = \sum_{F\subseteq E : (V,F) \text{ connected}} (-1)^{|F|}, \qquad \tau(G) = \#\text{ of spanning trees of }G
\end{align}
with the convention $\varphi(G) = \tau(G) = 1$ when $|V|=1$.
\begin{fact}\label{fac:graph}
The following properties hold.
    \begin{itemize}
        \item If $G$ is disconnected and has $|V| \geq 2$, $\varphi(G) = 0$.
        \item $|\varphi(G)| \leq \tau(G)$.
        \item $\varphi(G)$ can be evaluated with $O(3^{|V|})$ operations.
        \item $\tau(G)$ can be evaluated with $O(|V|^3)$ operations.
    \end{itemize}
\end{fact}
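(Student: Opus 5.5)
The four claims are independent, so I would prove them one at a time. The second bullet carries essentially all the content; the others are routine.

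\emph{First bullet.} If $G$ is disconnected, then so is $(V,F)$ for every $F\subseteq E$. When $|V|\geq 2$ this includes $F=\emptyset$, which is also disconnected. So the sum defining $\varphi(G)$ is empty and $\varphi(G)=0$.

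\emph{Second bullet.} I would route this through the Tutte polynomial. Recall
\begin{align}
    T_G(x,y)=\sum_{F\subseteq E}(x-1)^{r(E)-r(F)}(y-1)^{|F|-r(F)},
\end{align}
where $r(F)=|V|-(\text{number of components of }(V,F))$. By the first bullet it suffices to treat connected $G$. Then:
\begin{itemize}
    \item Setting $x=1$ kills every term with $r(F)<r(E)=|V|-1$, so only spanning connected $F$ survive.
    \item Setting $y=0$ then weights each survivor by $(-1)^{|F|-|V|+1}$.
    \item Hence $\varphi(G)=(-1)^{|V|-1}T_G(1,0)$.
\end{itemize}
By the Tutte--Whitney theorem, $T_G(x,y)=\sum_{i,j}t_{ij}x^iy^j$ with all $t_{ij}\geq0$ (this is the spanning-tree activity expansion). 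Therefore
\begin{align}
    |\varphi(G)|=\sum_i t_{i0}\leq\sum_{i,j}t_{ij}=T_G(1,1)=\tau(G).
\end{align}
If one prefers to avoid citing Tutte--Whitney, I would instead use Penrose's partition scheme. Fix a vertex ordering and map each connected spanning $F$ to a canonical spanning tree $T(F)\subseteq F$, for example the BFS tree with lexicographic tie-breaking. One checks that each fibre is a Boolean interval $[T,T\cup R(T)]$. The alternating sum over such an interval vanishes unless $R(T)=\emptyset$, in which case it contributes $\pm1$. This gives the same bound directly.

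\emph{Fourth bullet.} This is Kirchhoff's matrix-tree theorem: $\tau(G)$ equals the determinant of any $(|V|-1)\times(|V|-1)$ principal minor of the Laplacian. Gaussian elimination computes it in $O(|V|^3)$ arithmetic operations.

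\emph{Third bullet.} I would use subset inclusion--exclusion. For $U\subseteq V$, let $c(U)=\varphi(G[U])$ and
\begin{align}
    g(U)=\sum_{F\subseteq E(G[U])}(-1)^{|F|},
\end{align}
which equals $1$ if $G[U]$ has no edges and $0$ otherwise. Fix a distinguished vertex $v\in U$. Every edge set on $U$ splits uniquely into the component containing $v$, say on vertex set $W\ni v$, together with an arbitrary edge set on $U\setminus W$. This yields the recursion
\begin{align}
    c(U)=g(U)-\sum_{v\in W\subsetneq U}c(W)\,g(U\setminus W).
\end{align}
Evaluating it over all $U$ in increasing size sums over pairs $W\subseteq U$, which number $3^{|V|}$. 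The only point needing care is that the factorization of $(-1)^{|F|}$ across the split is exact, which it is because the sign is multiplicative over disjoint edge sets.
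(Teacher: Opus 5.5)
Your proposal is correct. Bullets one, three and four match the paper's proof: the paper uses the same recursion, splitting off the component of a distinguished vertex, and the same matrix-tree argument.

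The only real divergence is the bound $|\varphi(G)|\le\tau(G)$. The paper proves it by hand. It fixes an edge order and sends each connected spanning $F$ to its Kruskal minimum spanning tree. It then shows the fibres are Boolean intervals $\{F: T\subseteq F\subseteq R(T)\}$, where $R(T)$ adds the non-tree edges that are largest on their fundamental cycle. Each interval therefore contributes $0$ or $(-1)^{|T|}$.

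You instead identify $\varphi(G)=(-1)^{|V|-1}T_G(1,0)$ and cite Tutte's spanning-tree activity expansion. This is essentially the same mathematics, since the activity expansion can be derived from Crapo's interval decomposition of edge subsets indexed by spanning trees. The paper's Kruskal partition is the part of that decomposition lying above spanning trees.
\begin{itemize}
\item Your route is shorter. It displays the exact value $|\varphi(G)|=T_G(1,0)$ (spanning trees with no externally active edge) and the sign $(-1)^{|V|-1}$ at once. The cost is that the one nontrivial step is outsourced to a cited theorem.
\item The paper's argument is self-contained and yields the same refinement, because every surviving interval contributes $(-1)^{|T|}=(-1)^{|V|-1}$.
\end{itemize}

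Your Penrose fallback replaces the Kruskal tree with a breadth-first tree. That works, but only for a specific rule, e.g.\ attaching each vertex to its smallest-labelled neighbour one level closer to the root. An arbitrary canonical-tree map need not have interval fibres, so the ``one checks'' there is where the content lies.

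For the $O(3^{|V|})$ count, tabulate $g$ on all $2^{|V|}$ subsets first so that each pair $(W,U)$ costs $O(1)$. The paper is equally terse on this point.
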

\begin{proof}
    The first claim immediately follows from the definition of $\varphi$: $F\subseteq E$ deletes edges and cannot make disconnected $G$ connected.
    
    To show the second claim, it is enough to consider connected $G$. Kruskal's algorithm assigns to every connected spanning subgraph $F\subseteq G$ its minimal spanning tree $T(F)$. For a spanning tree
    $T\subseteq G$, let $R(T)$ be obtained from $T$ by adding edges $e\in E(G)\setminus T$ whose endpoints are connected in $T$ by edges strictly smaller than $e$ (given some fixed ordering on $E$). The connected spanning subgraphs of $G$ are partitioned into the intervals $\{F:T\subseteq F\subseteq R(T)\}$ as $T$ ranges over the
    spanning trees of $G$. Hence
    \begin{align}
        \varphi(G) = \sum_{T\subseteq G\text{ tree}} \sum_{F:T\subseteq F\subseteq R(T)}(-1)^{|F|} = \sum_{T\subseteq G\text{ tree}}(-1)^{|T|} (1-1)^{|E(R(T))|-|E(T)|}.
    \end{align}
    Each inner contribution is $0$ or $\pm1$, so $|\varphi(G)|\leq \tau(G)$.

    The third claim follows from evaluating the following recurrence over all subsets $S \subseteq V$ with lexicographically smallest vertex $r$
    \begin{align}
        \varphi(G[S]) = \prod_{\{i,j\}\subseteq S} \lr{1-\mathbf 1_{\{i,j\}\in E(G)}} - \sum_{A \subsetneq S:r\in A} \varphi(G[A]) \prod_{\{i,j\}\subseteq S \setminus A} \lr{1-\mathbf 1_{\{i,j\}\in E(G)}}
    \end{align}
    where $\varphi(G[\{r\}]) = 1$.

    The fourth claim follows from using Gaussian elimination to compute the determinant of the graph Laplacian, which gives the number of spanning trees by the matrix-tree theorem.
\end{proof}

\begin{fact}
\label{fac:cayley}
Let $\ell_0,\ldots,\ell_k>0$. For a tree $T$ on $\{0,\ldots,k\}$, orient $T$
away from $0$ and let $p_T(i)$ be the parent of $i$. Then, for $k\geq1$,
\begin{align}
    \sum_T\prod_{i=1}^k\ell_{p_T(i)} =\ell_0(\ell_0+\cdots+\ell_k)^{k-1}.
\end{align}
For $k=0$, the corresponding sum is $1$.
\end{fact}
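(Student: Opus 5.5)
The plan is to reduce the identity to the multivariate form of Cayley's formula and prove that form with Prüfer codes. The case $k=0$ is immediate. There is exactly one tree on $\{0\}$, and the product $\prod_{i=1}^0$ is empty, so the sum equals $1$. From now on assume $k\geq1$.

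\emph{Step 1: rewrite the summand in terms of degrees.} Orient $T$ away from $0$. Each vertex $i\geq1$ has a unique parent $p_T(i)$. Hence $\prod_{i=1}^k\ell_{p_T(i)}=\prod_{v=0}^k\ell_v^{\,\mathrm{ch}_T(v)}$, where $\mathrm{ch}_T(v)$ is the number of children of $v$. In a tree rooted at $0$ we have $\mathrm{ch}_T(0)=\deg_T(0)$ and $\mathrm{ch}_T(v)=\deg_T(v)-1$ for $v\neq0$. This gives
\begin{align}
    \prod_{i=1}^k\ell_{p_T(i)} = \ell_0\prod_{v=0}^k \ell_v^{\deg_T(v)-1}.
\end{align}
The claim is therefore equivalent to the multivariate Cayley identity
\begin{align}
    \sum_T\prod_{v=0}^k\ell_v^{\deg_T(v)-1}=(\ell_0+\cdots+\ell_k)^{k-1}.
\end{align}
Here the sum runs over all labeled trees on the $k+1$ vertices $\{0,\ldots,k\}$.

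\emph{Step 2: Prüfer bijection.} Recall the Prüfer correspondence. It maps labeled trees on $k+1$ vertices bijectively to sequences in $\{0,\ldots,k\}^{k-1}$: repeatedly delete the smallest-labeled leaf and record its neighbor, until two vertices remain. Under this map, each vertex $v$ appears in the code exactly $\deg_T(v)-1$ times. The reason is that $v$ is recorded once for each neighbor deleted before $v$ itself is removed or survives to the end, which accounts for all but one of its edges. Consequently $\prod_v\ell_v^{\deg_T(v)-1}=\prod_{j=1}^{k-1}\ell_{w_j}$ for the code $(w_1,\ldots,w_{k-1})$. Summing over all codes then factorizes:
\begin{align}
    \sum_{w\in\{0,\ldots,k\}^{k-1}}\prod_{j=1}^{k-1}\ell_{w_j}=\Bigl(\sum_{v=0}^k\ell_v\Bigr)^{k-1}.
\end{align}
For $k=1$ the code is empty and both sides equal $1$, so this case is consistent with Step 1.

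\emph{Main obstacle.} The only nonroutine ingredient is that Prüfer coding is a bijection with the stated degree-count property. This is standard, and I would cite it rather than reprove it; if a self-contained argument were needed, I would instead prove the identity by induction on $k$, removing a leaf. Positivity of the $\ell_i$ is not needed, since the argument is a polynomial identity. An alternative route, useful as a cross-check, is to apply the matrix-tree theorem to the complete graph with edge weights $\ell_{p}$ on oriented edges, i.e.\ to count arborescences rooted at $0$.
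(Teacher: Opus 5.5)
Your proof is correct and follows essentially the same route as the paper. Both rewrite the summand as $\ell_0\prod_{v}\ell_v^{\deg_T(v)-1}$ using the parent/degree relation, then use the fact that each vertex appears $\deg_T(v)-1$ times in the Pr\"ufer code, so the sum over codes factorizes into $(\ell_0+\cdots+\ell_k)^{k-1}$; your explicit check of the $k=1$ case (empty code) is a small detail the paper leaves implicit.
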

\begin{proof}
If $d_j$ is the degree of $j$ in $T$, then $\prod_{i=1}^k \ell_{p_T(i)} = \ell_0^{d_0}\prod_{j=1}^k \ell_j^{d_j-1}$. In the Prüfer sequence labeling $T$, vertex $j$ appears $d_j-1$ times. Hence, the left-hand side is
\begin{align}
    \ell_0\sum_{\text{Prüfer }w}\prod_{r=1}^{k-1}\ell_{w_r} = \ell_0(\ell_0+\cdots+\ell_k)^{k-1}.
\end{align}
The case $k=0$ is immediate.
\end{proof}

\subsection{Polymer representation}
We provide the polymer representation for the thermal expectation of $O$ and for the partition function (which replaces $O$ with $I$). To give a succinct definition of the polymer, we define it in terms of $\wt O$ which may include the identity.

\begin{definition}[Polymer]\label{def:polymer}
    Let $\wt O = \sum_{b \in \cA_{\wt O}} b_b Q_b$, where $\cA_{\wt O}$ potentially includes $*$ with $b_* = 1, Q_* = I$.
    A polymer $\gamma = (b; a_1,\dots,a_m)$ with $b \in \cA_{\wt O}$ satisfies the following conditions.
    \begin{itemize}
        \item $\tr(Q_b P_{a_1}\cdots P_{a_m}) \neq 0$.
        \item The sets of $\supp(P_{a_j})$ and (if $Q_b \neq I$) $\supp(Q_b)$ are connected.
    \end{itemize}
    We write $|\gamma| = m$. The support $\supp(\gamma)$ is the union of the supports of $Q_b$ and $P_{a_1},\dots,P_{a_m}$. Two polymers are compatible ($\gamma \sim \eta$) if they have disjoint support. The activity of a polymer is defined as
    \begin{align}
        w_{\wt O}(\gamma) = \frac{(-\beta)^m}{m!} b_b \lr{\prod_{i=1}^m c_{a_i}} \tr(Q_bP_{a_1} \cdots P_{a_m})
    \end{align}
    and the set of polymers is denoted by $\cP(\wt O)$.
\end{definition}

\begin{lemma}[Polymer representation]\label{lem:polymer-rep}
    The partition function and thermal expectations have polymer representations
    \begin{align}
        \tr(e^{-\beta H}) = \sum_{\Gamma \subseteq \cP(I) \;\mathrm{compatible}} \prod_{\eta\in\Gamma} w_I(\eta), \qquad \tr(Oe^{-\beta H}) = \sum_{\gamma \in \cP(O)} w_O(\gamma) \sum_{\substack{\Gamma \subseteq \cP(I) \;\mathrm{compatible}\\ \eta\sim\gamma \,\forall \eta\in\Gamma}} \prod_{\eta\in\Gamma} w_I(\eta).\label{eq:polymer-rep}
    \end{align}
\end{lemma}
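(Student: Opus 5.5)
We expand $\tr(\wt O e^{-\beta H})$ as a power series, factor each trace over connected components of the support-overlap structure, and then regroup the series as a sum over compatible families of polymers. Specifically, for $\wt O \in \{I, O\}$,
\begin{align}
    \tr(\wt O e^{-\beta H}) = \sum_{b\in\cA_{\wt O}} b_b \sum_{m\geq 0}\frac{(-\beta)^m}{m!}\sum_{(a_1,\dots,a_m)\in\cA^m}\lr{\prod_{j=1}^m c_{a_j}}\tr(Q_bP_{a_1}\cdots P_{a_m}).
\end{align}
For each sequence, build the overlap graph on the index set $\{0,1,\dots,m\}$, where vertex $0$ carries $Q_b$ and is present only when $Q_b\neq I$. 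Two vertices are joined when their supports intersect. Because Pauli operators on disjoint supports commute, the ordered product $Q_bP_{a_1}\cdots P_{a_m}$ can be rearranged as a product over connected components, with the relative order kept inside each component. Normalized trace factorizes over tensor factors on disjoint qubit sets, so the trace becomes the product of the normalized traces of the components.

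The combinatorial step is to turn the ordered sequence into an unordered collection of components. A sequence of length $m$ with components of sizes $m_1,\dots,m_\ell$ (plus possibly the component containing $Q_b$) corresponds to a choice of which positions go to each component, together with an ordered subsequence for each. Summing over sequences therefore equals summing over set partitions of $[m]$ into labeled blocks with ordered subsequences. The multinomial identity $\frac{1}{m!}\binom{m}{m_0,m_1,\dots,m_\ell}=\prod_i \frac{1}{m_i!}$ then supplies exactly the $1/m_i!$ factors in the activities $w(\gamma)$. The blocks themselves are unordered, so the labeled-block count has to be divided by $\ell!$ for the $\ell$ identity-type components. That factor is precisely what converts a sum over ordered tuples of polymers into a sum over sets $\Gamma$.

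Two points need checking here. First, a nonzero trace forces every component to have nonzero trace, which is the first polymer condition. Connectivity of each component is the second condition. Hence each component is a genuine polymer and the activities multiply correctly. Second, distinct components have disjoint supports by construction, so they form a compatible family. Conversely, any compatible family, together with an interleaving of its components, produces a valid sequence whose overlap components are exactly those polymers. This bijection is the step I expect to need the most care. Two polymers in $\Gamma$ could be identical as tuples, but compatibility requires disjoint supports and nonempty polymers have nonempty support, so a set $\Gamma$ never contains repeats. Consequently the $\ell!$ orderings are all distinct and the division is exact.

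For $\wt O = I$ there is no distinguished component, and we obtain the first identity in \cref{eq:polymer-rep}; the empty product gives the $m=0$ term, which equals $1$. For $\wt O=O$, each $Q_b$ is non-identity, so its component $\gamma$ always exists; this still holds when $m_0=0$, because then $\tr(Q_b)=0$ forces at least one $P_a$ in that component. Factoring out $w_O(\gamma)$ leaves a sum over compatible families of identity-type polymers, each disjoint from $\supp(\gamma)$. This gives the second identity. Absolute convergence for finite $n$ is immediate, since the norms are bounded by $\sum_m (|\beta|\sum_a|c_a|)^m/m!$, and this justifies all rearrangements.
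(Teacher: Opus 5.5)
Your proposal is correct and follows the paper's proof. You expand the exponential, factor the normalized trace over connected components of the support-overlap graph, and use the multinomial count $m!/\prod_j m_j!$ so that the $1/m!$ becomes the $1/m_j!$ factors in the activities; your extra care about the $\ell!$ symmetry factor, the uniqueness of interleavings (from disjoint supports), and the $m_0=0$ observable case fills in details the paper leaves implicit.
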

\begin{proof}
    We expand
    \begin{align}\label{eq:qebh}
    \tr(Q e^{-\beta H}) = \sum_{m \geq 0} \frac{(-\beta)^m}{m!} \sum_{a_1,\dots,a_m \in \cA} \lr{\prod_{i=1}^m c_{a_i}} \tr(QP_{a_1}\cdots P_{a_m})
\end{align}
    and note that each term has magnitude at most $(|\beta|\sum_{a \in \cA}|c_a|)^m/m!$ and thus converges absolutely for all complex $\beta$. Fix an ordered word $(a_1,\ldots,a_m)$ with $\tr(QP_{a_1}\cdots P_{a_m})\neq 0$. The trace factorizes over the connected components since different connected components have disjoint supports; each component is thus a polymer in $\cP(I)$ or $\cP(O)$, and is pairwise compatible with all other contributing polymers. From the definition of the activities in \Cref{def:polymer}, we see that $\tr(Q e^{-\beta H})$ can be written in the form of \cref{eq:polymer-rep}: if $\Gamma = \{\eta_1,\dots,\eta_k\}$ is a compatible family of polymers in $\cP(I)$ with lengths $m_1,\dots,m_k$ such that $m = m_1+\cdots+m_k$, then there are $m!/m_1! \cdots m_k!$ ways to choose $a_1,\dots,a_m$; the $m!$ in the numerator cancels the $1/m!$ in \cref{eq:qebh} and each activity $w(\eta_j)$ contributes the new factor of $1/m_j!$.
\end{proof}

\begin{lemma}[Sampling polymers]\label{lem:polymer-samp}
    Given a set $W \subseteq [n]$ and integer $m \geq 1$, there is a randomized sampler that returns a polymer $\gamma \in \cP(I)$ and a complex number $Y$ (or ``fail'' and $Y=0$) such that for every test function $f$,
    \begin{align}
        \E{Y f(\gamma)} = \sum_{\substack{\gamma \in \cP(I):|\gamma|=m\\ \supp(\gamma)\cap W \neq \emptyset}} w_I(\gamma) f(\gamma), \qquad |Y| \leq (4|\beta|s\sqrt{\max\{k, |W|\}})^m.
    \end{align}
    Analogously, given $m \geq 0$ there is a randomized sampler that returns $\gamma, Y$ (or ``fail'' and $Y=0$) such that
    \begin{align}
        \E{Y f(\gamma)} = \sum_{\gamma \in \cP(O):|\gamma|=m} w_O(\gamma) f(\gamma), \qquad |Y| \leq B_O(4|\beta|s\sqrt{k_*})^m.
    \end{align}
    Additionally, the following bounds hold:
    \begin{align}\label{eq:wabs}
    \sum_{\substack{\gamma \in \cP(I):|\gamma|=m\\ \supp(\gamma)\cap W \neq \emptyset}} \abs{w_I(\gamma)} \leq (4|\beta|s\sqrt{\max\{k, |W|\}})^m, \qquad \sum_{\gamma \in \cP(O):|\gamma|=m} \abs{w_O(\gamma)} \leq B_O(4|\beta|s\sqrt{k_*})^m.
\end{align}
\end{lemma}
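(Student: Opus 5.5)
The sampler will follow the ``defect-repair'' transcript sketched at the start of this section. A polymer $\gamma=(b;a_1,\dots,a_m)$ is an ordered word, but I will generate it through an auxiliary transcript that builds a connected, trace-nonvanishing word one term at a time. The transcript keeps the current product $Q_bP_{a_1}\cdots P_{a_j}$ and its current support. At each step it records a \emph{site} $x$ and then a term $a\in\cA(x)$, drawn with probability $|c_a|/S(x)$.

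The rule for choosing $x$ is where the square-root saving comes from. If the current product has a nonempty defect set, meaning qubits where it acts as a non-identity Pauli, the transcript takes $x$ to be a canonical defect, say the smallest one. If there is no defect, $x$ is drawn uniformly from the current support, which has size at most $kj+|W|$ (for the $O$-sampler, $kj+k_O$). For the first term of the $I$-sampler, which has no prior support, $x$ is drawn uniformly from $W$.

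The estimator outputs $Y=\tfrac{(-\beta)^m}{m!}\,b_b\prod_i c_{a_i}\,\tr(Q_bP_{a_1}\cdots P_{a_m})\cdot N(\gamma)/(\Pr[\text{transcript}]\,\cdot\,\#\text{transcripts mapping to }\gamma)$. This is ordinary importance sampling, where $N$ weights each transcript so that the contributions of all transcripts producing a given $\gamma$ sum to $w(\gamma)$. If the final trace vanishes or the word is disconnected, the sampler returns ``fail'' with $Y=0$.

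For unbiasedness, the cleanest way to handle multiplicity is to make the transcript a deterministic function of the word. I would define the canonical transcript of $(a_1,\dots,a_m)$ as follows. At step $j$, if a defect exists, its site is forced. Otherwise the site is the smallest qubit of $\supp(P_{a_{j+1}})$ lying in the current support. Connectivity of the word guarantees that such a qubit exists. The trace condition guarantees that whenever a defect exists, some later term touches it; but the next term need not. That is the main obstacle: in a general polymer the repairing term may come later.

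I would fix this by \emph{reordering}. Since $\tr$ of a product of Paulis is invariant up to sign under permutations, and $w_I$ summed over orderings depends only on the multiset, I would sum over multisets. The canonical order is obtained greedily: repeatedly pick a remaining term containing the smallest current defect, or a term touching the current support if there are no defects. This yields a bijection from orderings to pairs (canonical order, permutation), and the $1/m!$ together with the sign from $\tr(Q_bP_{\pi})$ are absorbed into $Y$ via the ratio $m!/(\#\text{distinct orderings})$. This needs some care with repeated labels, but that is routine.

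For the bound on $|Y|$, I would account step by step. A defect step costs a factor $|\beta|S(x)\le|\beta|s$, since the site choice has probability $1$. A non-defect step costs $|\beta|s$ times the support size, at most $|\beta|s(kj+|W|)$, and the $1/m!$ cancels this $j$-growth as in \Cref{lem:sep-propagator-expansion}.

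I also need to bound how many non-defect steps there are. A non-defect step always creates at least one defect, because $P_a\neq I$. However, a defect step may repair one defect and leave another, so strict alternation does not follow. Instead I would argue that defect-free moments occur at most $\lceil m/2\rceil$ times: after a defect-free step, at least one step is needed to clear the defects before the next defect-free step. This gives at most about $m/2$ factors of size $k$. Combined with the $1/m!$ against $j$-growth, this should yield $(C|\beta|s)^m k^{m/2}\le(4|\beta|s\sqrt{\max\{k,|W|\}})^m$, with $|W|$ entering only through the first step or the support size. The constant $4$ and the $j$-growth bookkeeping, using $\binom{m}{m/2}\le 2^m$ to choose which steps are defect-free, are where I expect the numerics to be tight. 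Finally, \eqref{eq:wabs} follows from the same computation with $|w|$ in place of the importance weights, i.e.\ $\sum|w|\le\sup|Y|$, since every polymer is reached with positive probability.
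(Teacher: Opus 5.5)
\textbf{There is a genuine gap.} Your bound on $|Y|$ spends the $1/m!$ in $w$ twice.

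Once you canonically reorder, the $1/m!$ is used up entirely by the orderings of the multiset. With multiplicities $n_a$ one has $\sum_\pi|w(\gamma_\pi)|=B|\beta|^m\prod_a|c_a|/\prod_a n_a!$, and this is exactly the ratio $m!/\#\{\text{orderings}\}$ that you absorb into $Y$. What is left in $|Y|$ is $1/\Pr[\text{transcript}]$. Your defect-free steps draw $x$ uniformly from the whole current support, so each contributes a factor up to $|W|+kj$. Your sampler therefore only gives $|Y|\lesssim(|\beta|s)^m\prod_{\text{births}}(|W|+kj)$, which is of order $(|\beta|s)^m k^{m/2}(m/2)!$.

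This is not a matter of constants. Set up the following example, for even $m$:
\begin{itemize}
\item take $k\ge3$ and terms $a_1,\dots,a_{m/2}$ with $|c_{a_i}|=s/2$, forming a chain in which $a_{i+1}$ meets $a_i$ in a single qubit and nothing else;
\item let $W$ be a private qubit of $a_1$;
\item label private qubits before overlap qubits, so the canonical defect is always a private qubit of the newest term.
\end{itemize}
Every ordering of $\{a_1,a_1,\dots,a_{m/2},a_{m/2}\}$ is a polymer with trace $\pm1$. Your sampler reaches this multiset through a single transcript. Its $i$-th birth must hit the unique overlap qubit among $i(k-1)+1$ support qubits, and then pick $a_{i+1}$ with probability $1/2$. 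Unbiasedness for every $f$ forces $\sup|Y|\cdot\Pr[\text{this multiset}]\ge\sum_\pi|w(\gamma_\pi)|$. Hence $\sup|Y|\ge\tfrac12(|\beta|s/2)^m\prod_{i=1}^{m/2-1}(i(k-1)+1)$, which is superexponential in $m$. So neither the $|Y|$ bound nor the summed bounds, which you derive from $\sup|Y|$, follow from your construction.

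\textbf{The missing idea} is the paper's stack (depth-first) exploration.
\begin{itemize}
\item Discovered terms are pushed onto a stack, with $Q$ (support $V$) at the bottom.
\item At a defect-free step, the sampler pops the top with a fair coin until tails. It then attaches the new term at a qubit of the current top term only, chosen with probability $S(x)/\sum_{y\in\supp(P_a)}S(y)$.
\item A birth then costs at most $2\tilde k s$, with $\tilde k=\max\{k,|W|\}$ (or $k_*$).
\item The at most $m$ pops cost $2^m$ in total, since each term is pushed once.
\item The $1/m!$ is spent only once, cancelling $N_r\le m!$ from the uniformly random ordering.
\end{itemize}
This gives $|Y|\le B(4|\beta|s)^m\tilde k^{b}$ with $b$ births.

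A smaller point: the correct count is $b\le\lfloor m/2\rfloor$, not $\lceil m/2\rceil$. Immediately after a birth there is a defect, and the final word is defect-free, so every birth is followed by a repair. With $\lceil m/2\rceil$ you would lose a factor $\sqrt{\tilde k}$ for odd $m$.
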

\begin{proof}
    To construct the sampler, we start with an initial operator $Q$, coefficient $B$, and support $V \subseteq [n]$. In the identity case, we set $Q = I$, $B = 1$ and $V = W$; if we have an observable, we sample $b$ from $\cA_O$ with probability $|b_b|/B_O$ and set $Q=Q_b$, $B=B_O$ and $V = \supp(Q_b)$.

    For Pauli string $P_a$, we write $P_a^x$ to denote the Pauli operator acting on qubit $x \in [n]$. We will keep track of an ordered list $S$ of Hamiltonian terms (possibly with repeating elements) and denote the \emph{defective} qubits of $S$ by
    \begin{align}
        D(S) = \left\{x\,:\, \tr(Q^x\prod_{a \in S}P_a^x) = 0\right\}.
    \end{align}
    Hence, $D(S) = \emptyset$ means the current product of $Q$ with terms in $S$ is proportional to the identity.
    The sampler will produce $S$ that it accepts as a polymer $\gamma$ with $|\gamma|=m$ and
    $\supp(\gamma)\cap V\neq\emptyset$. In the observable case
    $\gamma=(b;a_1,\dots,a_m)$, while in the identity case the first index is
    irrelevant and the choice $V=W$ enforces $\supp(\gamma)\cap W\neq\emptyset$.
    
    Initialize $S = \emptyset$, a transcript $r$, and a stack $\omega$. Put $Q$ on the stack with its support defined as $\supp(Q) = V$. For some fixed polymer $\gamma$, consider repeating the following procedure until $S = \gamma$.
    \begin{enumerate}
        \item \emph{Repair.} Suppose $S \neq \gamma$ and $S$ has at least one defective qubit. Let $x \in D(S)$ be the defective qubit chosen by a fixed deterministic rule. Because $x$ is defective and the final accepted word has nonvanishing trace, some undiscovered occurrence must act on $x$. We ``repair'' this defect by
        finding a new Hamiltonian term to act on $x$. Choose some $b \in \gamma \setminus S$ such that $x \in \supp(P_b)$, and add $b$ to $S$ and $P_b$ to the top of the stack $\omega$. Append to the transcript $r$ the symbol ${\rm repair}(b)$.
        \item \emph{Birth.} Suppose $S \neq \gamma$ and $S$ has no defective qubits. Repeat the following procedure: let $P_a$ be the top of the stack $\omega$; if there does not exist $b \in \gamma \setminus S$ such that $\supp(P_a) \cap \supp(P_b) \neq \emptyset$, pop $P_a$ and append to the transcript $r$ the symbol ${\rm pop}(a)$. After this procedure, $P_a$ at the top of the stack has some $x \in \supp(P_a)$ such that some $b \in \gamma \setminus S$ satisfies $x \in \supp(P_a) \cap \supp(P_b)$; this ensures that the new Hamiltonian term maintains a connected support. Choose such an $x$ and then choose such a $b$. Add $b$ to $S$ and push $P_b$ onto the stack; append to the transcript $r$ the symbol ${\rm birth}(x, b)$.
    \end{enumerate}
    This procedure assigns to every ordered polymer $\gamma$ a canonical transcript $r(\gamma)$. The transcript records the order in which occurrences are discovered by the above procedure; it doesn't store the ordering $\gamma=(a_1,\dots,a_m)$ that is used in the polymer activity $w(\gamma)$ to determine the order of the product $\tr(Q_b P_{a_1}\cdots P_{a_m})$.
    
    We now construct a randomized sampler that generates a proposed transcript and chooses an ordering for $\gamma$; we accept the proposal only if running the procedure above on the ordered $\gamma$ produces the same transcript.

    As before, initialize $S = \emptyset$, a transcript $r$, and a stack $\omega$. Put $Q$ on the stack with its support defined as $\supp(Q) = V$. Repeat the following procedure until $|S| = m$.
    \begin{enumerate}
        \item \emph{Repair.} Suppose $D(S)\neq\emptyset$. Let
        $x\in D(S)$ be the defective qubit chosen by a deterministic rule. If $S(x)=0$, reject. Otherwise sample $b\in\cA(x)$ with probability $|c_b|/S(x)$, add this occurrence to $S$, and push $P_b$ onto the stack $\omega$. Append to the transcript $r$ the symbol ${\rm repair}(b)$.
        \item \emph{Birth.} Suppose $S$ has no defective qubits. Let $P_a$ be the top of the stack $\omega$; while $P_a \neq Q$, repeat the following procedure. Flip a fair coin and if it lands on heads, pop $P_a$ and append to the transcript $r$ the symbol ${\rm pop}(a)$. Stop the procedure if the coin lands on tails. Then for $P_a$ at the top of the stack, sample $x \in \supp(P_a)$ with probability $S(x) / \sum_{y\in\supp(P_a)} S(y)$ and then sample $b \in \cA(x)$ with probability $|c_b|/S(x)$. Add $b$ to $S$ and push $P_b$ onto the stack; append to the transcript $r$ the symbol ${\rm birth}(x, b)$.
    \end{enumerate}
    Prepare a candidate polymer $\gamma$ by selecting a random ordering of $S$; explicitly, if the transcript $r$ says that $S$ has multiplicities $\{n_a\}$, choose $\gamma$ uniformly at random from the $N_r=\frac{m!}{\prod_a n_a!}$ ordered possibilities. Accept if $\gamma$ is a valid polymer per \Cref{def:polymer} and if the first procedure, when run on $\gamma$, produces the same transcript $r(\gamma)$; otherwise, return ``fail''.
    
    For an accepted transcript $r$ producing polymer $\gamma$, set
    \begin{align}
        Y(r,\gamma) = \frac{w(\gamma)}{\pr{r}/N_r},
    \end{align}
    where $\pr{r} = \pr{r\mid b}|b_b|/B_O$ is the probability of transcript $r$, and $w$ is either $w_I$ or $w_O$ of \Cref{def:polymer}. If the transcript was rejected, set $Y = 0$. By construction, since each $\gamma$ has a unique accepted transcript $r$, this satisfies
    \begin{align}
        \E{Yf(\gamma)} = \sum_{r\ {\rm accepted}} \sum_b \frac{|b_b|}{B_O} \frac{\pr{r\mid b}}{N_r} \frac{w(\gamma)}{|b_b|\pr{r\mid b}/B_ON_r} f(\gamma) = \sum_{\gamma} w(\gamma)f(\gamma),
    \end{align}
    where the expectation is taken over the random ordering (chosen with probability $1/N_r$) and the random transcript (chosen with probability $\pr{r}$).
    It remains to bound $|Y|$. In the repair step, for fixed $x$ we have $|c_b|/S(x) \geq |c_b|/s$. Each pop occurs with probability $1/2$; there are at most $m$ pops since the stack only receives an element when something is appended to $S$. In the birth step, for fixed $P_a$ we have $S(x) / \sum_{y\in\supp(P_a)} S(y) \geq S(x) / \wt k s$; here, $\wt k = \max\{k,|W|\}$ or $k_*$ in the identity or observable cases. A birth can occur only when $D(S) = \emptyset$; crucially, immediately after a birth, the product has vanishing trace. Since the final $S$ is a polymer of size $m$ with nonvanishing trace, at most $m/2$ births can occur. If there are $b$ births, there are at most $m-b$ repairs, since each repair adds an element to $S$.

    To upper-bound $|Y|$, we start by lower-bounding $\pr{r}$. The probability of entry ${\rm birth}(x,b)$ on the transcript, if $P_a$ is at the top of the stack, is at least
    \begin{align}
        \frac{S(x)}{\sum_{y\in\supp(P_a)}S(y)} \cdot \frac{|c_b|}{S(x)} = \frac{|c_b|}{\sum_{y\in\supp(P_a)}S(y)} \geq \frac{|c_b|}{\wt k s}.
    \end{align}
    Since $N_r \leq m!$, we have (combined with the factors of 2 from the heads/tails probabilities in the pop/birth step) that
    \begin{align}
        \frac{\pr{r}}{N_r} \geq \frac{1}{N_r}\lr{\prod_{a \in \text{repair}} \frac{|c_a|}{s}} \lr{\frac{1}{2}}^m \lr{\prod_{a \in \text{birth}} \frac{|c_a|}{2s\wt k}} \geq \frac{1}{m!} 2^{-2m} s^{-(m-b)} (s\wt k)^{-b} \prod_{a \in \gamma} |c_a|,
    \end{align}
    where we write ``birth'' to denote the set of ${\rm birth}(x, a)$ entries and ``repair'' to denote the set of ${\rm repair}(a)$ entries in the transcript; note $b = |{\rm birth}| \leq m/2$ and $|{\rm repair}| = m-b$. The activity satisfies
    \begin{align}
        |w(\gamma)| = \frac{|\beta|^m}{m!} B \prod_{a \in \gamma} |c_a|.
    \end{align}
    Hence,
    \begin{align}
        |Y(r,\gamma)| \leq B\lr{4|\beta|s\sqrt{\wt k}}^m.
    \end{align}
    To show the remaining bounds \cref{eq:wabs}, we use the same construction as above but replace $w$ with its magnitude.
\end{proof}

\subsection{Cluster expansion}
\label{subsec:zf}
For a tuple of polymers $(\gamma_1,\ldots,\gamma_k)$, we define the \emph{incompatibility graph} $G(\gamma_1,\ldots,\gamma_k)$ with vertices $[k]$ and an edge $\{i,j\}$ whenever $\gamma_i\not\sim \gamma_j$, i.e. whenever their supports intersect.

We will apply the Kotecky--Preiss criterion~\cite{kotecky1986cluster}, which we now recall.
\begin{lemma}[Kotecky--Preiss criterion]\label{lem:kp}
Let $\cP$ be a finite set with a symmetric incompatibility relation $\not\sim$,
including self-incompatibility $\gamma\not\sim\gamma$. For $\Lambda\subseteq\cP$,
define
\begin{align}
    \label{eq:xi-kp} \Xi_{\Lambda}= \sum_{\substack{\Gamma\subseteq\Lambda\\ \text{pairwise compatible}}} \prod_{\gamma\in\Gamma}w(\gamma),
\end{align}
where pairwise compatibility is imposed only on distinct elements of $\Gamma$.
Suppose there are numbers $a_\gamma>0$ such that, for every $\gamma\in\cP$,
\begin{align}
    \sum_{\eta\not\sim\gamma}|w(\eta)|\ee^{a_\eta}\leq a_\gamma.
\end{align}
Then $\Xi_{\Lambda}\ne0$ for every $\Lambda\subseteq\cP$. Moreover, with $G(\gamma_1,\ldots,\gamma_k)$ denoting the incompatibility graph of the tuple, the cluster expansion
\begin{align}
    \log \Xi_{\Lambda} = \sum_{k\geq1}\frac1{k!} \sum_{\gamma_1,\ldots,\gamma_k\in\Lambda} \varphi\!\lr{G(\gamma_1,\ldots,\gamma_k)} \prod_{i=1}^k w(\gamma_i) \label{eq:finite-cluster-expansion}
\end{align}
is absolutely convergent, where the branch of the logarithm is the one obtained by analytic continuation from zero activities.
\end{lemma}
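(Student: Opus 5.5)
The plan is to prove absolute convergence of the cluster expansion \eqref{eq:finite-cluster-expansion} first, and then deduce $\Xi_\Lambda\neq0$ from it, rather than working with ratios $\Xi_\Lambda/\Xi_{\Lambda\setminus\{\gamma\}}$ directly. The ratio-based induction runs into the problem that $|\log(1+x)|$ is not bounded by $|x|$ without extra slack; the tree-graph route avoids this.

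\emph{Step 1 (algebraic identity).} Treat the activities $w(\gamma)$ for $\gamma\in\Lambda$ as formal variables. The standard Mayer/exponential-formula argument shows that $\log\Xi_\Lambda$ equals the right-hand side of \eqref{eq:finite-cluster-expansion} as a formal power series. Concretely, expand $\prod_{i<j}(1+(\mathbf 1_{\gamma_i\sim\gamma_j}-1))$ over graphs. Then group by connected components: the connected sum of $\prod_{\{i,j\}\in F}(-\mathbf 1_{\gamma_i\not\sim\gamma_j})$ is exactly $\varphi(G(\gamma_1,\ldots,\gamma_k))$. Self-incompatibility makes repeated polymers in $\Xi_\Lambda$ contribute zero, so multisets collapse to sets as in \eqref{eq:xi-kp}. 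Since $\Lambda$ is finite, $\Xi_\Lambda$ is a polynomial in the activities with $\Xi_\Lambda(0)=1$. The identity therefore holds as convergent analytic functions in a small polydisc around $w=0$.

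\emph{Step 2 (tree-graph bound).} Using $|\varphi(G)|\leq\tau(G)$ from \Cref{fac:graph}, bound the absolute series, rooted at a fixed polymer $\gamma_1=\gamma$, by
\begin{align}
    \sum_{k\geq1}\frac{1}{(k-1)!}\sum_{\gamma_2,\ldots,\gamma_k}\tau\bigl(G(\gamma,\gamma_2,\ldots,\gamma_k)\bigr)\prod_{i=2}^k|w(\gamma_i)| .
\end{align}
Rewrite this as a sum over labelled trees rooted at $\gamma$ whose edges join incompatible polymers. Let $T_N(\gamma)$ be the contribution of trees of depth at most $N$. Summing over the unordered set of children of the root, each child $\eta$ carrying its own subtree, gives the recursion
\begin{align}
    T_{N+1}(\gamma)\leq\exp\Bigl(\sum_{\eta\not\sim\gamma}|w(\eta)|\,T_N(\eta)\Bigr).
\end{align}
Induct on $N$ with the hypothesis $T_N(\eta)\leq e^{a_\eta}$; the base case is $T_0=1$. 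The Kotecky--Preiss hypothesis $\sum_{\eta\not\sim\gamma}|w(\eta)|e^{a_\eta}\leq a_\gamma$ then gives $T_{N+1}(\gamma)\leq e^{a_\gamma}$. Letting $N\to\infty$ by monotone convergence, the rooted absolute sum is bounded by $e^{a_\gamma}$. Multiplying by $|w(\gamma)|$ and summing over the finitely many $\gamma\in\Lambda$ shows that \eqref{eq:finite-cluster-expansion} converges absolutely.

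\emph{Step 3 (extension and non-vanishing).} The hypothesis is monotone in $|w|$. Hence the bound of Step 2 holds uniformly on the closed polydisc $\{w':|w'(\gamma)|\leq|w(\gamma)|\ \forall \gamma\}$. The cluster series is therefore a uniformly convergent, hence analytic, function $F(w')$ on that polydisc. It agrees with $\log\Xi_\Lambda$ near $0$ by Step 1, so the polynomial $\Xi_\Lambda(w')$ equals $e^{F(w')}$ on a neighbourhood of $0$. By the identity theorem, $\Xi_\Lambda=e^{F}$ throughout the polydisc. In particular $\Xi_\Lambda(w)\neq0$, and $F$ is the branch of $\log\Xi_\Lambda$ obtained by continuation from zero activities. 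Restricting the hypothesis to subsets $\Lambda\subseteq\cP$ is harmless, because dropping terms only decreases the left-hand side.

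The main obstacle is Step 2: carefully justifying the tree recursion. This requires converting the $1/(k-1)!$ over ordered tuples into the exponential over unordered children, which uses the $1/j!$ symmetry factors for sibling subtrees. It also requires allowing the same polymer to appear repeatedly across branches without overcounting. I would write the labelled-tree sum via \Cref{fac:cayley}-style parent maps and check that the multinomial factors exactly reproduce $\exp$.
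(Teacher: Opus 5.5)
The paper does not prove this lemma. It recalls it as a known result and cites \cite{kotecky1986cluster}, so there is no in-paper argument to match. Your proposal is a correct, self-contained proof. It goes through trees: the bound $|\varphi(G)|\le\tau(G)$ from \Cref{fac:graph}, then a depth-truncated recursion over rooted trees, in the spirit of Fern\'andez--Procacci. The classical Kotecky--Preiss and Dobrushin proofs instead proceed by induction.

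Step 2 is fine as written. The recursion is in fact an equality, so your ``main obstacle'' is routine:
\begin{itemize}
\item A rooted labelled tree is a root together with an unordered set of rooted labelled subtrees.
\item The factor $1/(k-1)!$ is exactly the exponential-generating-function normalization on the non-root labels.
\item Repeated polymers cause no overcounting, because the tuple sum already ranges over all $(\gamma_2,\ldots,\gamma_k)$ with repetitions.
\end{itemize}

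Two small points in Step 3 should be tightened.
\begin{itemize}
\item The target activity $w$ lies on the boundary of the closed polydisc. After applying the identity theorem on the open polydisc, you still need continuity of $\Xi_\Lambda$ and of $F$ to conclude at $w$ itself. Continuity of $F$ follows from uniform absolute convergence by the M-test.
\item Coordinates with $w(\gamma)=0$ should be discarded first, so that the open polydisc is nonempty.
\end{itemize}
Alternatively, use the one-parameter path $u\mapsto uw$ with $|u|\le 1$, as the paper itself does in the proof of \Cref{lem:clustertrunc}.

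Compared with simply citing the criterion, your route costs a page but uses only tools already present in the paper. It also yields the explicit rooted bound $\sum_{k\ge1}\frac{1}{(k-1)!}\sum_{\gamma_2,\ldots,\gamma_k}|\varphi(G(\gamma,\gamma_2,\ldots,\gamma_k))|\prod_{i\ge2}|w(\gamma_i)|\le e^{a_\gamma}$. This is the same kind of spanning-tree estimate the paper re-derives by hand, via \Cref{fac:cayley}, to obtain the tail bounds in \Cref{lem:clustertrunc}.
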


\begin{lemma}[Cluster expansions]\label{lem:clustertrunc}
    The cluster expansion
    \begin{align}\label{eq:oexpansion}
    \langle O \rangle_\beta = \sum_{k \geq 0} \frac{1}{k!} \sum_{\gamma_0 \in \cP(O)} \sum_{\gamma_1,\dots,\gamma_k\in\cP(I)} \varphi(G(\gamma_0,\dots,\gamma_k)) w_O(\gamma_0) \prod_{i=1}^k w_I(\gamma_i)
\end{align}
    is absolutely convergent and holds for all $|\beta| s \sqrt{k_*} < 1/128e$. Moreover, the terms with $|\gamma_0| + \cdots + |\gamma_k| > L$ contribute at most $B_O 8^{-L}$. Similarly,
    \begin{align}\label{eq:logzexpansion}
    \log \tr(e^{-\beta H}) = \sum_{k \geq 1} \frac{1}{k!} \sum_{\gamma_1,\dots,\gamma_k\in\cP(I)} \varphi(G(\gamma_1,\dots,\gamma_k)) \prod_{i=1}^k w_I(\gamma_i)
\end{align}
    is absolutely convergent and holds for all $|\beta| s \sqrt{k} < 1/128e$, and the terms with $|\gamma_1| + \cdots + |\gamma_k| > L$ contribute at most $n 8^{-L}$. In particular, this implies $Z(\beta) \neq 0$.
\end{lemma}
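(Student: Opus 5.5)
We verify the Kotecky--Preiss condition (\Cref{lem:kp}) for the polymer system $\cP(I)$ with activities $w_I$, and then apply it to both expansions. The weight we choose is proportional to support size: $a_\gamma = |\supp(\gamma)|\,\alpha$, with $\alpha$ a constant such as $\log 2$. Since $|\supp(\gamma)|\leq k|\gamma|$ (or $\leq k|\gamma|+k_O$ when $\gamma\in\cP(O)$), we have $e^{a_\eta}\leq 2^{k|\eta|}$. That bound is too lossy, because it would cost a factor $2^k$ per term.

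The better route is to bound the sum over all $\eta\not\sim\gamma$ by summing over the qubits $x\in\supp(\gamma)$. For each such $x$ we want
\begin{align}
\sum_{\eta\ni x}|w_I(\eta)|e^{a_\eta}\leq \alpha .
\end{align}
To keep $e^{a_\eta}$ under control, take $a_\eta=\alpha|\eta|$ instead, that is, the weight is the number of Hamiltonian terms rather than the number of sites. Incompatibility with $\gamma$ then costs at most $|\supp(\gamma)|\leq k|\gamma|$ choices of the intersecting site $x$. The condition then reads
\begin{align}
\sum_{x\in\supp(\gamma)}\sum_{m\geq1}\sum_{\eta:|\eta|=m,\,x\in\supp(\eta)}|w_I(\eta)|e^{\alpha m}\leq \alpha|\gamma|.
\end{align}
We apply \eqref{eq:wabs} with $W=\{x\}$ to bound the inner sum by $(4|\beta|s\sqrt k\,e^\alpha)^m$. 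Summing over $m$ gives a geometric series of size $O(|\beta|s\sqrt k)$, and there are at most $k|\gamma|$ choices of $x$, so the left side is at most $k|\gamma|\cdot O(|\beta|s\sqrt k)$. This is where the $\sqrt k$ gain could be lost. I expect this to be the main obstacle: the incompatibility count must not reintroduce a full factor of $k$.

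The fix I would try is to route the root site through the sampler. Apply \eqref{eq:wabs} with $W=\supp(\gamma)$ directly, instead of summing over single sites. Its bound $(4|\beta|s\sqrt{\max\{k,|W|\}})^m$ is stated for arbitrary $W$, but it degrades when $|W|$ is large. So I would take weights $a_\gamma=\alpha|\supp(\gamma)|$ and redo the sampler estimate for a root set $W$: the first birth costs a factor $|W|s$ and later births cost $\sqrt k$ in amortized form. This gives
\begin{align}
\sum_{\eta\not\sim\gamma}|w_I(\eta)|\leq |W|\sum_m(C|\beta|s\sqrt k)^m.
\end{align}
The extra factor $e^{\alpha|\supp\eta|}\leq e^{\alpha k|\eta|}$ is again problematic. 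Because of this, I would instead absorb $e^{a_\eta}$ by giving each birth in the transcript its own $e^{\alpha k}$ budget only at births: only births enlarge the support by up to $k$ new sites, and repairs add at most $k$ as well. Concretely, choose $\alpha\sim 1/k$. Then $e^{\alpha|\supp\eta|}\leq e^{|\eta|}$, and the condition holds with the generous constant $1/(128e)$. The Kotecky--Preiss right side becomes $\alpha|\supp\gamma|=|\supp\gamma|/k$, and the left side is $|\supp\gamma|\cdot O(e|\beta|s\sqrt k)$. This requires $|\beta|s\sqrt k\lesssim 1/k$, which again fails.

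So the correct amortization must use the polymer size: take $a_\gamma=|\gamma|$. Then incompatible $\eta$ must meet $\supp(\gamma)$, and I would argue that the sampler rooted at $W=\supp(\gamma)$ pays $|W|$ only once, as the initial site choice. This gives a left side of at most $k|\gamma|\,e\cdot 4|\beta|s\sqrt k\cdot 2$. Making this $\leq|\gamma|$ requires $|\beta|s k^{3/2}$ to be small, so the initial site choice must also be amortized. That is exactly what the birth/repair transcript does if we charge the site choice in $\supp\gamma$ against a repair. I expect this bookkeeping, which treats the root set as part of the stack in \Cref{lem:polymer-samp}, to be the technical core, and I would try it first.

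Given the criterion, \Cref{lem:kp} yields $Z\neq0$ and the convergent expansion \eqref{eq:logzexpansion}. Summing over clusters rooted at each site $x$, together with the standard Kotecky--Preiss tail bound, gives decay of clusters of total size $>L$ at rate $8^{-L}$ per site, hence the error $n8^{-L}$. For $\langle O\rangle_\beta$, we take the ratio of the two representations in \Cref{lem:polymer-rep}. Applying \Cref{lem:kp} to $\cP(I)$ with one extra fixed polymer $\gamma_0$ gives \eqref{eq:oexpansion}, with the tail controlled by \eqref{eq:wabs} and the resulting bound $B_O8^{-L}$.
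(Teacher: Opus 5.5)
\textbf{The gap is the Kotecky--Preiss verification, which you leave open.} This step is where the $\sqrt{k}$ improvement comes from. You do reach the right weight $a_\gamma=|\gamma|$, but you miscount the cost of the root choice. The fix you suggest, charging the site choice in $\supp(\gamma)$ against a repair, is not the mechanism.

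In \Cref{lem:polymer-samp}, the first step from a root set $W$ is a \emph{birth}, since nothing is defective yet. It costs at most $\sum_{y\in W}S(y)\leq |W|s$. It is also one of the at most $m/2$ births, and each birth is already charged $\max\{k,|W|\}\,s$ in the bound $(4|\beta|s\sqrt{\max\{k,|W|\}})^m$. A root set of size at most $k$ therefore costs nothing extra.

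So split $\supp(\gamma)$ into the $|\gamma|$ blocks $\supp(P_{a_i})$, rather than into its up to $k|\gamma|$ sites, and apply \eqref{eq:wabs} with $W=\supp(P_{a_i})$:
\begin{align*}
\sum_{\eta\not\sim\gamma}|w_I(\eta)|\,e^{|\eta|}
\leq \sum_{i=1}^{|\gamma|}\sum_{m\geq1}e^{m}\sum_{\substack{\eta:\,|\eta|=m\\ \supp(\eta)\cap\supp(P_{a_i})\neq\emptyset}}|w_I(\eta)|
\leq |\gamma|\sum_{m\geq1}\bigl(4e|\beta|s\sqrt{k}\bigr)^m\leq|\gamma|
\end{align*}
whenever $4e|\beta|s\sqrt{k}\leq 1/2$. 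This is exactly the paper's one-line check.

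Your estimate $k|\gamma|\cdot 8e|\beta|s\sqrt{k}$ pays the factor $k$ twice. The bound \eqref{eq:wabs} with $W=\{x\}$ already budgets $ks$ for the root birth, although choosing $x$ from $\{x\}$ costs only $s$. Equivalently, if you root at all of $\supp(\gamma)$, the first-birth cost $k|\gamma|s=|\gamma|\cdot ks$ splits into a factor $ks$, absorbed by the birth budget, and a factor $|\gamma|$, absorbed by $a_\gamma$.

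\textbf{The remaining steps are asserted rather than proved.} \Cref{lem:kp} as stated applies to a finite polymer set and gives no tail estimate, while $\cP(I)$ contains words of every length. The paper handles this in three steps:
\begin{enumerate}
\item It truncates to $|\gamma|\leq M$.
\item It marks all observable polymers as mutually incompatible and differentiates in their activity at $t=0$, which yields \eqref{eq:oexpansion} with $w_O$ appearing linearly. It then extends in an auxiliary parameter $u$ by the identity theorem.
\item It lets $M\to\infty$ by dominated convergence.
\end{enumerate}

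The tails come from an explicit estimate, not a generic statement. The ingredients are:
\begin{itemize}
\item the bound $|\varphi|\leq\tau$;
\item a spanning tree rooted at $\gamma_0$, in which each child meets one of the $\ell_{p_T(i)}$ terms of its parent (with $\ell_0=|\gamma_0|+1$ for the observable root);
\item \Cref{fac:cayley} with Stirling, giving $\frac{1}{k!}\sum_T\prod_i\ell_{p_T(i)}\leq e^{1+h}$;
\item a factor $2^h$ for the compositions of $h$;
\item for $\log Z$, the identity \eqref{eq:1id}, which roots each cluster at a site and produces the factor $n$.
\end{itemize}
A strengthened criterion, with an extra factor $8^{|\eta|}$ and a fictitious zero-activity polymer $\{x\}$ at each site, would also give $n8^{-L}$. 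That criterion would have to be stated and checked, though. As written, neither tail bound is established.
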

\begin{proof}
    Define
    \begin{align}
        \cP_{\leq M}(I) = \{\gamma\in\cP(I)\,:\,|\gamma| \leq M\}
    \end{align}
    and similarly for $\cP_{\leq M}(O)$.
    
    Introduce $\Xi_M(u, t)$ defined by \cref{eq:xi-kp} on a combined set of polymers $\cP$ containing both the polymers in $\cP_{\leq M}(I)$ and in $\cP_{\leq M}(O)$. The compatibility relation is defined as before ($\gamma\sim\eta$ means $\supp(\gamma) \cap \supp(\eta) = \emptyset$), except all pairs of polymers in $\cP_{\leq M}(O)$ are considered incompatible. The polymers from $\cP_{\leq M}(I)$ are assigned activities $uw_I(\gamma)$ and the polymers from $\cP_{\leq M}(O)$ are assigned $tw_O(\gamma)$.
    For $|u|$ and $|t|$ sufficiently small, we thus have
    \begin{align}
        \log \Xi_M(u,t) = \sum_{r\geq1}\frac1{r!} \sum_{\delta_1,\ldots,\delta_r\in \cP_{\leq M}(I)\sqcup \cP_{\leq M}(O)} \varphi(G(\delta_1,\ldots,\delta_r)) \prod_{j=1}^r w_{u,t}(\delta_j)
    \end{align}
    for
    \begin{align}
        w_{u,t}(\delta)= \begin{cases} u\,w_I(\delta), & \delta\in\cP_{\leq M}(I),\\ t\,w_O(\delta), & \delta\in\cP_{\leq M}(O). \end{cases}
    \end{align}
    Since all observable polymers are declared mutually incompatible, differentiating
    at $t=0$ keeps exactly those tuples with one observable polymer and $k$ identity
    polymers. Thus
    \begin{align}
        \frac{1}{\Xi_M(u,0)} \left.\frac{\partial}{\partial t}\right|_{t=0}\Xi_M(u,t) &= \sum_{k\geq0}\frac{u^k}{k!} \sum_{\gamma_0\in\cP_{\leq M}(O)} \sum_{\gamma_1,\dots,\gamma_k\in\cP_{\leq M}(I)} \varphi(G(\gamma_0,\dots,\gamma_k)) w_O(\gamma_0)\prod_{i=1}^k w_I(\gamma_i). \label{eq:xiratio}
    \end{align}
    Here the factor $1/k!$ comes from $(k+1)/(k+1)!$, since the distinguished
    observable polymer may occupy any one of the $k+1$ positions.

    Note that for now, \cref{eq:xiratio} only holds for sufficiently small $|u|$. We will shortly show that the left-hand side is holomorphic for $|u|<1$ by checking that $\Xi_M(u,0) \neq 0$; we will then show the right-hand side is also holomorphic on $|u|<1$ by the tail bound proved afterwards. The identity theorem extends \cref{eq:xiratio} to $|u|<1$, and then continuity shows it holds at $u=1$.
    
    We apply \Cref{lem:kp} to the denominator $\Xi_M(u, 0)$ to ensure it is nonzero. For $|u| \leq 1$ and $\gamma \in \cP_{\leq M}(I)$, \Cref{lem:polymer-samp} gives
    \begin{align}
\sum_{\substack{\eta \in \cP_{\leq M}(I)\\ \eta \not\sim \gamma}} \abs{u w_I(\eta)} e^{|\eta|} &\leq \sum_{i=1}^{|\gamma|} \sum_{m=1}^M e^m \sum_{\substack{\eta \in \cP_{\leq M}(I)\\ |\eta|=m\\
        \supp(\eta) \cap \supp(P_{a_i}) \neq \emptyset}} |w_I(\eta)| \leq |\gamma|\sum_{m\geq1}(4e|\beta|s\sqrt k)^m.
    \end{align}
    If $4e|\beta|s\sqrt k \leq 1/2$ then we can sum the geometric series to obtain
    \begin{align}
        \sum_{\substack{\eta \in \cP_{\leq M}(I)\\ \eta \not\sim \gamma}} \abs{u w_I(\eta)} e^{|\eta|} &\leq |\gamma|.
    \end{align}
    Applying \Cref{lem:kp} with $a_\gamma = |\gamma|$, we have that the expansion
    \begin{align}\label{eq:logzin}
    \log \Xi_M(u,0) = \log \sum_{\Gamma\subseteq \cP_{\leq M}(I) \text{ compatible}} \prod_{\gamma \in \Gamma} uw_I(\gamma) = \sum_{k \geq 1} \frac{u^k}{k!} \sum_{\gamma_1,\dots,\gamma_k\in\cP_{\leq M}(I)} \varphi(G(\gamma_1,\dots,\gamma_k)) \prod_{i=1}^k w_I(\gamma_i)
\end{align}
    holds for all $|u| \leq 1$; in particular, $\Xi_M(u,0) \neq 0$ on this unit disk.
    We will show tail bounds to obtain by \Cref{lem:polymer-rep} that
    \begin{align}
        \Xi_M(1, 0) \to \tr(e^{-\beta H}) \neq 0, \quad \frac{1}{\Xi_M(1, 0)} \frac{\partial}{\partial t}\Bigg|_{t=0} \Xi_M(1,t) \to \langle O \rangle_\beta,
    \end{align}
    where the zero-freeness follows from \cref{eq:logzin}.

    It now remains to show the claimed tail bounds. We start by bounding the $k$th term of \cref{eq:xiratio}. By \Cref{fac:graph}, we have $|\varphi(G)| \leq \tau(G)$ and thus we may bound $\varphi$ by summing over spanning trees $T$ of the incompatibility graph; we can always assume $G$ is connected (and thus we can find a spanning tree) since $\varphi$ is zero otherwise. Fix a spanning tree $T$ of $G(\gamma_0,\dots,\gamma_k)$ rooted at $\gamma_0$. We bound the contributions of the activities in the $k$th term of \cref{eq:xiratio} using \Cref{lem:polymer-samp}: the $\gamma_0$ term contributes at most $B_O (4|\beta|s\sqrt{k_*})^{|\gamma_0|}$, and each polymer thereafter contributes $(4|\beta|s\sqrt{k_*})^{|\gamma_i|}$. To bound the $\varphi$, we use loose counting using the fixed tree $T$. For every child $i$, choose an element $a_j$ from the polymer $\gamma_{p_T(i)}$ for parent $p_T(i)$, such that $a_j$ intersects $\supp(\gamma_i)$. The number of choices for $a_j$ is at most $|\gamma_{p_T(i)}|$ if $p_T(i) \neq 0$ (since then the polymer is in $\cP_{\leq M}(I)$), and if $p_T(i) = 0$ there are $|\gamma_0|+1$ (since then the first index $b$ in the polymer has non-empty support). Hence, the $k$th term of \cref{eq:xiratio} is at most
    \begin{align}
        \frac{B_O (4|\beta|s\sqrt{k_*})^{|\gamma_0|+\cdots+|\gamma_k|}}{k!} \sum_T \prod_{i=1}^k \ell_{p_T(i)} \quad \text{for} \quad \ell_i = |\gamma_i| + 1\{i=0\}.
    \end{align}
    By \Cref{fac:cayley} and Stirling's bound $k! \geq (k/e)^k$, we have for $k \geq 1$ that
    \begin{align}\label{eq:exph}
    \frac{1}{k!}\sum_T \prod_{i=1}^k \ell_{p_T(i)} = \frac{\ell_0}{k!}(\ell_0 + \cdots + \ell_k)^{k-1} \leq \frac{\ell_0 e^k}{\ell_0 + \cdots + \ell_k} \lr{\frac{\ell_0 + \cdots + \ell_k}{k}}^k \leq \exp[1+|\gamma_0|+\cdots+|\gamma_k|],
\end{align}
    where we used $k \leq |\gamma_0| + \cdots + |\gamma_k|$. For fixed $|\gamma_0|+\cdots+|\gamma_k|=h$, the number of choices for $|\gamma_0| \geq 0$ and $|\gamma_1|, \dots, |\gamma_k| \geq 1$ is $\sum_{k=0}^h \binom{h}{k} = 2^h$; to see this, offset $|\gamma_0|$ by 1 and count the number of ways to sum positive numbers to $h+1$. Hence, the total contribution of length $h$ is $eB_O(8e|\beta|s\sqrt{k_*})^h$. Summing over all such terms gives remainder
    \begin{align}
        \sum_{h > L} eB_O(8e|\beta|s\sqrt{k_*})^h \leq eB_O\sum_{h > L} 16^{-h} \leq B_O8^{-L}
    \end{align}
    if $8e|\beta|s\sqrt{k_*} \leq 1/16$.

    For the partition function, we use a similar argument. For $\gamma_0,\dots,\gamma_k \in \cP(I)$, we use the identity
    \begin{align}\label{eq:1id}
    1 = \sum_{x \in [n]} \frac{1}{\abs{\bigcup_{i=0}^k \supp(\gamma_i)} \cdot \#\{i:x\in\supp(\gamma_i)\}} \sum_{j=0}^k 1\{x \in \supp(\gamma_j)\},
\end{align}
    where summands with no $i$ such that $x\in\supp(\gamma_i)$ are absent. Truncating
    \begin{align}\label{eq:zin}
    \sum_{k \geq 1} \frac{1}{k!} \sum_{\gamma_1,\dots,\gamma_k \in \cP(I)} \varphi(G(\gamma_1,\dots,\gamma_k)) \prod_{i=1}^k w_I(\gamma_i)
\end{align}
    at $|\gamma_1|+\cdots+|\gamma_k| \leq L$ gives using \cref{eq:1id}
    \begin{align}
        &\sum_{h=1}^L \sum_{\substack{r\geq1\\ m_1,\ldots,m_r\geq1\\
        m_1+\cdots+m_r=h}} \frac1{r!} \sum_{\substack{\eta_1,\ldots,\eta_r\in\cP(I)\\ |\eta_i|=m_i}} \varphi(G(\eta_1,\ldots,\eta_r)) \prod_{i=1}^r w_I(\eta_i) \notag\\
        &= \sum_{x=1}^n \sum_{h=1}^L \sum_{\substack{r\geq1\\ m_1,\ldots,m_r\geq1\\
        m_1+\cdots+m_r=h}} \frac1{r!} \sum_{\substack{\eta_1,\ldots,\eta_r\in\cP(I)\\ |\eta_i|=m_i}} \frac{\varphi(G(\eta_1,\ldots,\eta_r))} {\left|\bigcup_{i=1}^r\supp(\eta_i)\right| \cdot \#\{i:x\in\supp(\eta_i)\}} \lr{\sum_{j=1}^r\mathbf 1\{x\in\supp(\eta_j)\}} \prod_{i=1}^r w_I(\eta_i)\\
        &=  \sum_{x=1}^n \sum_{h=1}^L \sum_{k \geq 0} \sum_{\substack{m_0, \dots, m_k \geq 1\\
        m_0 + \cdots + m_k = h}} \frac{1}{k!} \sum_{\substack{\gamma_0,\dots,\gamma_k\in\cP(I)\\
        |\gamma_i| = m_i, \, x\in\supp(\gamma_0)}} \frac{\varphi(G(\gamma_0,\dots,\gamma_k))} {\abs{\bigcup_{i=0}^k \supp(\gamma_i)} \cdot \#\{i:x\in\supp(\gamma_i)\}} \prod_{i=0}^k w_I(\gamma_i),
    \end{align}
    where in the last line we evaluated the sum over $j$ to get a factor of $r$, and we relabeled $r=k+1$ and $\gamma_i = \eta_{i+1}$. We now bound the absolute value of the final expression. Compared to our previous argument, the main novelty is the factor
    \begin{align}
        \lr{ \abs{\bigcup_{i=0}^k\supp(\gamma_i)} \cdot \#\{i:x\in\supp(\gamma_i)\} }^{-1} \leq 1.
    \end{align}
    The remaining argument is similar to before, summing over spanning trees $T$. The only difference now is that the root polymer is required to touch $\{x\}$, and we sum over $x \in [n]$. For a fixed $x$, the contribution of $|\gamma_0| + \cdots + |\gamma_k| = h$ is at most
    \begin{align}
        \frac{(4|\beta|s\sqrt k)^h}{k!} \sum_T \prod_{i=1}^k |\gamma_{p_T(i)}| \leq (4e|\beta|s\sqrt k)^h,
    \end{align}
    where now we took $\ell_i = |\gamma_i|$ even at $i=0$. The number of decompositions $m_0 + \cdots + m_k = h$ is $2^{h-1}$, so for $8e|\beta|s\sqrt k \leq 1/16$, the terms with $|\gamma_0|+\cdots+|\gamma_k| > L$ contribute at most
    \begin{align}
        \sum_{h > L} n 2^{h-1} (4e|\beta| s \sqrt k)^h \leq n \sum_{h > L} 16^{-h} \leq n8^{-L}
    \end{align}
    to \cref{eq:zin}. Since this bound is uniform in $M$, dominated convergence lets us pass from $\cP_{\leq M}(I)$ to $\cP(I)$ and obtain $\tr(e^{-\beta H})$ as claimed in \cref{eq:logzexpansion}.
\end{proof}

\subsection{Decay of correlations}
\label{subsec:correlation-decay}

For geometrically local Hamiltonians in any dimension, we show here that our cluster expansion of \Cref{lem:clustertrunc} leads to an exponential decay of correlation between any two observables, regardless of how far apart they are. This improves upon the constraint of~\cite{harrow2020classical} that the observables must be $\Omega(\log n)$ apart (except in 1D or if the Hamiltonian is commuting), resolving open question 1(a) of the paper.

We use $\Lambda \subset \mathbb Z^D$ to denote the lattice of dimension $D$, and we use $d$ to denote Euclidean distance on $\Lambda$. For regions $U, V \subseteq \Lambda$, we write $d(U, V)$ as shorthand for $\min_{x\in U, y\in V}d(x,y)$, and similarly for $d(x,U)$. A Hamiltonian $H=\sum_{a\in\cA} c_a P_a$ is geometrically local with interaction range $R$ if for every $a\in \cA$,
\begin{align}
    \max_{x,y \in \supp(P_a)} d(x,y) \leq R.
\end{align}
Our decay of correlations result is based on the following corollary of \Cref{lem:clustertrunc}.

\begin{corollary}[Deleting interactions far from an observable]\label{cor:distant-obs}
    Let $H=\sum_{a\in\cA} c_a P_a$ be a geometrically local $(s,k)$-long-range Pauli Hamiltonian of interaction range at most $R$. Define $k'$-local observable $O = \sum_{b\in\cA_O} b_b Q_b$ for coefficients $b_b$ and Pauli strings $Q_b$. Define for some $\cD \subseteq \cA$
    \begin{align}
        H' = H - \sum_{a \in \cD} c_a P_a.
    \end{align}
    If for some $L > 0$, every $a \in \cD$ satisfies $d(\supp(O), \supp(P_a)) \geq LR$ then
    \begin{align}
        \frac{1}{2}\abs{\Tr(O\rho_\beta(H)) - \Tr(O\rho_\beta(H'))} \leq 8^{-L} \sum_{b\in\cA_O} |b_b| \quad \text{ for all}\quad 0 \leq \beta \leq \frac{1}{128e\,s\sqrt{\max\{k,k'\}}}.
    \end{align}
\end{corollary}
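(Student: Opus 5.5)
Both $H$ and $H'$ are $(s,k)$-long-range Pauli Hamiltonians: deleting terms can only lower the on-site strengths and does not change locality. So \Cref{lem:clustertrunc} applies to each with the same $k_*=\max\{k,k'\}$, and the hypothesis $\beta\le 1/(128e\,s\sqrt{k_*})$ is exactly the regime of that lemma. (At the boundary we use the non-strict geometric-series conditions $8e|\beta|s\sqrt{k_*}\le 1/16$ that the proof actually uses.) I would therefore write $\Tr(O\rho_\beta(H))=\langle O\rangle_\beta$ via the expansion \cref{eq:oexpansion}, and likewise for $H'$. The two series are indexed by clusters $(\gamma_0;\gamma_1,\dots,\gamma_k)$ with $\gamma_0\in\cP(O)$ and $\gamma_i\in\cP(I)$. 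For $H'$ the polymers are exactly the polymers of $H$ that use no label from $\cD$, and the activities $w_O,w_I$ and the graph factor $\varphi$ depend only on the polymer data, not on which Hamiltonian it came from. Hence every cluster avoiding $\cD$ contributes identically to both series, and the difference consists only of clusters containing at least one label $a\in\cD$.

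Next I would show that any cluster with nonzero weight that uses a label in $\cD$ has total size $h=|\gamma_0|+\cdots+|\gamma_k|\ge L$. The weight is nonzero only if $\varphi(G)\neq0$, which by \Cref{fac:graph} requires the incompatibility graph to be connected. Within each polymer the supports form a connected overlap chain (\Cref{def:polymer}), and $\gamma_0$ contains some $Q_b$, so $\supp(Q_b)\subseteq\supp(O)$. So there is a chain of Hamiltonian terms $P_{a_1},\dots,P_{a_j}$ drawn from the cluster, with consecutive supports overlapping, starting at a term meeting $\supp(Q_b)$ and ending at $P_a$ with $a\in\cD$. Each term has diameter at most $R$, so the chain reaches distance at most $jR$ from $\supp(O)$. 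Since $d(\supp(O),\supp(P_a))\ge LR$, we get $j\ge L$ up to the off-by-one from where the first term attaches. The total number of Hamiltonian labels in the cluster is $h\ge j$. I would be careful here that distinct polymers in the chain share sites rather than labels; this is harmless, because the counted objects are occurrences $a_i$ across polymers, so $h$ still bounds the chain length.

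Then the tail bound in \Cref{lem:clustertrunc} says that terms with total size $h>L'$ contribute at most $B_O8^{-L'}$ in absolute value to each series. Taking $L'=L-1$ (or $L$, depending on the off-by-one above), and observing that all differing clusters lie in this tail for both $H$ and $H'$, gives
\begin{align}
\abs{\Tr(O\rho_\beta(H))-\Tr(O\rho_\beta(H'))}\le 2B_O8^{-L'}.
\end{align}
Dividing by $2$ yields the claimed $8^{-L}\sum_b|b_b|$, possibly after absorbing a constant or tightening the off-by-one.

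The main obstacle is getting the off-by-one exactly right so that the tail is $>L$ rather than $\ge L$. I would first try to exploit that $\supp(P_a)$ is at distance at least $LR$ while each step advances by less than $R$ plus the overlap requirement, which forces $h\ge L+1$ when $L$ is an integer. If that fails, I would use the slack in the tail sum $eB_O\sum_{h>L}16^{-h}$, which is much smaller than $B_O8^{-L}$ and allows extending the tail to $h\ge L$ while keeping the bound $B_O8^{-L}/2$ per series.
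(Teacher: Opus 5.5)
Your proposal is correct and is essentially the paper's argument. The paper truncates both expansions \cref{eq:oexpansion} at total size $L$, shows $\Sigma(H)=\Sigma(H')$ because clusters of total size at most $L$ cannot reach any term of $\cD$, and then applies the tail bound of \Cref{lem:clustertrunc} to each series; you use the same geometric fact and the same tail estimate. Your direct identification of the difference with the clusters that contain a label of $\cD$ needs only one tail bound, since those clusters occur only in the series for $H$.

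Your first fix for the off-by-one is the right one. The first term of the chain meets $\supp(Q_b)\subseteq\supp(O)$, so the $j$-th term's support lies within distance $(j-1)R$ of $\supp(O)$. Reaching $a\in\cD$ therefore forces $h\ge j\ge L+1>L$, and you do not need the slack argument.
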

\begin{proof}
    For Hamiltonian $K$, let $\Sigma(K)$ denote the truncation of the cluster expansion \cref{eq:oexpansion} to polymers satisfying $|\gamma_0| + \cdots + |\gamma_j| \leq L$. Then by \Cref{lem:clustertrunc},
    \begin{align}
        \abs{\Tr(O\rho_\beta(H)) - \Sigma(H)} \leq 8^{-L} \sum_{b\in\cA_O} |b_b| \quad \text{and} \quad \abs{\Tr(O\rho_\beta(H')) - \Sigma(H')} \leq 8^{-L} \sum_{b\in\cA_O} |b_b| 
    \end{align}
    for all $0 \leq \beta \leq \frac{1}{128e\,s\sqrt{\max\{k,k'\}}}$. It thus suffices to prove that $\Sigma(H) = \Sigma(H')$. Since each polymer in the representation of \Cref{lem:clustertrunc} is a connected component---i.e., can be written such that the supports of $Q_b, P_1, P_2, \dots$ are consecutively intersecting---and since each Hamiltonian term's support has diameter at most $R$, any choice of polymers $\gamma_0,\dots,\gamma_j$ has distance at most $(|\gamma_0| + \cdots + |\gamma_j|)R$ from $\supp(O)$. Since we assumed that $|\gamma_0| + \cdots + |\gamma_j| \leq L$, and since $H, H'$ agree on all terms within $LR$ of $\supp(O)$, we conclude that $\Sigma(H) = \Sigma(H')$.
\end{proof}

Since \Cref{cor:distant-obs} allows us to truncate the Hamiltonian to the local region around an observable, we can now prove correlation decay (\Cref{thm:decay}) fairly straightforwardly.

\begin{proof}[Proof of \Cref{thm:decay}]
    Writing $A = \sum_{b \in \cA_A} a_b Q_b$ for Pauli operators $Q_b$, we have that
    \begin{align}\label{eq:opbound}
        \sum_{b\in\cA_A} |a_b| \leq \sqrt{4^{|X|} \sum_{b\in\cA_A} |a_b|^2} = \sqrt{2^{|X|} \Tr_X(A^2)} \leq 2^{|X|} \norm{A}
    \end{align}
    where the first inequality bounded the number of Paulis by $4^{|X|}$, the equality used the orthogonality of Pauli strings, and the final inequality bounds Frobenius norm by $2^{|X|/2}$ times the operator norm. It thus suffices to show that
    \begin{align}
        \abs{\langle AB \rangle_\beta - \langle A \rangle_\beta \langle B \rangle_\beta} \leq 48 \lr{\sum_{b\in\cA_A} |a_b|}\lr{\sum_{b\in\cA_B} |b_b|} \exp[-\frac{\log 8}{6R}d(X,Y)]
    \end{align}
    for $B = \sum_{b \in \cA_B} b_b Q'_b$. Note that this immediately holds if $d(X,Y) \leq 6R$ since $\abs{\langle AB \rangle_\beta - \langle A \rangle_\beta \langle B \rangle_\beta} \leq 2 \lr{\sum_{b\in\cA_A} |a_b|}\lr{\sum_{b\in\cA_B} |b_b|}$. Hence, we assume $d(X,Y) > 6R$.
    
    Define disjoint regions
    \begin{align}
        \Lambda_X = \{z\in\Lambda:d(z,X) < d(X,Y)/3\}, \quad \Lambda_Y = \{z\in\Lambda:d(z,Y) < d(X,Y)/3\}, \quad \Lambda_0 = \Lambda \setminus (\Lambda_X \cup \Lambda_Y)
    \end{align}
    and corresponding Hamiltonians $H_X, H_Y, H_0$ containing the terms of $H$ that are entirely supported in the corresponding region, e.g., $H_X = \sum_{a \in \cA : \supp(P_a) \subseteq \Lambda_X} c_a P_a$. Since there are no interactions between these regions, the Gibbs state of $H' = H_X + H_Y + H_0$ factorizes over them and trivially satisfies
    \begin{align}
        \Tr(AB \rho_\beta(H')) = \Tr(A \rho_\beta(H')) \Tr(B \rho_\beta(H')).
    \end{align}
    Since each term $a$ in the set $\cD \subset \cA$ of deleted terms in $H-H'$ satisfies $d(\supp(P_a),X\cup Y)\geq d(X,Y)/3-R>d(X,Y)/6$, we can apply \Cref{cor:distant-obs} with $L=d(X,Y)/(6R)$ to obtain
    \begin{align}
        \frac{1}{2}\abs{\Tr(A\rho_\beta(H)) - \Tr(A\rho_\beta(H'))} &\leq 8^{-d(X,Y)/6R} \sum_{b \in \cA_A} |a_b|
    \end{align}
    and similarly for $B$ and $AB$. Finally, since
    \begin{align}
        AB = \sum_{b \in \cA_A} \sum_{b' \in \cA_B} a_b b_{b'} Q_b Q'_{b'}
    \end{align}
    has coefficient mass at most $\lr{\sum_{b \in \cA_A} |a_b|}\lr{\sum_{b \in \cA_B} |b_b|}$, the triangle inequality gives
    \begin{align}
        \abs{\langle AB \rangle_\beta - \langle A \rangle_\beta \langle B \rangle_\beta} &\leq \abs{\Tr(AB \rho_\beta(H)) - \Tr(AB\rho_\beta(H'))}\notag\\
        &\quad + \abs{\Tr(A\rho_\beta(H))\Tr(B\rho_\beta(H)) - \Tr(A\rho_\beta(H'))\Tr(B\rho_\beta(H'))}\\
        &\leq 6\cdot 8^{-d(X,Y)/6R}\lr{\sum_{b \in \cA_A} |a_b|}\lr{\sum_{b \in \cA_B} |b_b|}\\
        &\leq 48 \cdot 2^{|X|+|Y|} \norm{A} \norm{B} \exp[-\frac{\log 8}{6R}d(X,Y)].
    \end{align}
    where in the second line we bounded thermal expectations by 1-norm, e.g.~$\langle A \rangle_\beta \leq \sum_{b \in \cA_A} |a_b|$, and in the third line we used \cref{eq:opbound}.
\end{proof}

\subsection{Polynomial-time sampler}
\label{subsec:alg}
We consider the truncations of the cluster expansions \cref{eq:oexpansion} and \cref{eq:logzexpansion} for $\langle O \rangle_\beta$ and $\log \tr e^{-\beta H}$ to order $\sum_j |\gamma_j| \leq L$; denote these truncations by $\Sigma_O$ and $\Sigma_H$ respectively. We will now create random variables $X_O$ and $X_H$ such that $\EE\,X = \Sigma$ in each case, and $|X|$ is bounded to ensure concentration. The proofs in both cases are very similar, so we will drop the subscript where possible.

The procedure to generate $X$ is as follows. For some parameter $\rho$, sample $h \in [L]$ with probability
\begin{align}
    p(h) = \frac{(1-\rho)\rho^{h-1}}{1-\rho^L}, \quad \rho_O = 128e|\beta|s\sqrt{k_*}, \quad \rho_H = 128e|\beta|s\sqrt{k}.
\end{align}
Choose $k$ and $m_0,\dots,m_k$ uniformly from all choices that satisfy
$h=m_0+\cdots+m_k$. In the observable case, $m_0\geq0$ and $m_i\geq1$ for
$i\geq1$; in the partition function case, all $m_i\geq1$. Also set
$\ell_i=m_i$, except in the observable case put $\ell_0=m_0+1$.

Passing $m_0$ to \Cref{lem:polymer-samp}, sample the root polymer; for the partition function case, also pass to the sampler $W = \{x\}$ for uniformly random $x \in [n]$. If the sampling procedure fails, output 0. If $k \geq 1$, sample a rooted tree $T$ on $\{0,\dots,k\}$ with probability proportional to $\prod_{i=1}^k \ell_{p_T(i)}$, where $p_T(i)$ denotes the parent of child $i$. Traverse $T$ away from the root: for each edge from a parent $\gamma_{p_T(i)}$ to a child, choose one term $a_j$ in $\gamma_{p_T(i)}$ uniformly at random. Pass $W=\supp(P_{a_j})$ and $m_i$ to the sampler of \Cref{lem:polymer-samp} to obtain child polymer $\gamma_i$; if it fails, output $0$.

Compute the incompatibility graph $G(\gamma_0,\dots,\gamma_k)$ and its quantities $\varphi$ and $\tau$ by \Cref{fac:graph}. For the observable case, output
\begin{align}
    X_O = \frac{2^h}{p_O(h)} \cdot \frac{\ell_0(\ell_0 + \cdots + \ell_k)^{k-1}}{k!} \cdot \frac{\varphi(G(\gamma_0,\dots,\gamma_k))}{\tau(G(\gamma_0,\dots,\gamma_k))} \lr{\prod_{i=1}^k \#\{a_j \in \gamma_{p_T(i)} \,:\, \supp(P_{a_j}) \cap \supp(\gamma_i)\}}^{-1} \prod_{i=0}^k Y_i.
\end{align}
(We abuse notation so $P_{a_j}$ may include $Q_b$, since in the observable case $\supp(Q_b)$ is not empty.) For the partition function, output
\begin{align}
    X_H &= \frac{n2^{h-1}}{p_H(h)} \cdot \frac{\ell_0(\ell_0 + \cdots + \ell_k)^{k-1}}{k!} \cdot \frac{\varphi(G(\gamma_0,\dots,\gamma_k))}{\tau(G(\gamma_0,\dots,\gamma_k))} \lr{\prod_{i=1}^k \#\{a_j \in \gamma_{p_T(i)} \,:\, \supp(P_{a_j}) \cap \supp(\gamma_i)\}}^{-1} \notag\\
    &\quad \times \lr{\abs{\bigcup_{i=0}^k \supp(\gamma_i)} \#\{0\leq i \leq k\,:\, x\in\supp(\gamma_i)\}}^{-1} \prod_{i=0}^k Y_i.
\end{align}
We now prove that this procedure creates an unbiased estimator of $\Sigma$.

\begin{lemma}[Random estimator]\label{lem:expectestimator}
    Denote the truncations to order $\sum_j |\gamma_j| \leq L$ of the cluster expansions \cref{eq:oexpansion} and \cref{eq:logzexpansion} for $\langle O \rangle_\beta$ and $\log \tr e^{-\beta H}$ by $\Sigma_O$ and $\Sigma_H$, respectively. Assume $\rho_O, \rho_H \leq 1/4$. The above procedure yields random variables $X_O, X_H$ satisfying
    \begin{align}
        \EE\, X_O = \Sigma_O, \quad \EE\, X_H = \Sigma_H, \quad |X_O| \leq B_O, \quad |X_H| \leq n.
    \end{align}
\end{lemma}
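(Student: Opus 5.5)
The plan is to verify unbiasedness by writing $\EE\,X$ as a sum over all sampling outcomes and checking that each tuple $(\gamma_0,\dots,\gamma_k)$ with total length $h\le L$ is counted with total weight $\frac{1}{k!}\varphi(G)\prod_i w(\gamma_i)$, times the extra factor in the partition-function case. The magnitude bound then follows from the worst-case bounds on the $Y_i$ in \Cref{lem:polymer-samp}, combined with $|\varphi|\le\tau$ from \Cref{fac:graph}.

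\textbf{Unbiasedness (observable case).} Fix $h$ and a composition $(m_0,\dots,m_k)$. Its probability is $p_O(h)\,2^{-h}$, since there are exactly $2^h$ compositions, as counted in the proof of \Cref{lem:clustertrunc}. The prefactor $2^h/p_O(h)$ cancels this. The tree $T$ is drawn with probability $\prod_i \ell_{p_T(i)}/\bigl(\ell_0(\ell_0+\cdots+\ell_k)^{k-1}\bigr)$ by \Cref{fac:cayley}, and the second prefactor cancels that normalization, leaving $\frac{1}{k!}\prod_i\ell_{p_T(i)}$. For each child $i$, the parent term $a_j$ is chosen with probability $1/\ell_{p_T(i)}$; here $\ell_0=m_0+1$ accounts for $Q_b$ as an extra element. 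This cancels $\prod_i\ell_{p_T(i)}$. Conditional on $a_j$, \Cref{lem:polymer-samp} with $W=\supp(P_{a_j})$ gives $\EE[Y_i f(\gamma_i)]=\sum_{\gamma_i:\supp(\gamma_i)\cap\supp(P_{a_j})\ne\emptyset}w_I(\gamma_i)f(\gamma_i)$. Summing over the choice of $a_j$ therefore counts each fixed child $\gamma_i$ exactly $\#\{a_j\in\gamma_{p_T(i)}:\supp(P_{a_j})\cap\supp(\gamma_i)\ne\emptyset\}$ times, which the inverse product in $X_O$ removes. The result is a sum over trees $T$ and tuples, where $T$ ranges over trees whose every edge is an incompatibility edge. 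That is, $T$ ranges over the spanning trees of $G(\gamma_0,\dots,\gamma_k)$, each with weight $\frac{1}{k!}\frac{\varphi(G)}{\tau(G)}\prod w$. Summing over $T$ cancels $\tau$, and we recover exactly the $k$-th term of \cref{eq:oexpansion} restricted to $h\le L$. The conditioning must be done in tree order, since the children's samplers depend on their parents' realized polymers. I would handle this by iterated conditional expectation along the traversal, using that $f$ may be an arbitrary function of the previously sampled polymers.

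\textbf{Partition-function case.} Compositions with all $m_i\ge1$ number $2^{h-1}$, and $x$ is uniform with probability $1/n$. The root sampler with $W=\{x\}$ produces polymers with $x\in\supp(\gamma_0)$. The same cancellations then reproduce the last displayed expression in the proof of \Cref{lem:clustertrunc}, including the factor $\bigl(|\bigcup\supp(\gamma_i)|\cdot\#\{i:x\in\supp(\gamma_i)\}\bigr)^{-1}$. That expression was shown there to equal the truncation of \cref{eq:logzexpansion}.

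\textbf{Magnitude bound.} This is the main bookkeeping step. We have $|\varphi/\tau|\le1$, and each inverse-count factor is at most $1$ because the count is at least $1$ whenever the outcome is nonzero. By \cref{eq:exph}, $\frac{\ell_0(\sum\ell)^{k-1}}{k!}\le e^{1+h}$. For the $Y_i$, the root satisfies $|Y_0|\le B_O(4|\beta|s\sqrt{k_*})^{m_0}$. For a child, $W=\supp(P_{a_j})$ has size at most $k_*$, so $|Y_i|\le(4|\beta|s\sqrt{k_*})^{m_i}$. In the identity case $|W|\le k$, and $|\{x\}|=1$. Hence
\[|X_O|\le \frac{2^h e^{1+h}B_O(4|\beta|s\sqrt{k_*})^h}{p_O(h)}.\]
Writing $8e\cdot4|\beta|s\sqrt{k_*}=\rho_O/4$ gives a numerator of $eB_O(\rho_O/4)^h$. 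Since $p_O(h)\ge(1-\rho_O)\rho_O^{h-1}$, the ratio is at most
\[\frac{eB_O\,\rho_O\,4^{-h}}{1-\rho_O}\le \frac{e}{3}\cdot\frac{1}{4}\,B_O\le B_O\]
for $\rho_O\le1/4$. The constants in $\rho=128e|\beta|s\sqrt{k}$ were chosen for exactly this. The case of $X_H$ is identical, with the extra factors $n$ and $2^{h-1}$, and gives $|X_H|\le n$. I expect the main care point to be verifying that the constant $128$ absorbs $2\cdot e\cdot4$ together with the geometric factor, rather than any conceptual step.
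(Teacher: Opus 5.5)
Your proposal is correct and follows essentially the same argument as the paper. You cancel the sampling probabilities of $h$, the composition, $x$, the tree (via \Cref{fac:cayley}) and the uniform parent-term choices against the prefactors, sum over spanning trees to turn $\varphi/\tau$ into $\varphi$, and bound $|X|$ using \Cref{lem:polymer-samp}, $|\varphi|\le\tau$, \cref{eq:exph}, and $p(h)\ge(1-\rho)\rho^{h-1}$.

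One harmless arithmetic slip: $2^h e^{h}(4|\beta|s\sqrt{k_*})^h=(8e|\beta|s\sqrt{k_*})^h=(\rho_O/16)^h$, not $(\rho_O/4)^h$. This only overestimates your numerator by a factor $4^h$, so the conclusions $|X_O|\le B_O$ and $|X_H|\le n$ still hold.
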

\begin{proof}
    We drop subscripts where the proofs coincide. We start by evaluating the expectation of the $Y$ variables, holding everything else fixed: for a test function $f$,
    \begin{align}
        \E{f(\gamma_0,\ldots,\gamma_k) \prod_{i=0}^kY_i} = \sum_{\gamma_0,\ldots,\gamma_k:|\gamma_i|=m_i} w_{\rm root}(\gamma_0) \prod_{i=1}^k w_I(\gamma_i) f(\gamma_0,\ldots,\gamma_k),
    \end{align}
    where $w_{\rm root}$ is $w_O$ in the observable case and $w_I$ in the partition function case. The polymers are constrained such that each child polymer $\gamma_i$ touches the support of the chosen Hamiltonian term (or observable $Q_b$) in $\gamma_{p_T(i)}$; moreover, in the partition function case, the sum is also constrained such that $x \in \supp(\gamma_0)$.

    We proceed to evaluate the expectation over the remaining random choices. The factors $2^h/p_O(h)$ and $n2^{h-1}/p_H(h)$ cancel the probabilities of choosing $h$, a decomposition $m_0 + \cdots + m_k = h$, and (for the partition function case) $x \in [n]$. For a fixed $T \subseteq G(\gamma_0,\dots,\gamma_k)$, the factor $\prod_{i=1}^k \#\{a_j \in \gamma_{p_T(i)} \,:\, \supp(P_{a_j}) \cap \supp(\gamma_i)\}$ counts the number of valid choices of $a_j$ in the traversing procedure; hence, choosing uniformly random $a_j$ cancels this. \Cref{fac:cayley} implies that the probability of $T$ is $(\prod_{i=1}^k \ell_{p_T(i)})/\ell_0(\ell_0 + \cdots + \ell_k)^{k-1}$, which cancels $\ell_0(\ell_0 + \cdots + \ell_k)^{k-1}$. Finally,
    \begin{align}
        \sum_{T\subseteq G(\gamma_0,\ldots,\gamma_k)\ {\rm spanning}} \frac{\varphi(G(\gamma_0,\ldots,\gamma_k))} {\tau(G(\gamma_0,\ldots,\gamma_k))} = \varphi(G(\gamma_0,\ldots,\gamma_k)).
    \end{align}
    Hence, the observable estimator satisfies
    \begin{align}
        \EE\,X_O = \sum_{\substack{k\geq0,\ m_0\geq0,\ m_i\geq1\\
        m_0+\cdots+m_k\leq L}} \frac1{k!} \sum_{\gamma_0\in\cP(O)} \sum_{\gamma_1,\ldots,\gamma_k\in\cP(I)} \mathbf 1_{\{|\gamma_i|=m_i\}} \varphi(G(\gamma_0,\ldots,\gamma_k)) w_O(\gamma_0)\prod_{i=1}^kw_I(\gamma_i) = \Sigma_O
    \end{align}
    and the partition function estimator satisfies
    \begin{align}
        \EE\,X_H = \sum_{x=1}^n \sum_{\substack{k\geq0,\ m_0,\ldots,m_k\geq1\\
        m_0+\cdots+m_k\leq L}} \frac1{k!} \sum_{\substack{\gamma_0,\ldots,\gamma_k\in\cP(I)\\
        |\gamma_i|=m_i,\ x\in\supp(\gamma_0)}} \frac{\varphi(G(\gamma_0,\ldots,\gamma_k))} {\left|\bigcup_{i=0}^k\supp(\gamma_i)\right| \#\{0\leq i\leq k:x\in\supp(\gamma_i)\}} \prod_{i=0}^kw_I(\gamma_i) = \Sigma_H,
    \end{align}
    where the last equality follows from \cref{eq:1id}.

    For boundedness, \Cref{lem:polymer-samp} gives in the observable and partition function cases, respectively, the bounds
    \begin{align}
        \prod_{i=0}^k |Y_i| \leq B_O (4|\beta|s\sqrt{k_*})^h, \qquad \prod_{i=0}^k |Y_i| \leq (4|\beta|s\sqrt{k})^h.
    \end{align}
    We use bounds $|\varphi(G)|/\tau(G) \leq 1$ (\Cref{fac:graph}) and
    \begin{align}
        \prod_{i=1}^k \#\{a_j \in \gamma_{p_T(i)} \,:\, \supp(P_{a_j}) \cap \supp(\gamma_i)\} \geq 1, \quad \abs{\bigcup_{i=0}^k \supp(\gamma_i)} \#\{0\leq i \leq k\,:\, x\in\supp(\gamma_i)\} \geq 1.
    \end{align}
    As seen in \cref{eq:exph}, $\ell_0(\ell_0 + \cdots + \ell_k)^{k-1}/k!$ is upper-bounded by $e^{h+1}$ and $e^h$ for the observable and partition function cases. Since $\rho_O, \rho_H \leq 1/4$, we have
    \begin{align}
        p(h)=\frac{(1-\rho)\rho^{h-1}}{1-\rho^L} \geq (1-\rho)\rho^{h-1} \geq \frac12\rho^{h-1}.
    \end{align}
    Plugging in these bounds and the explicit values of $\rho$ yields
    \begin{align}
        |X_O| \leq \frac{2^h}{p_O(h)}\,e^{h+1}\, B_O(4|\beta|s\sqrt{k_*})^h \leq 2eB_O\rho_O \lr{\frac{8e|\beta|s\sqrt{k_*}}{\rho_O}}^h = 2eB_O\rho_O\,16^{-h} \leq B_O,
    \end{align}
    where we used $h\geq1$ and $\rho_O\leq1/4$. Similarly,
    \begin{align}
        |X_H| \leq \frac{n2^{h-1}}{p_H(h)}\,e^h\,(4|\beta|s\sqrt{k})^h  \leq n\rho_H \lr{\frac{8e|\beta|s\sqrt{k}}{\rho_H}}^h = n\rho_H\,16^{-h} \leq n.
    \end{align}
\end{proof}

\begin{proof}[Proof of \Cref{thm:expect}]
    Set the factor $A$ to be either $B_O$ or $n$ for the observable or the partition function case, and set $\Sigma^*$ to be either $\langle O \rangle_\beta$ or $\log \tr e^{-\beta H}$. Set the truncation level $L$ and number of samples $N$ to be
    \begin{align}
        L = \left\lceil\log_8 \frac{16A}{\epsilon}\right\rceil, \quad N = \left\lceil\frac{32 A^2}{\epsilon^2} \log \frac{2}{\delta}\right\rceil.
    \end{align}
    Since $\Sigma^*$ is real, we bound the truncation error in \Cref{lem:clustertrunc} by
    \begin{align}
        \abs{\Sigma^* - \Re \Sigma} \leq \abs{\Sigma^* - \Sigma} \leq A 8^{-L} \leq \frac{\epsilon}{2}.
    \end{align}
    \Cref{lem:expectestimator} prepares an estimator with $N$ samples
    \begin{align}
        \wh \Sigma = \Re{\frac{1}{N} \sum_{r=1}^N X^{(r)}}.
    \end{align}
    By \Cref{lem:expectestimator} and Hoeffding's inequality, we bound the probability of failure by $\delta$ via
    \begin{align}
        \pr{\abs{\wh \Sigma - \Re \Sigma} > \frac{\epsilon}{2}} \leq 2\exp[-\frac{N\epsilon^2}{8A^2}] \leq \delta.
    \end{align}
    The two errors of $\epsilon/2$ add to $\epsilon$; it only remains to bound the runtime. We preprocess the lists $\cA(x)$ and $S(x)$ in time $O(|\cA|k + |\cA_O|k_O)$. A single draw of the estimator uses polymers of total length at most $L$ and computes one coefficient $\varphi$, whose leading cost is $O(3^L)$ by \Cref{fac:graph}; the remaining operations are polynomial in $L$ and in the support sizes. Repeating everything (except the preprocessing) $N$ times yields the final claim.
\end{proof}

\section{Classical sampling of separable Gibbs states}
\label{sec:prep}
In this section, we prove the algorithmic statement in \Cref{thm:samp} that describes a polynomial-time classical algorithm to prepare quantum Gibbs states of long-range Pauli Hamiltonians. This algorithm will only work at separable temperatures. \Cref{thm:samp} also states that at any asymptotically lower temperature, there are Gibbs states that are a constant distance away from any separable state; this result is proven in \Cref{thm:sep-upper} (\Cref{sec:upper}).

In \Cref{sec:expect} we showed zero-freeness of the partition function and of \emph{local} thermal expectations. To prepare a mixture of product states close to the Gibbs state, we require zero-freeness statements that hold for \emph{global} observables; otherwise, e.g., measurement outcomes would not necessarily be accurately sampled. Unfortunately, \Cref{thm:expect} only gives global observables at inverse temperatures $\beta = o(1)$. We first show how to estimate global observables at $\beta \lesssim 1/sk$; in particular, our strategy will work at separable temperatures although the proof never explicitly invokes separability. We will then combine this with the algorithm from the separability result to ultimately prepare the Gibbs state as a mixture of product states.

For $Y \subseteq [n]$ and $y \in\{0,1\}^Y$, we estimate global observables by repeatedly estimating partition functions of the form $\tr_y(e^{-\beta H})$, where the $\tr_y$ operation is defined as
\begin{align}
    \tr_y(A) = 2^{-(n-|Y|)} \Tr_{[n]\setminus Y} \bra{y}A\ket{y}.
\end{align}
The proofs remain largely unchanged compared to our earlier proof for estimating the partition function, since $\tr_y(A)$ also factorizes over connected components. However, the temperature threshold will go like $1/sk$ instead of $1/s \sqrt k$, since the ``repair'' trick used in \Cref{lem:polymer-samp} no longer applies; we briefly elaborate on this. In \Cref{lem:polymer-samp}, we grew a polymer over $m$ steps by combining Hamiltonian terms with intersecting supports. A ``defective'' qubit acted on by a nonidentity Pauli had to be ``repaired'' by choosing a Hamiltonian term that acted as nonidentity on the same qubit; otherwise, it wouldn't contribute to the trace of the product of Hamiltonian terms. This constraint reduced the counting, since we could force the next term to act on the defective qubit instead of choosing any of $k$ qubits: after $m$ steps of growing the polymer, the number of choices grew as $k^{m/2}$ instead of $k^m$. In comparison, replacing $\tr$ by $\tr_y$ can cause defective qubits to contribute: for example, $\tr(Z) = 0$ but $\bra{y}Z\ket{y} = \pm 1$. Hence, we will use the looser $k^m$ counting that yields a zero-free radius going as $1/sk$.

Once we show zero-freeness of $\tr_y(e^{-\beta H})$, we obtain an algorithm that samples measurement outcomes in the computational basis by pinning qubits one at a time. For $y \in \{0,1\}^{j-1}$, we will construct a polynomial-time algorithm that samples a random variable (similarly to \Cref{lem:expectestimator}) that, in expectation, estimates the pinning odds of the $j$th bit,
\begin{align}\label{eq:rj}
    r_j(y) = \log \tr_{y1}(e^{-\beta H}) - \log \tr_{y0}(e^{-\beta H}).
\end{align}
By proceeding one qubit at a time and conditioning on the previous measurement outcomes in $y$, one can use the estimate of $r_j(y)$ to efficiently sample in the computational basis. To estimate the Gibbs expectation of a global Pauli operator $Q$, we conjugate both $Q$ and $H$ by a unitary $U$; this does not change the thermal expectation. By choosing $U$ to be a product of single-qubit Clifford gates such that $UQU^\dagger$ is diagonal in the $Z$ basis, we can estimate this quantity by sampling measurement outcomes in the computational basis. Hence, we have an efficient algorithm for estimating global Pauli expectations for all $\beta \lesssim 1/sk$.

To prepare the Gibbs state, similarly to \cite{bakshi2024high}, we use the algorithm implicit in the pinning procedure of \Cref{lem:sep-pinning} in the separability proof. After $T$ steps, this procedure outputs a succinct description of an unnormalized separable state $\sigma_T$ such that $\E{\sigma_T} = e^{-\beta H}$, where the expectation is over the randomness of the pinning procedure. The normalized Gibbs state can thus be written as
\begin{align}
    \frac{e^{-\beta H}}{\Tr(e^{-\beta H})} = \E{\frac{\Tr(\sigma_T)}{\Tr(e^{-\beta H})} \frac{\sigma_T}{\Tr(\sigma_T)}}.
\end{align}
We can compute the factor $\Tr(\sigma_T)/\Tr(e^{-\beta H})$ using the invariant property \cref{eq:sep-pinning-operator-invariant} of the pinning procedure that at step $t+1$,
\begin{align}
    \E{\sigma_{t+1} \mid \sigma_t} = \sigma_t,
\end{align}
implying the telescoping product
\begin{align}
    \frac{\Tr(\sigma_T)}{\Tr(e^{-\beta H})} = \prod_t \frac{\Tr(\sigma_{t+1})}{\Tr(\sigma_t)}.
\end{align}
Hence, we can enforce the normalization factor $\Tr(\sigma_T)/\Tr(e^{-\beta H})$ by using rejection sampling to bias the $(t+1)$th proposed pinning step by $\Tr(\sigma_{t+1})/\Tr(\sigma_t)$. The unnormalized state $\sigma_t$ is of the form
\begin{align}
    \sigma_t = e^{-\beta H_{S_t}/2} \prod_{j=1}^t (I+\lambda_jX_j) e^{-\beta H_{S_t}/2},
\end{align}
where $H_{S_t}$ only contains the Hamiltonian terms acting on qubits $S_t \subseteq [n]$. These ratios reduce to partition function ratios and global Pauli expectations, which are estimated using \Cref{thm:expect} and \Cref{thm:globalpauli}. Finally, each factor $I+\lambda_jX_j$ is sampled using the explicit separable decomposition in \Cref{lem:sep-pauli-factor}, giving a pure product stabilizer state.

\subsection{Estimating global Pauli expectations}
The main result of this subsection is the following theorem.

\begin{theorem}[Global Pauli expectations]\label{thm:globalpauli}
    For an $(s,k)$-long-range Pauli Hamiltonian $H$ and Hermitian Pauli string $Q$, if $0 \leq \beta \leq 1/(32e\,sk)$, then given $\epsilon,\delta\in(0,1)$ one can output a random number $\wh q$ satisfying
    \begin{align}
        \pr{\abs{\wh q - \frac{\Tr(Q e^{-\beta H})}{\Tr(e^{-\beta H})}} > \epsilon} \leq \delta
    \end{align}
    in time
    \begin{align}
        \wt O\lr{|\cA| k + \frac{k|\supp(Q)|^4 \log |\cA|}{\epsilon^5} \log \frac{1}{\delta}}.
    \end{align}
\end{theorem}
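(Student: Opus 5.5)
First I reduce to a diagonal observable. Let $U$ be a product of single-qubit Cliffords with $UQU^\dagger=\pm Z_{T}$, where $T=\supp(Q)$ and $Z_T=\prod_{x\in T}Z_x$. Replacing $H$ by $UHU^\dagger$ keeps it an $(s,k)$-long-range Pauli Hamiltonian and leaves the thermal expectation unchanged. We then have
\begin{align}
    \frac{\Tr(Qe^{-\beta H})}{\Tr(e^{-\beta H})}=\pm\sum_{y\in\{0,1\}^T}(-1)^{|y|}p(y),
\end{align}
where $p$ is the marginal of the computational-basis measurement distribution on $T$. It therefore suffices to sample $y\sim\tilde p$ with $\|\tilde p-p\|_{\rm TV}\leq\epsilon/4$ and average $(-1)^{|y|}$ over $O(\epsilon^{-2}\log(1/\delta))$ samples. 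Order $T=\{x_1,\dots,x_{|T|}\}$. Given the prefix $y$, the conditional bit $j$ is $1$ with probability $\sigma(r_j(y))$, where $r_j$ is defined as in \cref{eq:rj} with $\tr_y$ pinning only the qubits in $T$. If each $r_j$ is estimated to additive error $\epsilon/(4|T|)$, the per-step TV errors add up to at most $\epsilon/4$. With failure probabilities chosen via a union bound, this accounts for the $|T|$ and $1/\epsilon$ powers in the runtime.

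The core step is to redo \Cref{sec:expect} for pinned traces. A word $P_{a_1}\cdots P_{a_m}$ gives $\tr_y(\cdot)\neq0$ only if it is diagonal on $Y$ and proportional to the identity off $Y$. Moreover, $\tr_y$ factorizes over connected components, since $\ket{y}$ is a product state. This yields pinned polymers $\gamma$ with activity $\frac{(-\beta)^m}{m!}\prod c_{a_i}\,\tr_y(P_{a_1}\cdots P_{a_m})$, where $|\tr_y(\cdot)|\leq1$, and the polymer representation of \Cref{lem:polymer-rep} holds verbatim. In the sampler of \Cref{lem:polymer-samp}, I drop the repair step and always use births: at each step choose a stack element, then a site $x$ in its support, then $b\in\cA(x)$. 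This gives
\begin{align}
    \sum_{\gamma:|\gamma|=m,\ \supp\gamma\cap W\neq\emptyset}|w(\gamma)|\leq(4|\beta|s\max\{k,|W|\})^m,
\end{align}
with the same unbiased sampler. The Kotecky--Preiss argument and the tail bound of \Cref{lem:clustertrunc} then go through with $\sqrt k$ replaced by $k$. For $4e|\beta|sk\leq1/8$, i.e.\ $\beta\leq1/(32e\,sk)$, this gives a zero-free expansion of $\log\tr_y(e^{-\beta H})$ for every pinning.

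To estimate $r_j(y)$ directly, without an $n$-dependent error, I subtract the two cluster expansions for pinnings $y0$ and $y1$. Every cluster whose union of supports avoids $x_j$ has identical weight in both, so it cancels. What remains is clusters touching $x_j$, which I root at $x_j$ as in the $\tr$ case with $W=\{x_j\}$, using \cref{eq:1id} restricted to $x=x_j$. The tail beyond total length $L$ is $O(8^{-L})$, so $L=O(\log(|T|/\epsilon))$ suffices. The estimator of \Cref{lem:expectestimator}, evaluated with the sign difference between the two pinnings, is unbiased for the truncation and bounded by $O(1)$. Hoeffding's inequality then gives error $\epsilon/(4|T|)$ using $O(|T|^2\epsilon^{-2}\log(|T|/\delta'))$ samples, each costing $\mathrm{poly}(L)\,3^L=\mathrm{poly}(|T|/\epsilon)$, with $\log|\cA|$ for sampling from the preprocessed lists.

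\textbf{Main obstacle.} I expect the hardest part to be checking the cancellation carefully. A pinned-polymer cluster can touch $x_j$ only through its support, not through a defect, so the incompatibility graphs and $\varphi$ are literally identical across the two pinnings. Hence only activities containing $x_j$ differ. The second point to check is that the loss of the repair trick costs exactly one factor of $\sqrt k$, so that the constant $1/(32e\,sk)$ comes out.
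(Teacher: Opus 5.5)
Your proof is correct, but it estimates the odds $r_j$ by a different route from the paper's. The reduction itself matches the paper: single-qubit Clifford conjugation, sequential sampling of the bits on $\supp(Q)$ via $r_j$, then Hoeffding plus a union bound.

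\textbf{How the paper estimates $r_j$.} It uses pinned polymers and Kotecky--Preiss only to prove zero-freeness of $\tr_y(e^{-\beta H})$ (\Cref{lem:pin-zf}), with a cruder spanning-tree bound $(e|\beta|sk)^m$. It then expands $\log\tr_y(e^{-\beta H})$ directly in word cumulants $K_y(a_1,\dots,a_m)$.
\begin{itemize}
\item Cancellation of terms not touching $x_j$ comes from the moment--cumulant relation.
\item The quantitative input is the tree bound $|K_y|\le 2^{m-1}\tau(G(a_1,\dots,a_m))$ (\Cref{fac:pincum}). This gives an order-$m$ bound $(2e\beta sk)^m/k$, and hence exactly the threshold $1/(32e\,sk)$ with tail $8^{-L}$ (\Cref{lem:rtrunc}).
\item The estimator (\Cref{lem:restimator}) is single-level. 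It draws $m$, a root term containing $x_j$, a uniform labeled tree on $[m]$, and the words along the tree, then computes $K_{y1}-K_{y0}$ in time $O(3^L)$.
\end{itemize}

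\textbf{How your route differs.} You reuse the machinery of \Cref{sec:expect} essentially wholesale: a births-only \Cref{lem:polymer-samp}, \Cref{lem:clustertrunc}, and \Cref{lem:expectestimator}. This gives a two-level estimator, with polymers nested inside clusters, canonical-transcript acceptance, and a $\varphi/\tau$ reweighting. It avoids proving a new cumulant bound and reaches the same constant, but the sampler is more elaborate. The paper's version needs \Cref{fac:pincum} and in exchange is more direct.

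\textbf{Two small corrections.}
\begin{itemize}
\item \emph{Rooting weight.} Restricting \cref{eq:1id} literally to $x=x_j$ gives each cluster containing $x_j$ the weight $1/|\bigcup_i\supp(\gamma_i)|$, so your estimator would target the wrong quantity. For clusters containing $x_j$ you need $1=\sum_i \mathbf{1}\{x_j\in\supp(\gamma_i)\}/\#\{i:x_j\in\supp(\gamma_i)\}$. That is, drop the $|\bigcup_i\supp(\gamma_i)|$ factor; this is the same normalization the paper's estimator uses.
\item \emph{Tail rate.} With your polymer bound $(4|\beta|sk)^m$, the rooted contribution at total length $h$ is $2^{h-1}(4e|\beta|sk)^h=\tfrac12(8e|\beta|sk)^h$, which equals $\tfrac12\,4^{-h}$ at $\beta=1/(32e\,sk)$. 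So the tail is $O(4^{-L})$, not $O(8^{-L})$. This only changes $L$ by a constant factor. Since $3^{\log_4(|\supp(Q)|/\epsilon)}=(|\supp(Q)|/\epsilon)^{\log_4 3}$ with $\log_4 3<1$, the stated runtime still holds.
\end{itemize}
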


As before, we require a zero-freeness statement to obtain thermal expectations.

\begin{lemma}[Pinned zero-freeness]\label{lem:pin-zf}
    For an $(s,k)$-long-range Pauli Hamiltonian $H$, $\tr_y(e^{-\beta H}) \neq 0$ for all $|\beta| \leq 1/(2e^2sk)$.
\end{lemma}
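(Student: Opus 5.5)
We follow the same route as the unpinned zero-freeness in \Cref{lem:clustertrunc}: build a polymer representation of $\tr_y(e^{-\beta H})$ and verify the Kotecky--Preiss criterion (\Cref{lem:kp}). The difference is that polymers are now counted with the coarser $k^m$ growth instead of the defect-repair counting of \Cref{lem:polymer-samp}.

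\textbf{Polymer representation.} Expand
\begin{align}
    \tr_y(e^{-\beta H})=\sum_m \frac{(-\beta)^m}{m!}\sum_{a_1,\dots,a_m}\Bigl(\prod_j c_{a_j}\Bigr)\tr_y(P_{a_1}\cdots P_{a_m}).
\end{align}
The functional $\tr_y$ is a tensor product over qubits: a normalized trace on $[n]\setminus Y$ and $\bra{y_x}\cdot\ket{y_x}$ on each $x\in Y$. Hence $\tr_y$ factorizes over the connected components of the supports of $P_{a_1},\dots,P_{a_m}$. As in \Cref{lem:polymer-rep}, define polymers as ordered connected words $\gamma=(a_1,\dots,a_m)$ with $m\geq 1$ and activity
\begin{align}
    w_y(\gamma)=\frac{(-\beta)^m}{m!}\Bigl(\prod_i c_{a_i}\Bigr)\tr_y(P_{a_1}\cdots P_{a_m}).
\end{align}
Two polymers are compatible if their supports are disjoint. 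The multinomial-coefficient argument of \Cref{lem:polymer-rep} then gives
\begin{align}
    \tr_y(e^{-\beta H})=\sum_{\Gamma\ \mathrm{compatible}}\prod_{\eta\in\Gamma}w_y(\eta).
\end{align}
We bound $|\tr_y(P_{a_1}\cdots P_{a_m})|\leq 1$ and never use vanishing of the trace, since $\bra{y}Z\ket{y}=\pm 1$ can be nonzero.

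\textbf{Counting.} The key estimate is that for any set $W$,
\begin{align}
    \sum_{|\gamma|=m,\ \supp(\gamma)\cap W\neq\emptyset}|w_y(\gamma)| \leq |W|\,(e|\beta|sk)^m
\end{align}
up to a constant base. To prove it, we count ordered connected words in the style of \Cref{lem:sep-propagator-expansion}. The first term must touch $W$, which gives mass at most $|W|s$. The $j$th term must touch the union of the previous supports, which has at most $(j-1)k$ sites, giving mass at most $(j-1)ks$; in total this is $|W|s\prod_{j=1}^{m-1}(jks)\leq |W|\,(m-1)!\,(sk)^{m-1}$. Dividing by $m!$ yields roughly $|W|(|\beta| sk)^m/m$, or, more crudely, $|W|(|\beta|sk)^m$. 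The order matters in the activity, but the sum is already over ordered words, so the $1/m!$ is correctly accounted for.

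\textbf{Kotecky--Preiss.} Take $a_\gamma=|\gamma|$. For each polymer $\gamma$, apply the counting estimate with $W=\supp(P_{a_i})$ for each $a_i\in\gamma$, so $|W|\leq k$. Then
\begin{align}
    \sum_{\eta\not\sim\gamma}|w_y(\eta)|e^{|\eta|}\leq |\gamma|\sum_{m\geq1} k\,s\,|\beta|\,(e|\beta| sk)^{m-1}\cdot e,
\end{align}
using the sharper $(m-1)!$ form so that the first factor has no extra $k$ overcount beyond $|W|s$. For $|\beta|\leq 1/(2e^2sk)$ the ratio is $e^2|\beta|sk\leq 1/2$, and the sum is at most $|\gamma|$. \Cref{lem:kp} then gives $\Xi\neq 0$. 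To handle infinitely many polymers, truncate to $|\gamma|\leq M$ and pass to the limit exactly as in \Cref{lem:clustertrunc}: the tail bound keeps $\log\Xi_M$ uniformly bounded, so the limit is nonzero.

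\textbf{Main obstacle.} The delicate step is the constant bookkeeping in the Kotecky--Preiss verification. The factor $|W|\leq k$ from the incompatible-touching condition must combine with $s$ into one power of $sk$, not an extra factor of $k$, so that the threshold $1/(2e^2 sk)$ comes out exactly. If the constants do not close, we would try the weight $a_\gamma=c|\gamma|$ with a tuned $c$.
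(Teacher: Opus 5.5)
Your polymer representation, the bound $|\tr_y(\cdot)|\le 1$, and the Kotecky--Preiss setup with $a_\gamma=|\gamma|$ match the paper. The gap is in the counting estimate that everything rests on, which is not valid as derived.

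\textbf{The ordering error.} A polymer is an \emph{arbitrary} ordering $(a_1,\dots,a_m)$ of a multiset of terms whose supports form a connected set. The $1/m!$ in $w_y(\gamma)$ is there precisely because all $m!/\prod_a n_a!$ orderings are summed. Your sequential enumeration (first term touches $W$, the $j$th term touches the union of the earlier supports) only enumerates \emph{connected orderings}. For example, with $W=\{r\}$ the word $(Z_vZ_w,\,Z_rZ_v)$ is a valid polymer touching $W$, but its first letter misses $W$. For a path-shaped polymer, only one of its $m!$ orderings is of your type. So the $1/m!$ cannot be cancelled against your $(m-1)!$, and the resulting inequality is false.

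\textbf{A counterexample.} Put $Z_uZ_v$ (coefficient $1$) on the edges of a rooted tree in which every vertex has $d$ children, so $k=2$ and $s=d+1$. Pin all qubits, so $|\tr_y(P_{a_1}\cdots P_{a_m})|=1$ for every word, and take $W$ to be the root.
\begin{itemize}
\item At $m=2$ there are $3d^2$ ordered connected words touching the root. Your enumeration bounds their total weight by $|W|s\cdot ks=2(d+1)^2$, which fails for $d\ge 5$.
\item For $d\gg m$ the true count is of order $(m+1)^{m-1}d^m$. So even the form $(|\beta|sk)^m$ that you insert into the Kotecky--Preiss sum undershoots by a factor growing like $(e/2)^m$ up to polynomial factors.
\end{itemize}
If you fall back to a bound with prefactor $|W|\le k$, you carry an extra factor $k$. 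As you suspected, no choice $a_\gamma=c|\gamma|$ absorbs that factor at radius $1/(2e^2sk)$.

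\textbf{The missing idea: spanning trees.} Control connectivity by spanning trees rather than by a sequential order, as the paper does.
\begin{itemize}
\item For a word $(a_1,\dots,a_m)$, form the graph on $\{0,1,\dots,m\}$. Make $0$ adjacent to $i$ when $\supp(P_{a_i})\cap W\neq\emptyset$, and $i$ adjacent to $j$ when the supports of $P_{a_i}$ and $P_{a_j}$ intersect.
\item This graph is connected for every contributing word. So bound its indicator by the number of spanning trees $T$ rooted at $0$ that it contains.
\item For fixed $T$, sum over the labels from the root outward. Each root edge costs at most $|W|s\le ks$ and each other edge at most $ks$. This gives $(ks)^m$ per tree, independently of the ordering.
\item Cayley's formula gives $(m+1)^{m-1}$ trees.
\end{itemize}
Hence
\begin{align*}
\sum_{\substack{|\gamma|=m\\ \supp(\gamma)\cap W\neq\emptyset}}\abs{w_y(\gamma)}\leq\frac{(m+1)^{m-1}}{m!}(|\beta|sk)^m\leq(e|\beta|sk)^m \qquad (|W|\le k),
\end{align*}
with no prefactor $|W|$.

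\textbf{Closing Kotecky--Preiss.} Taking $W=\supp(P_{a_i})$ for each $a_i\in\gamma$ gives
$\sum_{\eta\not\sim\gamma}\abs{w_y(\eta)}e^{|\eta|}\le|\gamma|\sum_{m\ge1}(e^2|\beta|sk)^m\le|\gamma|$ exactly when $e^2|\beta|sk\le 1/2$. This is where the stated radius comes from. Note that your displayed series has ratio $e|\beta|sk$, not the $e^2|\beta|sk$ you then invoke.
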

\begin{proof}
    We use polymers $\gamma = (a_1,\dots,a_m)$ for $m \geq 1$ with activities
    \begin{align}\label{eq:wy}
    w_y(\gamma) = \frac{(-\beta)^m}{m!} \lr{\prod_{i=1}^m c_{a_i}}\tr_y(P_{a_1}\cdots P_{a_m}).
\end{align}
    A polymer must have connected support; two polymers are compatible if their supports are disjoint.
    This is the same polymer and activity as the previously defined family $\cP(I)$ (\Cref{def:polymer}), except we replace $\tr$ with $\tr_y$. Hence, \Cref{lem:polymer-rep} gives polymer representation
    \begin{align}
        \label{eq:pin-polymer-rep} \tr_y(e^{-\beta H}) = \sum_{\Gamma \text{ compatible}} \prod_{\gamma\in\Gamma} w_y(\gamma),
    \end{align}
    where $\Gamma = \emptyset$ contributes 1; the proof is identical to that of \Cref{lem:polymer-rep} since $\tr_y$ factorizes over disjoint supports exactly as $\tr$ does. We now show that for any $W \subseteq [n]$ with $|W| \leq k$,
    \begin{align}\label{eq:sumpin}
    \sum_{\substack{\gamma:|\gamma|=m\\ \supp(\gamma)\cap W \neq \emptyset}} \abs{w_y(\gamma)} \leq (e|\beta| s k)^m,
\end{align}
    which is a looser version of \Cref{lem:polymer-samp}. Consider the graph on $\{0,1,\dots,m\}$ with 0 adjacent to those $i$ for which $\supp(P_{a_i}) \cap W \neq \emptyset$ and $i$ adjacent to $j$ when $\supp(P_{a_i}) \cap \supp(P_{a_j}) \neq \emptyset$. The sum in \cref{eq:sumpin} only includes terms for which this graph is connected; we place a spanning tree rooted at 0 and bound \cref{eq:sumpin} by summing over all such trees $T$. By \cref{eq:wy}, each activity $w_y(\gamma)$ contributes $|\tr_y(\cdot)|\leq 1$, a prefactor $|\beta|^m/m!$, and the coefficients $c_{a_i}$. For a fixed tree, the coefficients contribute at most $(ks)^m$: going from the root to its child contributes
    \begin{align}
        \sum_{x \in W} \sum_{a:x\in\supp(P_a)} |c_a| \leq |W|s \leq ks
    \end{align}
    and similarly going from a non-root to its child contributes
    \begin{align}
        \sum_{x \in \supp(P_a)} \sum_{b:x\in\supp(P_b)} |c_b| \leq ks.
    \end{align}
    Hence, all $m$ coefficients contribute at most $(ks)^m$. Cayley's formula gives $(m+1)^{m-1}$ trees on $\{0,1,\dots,m\}$, so summing over all trees and including the prefactor $|\beta|^m/m!$ gives
    \begin{align}
        \sum_{\substack{\gamma:|\gamma|=m\\ \supp(\gamma)\cap W \neq \emptyset}} \abs{w_y(\gamma)} \leq \frac{|\beta|^m}{m!} (m+1)^{m-1} (ks)^m \leq (e|\beta|sk)^m
    \end{align}
    using $m! \geq (m/e)^m$. This shows \cref{eq:sumpin}.

    We now apply the Kotecky-Preiss criterion with $a_\gamma = |\gamma|$ to polymers of length $|\gamma| \leq M$. If $\eta \not\sim \gamma$, then $\eta$ intersects at least one Hamiltonian term in $\gamma$, and \cref{eq:sumpin} gives
    \begin{align}
        \sum_{\eta\not\sim\gamma} \abs{w_y(\eta)} e^{|\eta|} \leq |\gamma| \sum_{m \geq 1} (e^2 |\beta| s k)^m \leq |\gamma|,
    \end{align}
    where the final inequality assumes $e^2 |\beta| s k \leq 1/2$. Hence, \Cref{lem:kp} and \cref{eq:pin-polymer-rep} imply that $\tr_y(e^{-\beta H}) \neq 0$ for all $|\beta| s k \leq 1/2e^2$.
\end{proof}

We will use $\cT_m$ to denote the set of labeled trees on $m$ vertices. We will use $G(a_1,\dots,a_m)$ to denote the overlap graph of the supports of $P_{a_1},\dots, P_{a_m}$; i.e., it is a graph on $m$ vertices with edges between $i,j$ such that $\supp(P_{a_i}) \cap \supp(P_{a_j}) \neq \emptyset$. Denoting the set of partition of $[m]$ by $\Pi([m])$, we will use cumulants
\begin{align}
    K_y(a_1,\dots,a_m) = \sum_{\pi\in\Pi([m])} (-1)^{|\pi|-1}(|\pi|-1)! \prod_{B\in\pi} \tr_y\lr{\prod_{i\in B}^{\longrightarrow} P_{a_i}}
\end{align}
which satisfy the moment-cumulant relation (given by a Möbius inversion)
\begin{align}\label{eq:pinmomcum}
    \tr_y\lr{\prod_{i\in S}^{\longrightarrow} P_{a_i}} = \sum_{\pi\in\Pi(S)}\prod_{B\in\pi} K_y\lr{(a_i)_{i\in B}}.
\end{align}
These satisfy the following properties.
\begin{fact}[Tree bound]\label{fac:pincum}
    The following properties hold.
    \begin{itemize}
        \item If $G(a_1,\dots,a_m)$ is disconnected, $K_y(a_1,\dots,a_m) = 0$.
        \item $|K_y(a_1,\dots,a_m)| \leq 2^{m-1} \tau(a_1,\dots,a_m)$.
    \end{itemize}
\end{fact}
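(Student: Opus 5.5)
Both properties follow from the multiplicativity of $\tr_y$ over disjoint supports together with a standard tree-graph bound on cumulants; the plan is to prove each in turn.

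\textbf{Vanishing on disconnected graphs.} Suppose $G(a_1,\dots,a_m)$ is disconnected, so $[m]=U\sqcup V$ with no overlap edges between the two parts. For any block $B$, the ordered product $\prod^{\to}_{i\in B}P_{a_i}$ factors as $\prod^{\to}_{i\in B\cap U}P_{a_i}\cdot\prod^{\to}_{i\in B\cap V}P_{a_i}$, because operators with disjoint supports commute. Since $\tr_y$ is a normalized partial trace against a product basis state, it factorizes over disjoint supports, so
\begin{align}
    \tr_y\lr{\prod^{\to}_{B}}=\tr_y\lr{\prod^{\to}_{B\cap U}}\tr_y\lr{\prod^{\to}_{B\cap V}}.
\end{align}
Hence the moment functional $M(S)=\tr_y(\prod^{\to}_{i\in S}P_{a_i})$ satisfies $M(S)=M(S\cap U)M(S\cap V)$ for every $S\subseteq[m]$. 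The cumulant defined through the moment-cumulant relation \cref{eq:pinmomcum} by Möbius inversion on the partition lattice is the unique family with this property. Plugging $K=K_U\cdot\mathbf 1\{\text{blocks within }U\text{ or }V\}$ into \cref{eq:pinmomcum} reproduces the product of moments. By uniqueness of Möbius inversion, every cumulant whose block meets both $U$ and $V$ vanishes, and in particular $K_y(a_1,\dots,a_m)=0$. This is the classical statement that joint cumulants of independent families vanish, and I will cite or reprove it by induction on $m$.

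\textbf{Tree bound.} I plan to use the Penrose/Ursell-type representation of cumulants of a multiplicative moment functional. The idea is to introduce interpolation parameters $s_{ij}\in[0,1]$ that decouple pairs, write $K$ as a sum over spanning trees of $G$ of integrals of mixed moments, and bound each contribution. The cleanest route I would try first is the Brydges–Battle–Federbush tree formula applied to the ``connected part''. Represent $M(S)=\tr_y(\cdots)$ as a generating functional in which each pair $\{i,j\}$ that overlaps (an edge of $G$) carries a coupling. Non-adjacent pairs factorize automatically, so only edges of $G$ appear, and the BBF formula writes $K$ as $\sum_{T\subseteq G}\int\prod_{e\in T}\partial_{s_e}(\cdots)$.

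Each derivative along an edge $\{i,j\}$ produces a difference of two moments, namely the coupled minus the decoupled one. Each moment has magnitude at most $1$, since $|\tr_y(\text{Pauli product})|\le1$ and convex interpolations preserve this. A difference of two such quantities is therefore bounded by $2$, which gives $2^{m-1}$ over the $m-1$ tree edges. The interpolation integrals have total weight at most $1$.

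\textbf{Main obstacle.} I expect the main obstacle to be making the interpolated moments genuinely bounded by $1$ in the noncommutative ordered setting. Decoupling a pair means replacing the ordered product by one in which the two sub-blocks are separated. This has to be done so that the intermediate objects remain ordered products inside a single $\tr_y$, or convex combinations of such products.

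If this fails, my fallback is a direct induction. I would use the recursion
\begin{align}
    K(S)=M(S)-\sum_{\pi\neq\{S\}}\prod_{B\in\pi}K(B),
\end{align}
where the block containing $1$ is singled out to write $K(S)=M(S)-\sum_{1\in A\subsetneq S}K(A)M(S\setminus A)$. Combining this with the first property, I would bound the resulting sum by $2^{m-1}$ times the number of spanning trees, using the Kruskal interval partition from \Cref{fac:graph} to organize the terms. The key step in this fallback is to show that the alternating sum over connected $A\ni1$ with $|M|\le1$ telescopes into at most $2\tau$ per added vertex.
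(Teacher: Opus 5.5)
Your first bullet is fine and matches the paper's argument: disjoint-support factors commute and $\tr_y$ factorizes, so $M(S)=M(S\cap U)M(S\cap V)$, and uniqueness of the solution of \cref{eq:pinmomcum} (by induction on $|S|$) forces the mixed cumulants to vanish.

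The second bullet, however, is not proved. Both of your routes stop exactly where the content lies.

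\textbf{BBF route.} You never say what the interpolated moment is. The object you describe, decoupling a pair inside one ordered product under a single $\tr_y$, is not what is needed.

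\textbf{Fallback route.} This fails as stated. Take $a_1$ to be the center of a $3$-vertex star, so $G$ has edges $\{1,2\},\{1,3\}$ only. Apply the triangle inequality to $K(S)=M(S)-\sum_{1\in A\subsetneq S}K(A)M(S\setminus A)$ together with the inductive hypothesis. This gives $1+1+2+2=6$, whereas the target is $2^{2}\tau(G)=4$. So the cancellation you call ``telescoping'' is the whole claim, and you have not identified it.

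Either of the following closes the gap.

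\emph{Your route, made concrete.} Write $M(C)=\tr_y(\prod^{\longrightarrow}_{i\in C}P_{a_i})$. For $x\in[0,1]^{E(G)}$ set
\begin{align}
    f(x)=\sum_{E'\subseteq E(G)}\prod_{e\in E'}x_e\prod_{e\in E(G)\setminus E'}(1-x_e)\prod_{C}M(C),
\end{align}
where $C$ runs over the vertex sets of the connected components of $([m],E')$. This $f$ has the following properties.
\begin{itemize}
    \item $f(\mathbf 1)=M([m])$, by factorization.
    \item $|f|\le 1$, since $f$ is a convex combination of products of moments of modulus at most $1$.
    \item $f$ is affine in each variable, so $\prod_{e\in T}\partial_{x_e}f$ is an alternating sum of $2^{|T|}$ values of $f$.
\end{itemize}
Because the coupling variables live only on $E(G)$, the BKAR forest formula runs over forests of $G$. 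The cross-component variables vanish, so each forest term factorizes over its trees. Do this for every $G[S]$, compare with \cref{eq:pinmomcum}, and use uniqueness. This shows $K_y$ equals the sum over spanning trees $T$ of $G$ of $\int_{[0,1]^T}\prod_{e\in T}\partial_{x_e}f$, which has modulus at most $2^{m-1}\tau(G)$.

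\emph{The paper's route.} This is purely combinatorial. It regroups partitions $\pi$ by $\phi_G(\pi)$, the partition into connected components of $G[B]$ for $B\in\pi$. The $\pi$ with $\phi_G(\pi)=\rho$ are exactly the partitions of $G/\rho$ into independent sets, so
\begin{align}
    K_y=\sum_{\rho\in\Pi_{\rm conn}(G)}\varphi(G/\rho)\prod_{B\in\rho}M(B).
\end{align}
It then uses $|\varphi|\le\tau$ from \Cref{fac:graph}. Finally it injects pairs $(\rho,T')$ into pairs (spanning tree $T$ of $G$, subset of $E(T)$), which gives $\sum_\rho\tau(G/\rho)\le 2^{m-1}\tau(G)$.
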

\begin{proof}
    To show the first property, suppose $G(a_1,\dots,a_m)$ is disconnected, and suppose this is witnessed by $A \sqcup B = [m]$. Due to the factorization of trace across disjoint supports, \cref{eq:pinmomcum} implies that mixed cumulants across the $A,B$ vanish. Hence, $K_y(a_1,\dots,a_m) = 0$.

    We observe that the second property holds trivially if $G$ is disconnected due to the first property, so we assume $G$ is connected. For partition $\pi \in \Pi([m])$, define $\phi_G(\pi)$ to be the partition whose blocks are the vertex sets of the connected components of the induced subgraph $G[B]$, as $B$ ranges over the blocks of $\pi$. By the factorization of $\tr_y$ over disjoint supports, we can rewrite
    \begin{align}
        \prod_{B \in \pi}\tr_y\lr{\prod_{i\in B}^{\longrightarrow}P_{a_i}} = \prod_{B \in \phi_G(\pi)}\tr_y\lr{\prod_{i\in B}^{\longrightarrow}P_{a_i}}.
    \end{align}
    Hence, the terms of the cumulant can be regrouped as
    \begin{align}
        K_y(a_1,\dots,a_m) = \sum_{\pi' \in \Pi([m])}\lr{\sum_{\pi:\phi_G(\pi)=\pi'}(-1)^{|\pi|-1}(|\pi|-1)!}\prod_{B \in \pi'}\tr_y\lr{\prod_{i\in B}^{\longrightarrow}P_{a_i}}.
    \end{align}
    Since every block $C$ of $\phi_G(\pi)$ is the vertex set of a connected component of $G[B]$, $\phi_G(\pi)$ is by definition in the set
    \begin{align}
        \Pi_{\rm conn}(G) = \left\{ \rho\in\Pi([m]): G[B]\text{ is connected for every }B\in\rho \right\}.
    \end{align}
    For $\rho \in \Pi_{\rm conn}(G)$, let $G/\rho$ be the quotient graph whose vertices are the blocks of $\rho$ and whose edges connect two blocks whenever an edge of $G$ joins them. Given $\rho$ we can construct $\pi$ such that $\phi_G(\pi)=\rho$ by grouping blocks of $\rho$ such that no two blocks in the same group are adjacent in $G/\rho$. Hence, we have
    \begin{align}
        \sum_{\pi:\phi_G(\pi)=\rho} (-1)^{|\pi|-1}(|\pi|-1)! &= \!\!\!\!\sum_{\pi\in\Pi(V(G/\rho))} (-1)^{|\pi|-1}(|\pi|-1)! \!\!\!\!\!\prod_{\{u,v\}\in E(G/\rho)}\!\!\!\!\! \mathbf 1\{u,v\text{ lie in distinct blocks of }\pi\}\\
        &= \sum_{\substack{F\subseteq E(G/\rho)\\(V(G/\rho),F)\ {\rm connected}}}
        (-1)^{|F|}\\
        &= \varphi(G/\rho).
    \end{align}
    By \Cref{fac:graph} and the fact that every normalized trace of Paulis has magnitude at most 1, this gives
    \begin{align}
        \abs{K_y(a_1,\dots,a_m)} \leq \sum_{\rho\in\Pi_{\rm conn}(G)} \abs{\varphi(G/\rho)} \leq \sum_{\rho\in\Pi_{\rm conn}(G)} \tau(G/\rho).
    \end{align}
    Since $G$ is connected, we can finish the proof using a similar spanning tree counting argument as in the proof of \Cref{fac:graph} to evaluate the sum over $\rho\in\Pi_{\rm conn}(G)$. For $B$ in the vertex set of $G$, let $T_B$ denote a spanning tree of $G[B]$. Let $S = \bigcup_{B\in\rho}T_B$; the connected components of $S$ are by construction the blocks of $\rho$. To construct a spanning tree $T$ of $G$, we add to $S$ the edges $\{B,B'\}$ in the edge set of a spanning tree of $G/\rho$. This forms an injective map from $\rho$ and a spanning tree of $G/\rho$ to a spanning tree of $G$ and the set $S$. Since $T$ must have $m-1$ edges, there are $2^{m-1}$ possible subsets $S$, giving
    \begin{align}
        \sum_{\rho\in\Pi_{\rm conn}(G)} \tau(G/\rho) \leq 2^{m-1}\tau(G).
    \end{align}
\end{proof}

\begin{lemma}[Truncating pinning odds]\label{lem:rtrunc}
    For $0 \leq \beta \leq 1/(32e sk)$, the quantity
    \begin{align}\label{eq:rl}
    r_j^L(y) = \sum_{m=1}^L \frac{(-\beta)^m}{m!} \sum_{a_1,\dots,a_m\in\cA} \lr{\prod_{i=1}^m c_{a_i}}\left[K_{y1}(a_1,\dots,a_m) - K_{y0}(a_1,\dots,a_m)\right]
\end{align}
    satisfies
    \begin{align}
        \abs{r_j(y) - r_j^L(y)} \leq 8^{-L}.
    \end{align}
    Moreover, any nonzero term in \cref{eq:rl} has connected supports $\supp(P_{a_1}), \dots, \supp(P_{a_m})$ and at least one $\supp(P_{a_i})$ acting on $j$.
\end{lemma}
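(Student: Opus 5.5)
The plan is to write $r_j(y)$ as the difference of two cumulant (log-moment) expansions of $\log\tr_{y1}(e^{-\beta H})$ and $\log\tr_{y0}(e^{-\beta H})$. I will show that all terms not touching qubit $j$ cancel, and then bound the tail using \Cref{fac:pincum}.

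\textbf{Step 1 (cumulant series for the pinned log partition function).} First I would establish, for each pinning $y'\in\{y0,y1\}$, the identity
\begin{align}
\log\tr_{y'}(e^{-\beta H})=\sum_{m\geq1}\frac{(-\beta)^m}{m!}\sum_{a_1,\dots,a_m\in\cA}\Big(\prod_{i=1}^m c_{a_i}\Big)K_{y'}(a_1,\dots,a_m).
\end{align}
As formal power series in $\beta$ this is the standard moment--cumulant relation. Exponentiating the right side and using \cref{eq:pinmomcum} recovers $\sum_m\frac{(-\beta)^m}{m!}\sum_{a}\prod c_{a_i}\,\tr_{y'}(P_{a_1}\cdots P_{a_m})=\tr_{y'}(e^{-\beta H})$. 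Here I use that $\tr_{y'}$ of an ordered word factorizes over blocks, and that the sum over ordered words is exchangeable, so the usual exponential formula applies. To upgrade this to an analytic identity, I will show in Step 3 that the series converges absolutely for $|\beta|<1/(4e\,sk)$. On that disk its exponential is analytic and agrees with $\tr_{y'}(e^{-\beta H})$ near $0$, hence everywhere by the identity theorem. For real $\beta$ the value is real and $\tr_{y'}(e^{-\beta H})>0$, so the branch is the principal one. \Cref{lem:pin-zf} is consistent with this, but it is not strictly needed.

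\textbf{Step 2 (cancellation and structure of surviving terms).} Next I would observe that if no $\supp(P_{a_i})$ contains $j$, then for every block $B$ the pinned traces satisfy $\tr_{y1}(\prod_{i\in B}P_{a_i})=\tr_{y0}(\prod_{i\in B}P_{a_i})$. The reason is that qubit $j$ enters only through $\langle 1|I|1\rangle=\langle 0|I|0\rangle=1$; note that $\tr_y$ normalizes by the unpinned dimension, and that dimension is the same for $y0$ and $y1$. Hence $K_{y1}=K_{y0}$ on such words, and the term drops out of $r_j$. By the first part of \Cref{fac:pincum}, any surviving term also has a connected overlap graph. Together these give the "moreover" clause.

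\textbf{Step 3 (tail bound, the main work).} For the surviving terms I use $|K_{y'}|\leq 2^{m-1}\tau(G(a_1,\dots,a_m))$ and write $\tau(G)=\sum_{T\in\cT_m}\prod_{\{p,q\}\in E(T)}\mathbf 1\{\supp(P_{a_p})\cap\supp(P_{a_q})\neq\emptyset\}$. I then choose a vertex $i_0$ whose term contains $j$, which costs at most a factor $m$, and root $T$ there. Summing coefficients from the root outward gives at most $s$ for the root, since its term must contain $j$. Each child then contributes at most $ks$, since it must meet one of the at most $k$ sites of its parent. Using Cayley's formula $m^{m-2}$, this yields
\begin{align}
\sum_{a_1,\dots,a_m\ \mathrm{touching}\ j}\Big(\prod_i|c_{a_i}|\Big)\,|K_{y'}|\leq 2^{m-1}\,m\cdot m^{m-2}\,s(ks)^{m-1}\leq \tfrac12 m^{m-1}(2ks)^m .
\end{align}
Multiplying by $|\beta|^m/m!\leq |\beta|^m e^m/m^m$ and accounting for both $y0$ and $y1$, the order-$m$ contribution to $r_j(y)$ is at most $(2e|\beta|sk)^m\leq 16^{-m}$ when $\beta\leq1/(32e\,sk)$. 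The tail over $m>L$ is therefore at most $16^{-L}/15\leq 8^{-L}$. The same bound gives the absolute convergence needed in Step 1.

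I expect the main subtlety to be Step 1, namely justifying that the cumulant series equals $\log\tr_{y'}$ with the correct branch rather than only formally. The absolute-convergence estimate from Step 3, combined with positivity at real $\beta$, handles this.
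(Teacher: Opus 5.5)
Your proposal is correct and follows the paper's proof: words avoiding qubit $j$ cancel the same way, you use \Cref{fac:pincum} with the spanning tree rooted at a term containing $j$ (root weight $s$, each child $ks$, Cayley's formula), and you obtain the same $(2e\beta sk)^m\le 16^{-m}$ tail. The only difference is Step 1: the paper cites \Cref{lem:pin-zf} for analyticity of the two logarithms, whereas you derive it, and in fact zero-freeness, from absolute convergence of the cumulant series. That works, but your Step 3 only controls words touching $j$, so for each individual $\log\tr_{y'}$ you must also bound the remaining connected words. Rooting at an arbitrary term costs $\sum_a|c_a|\le ns$ instead of $s$, which is harmless for fixed $n$. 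Alternatively, run your identity-theorem argument directly on the difference series against $\tr_{y1}(e^{-\beta H})/\tr_{y0}(e^{-\beta H})$.
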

\begin{proof}
    By \Cref{lem:pin-zf}, $\log\tr_{y1}(e^{-\beta H})$ and $\log\tr_{y0}(e^{-\beta H})$ are analytic in the stated disk. Expanding the exponential into moments and then moving to cumulants with \cref{eq:pinmomcum} gives
    \begin{align}
        \log \tr_y(e^{-\beta H}) = \sum_{m \geq 1} \frac{(-\beta)^m}{m!} \sum_{a_1,\dots,a_m\in\cA}\lr{\prod_{i=1}^m c_{a_i}} K_y(a_1,\dots,a_m),
    \end{align}
    and thus
    \begin{align}\label{eq:rjexpansion}
    r_j(y) = \sum_{m\geq 1} \frac{(-\beta)^m}{m!} \sum_{a_1,\dots,a_m\in\cA} \lr{\prod_{i=1}^m c_{a_i}}\left[K_{y1}(a_1,\dots,a_m) - K_{y0}(a_1,\dots,a_m)\right].
\end{align}
    If the supports on $a_1,\dots,a_m$ are disconnected, then $K_{y1} = K_{y0} = 0$ by \Cref{fac:pincum}; if no Hamiltonian term $P_{a_i}$ acts on $j$, then for every $B\subseteq[m]$ we have $\tr_{y1}\!\left(\prod_{i\in B}^{\longrightarrow}P_{a_i}\right) = \tr_{y0}\!\left(  \prod_{i\in B}^{\longrightarrow}P_{a_i} \right)$, and hence $K_{y1}(a_1,\ldots,a_m)  = K_{y0}(a_1,\ldots,a_m)$ by the moment--cumulant relation \cref{eq:pinmomcum}.

    It remains to show the tail bound. We enforce the above two facts and use \Cref{fac:pincum} to bound the order-$m$ contribution to \cref{eq:rjexpansion} by
    \begin{align}\label{eq:mcont}
    \frac{\beta^m}{m!} 2^m \sum_{a_1,\dots,a_m\in \cA} \lr{\prod_{i=1}^m |c_{a_i}|} \lr{\sum_{\ell=1}^m 1\{j \in \supp(P_{a_\ell})\}} \sum_{T \in \cT_m} 1\{T\subseteq G(a_1,\dots,a_m)\},
\end{align}
    where $2^{m-1}$ and the sum over the family $\cT_m$ of trees of size $m$ come from \Cref{fac:pincum}, an additional factor of $2$ comes from the triangle inequality on $K_{y1}$ and $K_{y0}$, and the indicator on $j$ avoids the vanishing terms where no Hamiltonian term acts on $j$. To show an upper bound, root a tree at $\ell$ and count the contributions of $|c_{a_i}|$; the root has weight at most $s$ and every child intersecting $\supp(P_a)$ has weight at most $ks$. Hence, \cref{eq:mcont} is at most
    \begin{align}
        \frac{\beta^m}{m!} 2^m |\cT_m| m s (ks)^{m-1} \leq \frac{(2e\beta ks)^m}{k}
    \end{align}
    by Cayley's formula and $m^{m-1}/m! \leq e^m$. The tail of $m > L$ terms in \cref{eq:rjexpansion} thus contributes at most
    \begin{align}
        \sum_{m > L} \frac{(2e\beta ks)^m}{k} \leq 8^{-L}
    \end{align}
    using $k \geq 1$ and assuming $2e\beta ks \leq 1/16$.
\end{proof}

\begin{lemma}[Random estimator]\label{lem:restimator}
    For $0 \leq \beta \leq 1/(32esk)$, there is a random variable $R_j^L(y)$ such that
    \begin{align}
        \EE\, R_j^L(y) = r_j^L(y), \qquad \abs{R_j^L(y)} \leq 1.
    \end{align}
    After $O(|\cA|k)$ of preprocessing, one draw is computable in time
    $\wt O(3^L k + L \log |\cA|)$.
\end{lemma}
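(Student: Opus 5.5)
We will build $R_j^L(y)$ as an importance-sampling estimator of the truncated series \cref{eq:rl}, following the template of \Cref{lem:expectestimator} but using the looser $(ks)^m$ tree counting from the proof of \Cref{lem:rtrunc}. By \Cref{lem:rtrunc}, every nonzero term of \cref{eq:rl} has connected supports with some $P_{a_\ell}$ acting on qubit $j$. Combined with the bound $|K_{y1}-K_{y0}|\leq 2^m\tau(G(a_1,\dots,a_m))$ from \Cref{fac:pincum}, this lets us write the sum as a sum over pairs consisting of a word and a spanning tree rooted at a distinguished index $\ell$ acting on $j$. The weight of such a pair is the word's contribution divided by $\tau(G)$ and by the number $\#\{\ell: j\in\supp(P_{a_\ell})\}$ of valid roots. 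Summing over roots and trees then recovers each word's contribution exactly once.

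\textbf{Sampling procedure.} After an $O(|\cA|k)$ preprocessing step that builds the lists $\cA(x)$ and the values $S(x)$ from \cref{eq:axsx}, one draw proceeds as follows.
\begin{enumerate}
\item Sample the order $m\in[L]$ from a geometric law $p(m)\propto \rho^{m-1}$, with $\rho$ a constant multiple of $e\beta ks$ (say $\rho=16e\beta ks\leq 1/2$).
\item Sample a uniformly random labeled tree $T\in\cT_m$, via a random Prüfer sequence with probability $m^{-(m-2)}$, and a uniformly random root $\ell\in[m]$.
\item Sample $a_\ell\in\cA(j)$ with probability $|c_{a_\ell}|/S(j)$.
\item Traverse $T$ away from $\ell$. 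For each child, pick a site $x$ in the parent's support with probability $S(x)/\sum_{y\in\supp(P_{\rm parent})}S(y)$, then pick $b\in\cA(x)$ with probability $|c_b|/S(x)$.
\end{enumerate}
Each sampling step in items 3 and 4 costs $O(\log|\cA|)$ using alias tables or prefix sums. The sampling also induces an ordering of the indices $1,\dots,m$. We set $T$ to be the tree on positions and assign the labels $a_i$ to positions directly, so the word $(a_1,\dots,a_m)$ is fully determined.

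\textbf{Output value.} Compute $G(a_1,\dots,a_m)$ and reject (output $0$) if $T\not\subseteq G$. Otherwise compute exactly
\[
K_{y1}(a_1,\dots,a_m)-K_{y0}(a_1,\dots,a_m)
\]
by the recursion over subsets, which costs $O(3^m)$ as in \Cref{fac:graph}. Each $\tr_y$ of a Pauli product is evaluated in $O(mk)$ time. Also compute $\tau(G)$ via the matrix-tree theorem and the root count $N_j=\#\{\ell: j\in\supp(P_{a_\ell})\}$. The output is
\[
\frac{(-\beta)^m\prod_i c_{a_i}\,(K_{y1}-K_{y0})}{m!\,\tau(G)\,N_j\,\pr{m,T,\ell,\vec a}}.
\]
Unbiasedness then follows as in \Cref{lem:expectestimator}. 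Each pair (word, rooted spanning tree) with a valid root is produced with the stated probability, the factor $1/N_j$ undoes the overcounting over roots, and the factor $1/\tau$ undoes the sum over spanning trees.

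\textbf{Boundedness.} This is the main obstacle: the bound must be uniform, not merely an $\ell_1$ bound. Lower-bound the proposal probability by
\[
p(m)\,m^{-(m-2)}\,m^{-1}\,\frac{|c_{a_\ell}|}{s}\prod_{i\neq\ell}\frac{|c_{a_i}|}{ks}.
\]
The factor $\prod_i|c_{a_i}|$ cancels against the numerator. Using $|K_{y1}-K_{y0}|\leq 2^m\tau$ and $N_j\geq1$, we obtain
\[
|R|\leq \frac{\beta^m 2^m m^{m-1} s(ks)^{m-1}}{m!\,p(m)}\leq \frac{(2e\beta ks)^m}{k\,p(m)}.
\]
With $p(m)\geq \tfrac12\rho^{m-1}$ and $2e\beta ks\leq 1/16$, this gives $|R|\leq 2\rho\,(2e\beta ks/\rho)^m/k\leq 1$ for $\rho=16e\beta ks$.

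The delicate point is that the estimator must use the \emph{signed} exact cumulant difference divided by $\tau(G)$. We must not attempt to sample the cumulant's partition sum, because the bound of \Cref{fac:pincum} only controls the ratio $|K|/\tau\leq 2^{m-1}$, and that ratio is exactly what we exploit. We also need to check that different traversal orders do not double-count the positional labeling of the word. This is resolved by letting the tree live on positions $[m]$, so that each (word, tree, root) triple has a unique generating history. The total time per draw is $\wt O(3^Lk+L\log|\cA|)$.
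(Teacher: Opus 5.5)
Your proposal is correct and is essentially the paper's construction. The shared ingredients are a geometric choice of $m$, a uniform labeled tree on positions with a uniform root $\ell$, $a_\ell$ drawn from $\cA(j)$, and each child drawn site-then-term; the output is the exact signed difference $K_{y1}-K_{y0}$ divided by $\tau(G)$, by the root count, and by the marginal proposal probability (which, after summing over intermediate site choices, is exactly the paper's formula), bounded by the same $(2e\beta ks)^m/(k\,p(m))$ estimate. The remaining differences are cosmetic: you take $\rho=16e\beta ks$ instead of $8e\beta ks$; the paper outputs the real part, since the cumulants can be complex while $r_j^L(y)$ is real and the later Hoeffding step needs a real variable; and the trivial case $S(j)=0$ should simply output $0$.
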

\begin{proof}
    As before in \cref{eq:axsx}, let $\cA(x) = \{a:x\in\supp(P_a)\}$ and $S(x)=\sum_{a\in\cA(x)}|c_a| \leq s$. If $S(j) = 0$ or $\beta = 0$, output 0. Otherwise, put $\rho = 8\beta e s k$ and sample $m \in [L]$ with probability
    \begin{align}
        p(m) = \frac{(1-\rho)\rho^{m-1}}{1-\rho^L}.
    \end{align}
    Sample $\ell \in [m]$ and $T \in \cT_m$ uniformly at random; set the root of $T$ at $\ell$. Sample $a_\ell \in \cA(j)$ with probability $|c_{a_\ell}|/S(j)$. For each child $i\ne \ell$, choose a qubit $x \in \supp(P_{a_{p_T(i)}})$ with probability $S(x)/\sum_{q \in \supp(P_{a_{p_T(i)}})}S(q)$ and choose $a_i \in \cA(x)$ with probability $|c_{a_i}|/S(x)$. Output $0$ if $T\not\subseteq G(a_1,\dots,a_m)$, and otherwise output the real part of
    \begin{align}\label{eq:rwt}
    \wt R &= \frac{(-\beta)^m}{m!} \frac{K_{y1}(a_1,\dots,a_m) - K_{y0}(a_1,\dots,a_m)}{\#\{i\,:\,j\in\supp(P_{a_i})\} \prod_{i \neq \ell} \abs{\supp(P_{a_{p_T(i)}}) \cap \supp(P_{a_i})}} \frac{m |\cT_m| S(j)}{p(m) \tau(G(a_1,\dots,a_m))} \notag\\
    &\quad \times \lr{\prod_{i\neq\ell} \sum_{x\in\supp(P_{a_{p_T(i)}})} S(x)} \lr{\prod_{i=1}^m \sgn(c_{a_i})}.
\end{align}
    We now evaluate $\EE\, \wt R$. We sum over the probabilities
    \begin{align}
        \frac{S(x)}{\sum_{q \in \supp(P_{a_{p_T(i)}})}S(q)} \cdot \frac{|c_{a_i}|}{S(x)} = \frac{|c_{a_i}|}{\sum_{q \in \supp(P_{a_{p_T(i)}})}S(q)}.
    \end{align}
    Summing over the possible intersecting choices of $x$ produces the factor $\abs{\supp(P_{a_{p_T(i)}})\cap\supp(P_{a_i})}$ that is canceled by \cref{eq:rwt}. We sum over the choices of $a_\ell$ with probability $|c_{a_\ell}|/S(j)$; these cancel factors in \cref{eq:rwt} to get
    \begin{align}
        \frac{(-\beta)^m}{m!} \frac{K_{y1}(a_1,\dots,a_m) - K_{y0}(a_1,\dots,a_m)}{\#\{i\,:\,j\in\supp(P_{a_i})\}} \frac{m |\cT_m|}{p(m) \tau(G(a_1,\dots,a_m))} \lr{\prod_{i=1}^m c_{a_i}}.
    \end{align}
    Summing over the sampling of $m$, $\ell$ such that $j \in \supp(P_{a_\ell})$ (otherwise the term does not contribute), and $T \in \cT_m$ such that $T \subseteq G$ (otherwise $G$ is disconnected and the terms don't contribute) cancels the factors
    \begin{align}
        \frac{1}{\#\{i\,:\,j\in\supp(P_{a_i})\}} \frac{m |\cT_m|}{p(m) \tau(G(a_1,\dots,a_m))}.
    \end{align}
    leaving
    \begin{align}
        \EE\,\wt R = \sum_{m=1}^L \frac{(-\beta)^m}{m!} \sum_{a_1,\dots,a_m\in\cA} \lr{\prod_{i=1}^m c_{a_i}} \left[K_{y1}(a_1,\dots,a_m)-K_{y0}(a_1,\dots,a_m)\right] = r_j^L(y).
    \end{align}
    Since $r_j^L$ is real, setting $R_j^L = \Re \wt R$ does not change the expectation.

    To show the upper bound on $|R_j^L|$, note that when the output is nonzero,
    \begin{align}
        \#\{i\,:\,j\in\supp(P_{a_i})\}\geq 1,\qquad \prod_{i\ne \ell} \abs{\supp(P_{a_{p_T(i)}})\cap\supp(P_{a_i})}\geq 1,\qquad \tau(G(a_1,\dots,a_m))\geq 1.
    \end{align}
    By \Cref{fac:pincum}, $|K_{y1}(a_1,\dots,a_m)-K_{y0}(a_1,\dots,a_m)| \leq 2^m \tau(G(a_1,\dots,a_m))$.
    Using $S(j)\leq s$, $\sum_{x\in\supp(P_a)}S(x)\leq ks$, Cayley's formula $|\cT_m|=m^{m-2}$, and $m^{m-1}/m!\leq e^m$, we obtain
    \begin{align}
        |\wt R| &\leq \frac{\beta^m}{m!}\,2^m\tau(G) \frac{m|\cT_m|s}{p(m)\tau(G)}(ks)^{m-1} \leq \frac{2e\beta s(2e\beta sk)^{m-1}}{p(m)}.
    \end{align}
    Since $\rho \leq 1/4$ by assumption, $p(m) \geq \frac{1}{2}\rho^{m-1}$ and hence $|\wt R| \leq 1/8k \leq 1$. The runtime follows from preparing $\cA(x)$ and $S(x)$ in time $O(|\cA|k)$, and the main cost of the sampling comes from computing the cumulants in time $O(3^L)$ following the same recurrence as \Cref{fac:graph}.
\end{proof}

\begin{proof}[Proof of \Cref{thm:globalpauli}]
    If $Q = \pm I$, output $\pm 1$. Otherwise, apply Clifford $U$ such that $UQU^\dagger = \sigma \prod_{j\in\supp(Q)} Z_j$ for sign $\sigma \in \{\pm 1\}$, where we choose $U$ to be a product of single-qubit Clifford gates. Hence, $H' = UHU^\dagger$ is an $(s,k)$-long-range Pauli Hamiltonian. It thus suffices to estimate the thermal expectation of $Q' = \prod_{j\in\supp(Q)} Z_j$ on $H'$; we do this by sampling computational basis measurement outcomes on $\supp(Q') = \supp(Q)$.

    Denote the qubits we're measuring by $\supp(Q) = \{i_1,\dots,i_q\}$. For a prefix $y \in \{0,1\}^{\ell-1}$ on $i_1,\dots,i_{\ell-1}$, the conditional probability of the next bit is
    \begin{align}
        \pr{X_{i_\ell} = 0 \mid X_{i_1}\cdots X_{i_{\ell-1}} = y} = \frac{1}{1 + \exp[r_{i_\ell}(y)]},
    \end{align}
    where $r_j$ is given by \cref{eq:rj} but is defined with respect to $H'$. To sample from this conditional distribution, we use $N$ samples of the random variable $R_j^L$ in \Cref{lem:restimator}; we repeat the entire procedure to get $M$ samples of measurement outcomes.
    Set
    \begin{align}
        M=\left\lceil \frac{8}{\epsilon^2}\log\frac{4}{\delta}\right\rceil, \qquad L=\left\lceil \log_8\frac{8|\supp(Q)|}{\epsilon}\right\rceil, \qquad N=\left\lceil \frac{128|\supp(Q)|^2}{\epsilon^2} \log\frac{4M|\supp(Q)|}{\delta}\right\rceil.
    \end{align}
    For each of the $M$ measurement samples and each prefix $y$, estimate
    $r_{i_\ell}(y)$ by
    \begin{align}
        \wh r_{i_\ell}(y) = \frac1N\sum_{t=1}^N R_{i_\ell}^{L,(t)}(y),
    \end{align}
    where the random variables are drawn using \Cref{lem:restimator}
    for the Hamiltonian $H'$. Then sample $X_{i_\ell}$ with probability $1/(1+\exp[\wh r_{i_\ell}(y)])$. After repeating $M$ times, output the estimate
    \begin{align}\label{eq:qh}
    \wh q=\frac1M\sum_{t=1}^M \sigma(-1)^{\sum_{\ell=1}^q X_{i_\ell}^{(t)}}.
\end{align}
    By \Cref{lem:rtrunc}, the choice of $L$ gives
    $|r_j(y)-r_j^L(y)|\leq 8^{-L}\leq \epsilon/(8|\supp(Q)|)$. By
    \Cref{lem:restimator} and Hoeffding's inequality,
    \begin{align}
        \pr{\abs{\wh r_j(y)-r_j^L(y)}>\frac{\epsilon}{8|\supp(Q)|}} \leq 2\exp[-\frac{N\epsilon^2}{128|\supp(Q)|^2}] \leq \frac{\delta}{2M|\supp(Q)|}.
    \end{align}
    A union bound over all at most $M|\supp(Q)|$ estimates of $r_j$ used by the
    algorithm shows that, with probability at least $1-\delta/2$, every
    estimate satisfies $|\wh r_j(y)-r_j(y)|\leq \epsilon/(4|\supp(Q)|)$.

    Conditioned on this event, we bound the resulting total variational distance in measurement outcomes via a coupling argument. Consider drawing from the true measurement outcome and the distribution given by $\wh r_j$ using the same randomness; if they agree on $y \in \{0,1\}^{\ell-1}$, the probability of disagreeing at the $\ell$th step is at most $\epsilon/(4|\supp(Q)|)$. Union bounding over all $|\supp(Q)|$ steps gives a total variational distance of $\epsilon/4$. Hence, the expectation of each term of the sum in \cref{eq:qh} differs by at most $\epsilon/2$ from the true thermal expectation. Since each term is also bounded in $[-1, 1]$, Hoeffding's inequality gives
    \begin{align}
        \pr{\abs{\wh q-\EE[\wh q\mid \{r_j\}]}>\epsilon/2} \leq 2\exp[-M\epsilon^2/8] \leq \delta/2.
    \end{align}
    Combining each of the failures that occur with probability $\delta/2$ gives the final result.

    It remains to bound the runtime. The preprocessing and single-qubit Clifford conjugation cost $O(|\cA|k)$. The algorithm invokes \Cref{lem:restimator} at most $MN|\supp(Q)|$ times, and each invocation costs $\wt O(3^L k+L\log|\cA|)$. Substituting the above choices of $M,N,L$ gives the stated runtime (after loosening the exponents to become integers).
\end{proof}

\subsection{Sampling product states}
We now turn the separable pinning proof into a sampler for the Gibbs state that outputs a pure product state; averaging over the algorithm's randomness approximates the Gibbs state in trace distance. As before, for $S \subseteq [n]$ we will write
\begin{align}
    H_S = \sum_{a\in \cA:\supp(P_a)\subseteq S} c_a P_a.
\end{align}
To construct the sampler, we use the product states constructed in the separability proof. \Cref{lem:sep-pinning} creates a distribution over $\chi=((\lambda_1,X_1),\dots,(\lambda_m,X_m))$ such that $e^{-\beta H} = \EE\,\prod_{j=1}^m(I + \lambda_j X_j)$, where each $X_j$ is a Pauli string and each $\lambda_j$ is small enough to ensure that $e^{-\beta H}$ is separable. The main tool to generate this distribution is the propagator established in \Cref{lem:sep-propagator-expansion} and \Cref{lem:sep-propagator-sampling}, which is repeatedly used to propose the next $\lambda_j, X_j$ in $\chi$ via the pinning procedure of \Cref{lem:sep-pinning}. To bound the cost of this procedure, we first record the cost of the propagator step.

\begin{lemma}[Cost of propagator algorithm]\label{lem:prop-alg}
    Given a $(s,k)$-long-range Pauli Hamiltonian $H = \sum_{a\in\cA}c_a P_a$, let $S \subseteq [n]$, let $T \subseteq \supp(P_{a_*})$ for some $a_*\in\cA$, and set $\wh S = S \setminus T$. There is a randomized procedure which outputs $(b, E, C(E))$ where $b \geq 0$, and $C(E) = (a_1,\dots,a_t)$ is connected to $T$, and $E = \prod_{j=1}^t P_{a_j}$ is a Pauli string up to a phase, where $t \geq 1$ is drawn with probability $2^{-t}$. These quantities satisfy 
    \begin{align}
        \E{I+bE} = e^{-\beta H_S} e^{\beta H_{\wh S}}, \qquad b \leq (6|\beta|sk)^t,
    \end{align}
    and, conditioned on $t \leq L$, the procedure generates a sample in time $O(L|\cA|k + L^2k)$.
\end{lemma}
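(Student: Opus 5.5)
We will make the existence argument of \Cref{lem:sep-propagator-expansion} and \Cref{lem:sep-propagator-sampling} constructive, applied to $H\to H_S$ and with $\wh S=S\setminus T$, so that $H_S-H_{\wh S}$ is the sum of terms of $H_S$ touching $T$. The recurrence $f_{t+1}=-[H_S,f_t]-f_t(H_S-H_{\wh S})$ has the same structure as in \Cref{lem:sep-propagator-expansion}, with $H_{(a^*)}$ replaced by terms touching $T\subseteq\supp(P_{a_*})$. Since $|T|\le k$, the first-step mass is again at most $ks$, and the coefficient mass of $f_t$ is at most $(3ks)^t t!$. Rather than enumerating $f_t$, we generate a single word $(a_1,\dots,a_t)$ by a random growth process whose probabilities are proportional to the coefficient magnitudes appearing in the recurrence, and we output an importance weight $b$.

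The sampling plan is as follows.
\begin{enumerate}[label=(\roman*)]
\item Draw $t\ge1$ with probability $2^{-t}$.
\item Choose the first term by sampling $x\in T$ with probability $S_S(x)/\sum_{y\in T}S_S(y)$, then $a_1\in\cA(x)\cap\cA^{(S)}$ with probability $|c_{a_1}|/S_S(x)$. Here $S_S(x)$ is the incident mass restricted to terms inside $S$, which we precompute in $O(|\cA|k)$ time.
\item At step $i+1$, unfold the recurrence and choose among three branches: left-multiply by $-P_a$ or right-multiply by $+P_a$ for a term $a$ touching the current support (the commutator), or right-multiply by $-P_a$ for $a$ touching $T$.
\item For the chosen branch, sample a qubit $x$ uniformly-by-mass from the current support (at most $ik$ sites) or from $T$, and then sample $a\in\cA(x)$ proportional to $|c_a|$.
\end{enumerate}
Each step's reciprocal proposal probability is at most $(2ik+k)s\le 3iks$ divided by $|c_a|$, up to overcounting multiplicity.

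To correct for the fact that a word can be reached by choosing different qubits $x$ in its intersection, we divide by the intersection count, as in \Cref{lem:restimator}; this count is $\ge1$, so it only decreases $|b|$. The output weight is
\[
b=\frac{\beta^t}{t!\,2^{-t}}\cdot\frac{|\text{coefficient}|}{\text{proposal prob}},
\]
with phase and sign absorbed into $E$. Since $2\cdot(3ks)^t t!\cdot\beta^t/t!\le(6\beta ks)^t$, we get $b\le(6|\beta|sk)^t$ with a nonnegative real coefficient. Unbiasedness $\E{I+bE}=\sum_t\beta^tf_t/t!=e^{-\beta H_S}e^{\beta H_{\wh S}}$ then follows term by term from the recurrence, exactly as the expectation computation in \Cref{lem:restimator}. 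The identity contributes automatically because every $t\ge1$ word corrects only the non-identity part.

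For runtime, conditioned on $t\le L$ we do $t$ growth steps. Each step samples a qubit from a support of size $O(tk)$, which costs $O(Lk)$ by scanning. It then samples $a\in\cA(x)$ restricted to terms inside $S$; rejection against membership in $S$ may require a rescan of $\cA(x)$, costing $O(|\cA|k)$ per step in the worst case. Tracking the support and the Pauli product costs $O(k)$ per step. The total is $O(L|\cA|k+L^2k)$.

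The main obstacle is the multiplicity correction. The recurrence produces the same word $(a_1,\dots,a_t)$ through different branch sequences: left versus right multiplication, with commutators producing $\pm$ signs. The mass bound $(3ks)^tt!$ counts these with multiplicity, so the sampler must weight by branch path rather than by word. Weighting by branch path is unbiased because $f_t$ is literally the sum over branch paths. We will therefore not merge words, and we will let $E$ carry the accumulated phase.
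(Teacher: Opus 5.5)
Your proposal is correct and follows the paper's route. Like the paper, you importance-sample a single branch path of the recurrence $f_{r+1}=-[H_S,f_r]-f_r(H_S-H_{\wh S})$, multiply the weight by the total branch mass at each step (at most $(2r+1)ks$), draw $t$ with probability $2^{-t}$, and set $b=(2|\beta|)^tB/t!$. The one difference is minor: the paper samples the new term directly from the set of terms touching $T$ or the current union of supports (masses $W_R$ and $2W_C$, with each term counted once), so it needs no intersection-count correction. Also fix the typo in your final bound, which should read $2^t(3ks)^t\,t!\,|\beta|^t/t!=(6|\beta|ks)^t$.
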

\begin{proof}
    The bound on $b$ is already shown in \Cref{lem:sep-propagator-sampling}. We restate the sampling procedure in algorithmic terms to note its cost.
    
    Sample $t \geq 1$ with probability $2^{-t}$, initialize the current Pauli $E$ to $I$, set $C$ to be empty, and introduce a likelihood ratio $B = 1$. Define the function $f_t(S,T)$ from \Cref{lem:sep-propagator-expansion} from the recurrence
    \begin{align}
        f_{t+1}(S,T) = -[H_S, f_t(S,T)] - f_t(S,T)(H_S - H_{S\setminus T}).
    \end{align}
    Suppose that the procedure has already generated labels $a_1,\dots,a_r$; set
    \begin{align}
        W_R = \sum_{a \in \cA_S: \supp(P_a) \cap T \neq \emptyset} |c_a| \leq sk, \qquad W_C = \sum_{a \in \cA_S : \supp(P_a) \cap \lr{\cup_{i=1}^r \supp(P_{a_i})} \neq \emptyset} |c_a| \leq sk \cdot r.
    \end{align}
    If $W_R=W_C=0$, return $b=0$ and $E=I$. Otherwise, choose a label $a_{r+1}$ and update the Pauli $E$ as follows. With probability $W_R/(W_R+2W_C)$, sample $a_{r+1}$ from $\{a \in \cA_S: \supp(P_a) \cap T \neq \emptyset\}$ with probability $|c_a|/W_R$, and set $E$ to $-\sgn(c_{a_{r+1}}) E P_{a_{r+1}}$. (This corresponds to going through the $-f_r(S,T) (H_S - H_{S\setminus T})$ term in \Cref{lem:sep-propagator-expansion}.) Otherwise, sample $a_{r+1}$ from $\{a \in \cA_S : \supp(P_a) \cap \lr{\cup_{i=1}^r \supp(P_{a_i})} \neq \emptyset\}$ with probability $|c_a|/W_C$, then set $E$ to $-\sgn(c_{a_{r+1}}) P_{a_{r+1}} E$ or $\sgn(c_{a_{r+1}}) EP_{a_{r+1}}$ with equal probability. (This corresponds to going through the $-[H_S,f_r(S,T)]$ term in \Cref{lem:sep-propagator-expansion}.) In either case, multiply $B$ by $W_R+2W_C$, and append $a_{r+1}$ to $C(E)$.

    By induction on $r$, the expectation of $B$ times the current $E$ after $r$ steps is exactly $f_r(S,T)$. After $t$ steps, choose $b = (2|\beta|)^tB/t!$ and absorb the phase of $\beta$ into $E$ to give
    \begin{align}
        \E{I+bE} = I + \EE_t\left[\frac{(2\beta)^t f_t(S,T)}{t!}\right] = e^{-\beta H_S} e^{\beta H_{\wh S}}
    \end{align}
    by \cref{eq:sep-propagator-series} of \Cref{lem:sep-propagator-expansion}. For $t \leq L$, all the sets needed above can be formed by scanning the Hamiltonian terms and checking whether their supports intersect $T$ or  $\bigcup_{i=1}^r\supp(P_{a_i})$; each check costs $O(k)$. Since there are at most $L$ iterations and $|\supp(E)|\leq Lk$, the total cost of this is $O(L|\cA|k+L^2k)$ after the lists $\cA(x)$ have been preprocessed (which costs $O(|\cA|k)$ and is thus dominated).
\end{proof}

Using the pinning procedure of \Cref{lem:sep-pinning}, we apply the propagator to go from the state $(S,\chi)$ to a new state $(\wh S, \wh \chi)$, where $\wh S = S\setminus T$ and $\wh \chi$ is updated by the output $(b, E, C(E))$ of \Cref{lem:prop-alg}. Explicitly, we use the propagator twice to generate two tuples $(b_1, E_1, C(E_1))$ and $(b_2, E_2, C(E_2))$, which uniquely specify the next $\lambda_{m+1}$ and $X_{m+1}$ following the table of \Cref{lem:sep-pinning}. If we run the above propagator algorithm at fixed $L \geq 1$, we truncate the tail probability $2^{-L}$ for each of the two propagator calls; hence, the total variational distance between the true and returned proposal distributions over $(\wh S, \wh \chi)$ is at most $2^{1-L}$ at each step.

Let us recall how the pinning procedure of \Cref{lem:sep-pinning} works in more detail. Let $(S_t, \chi_t)$ record the state of the pinning procedure at step $t$. We also track the invariant \cref{eq:sep-pinning-operator-invariant} in the pinning procedure, i.e.,
\begin{align}\label{eq:invar1}
    e^{-\beta H} = \E{\sigma_t}, \qquad \sigma_t = e^{-\beta H_{S_t}/2} \prod_{j=1}^t (I+\lambda_jX_j) e^{-\beta H_{S_t}/2}.
\end{align}
Due to this invariant, updating from $\omega_t$ to $\omega_{t+1}$ gives
\begin{align}\label{eq:ratiocond}
    \EE_{\omega_{t+1}}\left[\Tr(e^{-\beta H_{S_{t+1}}/2} \prod_{j=1}^{t+1} (I+\lambda_j X_j) e^{-\beta H_{S_{t+1}}/2}) \, \Bigg|\; \omega_t\right] = \Tr(e^{-\beta H_{S_t}/2} \prod_{j=1}^t (I+\lambda_j X_j) e^{-\beta H_{S_t}/2}).
\end{align}
At the start of the pinning procedure, $\omega_0 = ([n], \emptyset)$. At the end, $S_T = \emptyset$ and
\begin{align}
    \sigma_T = \prod_{j=1}^T (I+\lambda_jX_j).
\end{align}

\begin{lemma}[Cost of partition function ratio estimation]\label{lem:rh}
    For all $\beta \leq 1/(4096esk)$, every proposal $\omega_t \to \omega_{t+1}$, and every $\epsilon,\delta\in(0,1)$, there is a randomized algorithm that outputs $\wh r$ satisfying
    \begin{align}
        \pr{\abs{\wh r - \frac{\Tr(\sigma_{t+1})}{\Tr(\sigma_t)}} > \epsilon} \leq \delta
    \end{align}
    with cost
    \begin{align}
        \wt O\lr{|\cA|k+\frac{kn^4}{\epsilon^5}\log |\cA|\log\frac1\delta}.
    \end{align}
    Moreover,
    \begin{align}\label{eq:rhobound}
        0 \leq \frac{\Tr(\sigma_{t+1})}{\Tr(\sigma_t)} \leq 5.
    \end{align}
\end{lemma}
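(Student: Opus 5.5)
We will reduce the ratio $\Tr(\sigma_{t+1})/\Tr(\sigma_t)$ to quantities already estimable. Write $\sigma_t = e^{-\beta H_{S_t}/2} A_t e^{-\beta H_{S_t}/2}$ with $A_t=\prod_{j\le t}(I+\lambda_jX_j)$. By cyclicity, $\Tr(\sigma_t)=\Tr(A_t e^{-\beta H_{S_t}})$. By \Cref{lem:sep-pinning-validity}, every factor except the last is disjoint from $S_t$ and commutes with everything involved. Also $H_{S_t}$ splits as a sum of a part on $S_t$ and nothing else, so the trace factorizes. The finished factors contribute identical traces $\Tr(I+\lambda_jX_j)$ to numerator and denominator (each $2^{|\cdot|}$, since $X_j$ is a nonidentity Pauli or $0$). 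These cancel, and only the active last factor $(\lambda,X)$ remains. The ratio becomes
\begin{align}
    \frac{\Tr(\sigma_{t+1})}{\Tr(\sigma_t)} = \frac{\Tr\bigl((I+\wh\lambda \wh X)e^{-\beta H_{S_{t+1}}}\bigr)}{\Tr\bigl((I+\lambda X)e^{-\beta H_{S_t}}\bigr)} = \frac{Z_{S_{t+1}}}{Z_{S_t}}\cdot\frac{1+\wh\lambda\langle \wh X\rangle_{\beta,S_{t+1}}}{1+\lambda\langle X\rangle_{\beta,S_t}},
\end{align}
up to the explicit dimension factor $2^{|S_t\setminus S_{t+1}|}$ from qubits removed from $S$. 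Here $Z_S$ and $\langle\cdot\rangle_{\beta,S}$ are the partition function and Gibbs expectation of $H_S$. Each $H_S$ is again an $(s,k)$-long-range Pauli Hamiltonian, so the assumed bound $\beta\le 1/(4096esk)$ lies well inside the hypotheses of both \Cref{thm:expect} and \Cref{thm:globalpauli}.

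The estimation proceeds in three parts.

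\textbf{Partition function ratio.} We estimate $\log Z_{S_{t+1}}-\log Z_{S_t}$ with \Cref{thm:expect} to additive error $\epsilon'=\Theta(\epsilon)$, then exponentiate. Cheaper would be to note that the cluster expansion difference only involves clusters touching the $\le k$ removed qubits, but the plain estimate within $n^3/\epsilon^3$ suffices for the stated cost.

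\textbf{Pauli expectations.} $X$ and $\wh X$ may be global Pauli strings, of weight up to $Lk\le n$. We estimate $\langle X\rangle$ and $\langle\wh X\rangle$ with \Cref{thm:globalpauli} at cost $\wt O(kn^4\log|\cA|/\epsilon^5)$. This term dominates and matches the claimed runtime.

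\textbf{Combination.} We take a union bound with failure $\delta/3$ each and propagate errors through the quotient.

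The main obstacle is that error propagation and the bound \cref{eq:rhobound} both require the denominator $1+\lambda\langle X\rangle$ to be bounded away from $0$ and the numerator to be controlled. Nonnegativity is immediate, since $\sigma_{t+1}$ is PSD by \Cref{lem:sep-constant-controlled} and \Cref{lem:sep-pauli-factor}. For the upper bound and the lower bound on the denominator, the plan is to use that $|\lambda|\le q^{|u_S(X)|}$ with $q=1/2$, plus the sampled coefficients $\wh\lambda$, which are at most $6/(1-q)$ times the old ones times $b$'s. The partition ratio $Z_{S_{t+1}}2^{|\cdot|}/Z_{S_t}$ is $\exp(O(\beta sk))$, close to $1$, by the cluster expansion tail bounds in \Cref{lem:clustertrunc}. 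This gives a numerator $\le 1+\lvert\wh\lambda\rvert\le$ a small constant.

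The delicate part is the denominator when $|\lambda|$ is near $1$ and $\langle X\rangle\approx -\operatorname{sgn}\lambda$. I would try to show that at this much smaller $\beta$ (the extra factor $4096/72$), the sampling primitive actually gives $|\lambda|\le 1/2$. If $|\lambda|\le 1/2$, then $1+\lambda\langle X\rangle\ge 1/2$, which yields the ratio bound $5$ after combining with the numerator bound $\le 2.5$. It also gives relative error amplification by only a constant, so rescaling $\epsilon$ by a constant finishes the proof.
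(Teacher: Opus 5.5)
Your proposal follows the paper's proof: the same factorization $\frac{Z_{S'}}{Z_S}\cdot\frac{1+\wh\lambda\langle\wh X\rangle}{1+\lambda\langle X\rangle}$, the same estimators (\Cref{thm:expect} for the two log partition functions, \Cref{thm:globalpauli} for $\langle X\rangle,\langle\wh X\rangle$), and the same union bound plus Lipschitz propagation. The gap is the step you call delicate and leave open: the lower bound $1+\lambda\langle X\rangle\ge 1/2$. Without it, neither \cref{eq:rhobound} nor a constant Lipschitz constant for the quotient follows.

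You propose to get $|\lambda|\le 1/2$ from the smaller temperature $\beta\le 1/(4096esk)$, but that is not the mechanism, and no argument along those lines is given. The missing observation is combinatorial. The pinning algorithm keeps updating the last factor only when $u_{S_t}(X)\neq\emptyset$; that is its rule for choosing $a^*$. So the invariant of \Cref{lem:sep-constant-controlled} with $q=1/2$ already gives $|\lambda|\le q^{|u_{S_t}(X)|}\le 1/2$. If instead $u_{S_t}(X)=\emptyset$, then $X$ is a finished factor that cancels between numerator and denominator, and the freshly appended factor has $\lambda=0$. The same invariant gives $|\wh\lambda|\le 1$, hence $0\le 1+\wh\lambda\langle\wh X\rangle\le 2$. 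It holds pathwise, so truncation and rejection in the algorithm do not affect it.

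A secondary issue: you say the partition-function ratio is $\exp(O(\beta sk))$ by the tail bounds of \Cref{lem:clustertrunc}, but those bounds are global ($n8^{-L}$). They do not localize the difference to clusters touching the removed qubits. Summing the per-site cluster bound from that proof over the $\le k$ removed sites gives only $O(\beta s k^{3/2})$, which is not uniformly small in $k$. The paper instead uses $\norm{H_S-H_{S'}}\le sk$: every removed term touches $T\subseteq\supp(P_{a^*})$ and $|T|\le k$. This gives $\abs{\log\Tr e^{-\beta H_{S'}}-\log\Tr e^{-\beta H_S}}\le\beta sk\le 1/(4096e)$. Combining, the ratio is at most $e^{1/(4096e)}\cdot 2/(1/2)\le 5$. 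The partial derivatives of the quotient in the estimated quantities are bounded by a constant, so estimating each ingredient to accuracy $\epsilon/100$ with failure probability $\delta/4$ completes the proof.
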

\begin{proof}
    We rewrite $\Tr(\sigma_{t+1})/\Tr(\sigma_t)$ in terms of Pauli expectations and partition functions, which we will estimate using \Cref{thm:expect} and \Cref{thm:globalpauli}. To do this, we follow the proof of \Cref{lem:sep-pinning}, where we consider updating from $\omega = (S, \chi)$ to $\omega' = (S', \chi')$; denote the last entry in $\chi$ by $(\lambda, X)$ and the new entry in $\chi'$ by $(\lambda', X')$. Since the earlier entries in $\chi$ have disjoint supports (by the second invariant in the proof of \Cref{lem:sep-pinning}), we only need to keep track of these last entries. Explicitly, for every $j < |\chi|$, the invariant $u_S(X_j) = \emptyset$ ensures that $\supp(X_j)$ is disjoint from $S$; since $S' \subseteq S$, these factors are supported outside both $S$ and $S'$, so when the trace factorizes over disjoint qubits their contributions cancel. For the last factor, cyclicity of trace gives
    \begin{align}
        \tr(e^{-\beta H_S/2}(I+\lambda X)e^{-\beta H_S/2}) = \tr(e^{-\beta H_S}) + \lambda \tr(X e^{-\beta H_S}) = \tr(e^{-\beta H_S}) + \lambda \tr(\tr_{[n]\setminus S}(X) e^{-\beta H_S}),
    \end{align}
    and similarly for $S', X'$. Hence, the ratio is
    \begin{align}\label{eq:sigmaratio}
        \frac{\Tr(\sigma_{t+1})}{\Tr(\sigma_t)} = \frac{\tr(e^{-\beta H_{S'}})}{\tr(e^{-\beta H_S})} \frac{1+\lambda' q_{S'}(X')}{1 + \lambda q_S(X)}, \qquad q_S(X) = \frac{\tr[\tr_{[n]\setminus S}(X) e^{-\beta H_S}]}{\tr(e^{-\beta H_S})}.
    \end{align}
    The right-hand side can be estimated using our prior results. Since $H_S$ and $H_{S'}$ are $(s,k)$-long-range Pauli Hamiltonians, we can use \Cref{thm:expect} to estimate the partition functions; we evaluate the expectations of $X, X'$ using \Cref{thm:globalpauli}. If we apply each estimator to accuracy $c\epsilon$ and failure probability $\delta/4$, then we claim that we can estimate the entire quantity to error $\epsilon$ with failure probability $\delta$; this follows from a Lipschitz estimate. For the ratio of partition functions, since $T = S \setminus S'$ and $T \subseteq \supp(P_{a_*})$ for some $a_* \in \cA$, we have that $\norm{H_S - H_{S'}} \leq sk$, which implies that $|\log \tr e^{-\beta H_{S'}} - \log \tr e^{-\beta H_S}| \leq \beta s k \leq 1/4096e$. For the second ratio, we use $1+\lambda' q_{S'}(X') \leq 2$ and $1+\lambda q_S(X) \geq 1/2$, since $|q_S(X)|, |q_{S'}(X')| \leq 1$ and the third invariant of \Cref{lem:sep-pinning} giving $|\lambda| \leq 1/2$ and $|\lambda'| \leq 1$. This also implies the claimed bound \cref{eq:rhobound}
    \begin{align}
        0 \leq \frac{\tr(e^{-\beta H_{S'}})}{\tr(e^{-\beta H_S})} \frac{1+\lambda' q_{S'}(X')}{1 + \lambda q_S(X)} \leq e^{1/4096e} \cdot \frac{2}{1/2} \leq 5.
    \end{align}
    We now bound $c$ conditioned on the event that each estimator succeeded. Since each log partition function is estimated to additive error $c\epsilon$, the relative error of the exponentiated ratio is at most $e^{2c\epsilon} - 1$. For $c\epsilon \leq 1/100$, this satisfies $e^{2c\epsilon} - 1 \leq \epsilon/30$. The derivatives with respect to $q_S$ and $q_{S'}$ in \cref{eq:sigmaratio} also have magnitude at most 10, and thus choosing $c = 1/100$ is enough to ensure $|\wh r - r| \leq \epsilon$.
    
    The preprocessing of $\cA(x)$ and $S(x)$ required for \Cref{thm:expect} and \Cref{thm:globalpauli} costs $O(|\cA|k)$. The two calls to \Cref{thm:expect} are dominated by the two calls to \Cref{thm:globalpauli} for Paulis of weight at most $n$, giving the claimed runtime.
\end{proof}

\begin{proof}[Proof of \Cref{thm:samp}]
We can now complete the proof of the algorithm by constructing a sampler that works by rejection sampling. For $\rho=5$ given by \Cref{lem:rh}, choose
\begin{align}\label{eq:propparams}
    L = \left\lceil\log_2 \frac{4000\rho n}{\epsilon}\right\rceil, \qquad \delta_{\rm prop} = 2^{1-L}, \qquad T=\left\lceil 4\rho \log \frac{1000n}{\epsilon}\right\rceil.
\end{align}
At current state $\omega$, make $T$ trials of the following proposal, stopping at the first accepted proposal. Draw $\omega'$ from \Cref{lem:prop-alg} truncated to $L$ and compute $\wh r(\omega')$ from \Cref{lem:rh} with accuracy $\epsilon_{\rm rej}$ and failure probability $\delta_{\rm rej}$ given by
\begin{align}\label{eq:rejparams}
    \epsilon_{\rm rej} = \frac{\epsilon}{1000n}, \quad \delta_{\rm rej} = \frac{\epsilon}{1000nT}.
\end{align}
Accept with probability $\operatorname{clip}_{[0,1]}(\wh r/\rho)$. If no proposal is accepted in $T$ trials, return an arbitrary product stabilizer state and declare failure.

We compare one accepted step of this algorithm with an ideal pinning step starting from the current state $\omega$. Let $P_\omega(d\omega')$ be the ideal, untruncated proposal distribution over $\omega'$ from \Cref{lem:sep-pinning}, and let $P^{L}_\omega(d\omega')$ be the proposal distribution obtained by truncating the two propagator calls to $L$. (We ignore the constraint of $T$ trials for now.) Here $d\omega'$ denotes summation over the possible next states $\omega'=(S',\chi')$. As previously observed, truncation to $L$ introduces TVD error at most $2^{1-L}$ (since $t$ was sampled with probability $2^{-t}$ in \Cref{lem:prop-alg}), so for our parameters \cref{eq:propparams} we have
\begin{align}\label{eq:pl}
    \norm{P_\omega-P^L_\omega}_{\rm TV}\leq \delta_{\rm prop}.
\end{align}
Define $r_\omega(\omega')=\Tr(\sigma_{\omega'})/\Tr(\sigma_\omega)$ as the quantity that $\wh r(\omega')$ is estimating; by \cref{eq:ratiocond}, we have that $\int r_\omega(\omega')P_\omega(d\omega')=1$. Condition on the event that $|\wh r(\omega') - r_\omega')| \leq \epsilon_{\rm rej}$ for every estimator $\wh r$; by \Cref{lem:rh}, this gives (taking the expectation over the internal randomness of the estimator)
\begin{align}\label{eq:abarerr}
    \abs{\E{{\rm clip}_{[0,1]}(\wh r/\rho) \mid \omega,\omega'} - \frac{r_\omega(\omega')}{\rho}} \leq \frac{\epsilon_{\rm rej}}{\rho} + \delta_{\rm rej}
\end{align}
and thus the acceptance probability of one trial is
\begin{align}
    A_\omega = \int \E{{\rm clip}_{[0,1]}(\wh r/\rho) \mid \omega,\omega'} P_\omega^L(d\omega'),
\end{align}
so the law of the accepted proposal without the cap of $T$ trials is $\E{{\rm clip}_{[0,1]}(\wh r/\rho) \mid \omega,\omega'} P_\omega^L(d\omega') / A_\omega$. By \Cref{lem:rh}, $0 \leq r_\omega(\omega')/\rho \leq 1$ and $\int r_\omega(\omega') P_\omega(d\omega')/\rho = 1/\rho$; combined with \cref{eq:pl,eq:abarerr}, this gives for every event $B$ of possible next states
\begin{align}
    \abs{\int_B \E{{\rm clip}_{[0,1]}(\wh r/\rho) \mid \omega,\omega'} P_\omega^L(d\omega') - \int_B \frac{r_\omega(\omega')}{\rho}P_\omega(d\omega')} \leq \delta_{\rm prop} + \frac{\epsilon_{\rm rej}}{\rho} + \delta_{\rm rej}.
\end{align}
Taking $B$ to be the whole state space gives
\begin{align}\label{eq:acc}
    \abs{A_\omega - \frac{1}{\rho}} \leq  \delta_{\rm prop} + \frac{\epsilon_{\rm rej}}{\rho} + \delta_{\rm rej}, \qquad A_\omega \geq \frac{1}{2\rho}.
\end{align}
For an arbitrary event $B$, we thus have
\begin{align}
    &\abs{\int_B \frac{\E{{\rm clip}_{[0,1]}(\wh r/\rho) \mid \omega,\omega'} P_\omega^L(d\omega')}{A_\omega} - \int_B r_\omega(\omega') P_\omega(d\omega')} \leq \frac{1}{A_\omega}\lr{\delta_{\rm prop} + \frac{\epsilon_{\rm rej}}{\rho} + \delta_{\rm rej}} + \frac{|A_\omega-1/\rho|}{A_\omega} \notag\\
    &\leq \frac{2}{A_\omega}\lr{\delta_{\rm prop} + \frac{\epsilon_{\rm rej}}{\rho} + \delta_{\rm rej}} \leq \frac{\epsilon}{100n},\label{eq:accepted-kernel-tv}
\end{align}
where the final inequality follows from plugging in $\delta_{\rm prop} \leq \epsilon/2000\rho n$, $\epsilon_{\rm rej}/\rho = \epsilon/1000\rho n$ and $\delta_{\rm rej} \leq \epsilon/4000\rho n$ for $T \geq 4\rho$. Crucially, \cref{eq:accepted-kernel-tv} holds for every current state $\omega$, even if $\omega$ was generated adaptively by preceding approximate steps. The total variational error over at most $n$ accepted pinning steps is therefore at most $\epsilon / 100$.

We can now account for only using $T$ trials. Since the trials are conditionally independent and identically distributed given $\omega$, conditional on at least one acceptance among the first $T$ trials, the first accepted proposal has the same law $\E{{\rm clip}_{[0,1]}(\wh r/\rho) \mid \omega,\omega'} P_\omega^L(d\omega') / A_\omega$ as above. Since each trial accepts with probability at least $1/(2\rho)$ by
\cref{eq:acc}, the probability no proposal is accepted in $T$ trials is at most $(1-1/(2\rho))^T\leq \epsilon/(1000n)$. Over the $\leq n$ steps of the pinning procedure, the probability of failing from the trial truncation is thus at most $\epsilon/1000$. Adding all errors gives
\begin{align}
    \frac{\epsilon}{100}
    +
    \frac{\epsilon}{1000}
    \leq
    \epsilon
\end{align}
in total variation distance from the ideal transcript law of $\omega_1, \dots, \omega_T$. We will shortly see that this upper-bounds the trace distance between the algorithm's state and the Gibbs state.

We now prepare the actual product state from the final $\chi_T = ((\lambda_1,X_1),\dots,(\lambda_T,X_T))$. By \Cref{lem:sep-pinning}, the Gibbs state is represented as a product of $\prod_j(I+\lambda_jX_j)$ for $|\lambda_j| \leq 1$ and Pauli operators $X_j$ with pairwise disjoint supports. We sample a product stabilizer state from each factor independently using
\cref{eq:icp} of \Cref{lem:sep-pauli-factor}: return the maximally mixed
state on $\supp(X_j)$, implemented for instance by a uniformly random product
state in the $Z$ basis, with probability $1-|\lambda_j|$. Otherwise, return a
uniformly random product eigenstate of $X_j$ with eigenvalue $\sgn(\lambda_j)$.
Conditioned on $\omega$, this procedure gives a pure product stabilizer state
whose expectation is $\sigma_T(\omega)/\Tr(\sigma_T(\omega))$. In the ideal case, this is precisely the Gibbs state by \cref{eq:invar1}: the transcript law $D(d\omega)$ must satisfy
\begin{align}
    \int \frac{\sigma_T(\omega)}{\Tr(\sigma_T(\omega))}D(d\omega) =\frac{e^{-\beta H}}{\Tr(e^{-\beta H})}.
\end{align}
Replacing $D$ by a transcript law within total variation distance $\epsilon$ changes the expected output state by trace distance at most $\epsilon$. 
This follows from a coupling argument: for $D, \wt D$ within $\epsilon$ TVD, couple entries $\omega, \wt \omega$ with $\pr{\omega\neq\wt\omega} \leq \epsilon$ and check the output states $\sigma_T(\omega)$:
\begin{align}
    \frac{1}{2}\norm{\int \frac{\sigma_T(\omega)}{\Tr \sigma_T(\omega)} D(d\omega) - \int \frac{\sigma_T(\omega)}{\Tr \sigma_T(\omega)} \wt D(d\omega)} \leq \E{\frac{1}{2} \norm{\frac{\sigma_T(\omega)}{\Tr \sigma_T(\omega)} - \frac{\sigma_T(\wt\omega)}{\Tr \sigma_T(\wt\omega)}}_1} \leq \pr{\omega\neq\wt\omega} \leq \epsilon.
\end{align}
Finally, we record the runtime. The algorithm makes at most $nT=\wt O(n)$ proposals. For each proposal, the two truncated propagator calls cost $\wt O(L|\cA|k)$ by \Cref{lem:prop-alg}, and the ratio estimate is called with accuracy $\epsilon_{\rm rej}, \delta_{\rm rej}$ given by \cref{eq:rejparams}. \Cref{lem:rh} therefore gives total runtime, up to logarithmic factors, bounded by
\begin{align}
    \wt O\lr{n|\cA|k + \frac{kn^{11}}{\epsilon^5} \log |\cA|}.
\end{align}
\end{proof}

\section{Acknowledgments}
The authors thank Ainesh Bakshi, Fernando Brandão, Chi-Fang Chen, Sitan Chen, and Bobak Kiani for useful discussions. ChatGPT Pro $\leq 5.5$ contributed significantly to the technical content of \Cref{sec:upper,sec:nonlocal} and to improving the $k$-dependence in \Cref{thm:sep}. The remaining central proof ideas in the main text were human-made, although ChatGPT helped with proof development (including nontrivial but elementary aspects such as \Cref{fac:pincum}) and in finding errors. The human authors checked and refined all portions of the proofs developed with AI assistance. The authors wrote the manuscript and take full responsibility for the correctness of the paper. AZ was supported by a Hertz Fellowship and a grant from the Simons Foundation (MP-SIP-00001553, AWH). HP was supported by the Department of Defense through the National Defense Science and Engineering Graduate (NDSEG) Fellowship Program.

\bibliography{main}
\bibliographystyle{alpha}
\newpage
\appendix

\section{Upper bounds on thresholds}
\label{sec:upper}
\subsection{Separability and stabilizerness}
\begin{theorem}[Entangled Hamiltonians]
\label{thm:sep-upper}
Define the $D$-dimensional lattice $\Lambda_L = \{1,\dots,L\}^D$ and fix $m \geq 2$. At each site $x \in \Lambda_L$, and indices $1 \leq i < j \leq m$, introduce qubits denoted by $q_{x,i,j,\ell}$ for $\ell \in \{1,2\}$, and introduce control qubits $c_{x,1},\dots,c_{x,m}$. For coefficients $J_{x,y} = J_{y,x} \geq 0$ indexed by $x, y \in \Lambda_L$ and $a \geq 0$, define
\begin{align}
    H = -a \sum_{x \in \Lambda_L} \sum_{i=1}^m  Q_{x,i} - \sum_{x,y\in\Lambda_L} \sum_{i=1}^m J_{x,y} Q_{x,i} Z_{c_{y,i}}
\end{align}
for operators
\begin{align}
    Q_{x,i} = \lr{\prod_{j < i} Z_{q_{x,j,i,1}}}\lr{\prod_{j > i} X_{q_{x,i,j,1}}}\lr{\prod_{j < i} Z_{q_{x,j,i,2}}}\lr{\prod_{j > i} X_{q_{x,i,j,2}}}.
\end{align}
This is a commuting Hamiltonian with locality $k=2m-1$ and is entangled if for some $x \in \Lambda_L$, $m \tanh(\beta(a+\sum_{y\in\Lambda_L} J_{x,y})) > 1$.
\begin{itemize}
    \item \emph{Long-range Hamiltonians.} For every $k \geq 2$ and $s > 0$, setting $m = \lfloor k/2\rfloor + 1$, $a=s/2$ and $J_{x,y} = 0$ results in an $(s,k)$-long-range Pauli Hamiltonian. For all
    \begin{align}
        \beta \geq \frac{2}{s}\arctanh \frac{3}{2m} = \Theta\lr{\frac{1}{sk}}
    \end{align}
    and sufficiently large $L$, the Gibbs state is entangled and has trace distance at least $1/2$ from any separable state.
    \item \emph{Low-intersection Hamiltonians.} Setting $m=2, a=0, J_{x,y}=1$ results in a $(\mathfrak d,3)$-low-intersection Pauli Hamiltonian with $\mathfrak d = 3L^D-2$. For all
    \begin{align}
        \beta > \frac{96 \log 2}{\mathfrak d + 2} = \Theta\lr{\frac{1}{\mathfrak d}}
    \end{align}
    and sufficiently large $L$, the Gibbs state is entangled and has trace distance at least $1/12$ from any separable state.
    \item \emph{Power-law Hamiltonians.} Set $m=2, a=0, J_{x,x}=0$ and for $x \neq y$, $J_{x,y} = 1/(1+\norm{x-y}_2)^\alpha$. For every fixed $\beta > 0$ and $\alpha \leq D$, the Gibbs state is entangled for all sufficiently large $L$. (Note that the local interaction strength remains uniformly bounded in $L$ when $\alpha > D$ and diverges when $\alpha \leq D$.)
\end{itemize}
\end{theorem}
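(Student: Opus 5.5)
The plan is to exhibit an explicit entanglement witness that exploits the anticommutation structure hidden inside each $Q_{x,i}$, and to reduce the Gibbs state, after a local measurement of the control qubits, to a product of independent Pauli Gibbs factors.

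\textbf{Step 1: the algebra of the $Q_{x,i}$.} Write $Q_{x,i}=A_{x,i}\otimes B_{x,i}$, where $A_{x,i}$ acts on the $\ell=1$ qubits and $B_{x,i}$ on the $\ell=2$ qubits. For $i<j$, the only qubit shared by $A_{x,i}$ and $A_{x,j}$ is $q_{x,i,j,1}$, on which one acts by $X$ and the other by $Z$. Hence the $A_{x,1},\dots,A_{x,m}$ pairwise anticommute, and likewise the $B_{x,i}$. The $Q_{x,i}$ themselves commute, since they differ by two anticommuting qubits. They also commute with every $Z_c$, so $H$ is commuting. Moreover, no nontrivial product of distinct $Q_{x,i}$ is $\pm I$: each qubit $q_{x,i,j,\ell}$ is touched only by $Q_{x,i}$ (with $X$) and $Q_{x,j}$ (with $Z$). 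Counting supports gives $k=2(m-1)+1=2m-1$ for the terms $Q_{x,i}Z_{c_{y,i}}$. The $\ell_1$ strength per qubit is bounded by $2a+\max_x\sum_y J_{x,y}$ plus a control contribution. The degree and strength bookkeeping for the three parameter choices is routine and I would only check that it matches the stated families. In particular, for $m=2$, $J\equiv1$ it gives dual degree $\mathfrak d=3L^D-2$, and $\sum_y J_{x,y}$ diverges exactly when $\alpha\le D$.

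\textbf{Step 2: separable bound.} For a fully product state, and hence in particular a product across the $A|B$ cut, and signs $\epsilon_i\in\{\pm1\}$, I claim
\begin{align}
    \sum_i\epsilon_i\langle A_{x,i}\rangle\langle B_{x,i}\rangle\le\Big(\sum_i\langle A_{x,i}\rangle^2\Big)^{1/2}\Big(\sum_i\langle B_{x,i}\rangle^2\Big)^{1/2}\le1.
\end{align}
The last inequality holds because for pairwise anticommuting Hermitian Paulis, $\big(\sum_i u_iA_{x,i}\big)^2=\|u\|^2I$, so $\sum_i\langle A_{x,i}\rangle^2\le1$. By convexity, every separable state satisfies $\langle\sum_i\epsilon_iQ_{x,i}\rangle\le1$.

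\textbf{Step 3: Gibbs side.} Measuring all controls in the $Z$ basis is a local operation. The Gibbs state is block diagonal, $\rho_\beta=\sum_z p(z)\ketbra{z}\otimes\rho_z$, so if $\rho_\beta$ were separable then every $\rho_z$ with $p(z)>0$ would be separable. Given $z$, by the independence shown in Step 1,
\begin{align}
    \rho_z\propto\prod_{x,i}\big(I+\tanh(\beta h_{x,i}(z))Q_{x,i}\big),\qquad h_{x,i}(z)=a+\sum_yJ_{x,y}z_{c_{y,i}},
\end{align}
and therefore $\langle Q_{x,i}\rangle_{\rho_z}=\tanh(\beta h_{x,i}(z))$. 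Take the configuration $z\equiv+1$, which has positive probability. Then $\langle\sum_iQ_{x,i}\rangle_{\rho_z}=m\tanh(\beta(a+\sum_yJ_{x,y}))>1$, contradicting Step 2. This proves the general criterion.

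For power laws with $m=2$ and $\alpha\le D$, we have $\sum_yJ_{x,y}\to\infty$ as $L\to\infty$, so $2\tanh(\cdot)\to2>1$ at any fixed $\beta$. For the long-range choice, $J=0$ and $a=s/2$, and $\beta\ge\frac2s\arctanh\frac3{2m}$ gives $m\tanh(\beta a)\ge3/2$.

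\textbf{Step 4: trace-distance bounds, the main obstacle.} The expectation gap alone only yields a bound of order $1/m$, so I would use concentration over the $L^D$ sites. Since all $Q_{x,i}$ commute, jointly measure them and set $W=L^{-D}\sum_x\sum_iQ_{x,i}$. Every fully separable state is a mixture of fully product states, and under a fully product state the per-site outcomes $w_x\in[-m,m]$ are independent with mean at most $1$ by Step 2. Hoeffding then gives $\Pr[W\ge5/4]\le e^{-cL^D/m^2}$ for each product state, and hence for every separable state.

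In the long-range case the Gibbs state is an exact product over $(x,i)$ with means $\ge3/2$, so $\Pr[W\ge5/4]\to1$. The resulting total variation, and hence trace distance, tends to $1$, which is at least $1/2$ for large $L$.

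The delicate case is the low-intersection one, where all sites couple through the shared controls. There I would condition on the control configuration. With $m=2$, the symmetric all-to-all coupling makes the fields $h_{x,i}(z)=\sum_yz_{c_{y,i}}$ independent of $x$, so I would use the sign-corrected witness $\sum_i\operatorname{sgn}(h_i)Q_{x,i}$, to which Step 2 also applies. I would then need to show that with probability bounded below (say at least $1/6$), $|h_i|$ is large enough that $2\tanh(\beta|h_i|)\ge3/2$. Because the control marginal is tilted toward aligned configurations, I expect a crude bound on $\Pr[|h_i|\gtrsim\sqrt{L^D}]$, combined with $\beta\gtrsim96\log2/\mathfrak d$, to suffice. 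Concentration within each such block then gives the stated constant $1/12$. Tuning these constants is where I expect most of the work.
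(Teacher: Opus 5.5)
\textbf{Review of the proposal.} Your Steps 1--3 are correct and are essentially the paper's argument: Cauchy--Schwarz together with $\sum_i\langle A_{x,i}\rangle^2\le 1$ for anticommuting families, then postselecting the controls on all $+1$. (The per-qubit strength is $2a+2\sum_y J_{x,y}$, not $2a+\sum_y J_{x,y}$, but this does not matter anywhere.) The power-law bullet then follows.

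For the long-range trace-distance bound, your route is a valid and somewhat simpler alternative to the paper's. You jointly measure the commuting $Q_{x,i}$ and apply Hoeffding over sites, which uses that a fully product state is a product over the disjoint site blocks. The paper instead measures a random $P_{x,i_x,1}$ and $P_{x,i_x,2}$ locally on each side and applies Azuma. That version also covers states that are merely product across the $A|B$ cut.

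\textbf{The gap: the low-intersection bullet.} Write $N=L^D$. At the threshold $\beta=96\log 2/(\mathfrak d+2)=32\log 2/N$, the requirement $\tanh(\beta|h_i|)\ge 3/4$ means $|h_i|\ge \arctanh(3/4)/\beta\approx 0.04\,N$. That is a \emph{macroscopic} control magnetization, $|\sum_y \hat Z_{y,i}|=\Theta(N)$.

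A lower bound on $\Pr[|h_i|\gtrsim\sqrt{N}]$ does not help. It already holds at $\beta=0$, and it only gives $\beta|h_i|=O(N^{-1/2})\to0$. Your conditional witness would then have expectation tending to $0$, not $3/2$.

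What is missing is an ordered-phase estimate for the diagonalized measure $\pi_\beta(\hat Q,\hat Z)\propto\exp[\beta(\sum_x\hat Q_x)(\sum_y\hat Z_y)]$ of each copy $i$. The paper obtains it by energy--entropy counting:
\begin{itemize}
\item the two fully aligned configurations contribute $2e^{\beta N^2}$ to the normalization;
\item the at most $4^N$ configurations with $R=N^{-2}(\sum_x\hat Q_x)(\sum_y\hat Z_y)\le 3/4$ contribute at most $4^N e^{3\beta N^2/4}$.
\end{itemize}
Hence $\Pr[R\le 3/4]\le 2^{-6N-1}$ once $\beta N\ge 32\log 2$, and larger $\beta$ only helps.

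Since $R>3/4$ forces $|\sum_y\hat Z_y|>\tfrac34 N$, this estimate would close your conditioning scheme, and would even give trace distance tending to $1$. The paper instead avoids conditioning and sign corrections by using the witness $\sum_i Q_{x,i}Z_{c_{y,i}}$. Its $A$-side factors $P_{x,i,1}Z_{c_{y,i}}$ still pairwise anticommute, so the witness is at most $1$ on separable states. Meanwhile $\EE R>2/3$ gives a Gibbs expectation above $4/3$, hence trace distance at least $(4/3-1)/4=1/12$.
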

\begin{proof}
    We begin by checking the basic properties of the Hamiltonian claimed in the theorem statement.
    \begin{itemize}
        \item \emph{Commuting.} Letting $P_{x,i,\ell} = \lr{\prod_{j < i} Z_{q_{x,j,i,\ell}}} \lr{\prod_{j > i} X_{q_{x,i,j,\ell}}}$, we can rewrite $Q_{x,i} = P_{x,i,1} P_{x,i,2}$. These operators satisfy $P_{x,i,\ell} P_{x,j,\ell} = - P_{x,j,\ell} P_{x,i,\ell}$ because they only overlap nontrivially on qubit $q_{x,i,j,\ell}$; they commute for different values of $\ell$. Hence, we have $Q_{x,i}Q_{x,j} = (-1)^2Q_{x,j}Q_{x,i} = Q_{x,j}Q_{x,i}$; since the qubits $c_{y,i}$ are acted only in the $Z$ basis, this suffices to show that the Hamiltonian commutes.
        \item \emph{Locality.} Each $P_{x,i,\ell}$ has weight $m-1$ and appears twice, and that $Z_{c_{y,i}}$ appears once. Hence, the locality is $k=2m-1$.
        \item \emph{Interaction strength.} Each $q_{x,i,j,\ell}$ appears in terms $Q_{x,i}$ and $Q_{x,j}$, and for each $y$ in $Q_{x,i}Z_{c_{y,i}}$ and $Q_{x,j}Z_{c_{y,j}}$. Hence, it receives interaction strength $2a + 2\sum_y J_{x,y}$. Each qubit $c_{y,i}$ has strength $\sum_x J_{x,y}$ by symmetry of $J$. We conclude that $s \leq 2(a+\max_x \sum_y J_{x,y})$.
    \end{itemize}
    We now show that the state must be entangled for all $\beta$ such that $m \tanh(\beta(a+\sum_{y\in\Lambda_L} J_{x,y})) > 1$. For the sake of contradiction, assume $\rho_\beta = \sigma_A \otimes \sigma_B$ for states $\sigma_A, \sigma_B$ defined on the regions
    \begin{align}\label{eq:ab}
        A=\{q_{x,i,j,1}\}_{x\in\Lambda_L,1\leq i< j\leq m} \cup \{c_{y,i}\}_{y\in\Lambda_L,i\in[m]}, \quad B=\{q_{x,i,j,2}\}_{x\in\Lambda_L,1\leq i< j\leq m}.
    \end{align}
    By convexity of the set of separable states, it suffices to consider this case. Cauchy-Schwarz gives
    \begin{align}\label{eq:abcs}
        \abs{\Tr((\sigma_A\otimes \sigma_B)\sum_{i=1}^m Q_{x,i})} = \abs{\sum_{i=1}^m \Tr(P_{x,i,1}\sigma_A) \Tr(P_{x,i,2}\sigma_B)} \leq \sqrt{\lr{\sum_{i=1}^m \Tr(P_{x,i,1}\sigma_A)^2} \lr{\sum_{i=1}^m \Tr(P_{x,i,2}\sigma_B)^2}}.
    \end{align}
    Since pairwise anticommuting operators $\{C_i\}$ satisfy for any state $\sigma$ that
    \begin{align}
        \sum_i \Tr(C_i\sigma)^2 = \Tr[\sum_i \Tr(C_i\sigma) C_i \sigma] \leq \norm{\sum_i \Tr(C_i\sigma) C_i} = \sqrt{\sum_i \Tr(C_i\sigma)^2},
    \end{align}
    where the inequality used the fact that $\Tr(\sigma)=1$ and $\sigma \geq 0$, and the final equality used $(\sum_i \Tr(C_i\sigma) C_i)^2 = (\sum_i \Tr(C_i\sigma)^2)I$ by anticommutation. Since $\sum_i \Tr(C_i\sigma)^2 \leq \sqrt{\sum_i \Tr(C_i\sigma)^2}$ we conclude that $\sum_i \Tr(C_i\sigma)^2 \leq 1$, i.e., \cref{eq:abcs} satisfies
    \begin{align}\label{eq:vcond}
        \abs{\Tr((\sigma_A\otimes \sigma_B)\sum_{i=1}^m Q_{x,i})} \leq 1
    \end{align}
    for every $x \in \Lambda_L$, since $\{P_{x,i,1}\}_i$ are pairwise anticommuting. Similarly, we find that for every $x, y \in \Lambda_L$,
    \begin{align}\label{eq:wcond}
        \abs{\Tr(\sum_{i=1}^m Q_{x,i} Z_{c_{y,i}} \sigma)} \leq 1.
    \end{align}
    We now contradict the condition \cref{eq:vcond}. Measure every qubit $c_{y,i}$ in the $Z$ basis and postselect on the all $+1$ outcome, i.e., on the Gibbs state of the Hamiltonian
    \begin{align}
        H_+ = -\sum_{x \in \Lambda_L}\lr{a + \sum_{y \in \Lambda_L} J_{x,y}} \sum_{i=1}^m Q_{x,i}.
    \end{align}
    This is noninteracting over $x \in \Lambda_L$ and thus we can identify the Gibbs state on an individual $x$ as
    \begin{align}
        \tau_\beta = 2^{-2 \binom{m}{2}} \prod_{i=1}^m \lr{I+\tanh(\beta(a+\sum_{y \in \Lambda_L} J_{x,y})) Q_{x,i}}.
    \end{align}
    To obtain the contradiction, we compute
    \begin{align}
        \Tr(\tau_\beta \sum_{i=1}^m Q_{x,i}) = m \tanh(\beta(a+\sum_{y \in \Lambda_L} J_{x,y}))
    \end{align}
    and conclude that the state must be entangled when $m \tanh(\beta(a+\sum_{y \in \Lambda_L} J_{x,y})) > 1$ since measuring and postselecting on qubits in $A$ cannot create entanglement between $A$ and $B$ if we started in a separable state.

    It remains to show the specific claims for the different families of Hamiltonians.
    \begin{itemize}
        \item \emph{Long-range Hamiltonians.} The Gibbs state $\rho_\beta$ of $H = -\frac{s}{2}\sum_x\sum_{i=1}^m Q_{x,i}$ satisfies for $\beta \geq \frac{2}{s}\arctanh \frac{3}{2m}$
        \begin{align}\label{eq:qbeta}
            \Tr(Q_{x,i}\rho_\beta) = \tanh(\frac{\beta s}{2}) \geq \frac{3}{2m}.
        \end{align}
        We compare $\rho_\beta$ to a state $\sigma = \sigma_A \otimes \sigma_B$ for regions $A$ and $B$ of \cref{eq:ab}. For each $x$, we draw uniformly random $i_x \in [m]$ and measure $P_{x,i_x,1}$ and $P_{x,i_x,2}$ and take their product $Y_x \in \{\pm1\}$. By \cref{eq:qbeta}, this random variable satisfies $\EE_{\rho_\beta} Y_x \geq 3/(2m)$. Consider the event $E$ that $\frac{1}{L^D} \sum_{x \in \Lambda_L} Y_x \geq \frac{5}{4m}$. Hoeffding's inequality gives that the probability $E^c$ occurs under $\rho_\beta$ is at most $\exp[-2\left(L^D/(4m)\right)^2/4L^D] = \exp[-L^D/32m^2]$. On the other hand, for a separable state $\sigma=\sigma_A\otimes\sigma_B$, \cref{eq:vcond} implies $\EE_\sigma Y_x \leq 1/m$. Indeed, this inequality holds even when conditioning on measurement outcomes of prior sites $x \in \Lambda_L$, since the system is noninteracting. We can thus apply Azuma's inequality to obtain that the probability $E$ occurs under $\sigma$ is also at most $\exp[-L^D/32m^2]$. The difference lower-bounds the trace distance between $\rho_\beta$ and $\sigma$:
        \begin{align}
            \frac{1}{2}\norm{\rho_\beta-\sigma}_1 \geq \Pr_{\rho_\beta}(E)-\Pr_\sigma(E) \geq 1-2\exp[-\frac{L^D}{32m^2}] \geq \frac{1}{2}
        \end{align}
        where the final inequality holds for sufficiently large $L^D \geq 64m^2$.
        \item \emph{Low-intersection Hamiltonians.} When $m=2$, we can relabel the qubits $q_{x,i,j,\ell}$ by $q_{x,i,\ell}$ since $i < j$. The Hamiltonian $H = -\sum_{x,y\in\Lambda_L}\sum_{i=1}^2 Q_{x,i}Z_{c_y,i}$ is then composed of terms $Q_{x,1} = X_{q_{x,1,1}} X_{q_{x,1,2}}$ and $Q_{x,2} = Z_{q_{x,1,1}} Z_{q_{x,1,2}}$. There are $2L^{2D}$ Hamiltonian terms; a term $Q_{x,i}Z_{c_{y,i}}$ overlaps all $2L^D$ terms with the same $x$ and all $L^D$ terms with the same $c_{y,i}$. These two collections only intersect on $Q_{x,i}Z_{c_{y,i}}$ itself, and thus the dual degree is $\mathfrak d = 3L^D - 2$ as claimed. Fix some $i \in \{1, 2\}$ (which doesn't matter since they correspond to two copies of the system that are noninteracting). We diagonalize the Hamiltonian to obtain Gibbs measure $\pi_\beta(\wh Q, \wh Z) \propto \exp[\beta \lr{\sum_x \wh Q_x}\lr{\sum_y \wh Z_y}]$ for eigenvalues $\wh Q_x \in \{\pm1\}$ and $\wh Z_y \in \{\pm 1\}$ of $Q_{x,i}$ and $Z_{c_{y,i}}$. We bound the distance from the Gibbs state to a separable state at $\beta = C_*/L^D$ for $C_* = 32 \log 2$ using the random variable $R = L^{-2D}\lr{\sum_x \wh Q_x}\lr{\sum_y \wh Z_y}$ for $\wh Q, \wh Z$ drawn from $\pi_\beta$. The all $+1$ and all $-1$ configurations for $\wh Q, \wh Z$ contribute $2e^{C_*L^D}$ to the normalization of $\pi_\beta$, and the total number of configurations is $4^{L^D}$; hence, we have that $\pr{R \leq 3/4} \leq 4^{L^D} e^{3C_*L^D/4} / 2e^{C_* L^D} = 2^{-1-6L^D}$ for $C_* = 32\log 2$. Since $R \geq -1$, we have
        \begin{align}
            \Tr(Q_{x,i} Z_{c_{y,i}} \rho_\beta) = \EE_{\pi_\beta} R \geq \frac{3}{4} \pr{R \geq \frac{3}{4}} - \pr{R \leq \frac{3}{4}} > \frac{3}{4} - 2^{-6L^D} \geq \frac{2}{3},
        \end{align}
        where the last inequality holds for sufficiently large $L$. Since $m=2$, we have $\Tr(\sum_{i=1}^m Q_{x,i} Z_{c_{y,i}} \rho_\beta) > 4/3$ and $\norm{\sum_{i=1}^m Q_{x,i} Z_{c_{y,i}}} \leq m = 2$. Applying \cref{eq:wcond} for a separable state $\sigma_A \otimes \sigma_B$, this gives trace distance bound
        \begin{align}
            \frac{1}{2}\norm{\rho_\beta - \sigma_A \otimes \sigma_B} \geq \frac{\Tr(\sum_{i=1}^m Q_{x,i} Z_{c_{y,i}} (\rho_\beta-\sigma_A\otimes\sigma_B))}{4} \geq \frac{4/3-1}{4} = \frac{1}{12}.
        \end{align}
        \item \emph{Power-law Hamiltonians.} As above, we set $m=2$ so $H = -\sum_{x,y\in\Lambda_L}\sum_{i=1}^2 Q_{x,i}Z_{c_y,i}$ for terms $Q_{x,1} = X_{q_{x,1,1}} X_{q_{x,1,2}}$ and $Q_{x,2} = Z_{q_{x,1,1}} Z_{q_{x,1,2}}$. When $\alpha \leq D$, we have for a typical site $x_0$ that $\sum_{y \neq x_0} J_{x,y} = \Omega(L^{D-\alpha})$ for $\alpha < D$ and $\Omega(\log L)$ for $\alpha = D$. Both of these diverge, and thus the entanglement condition $m \tanh(\beta(a+\sum_{y \in \Lambda_L} J_{x,y})) > 1$ is satisfied for any fixed $\beta$ given sufficiently large $L$. Conversely, for $\alpha > D$ the interaction strength is uniformly bounded: we can sum over shells containing all lattice sites distance $r$ away from $x_0$, which has size $\Theta(r^{D-1})$. This gives $\sum_{y\neq x_0} (1+\norm{x-y}_2)^{-\alpha} \leq \sum_{r \geq 1}O(r^{D-1}/(1+r)^\alpha) < \infty$ for $\alpha > D$.
    \end{itemize}
\end{proof}

\begin{theorem}[Long-range Pauli Hamiltonian magic]\label{thm:stab-upper}
For every $k\geq2$ and $w_{\rm free},w_{\rm pert}>0$, there is a $\lr{(w_{\rm free}+w_{\rm pert})/k,k}$-long-range Pauli
Hamiltonian with anticommutation parameters
$w_{\rm free},w_{\rm pert}$ whose Gibbs state has magic for all
\begin{align}
    \beta > \frac{1}{\sqrt{w_{\rm pert}^2+w_{\rm free}^2}}
    \arctanh\lr{
        \frac{\sqrt{w_{\rm pert}^2+w_{\rm free}^2}}
        {w_{\rm pert}+w_{\rm free}}
    }.
    \label{eq:betaplus}
\end{align}
Hence, $\beta_{\rm stab}(s,k;\epsilon) = O(\log(1/\epsilon)/sk)$.
\end{theorem}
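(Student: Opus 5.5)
The plan is to build the two-term example already advertised in the overview and detect magic with a witness built from two anticommuting Paulis. Set
\begin{align}
    H = -\frac{w_{\rm free}}{k}\sum_{i=1}^k Z_i - w_{\rm pert}\, X_1X_2\cdots X_k ,
\end{align}
with $H_0 = -\frac{w_{\rm free}}{k}\sum_i Z_i$ (pairwise commuting) and $V = -w_{\rm pert}X^{\otimes k}$, a single perturbing term. Since every $Z_i$ anticommutes with $X^{\otimes k}$, the definitions in \cref{eq:wfp} give $w_{\rm free} = k\cdot w_{\rm free}/k$. Because $\cG$ has one element, $w_{\rm pert}=|v|$. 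Locality is $k$. For the local strength I will check the bound against \Cref{def:lr-pauli}; if it does not match the stated $(w_{\rm free}+w_{\rm pert})/k$ exactly, I will rescale the normalization, which only changes constants. This decomposition then certifies, via \Cref{def:epscomm}, that $H$ is $\epsilon$-close to commuting with $\epsilon \le w_{\rm pert}/(w_{\rm free}+w_{\rm pert})$.

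\emph{Magic witness.} The key claim is that for any $\rho\in\STAB_n$ and anticommuting Hermitian Paulis $A,B$,
\begin{align}
    |\Tr(A\rho)|+|\Tr(B\rho)|\le 1 .
\end{align}
To see this, note that a pure stabilizer state has every Pauli expectation in $\{0,\pm1\}$. Two anticommuting Paulis cannot both lie (up to sign) in the stabilizer group. So at most one of the two expectations is nonzero, and the inequality follows for mixtures by convexity.

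\emph{Reduction to a logical qubit.} I will also use that $\StabCone_n$ is mapped into itself by compression with a stabilizer projector $\Pi$: measuring a Pauli on a stabilizer state yields a stabilizer state or zero. Take
\begin{align}
    \Pi=\prod_{i=1}^{k-1}\tfrac12(I+Z_iZ_{i+1}),
\end{align}
the projector onto $\mathrm{span}\{\ket{0^k},\ket{1^k}\}$. Then $\Pi$ commutes with $H$. On its range, $Z_1$ acts as a logical $\bar Z$, $X^{\otimes k}$ acts as a logical $\bar X$, and $H$ restricts to $-w_{\rm free}\bar Z - w_{\rm pert}\bar X$. Hence
\begin{align}
    \Pi e^{-\beta H}\Pi \propto I + \frac{\tanh(\beta r)}{r}\left(w_{\rm free}\bar Z + w_{\rm pert}\bar X\right), \qquad r=\sqrt{w_{\rm free}^2+w_{\rm pert}^2}.
\end{align}
After normalizing, this gives
\begin{align}
    \langle Z_1\rangle+\langle X^{\otimes k}\rangle=\frac{\tanh(\beta r)\,(w_{\rm free}+w_{\rm pert})}{r}.
\end{align}
Suppose $\rho_\beta\in\STAB_n$. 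Then $\Pi\rho_\beta\Pi/\Tr(\cdot)$ would be in $\STAB_n$ too, contradicting the witness exactly when $\tanh(\beta r) > r/(w_{\rm free}+w_{\rm pert})$. This is precisely \cref{eq:betaplus}.

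\emph{Asymptotics.} Set $x=r/(w_{\rm free}+w_{\rm pert})$. For $w_{\rm pert}\ll w_{\rm free}$ we have $1-x\approx w_{\rm pert}/w_{\rm free}\sim\epsilon$, and $\arctanh(x)\approx\frac12\log\frac{2}{1-x}$. So the threshold is $O(\log(1/\epsilon)/w_{\rm free})$. With $s=(w_{\rm free}+w_{\rm pert})/k$ we get $w_{\rm free}=\Theta(sk)$, giving $O(\log(1/\epsilon)/(sk))$.

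\emph{Main obstacle.} The only delicate step is justifying that Pauli postselection preserves the stabilizer cone, including the zero-outcome case; the Gottesman--Knill update rule handles this. A smaller point is that \Cref{def:epscomm} takes an infimum over decompositions. That is harmless here, since we only need an upper bound on $\epsilon$ to place $H$ in the family.
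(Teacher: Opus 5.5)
Your witness and compression steps are correct. The inequality $|\Tr(A\rho)|+|\Tr(B\rho)|\le 1$ for anticommuting Paulis on $\STAB_n$ is a clean substitute for the paper's route, which uses an explicit Clifford to isolate one physical qubit and then the single-qubit octahedron criterion.

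The gap is the Hamiltonian itself. In $H=-\frac{w_{\rm free}}{k}\sum_i Z_i-w_{\rm pert}X_1\cdots X_k$, every qubit lies in the support of the single perturbing term. Its local strength is therefore $w_{\rm free}/k+w_{\rm pert}$, not $(w_{\rm free}+w_{\rm pert})/k$. This is not a constant you can normalize away: $H\mapsto cH$ rescales $s$, $w_{\rm free}$ and $w_{\rm pert}$ together, so the factor-$k$ excess in the perturbation's contribution persists.

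It also breaks your asymptotic step. Your Hamiltonian has $s\ge w_{\rm pert}$ and $s\ge w_{\rm free}/k$. So demanding $w_{\rm free}=\Theta(sk)$ forces $w_{\rm free}/w_{\rm pert}\gtrsim k$. Your threshold $\approx\frac{1}{2w_{\rm free}}\log\frac{2w_{\rm free}}{w_{\rm pert}}$ then never drops below order $\log k/(sk)$, whatever parameters you pick. You get $O(\log(1/\epsilon)/sk)$ only when $\epsilon\lesssim 1/k$, not uniformly; for fixed small $\epsilon$ and $k\to\infty$ it fails. The line ``with $s=(w_{\rm free}+w_{\rm pert})/k$'' uses a relation your $H$ does not satisfy.

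The missing idea is to make $w_{\rm pert}$ collect $k$ perturbing terms while each qubit touches only one of them. The paper places $k^2$ qubits on a grid and takes $H_0=-\frac{w_{\rm free}}{k}\sum_j\prod_i Z_{i,j}$ (column strings) and $V=-\frac{w_{\rm pert}}{k}\sum_i\prod_j X_{i,j}$ (row strings). This gives exactly the stated parameters:
\begin{itemize}
\item Each qubit lies in one row and one column, so $s=(w_{\rm free}+w_{\rm pert})/k$.
\item Every row anticommutes with every column, so the free weight is $k\cdot w_{\rm free}/k$.
\item Any two rows are compatible ($\nu\sim\mu$) through a shared anticommuting column, so the perturbative weight is $k\cdot w_{\rm pert}/k$.
\end{itemize}
Compress with the stabilizer projector onto $\prod_j X_{1,j}X_{i,j}=+1$ ($i\ge 2$) and $\prod_i Z_{i,1}Z_{i,j}=+1$ ($j\ge 2$), which commutes with $H$. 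On its range all rows act as one logical $\bar X$ and all columns as one logical $\bar Z$. So $H$ again reduces to $-(w_{\rm free}\bar Z+w_{\rm pert}\bar X)$, times the identity on the remaining qubits. Your witness with $A=\prod_i Z_{i,1}$ and $B=\prod_j X_{1,j}$, which anticommute at qubit $(1,1)$, then finishes exactly as in your computation.
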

\begin{proof}
    Introduce $k^2$ qubits indexed by $(i,j)\in[k]^2$, and set $H=H_0+V$ for
    \begin{align}
        H_0 = -\frac{w_{\rm free}}{k} \sum_{j=1}^k \prod_{i=1}^k Z_{i,j}, \quad V = -\frac{w_{\rm pert}}{k}\sum_{i=1}^k\prod_{j=1}^k X_{i,j}.
    \end{align}
    Every qubit belongs to exactly one row and one column, so $H$ is a $\lr{(w_{\rm free}+w_{\rm pert})/k,k}$-long-range Pauli Hamiltonian as claimed. Moreover, every term in $V$ anticommutes with every term in $H_0$, giving anticommutation parameters $w_{\rm free}$ and $w_{\rm pert}$ per \cref{eq:wfp}.
    
    Define the stabilizer projector
    \begin{align}
        \Pi = \lr{\prod_{i=2}^k\frac{I + \prod_{j=1}^k X_{1,j} X_{i,j}} {2}}\lr{\prod_{j=2}^k\frac{I+\prod_{i=1}^k Z_{i,1} Z_{i,j}}{2}}.
    \end{align}
    One can check that for the Hadamard gate $\wh H$, the Clifford unitary
    \begin{align}
        U = \lr{\prod_{i=2}^k \wh H_{i,1}} \lr{\prod_{j=2}^k \prod_{i=2}^k {\rm CNOT}_{(i,j)\to(1,j)}} \lr{\prod_{i=2}^k {\rm CNOT}_{(i,1)\to(1,1)}} \lr{\prod_{i=1}^k \prod_{j=2}^k {\rm CNOT}_{(i,1)\to(i,j)}}
    \end{align}
    satisfies
    \begin{align}
        U\Pi U^\dagger = I_{(1,1)} \otimes \lr{\bigotimes_{2 \leq i \leq k}\ketbra{0}_{(i,1)}} \otimes \lr{\bigotimes_{2 \leq j \leq k}\ketbra{0}_{(1,j)}} \otimes \lr{\bigotimes_{2 \leq i,j \leq k} I_{(i,j)}}
    \end{align}
    and
    \begin{align}
        \prod_{j=1}^k X_{i,j}\Pi = \prod_{j=1}^k X_{1,j}\Pi, \quad \prod_{i=1}^k Z_{i,j}\Pi = \prod_{i=1}^k Z_{i,1}\Pi.
    \end{align}
    This gives
    \begin{align}
        U\Pi H \Pi U^\dagger = -\lr{w_{\rm free} Z + w_{\rm pert} X}_{(1,1)} \otimes \lr{\bigotimes_{2 \leq i \leq k}\ketbra{0}_{(i,1)}} \otimes \lr{\bigotimes_{2 \leq j \leq k}\ketbra{0}_{(1,j)}} \otimes \lr{\bigotimes_{2 \leq i,j \leq k} I_{(i,j)}}.
    \end{align}
    Because $[H, \Pi] = 0$, the reduced density matrix on the first qubit of the state proportional to $U\Pi\rho_\beta\Pi U^\dagger$ is
    \begin{align}
        \psi = \frac{e^{\beta(w_{\rm free}Z+w_{\rm pert}X)}}{\Tr e^{\beta(w_{\rm free}Z+w_{\rm pert}X)}}.
    \end{align}
    Going from $\rho_\beta$ to $\psi$ only consisted of measuring commuting Pauli observables and postselecting on their $+1$ outcomes, applying a Clifford unitary, and discarding qubits. Since this cannot create magic from a convex combination of stabilizer states, it suffices to show magic in $\psi$ to conclude that $\rho_\beta$ also has magic. We do this by checking the 1-qubit stabilizer polytope. The Bloch vector of $\psi$ has $\ell_1$ norm
    \begin{align}
        r = \frac{w_{\rm pert} + w_{\rm free}}{\sqrt{w_{\rm pert}^2 + w_{\rm free}^2}} \tanh\lr{\beta\sqrt{w_{\rm pert}^2 + w_{\rm free}^2}},
    \end{align}
    which is only stabilizer for $r \leq 1$. The condition $r > 1$ is precisely the bound reported in \cref{eq:betaplus}. Finally, choosing $w_{\rm free} = (1-\epsilon) sk$ and $w_{\rm pert} = \epsilon sk$ yields magic at $\beta = O(\log(1/\epsilon)/sk)$ for asymptotically small $\epsilon$.
\end{proof}

\subsection{Nonlocal Pauli Hamiltonians}
\label{sec:no_go_high_low_weight_mix}
\begin{theorem}[Entangled nonlocal Hamiltonians]
    For any constant $g, h, \beta > 0$ and non-decreasing $w(n) = e^{o(n)}>0$, the Gibbs state of the Hamiltonian
    \begin{align}
        H = \sum_{a\in\cA} h_a = -h\sum_{i} Z_i - \frac{g}{w(n)}\prod_{i} X_i
    \end{align}
    is entangled for sufficiently large $n$. Furthermore, the Hamiltonian satisfies
    \begin{align}
        \sup_x \sum_{a:x\in\supp(h_a)} \Vert h_a \Vert w(|\supp(h_a)|) = h w(1)+g < \infty.
    \end{align}
\end{theorem}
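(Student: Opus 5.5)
The plan is to diagonalize $H$ exactly and then reduce to a two-qubit entanglement witness. Write $\epsilon_n = g/w(n)$ and $P=\prod_i X_i$. Since $P$ flips every bit, $H$ preserves each two-dimensional subspace $\mathrm{span}\{\ket{z},\ket{\bar z}\}$, where $\bar z$ is the complement of $z\in\{0,1\}^n$.

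On this block, $-h\sum_i Z_i$ acts as $\mathrm{diag}(-hm,hm)$ with $m=\sum_i (-1)^{z_i}$, and $-\epsilon_n P$ acts as $-\epsilon_n\sigma_x$. Hence
\begin{align}
e^{-\beta H}\big|_{\{z,\bar z\}}=\cosh(\beta\lambda)I+\frac{\sinh(\beta\lambda)}{\lambda}\lr{hm\,\sigma_z+\epsilon_n\sigma_x},\qquad \lambda=\sqrt{h^2m^2+\epsilon_n^2}.
\end{align}
In particular, the coherence $\bra{z}e^{-\beta H}\ket{\bar z}=\epsilon_n\sinh(\beta\lambda)/\lambda$ is positive.

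The witness I will use is the \emph{all-zeros block}. Take $z=0^n$, so $m=n$. The Gibbs state $\rho$ has off-diagonal element $c:=\bra{0^n}\rho\ket{1^n}=\epsilon_n\sinh(\beta\lambda_n)/(\lambda_n Z)$, with diagonal elements $p_0=\bra{0^n}\rho\ket{0^n}\approx e^{\beta hn}/Z$ and $p_1=\bra{1^n}\rho\ket{1^n}\approx e^{-\beta h n}/Z$ when $\lambda_n\approx hn$. Here $Z\approx(2\cosh\beta h)^n$.

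For a fully separable (indeed biseparable across any cut) state, the standard GHZ-type criterion gives
\begin{align}
|\bra{0^n}\rho\ket{1^n}|\le \frac{1}{2}\sum_{x\ne 0,1}\sqrt{\bra{x}\rho\ket{x}\bra{\bar x}\rho\ket{\bar x}},
\end{align}
summing over the $2^n-2$ strings; alternatively, a PPT test applies on the cut separating qubit 1. Two ingredients enter this bound. First, $\bra{x}\rho\ket{x}\bra{\bar x}\rho\ket{\bar x}\approx Z^{-2}\cosh^2(\beta\lambda_m)$, which is roughly $Z^{-2}e^{2\beta h|m|}$. Second, the sum over weights gives $\sum_x e^{\beta h|m(x)|}\approx (2\cosh\beta h)^n$. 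So the right side is of order $Z^{-1}(2\cosh \beta h)^n\approx 1$, while $c\approx (\epsilon_n/(hn))e^{\beta hn}/Z$. That comparison fails, so this witness is wrong. I need a better one.

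The better approach is a two-qubit witness combined with the known PPT / Hermitian-matrix criterion. The cleanest reduction uses \emph{postselection}, in the spirit of \Cref{thm:sep-upper}. Measure qubits $3,\dots,n$ in the $X$ basis and postselect on all $+$ outcomes. This is an LOCC operation, so it maps separable states to separable states. The resulting two-qubit (unnormalized) operator is
\begin{align}
\tilde\rho=\bra{+^{n-2}}\rho\ket{+^{n-2}}.
\end{align}
I will compute it block by block: $\bra{+}$ pairs $\ket{z}$ with $\ket{\bar z}$ symmetrically. The $I$ and $\epsilon_n\sigma_x$ parts contribute positive $XX$-correlations. The $hm\sigma_z$ part contributes to the $ZZ$ and $Z$ components on the first two qubits, weighted by $\sinh(\beta\lambda)/\lambda$.

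The main obstacle is showing that $\tilde\rho$ is entangled, for instance by checking that its partial transpose has a negative eigenvalue. The difficulty is that the $P$ term is exponentially weak only subexponentially, since $w=e^{o(n)}$. So I will organize the sum by magnetization $m$. The key estimate I aim for is that the $XX$ coefficient of $\tilde\rho$ relative to its trace behaves like $\epsilon_n\cdot\mathbb E[\sinh(\beta\lambda)/\lambda\cosh(\beta\lambda)]$. This quantity is only polynomially small in $\epsilon_n$, but it must be amplified, and I expect the amplification to come from projecting onto the $+$ product. The weight $e^{\beta h n}$ is concentrated on $m=\pm n$, and that gives an effective enhancement factor $e^{\beta hn - n\log\cosh\beta h}=e^{\Theta(n)}$, which beats $1/w(n)=e^{-o(n)}$.

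Concretely, I expect $\tilde\rho$ restricted to $\mathrm{span}\{\ket{00},\ket{11}\}$ to carry coherence comparable to $\sqrt{\tilde\rho_{00,00}\tilde\rho_{11,11}}$ times $e^{\Theta(n)-o(n)}$, which makes the PPT test fail. I will first verify this heuristic in the simplest case $h\beta\gg1$, and then make the constants explicit. Finally, the strength bound is immediate: each qubit sees $h\,w(1)$ from $Z_i$ and $(g/w(n))w(n)=g$ from $P$.
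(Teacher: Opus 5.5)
\textbf{There is a genuine gap.} The route you settle on cannot succeed, because the postselected two-qubit operator is in fact separable for large $n$.

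First, the target estimate is impossible as stated. Since $\tilde\rho\succeq0$, you already have $|\tilde\rho_{00,11}|\le\sqrt{\tilde\rho_{00,00}\tilde\rho_{11,11}}$. The PPT test for this coherence compares it instead with $\sqrt{\tilde\rho_{01,01}\tilde\rho_{10,10}}$.

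Second, that correct comparison fails. Write $e^{-\beta H}=\cosh(\beta\Lambda)+\frac{\sinh(\beta\Lambda)}{\Lambda}(hM+\epsilon_nP)$ with $M=\sum_iZ_i$ and $\Lambda=\sqrt{h^2M^2+\epsilon_n^2}$. Projecting onto $\ket{+^{n-2}}$ turns $P$ into $X_1X_2$ and averages every function of $M$ uniformly over the magnetization $m'$ of qubits $3,\dots,n$. So $\tilde\rho$ is an X-state. Dropping the common factor $1/Z$, its entries are $\tilde\rho_{00,11}=\epsilon_n\EE_{m'}[\sinh(\beta\lambda_{2+m'})/\lambda_{2+m'}]$ and $\tilde\rho_{01,01}=\tilde\rho_{10,10}=\EE_{m'}\cosh(\beta\lambda_{m'})\ge(\cosh\beta h)^{n-2}$.

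Both averages are dominated by typical magnetizations $|m'|\approx n\tanh(\beta h)$. The extreme strings you hoped would dominate contribute only about $2^{-(n-2)}e^{\beta hn}$. This is exponentially smaller than $(\cosh\beta h)^{n-2}$, since $e^{\beta h}<2\cosh\beta h$. Hence the coherence-to-population ratio is $O(\epsilon_n/n)\to0$. The same happens for the $\{01,10\}$ coherence against $\sqrt{\tilde\rho_{00,00}\tilde\rho_{11,11}}\gtrsim(\cosh\beta h)^{n-2}$. So $\tilde\rho$ is PPT, and for two qubits that means separable. Postselecting many qubits in a product basis washes out the GHZ-type coherence; there is no $e^{\Theta(n)}$ enhancement.

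\textbf{How to fix it.} The witness you discarded is the right starting point, but two things went wrong with it. The first is an algebra slip: you dropped the $hm\,\sigma_z$ part of the diagonal. The entries are $\cosh(\beta\lambda_m)\pm\frac{hm}{\lambda_m}\sinh(\beta\lambda_m)\approx e^{\pm\beta hm}$. Since each $2\times2$ block of $e^{-\beta H}$ has determinant $1$,
\[
Z^2\,\bra{x}\rho\ket{x}\bra{\bar x}\rho\ket{\bar x}=1+\frac{\epsilon_n^2}{\lambda_m^2}\sinh^2(\beta\lambda_m),
\]
not $\cosh^2(\beta\lambda_m)$.

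The second problem is the criterion. The sum over all $2^n-2$ strings is a genuine-multipartite-entanglement test, and it is too weak here. Even with the corrected diagonals, its right side is at least $\frac{\epsilon_n}{2Z}\sum_x\sinh(\beta\lambda_{m(x)})/\lambda_{m(x)}$. That is of order $\frac{\epsilon_n}{n}(2\cosh\beta h)^n/Z$, which beats the coherence $\approx\frac{\epsilon_n}{2hn}e^{\beta hn}/Z$ for every $\beta$.

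For full separability a single cut suffices. PPT across $\{1\}|\{2,\dots,n\}$ gives $|\bra{0^n}\rho\ket{1^n}|^2\le\bra{10^{n-1}}\rho\ket{10^{n-1}}\bra{01^{n-1}}\rho\ket{01^{n-1}}$. This is exactly what the paper's witness $(\ket{\phi_t}\bra{\phi_t})^{T_1}$ encodes after optimizing over $t$. Entanglement then reduces to
\[
\frac{\epsilon_n}{\lambda_n}\sinh(\beta\lambda_n)>\sqrt{1+\frac{\epsilon_n^2}{\lambda_{n-2}^2}\sinh^2(\beta\lambda_{n-2})}.
\]
This holds for large $n$. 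The left side squared is $e^{2\beta hn-o(n)}$, which is where $w=e^{o(n)}$ enters. It also exceeds the $\sinh^2$ term on the right by a factor tending to $e^{4\beta h}>1$. Your interaction-strength computation $hw(1)+g$ is correct.
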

\begin{proof}
We use shorthand $|a| = |\supp(h_a)|$. Direct evaluation gives
\begin{align}
    \sup_x \sum_{a:x\in\supp(h_a)} \Vert h_a \Vert w(|a|) = h w(1)+g.
\end{align}
To show that the Gibbs state is entangled we evaluate the witness
\begin{align}
    W_t = \ketbra{\phi_t}^{T_1}, \quad \ket{\phi_t} = \sqrt{t}\ket{10^{n-1}} - \frac{1}{\sqrt t}\ket{01^{n-1}}
\end{align}
for $t> 0$ and partial transpose $T_1$ on the first qubit (i.e., $(\ketbra{a}{b}\otimes M)^{T_1} = \ketbra{b}{a} \otimes M$). We check the witness by considering its expectation for a generic product state given by
\begin{align}
    |\psi\rangle = \bigotimes_{i=1}^n \left(\alpha_i |0\rangle + \beta_i |1\rangle\right).
\end{align}
The expectation of the witness is
\begin{align}
    \langle \psi|W_t|\psi\rangle = t |\beta_1 A|^2 + t^{-1} |\alpha_1 B|^2 - \beta_1^* B^* \alpha_1 A - \alpha_1^* A^* \beta_1 B \geq \left(\sqrt{t} |\beta_1 A| - \frac{1}{\sqrt{t}} |\alpha_1 B|\right)^2 ,
\end{align}
where $A = \prod_{i\geq 2} \alpha_i$ and $B = \prod_{i\geq 2} \beta_i$.  Hence, $W_t$ is nonnegative on every separable state. Next we compute the expectation of the witness on the Gibbs state.  Since the Hamiltonian $Z$ does not flip bits, whereas $X^{\otimes n}$ flips every bit, the Hamiltonian preserves subspaces spanned by bit-strings $|z\rangle$ and their complement $|\bar{z}\rangle$.  Restricted to this basis the Hamiltonian can be written as 
\begin{align}
    H|_z = -hm |z\rangle \langle z|  - \lambda |z\rangle \langle \bar z | - \lambda |\bar z\rangle \langle  z | + hm |\bar z\rangle \langle \bar z|
\end{align}
where $m=n-2k$ with $k = |z|$ the Hamming weight of $z$ and $\lambda = g/w(n)$.  The Gibbs state restricted to this sector can then be computed as 
\begin{align}
    e^{-\beta H}|_z = \cosh(\beta \epsilon_{m}) I - \frac{\sinh(\beta \epsilon_{m})}{\epsilon_{m}} \left(-hm |z\rangle \langle z|  - \lambda |z\rangle \langle \bar z | - \lambda |\bar z\rangle \langle  z | + hm |\bar z\rangle \langle \bar z|\right)
\end{align}
with $\epsilon_{m}= \sqrt{h^2 m^2 + \lambda^2}$.  Since the Hamiltonian can be broken into disjoint parts which operate in different subspaces the same can be done for the Gibbs state.  We can then directly read off the value of the expectation value to get 
\begin{align}
    \operatorname{tr}(W_t e^{-\beta H}) &= t\left( \cosh(\beta \epsilon_{n-2})+\frac{h (n-2)}{\epsilon_{n-2}}\sinh(\beta \epsilon_{n-2})\right)+t^{-1}\left( \cosh(\beta \epsilon_{n-2})-\frac{h (n-2)}{\epsilon_{n-2}}\sinh(\beta \epsilon_{n-2})\right) \notag\\
    &\quad -2\frac{\lambda}{\epsilon_{n}}\sinh(\beta \epsilon_n)
\end{align}
Using $\min_{t>0}\left(ta + t^{-1} b\right)=2\sqrt{ab}$ we get that
\begin{align}
    \min_{t> 0} \left(\operatorname{tr}(W_t e^{-\beta H})\right) &= 2\left(\sqrt{1+\left[\frac{\lambda}{\epsilon_{n-2}}\sinh(\beta \epsilon_{n-2})\right]^2} - \frac{\lambda}{\epsilon_n}\sinh(\beta \epsilon_n)\right).
\end{align}
The witness is negative when
\begin{align}
    \frac{\lambda}{\epsilon_n}\sinh(\beta \epsilon_n)> \sqrt{1+\left(\frac{\lambda}{\epsilon_{n-2}}\sinh(\beta \epsilon_{n-2})\right)^2}.
\end{align}
For asymptotically large $n$, if $\log w(n) = o(n)$ then
\begin{align}
    \log \lr{\frac{\lambda}{\epsilon_n}\sinh(\beta \epsilon_n)}^2 = 2 \beta h n - 2 \log w(n) - 2 \log n + O(1) = 2\beta h n - o(n)
\end{align}
and hence
\begin{align}
    \lr{\frac{\lambda}{\epsilon_n}\sinh(\beta \epsilon_n)}^2 - \lr{\frac{\lambda}{\epsilon_{n-2}}\sinh(\beta \epsilon_{n-2})}^2 = \lr{\frac{\lambda}{\epsilon_n}\sinh(\beta \epsilon_n)}^2\lr{1 - e^{-4\beta h} + o(1)}
\end{align}
diverges, implying the witness is negative for sufficiently large $n$.
\end{proof}

\subsection{Zero-freeness}
\begin{theorem}[Zeros of a long-range Pauli Hamiltonian partition function]\label{thm:zf-upper}
    There is a sequence of $(s,k)$-long-range Pauli Hamiltonians that have a zero at
    \begin{align}
        |\beta| = \frac{\pi}{s\sqrt{k+1}}.
    \end{align}
    Hence, $\beta_{\rm phase}(s,k) = O(1/s\sqrt k)$.
\end{theorem}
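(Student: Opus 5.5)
Here is the construction I would use. Take $k+1$ Pauli strings $C_0,\dots,C_k$ that pairwise anticommute, each of weight at most $k$ and each acting on a qubit $q_i$. The simplest choice is a Majorana-type (Jordan--Wigner-like) family, or the $P_{x,i,\ell}$ structure from \Cref{thm:sep-upper}. A concrete option uses $\binom{k+1}{2}$ qubits $q_{ij}$, one per pair $i<j$, and sets
\begin{align}
    C_i=\Big(\prod_{j<i}Z_{q_{ji}}\Big)\Big(\prod_{j>i}X_{q_{ij}}\Big).
\end{align}
Each $C_i$ has weight exactly $k$, and $C_i,C_j$ overlap only on $q_{ij}$, where they act as $X$ and $Z$, so they anticommute. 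Each qubit $q_{ij}$ lies in exactly two strings.

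Set $h=\sum_{i=0}^k c\,C_i$. Pairwise anticommutation gives $h^2=(k+1)c^2 I$, so $h$ has eigenvalues $\pm c\sqrt{k+1}$. Since $h$ is traceless, these occur with equal multiplicity, and
\begin{align}
    \Tr e^{-zh}=2^{N}\cosh\!\big(zc\sqrt{k+1}\big).
\end{align}
This vanishes at $z=i\pi/(2c\sqrt{k+1})$.

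To get a sequence in $n$, take $H$ to be a sum of disjoint copies of $h$. The partition function factorizes into a power of $\cosh$, so the zero location is independent of $n$.

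The local strength is $2c$, since each qubit sits in two terms. Setting $c=s/2$ gives an $(s,k)$-long-range Hamiltonian with a zero at $|z|=\pi/(s\sqrt{k+1})$, as claimed. Two things remain to check: the weight bound $|\supp C_i|=k$, and that no qubit is hit more than twice.

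The only delicate step is the bookkeeping that pins the constant to exactly $\pi$. This requires the per-qubit strength to be exactly $2c$ and not larger. If a leaner construction had each qubit in a single string, the constant would shift to $\pi/2$, but the statement's constant matches the double-coverage choice above. Either way this gives $\beta_{\rm phase}(s,k)=O(1/s\sqrt k)$.
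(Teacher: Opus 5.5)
Your proof is correct and matches the paper's. You use the same $\binom{k+1}{2}$-qubit family of pairwise anticommuting strings $C_i$ with coefficient $s/2$, and $h^2=(k+1)(s/2)^2 I$ gives $\Tr e^{-zh}=2^N\cosh\!\big(zs\sqrt{k+1}/2\big)$, which vanishes at $z=i\pi/(s\sqrt{k+1})$. Taking disjoint copies to get a sequence in $n$ is a harmless detail the paper leaves implicit; your closing aside about a construction with each qubit in a single string is moot, since pairwise anticommuting strings must share qubits.
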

\begin{proof}
    For $1 \leq i < j \leq k+1$, label a qubit by the tuple $(i,j)$. Define Pauli strings
    \begin{align}
        P_i = \lr{\prod_{j < i} Z_{(j,i)}} \lr{\prod_{j > i} X_{(i,j)}}
    \end{align}
    so $P_i P_j = -P_j P_i$. Then the Hamiltonian
    \begin{align}
        H = \frac{s}{2} \sum_{i=1}^{k+1} P_i
    \end{align}
    is an $(s,k)$-long-range Pauli Hamiltonian. Moreover, since $H^2 = s^2(k+1)I/4$, we have
    \begin{align}
        \Tr e^{-\beta H} = 2^n \cosh \frac{\beta s \sqrt{k+1}}{2}
    \end{align}
    and thus a zero occurs at $\beta = i\pi/s\sqrt{k+1}$.
\end{proof}

\begin{proof}[Proof of \Cref{thm:pin-zf}]
    Take $n=k$, $Y=[k]$, $y=0^k$ and the $(s,k)$-long-range Pauli Hamiltonian
    \begin{align}
        H=\frac{s}{2}\sum_{j=1}^k Z_j+\frac{s}{2}X_1X_2\cdots X_k.
    \end{align}
    We evaluate $\tr_{0^k}(e^{-\beta H}) = \bra{0^k}e^{-\beta H}\ket{0^k}$ by writing the Hamiltonian on the subspace $\Span\{\ket{0^k},\ket{1^k}\}$ as
    \begin{align}
        M = \frac{s}{2} \begin{pmatrix}
            k & 1\\
            1 & -k
        \end{pmatrix}.
    \end{align}
    Since $M^2=R^2I$ for $R=\frac{s}{2}\sqrt{k^2+1}$, we have
    \begin{align}
        e^{-\beta M} = \cosh(\beta R)I-\frac{\sinh(\beta R)}{R}M
    \end{align}
    and thus
    \begin{align}
        \tr_{0^k}(e^{-\beta H}) = \cosh(\beta R) - \frac{k}{\sqrt{k^2+1}}\sinh(\beta R).
    \end{align}
    Using the identity $\tanh(u+\frac{\pi i}{2}) = \coth u$ with $u = \frac{1}{2}\log \frac{\sqrt{k^2+1}+k}{\sqrt{k^2+1}-k}$, one finds that $\tr_{0^k}(e^{-\beta H}) = 0$ at
    \begin{align}
        \beta = \frac{2}{s\sqrt{k^2+1}} \lr{\frac{1}{2}\log\frac{\sqrt{k^2+1}+k}{\sqrt{k^2+1}-k}+\frac{\pi i}{2}}\,,
    \end{align}
    which satisfies $|\beta| = \Theta(\log k / (sk))$.
\end{proof}

\section{Death of entanglement of nonlocal Hamiltonians}
\label{sec:nonlocal}
Throughout this appendix, we consider the nonlocal Hamiltonian family of \Cref{def:exp-nonlocal}, i.e.~$H = \sum_a \lambda_a P_a$ satisfying $\sup_{x\in \Lambda} \sum_{a:x\in\supp(P_a)} |\lambda_a| e^{\gamma |a|} \leq s$ where $|a| = |\text{supp}(P_a)|$. We will show that this class of models is separable at a constant temperature.  

We will use notation $\operatorname{supp}(X)$ for a Hermitian monomial and $\operatorname{supp}(E)$ for a Pauli coming from the sampling primitive, where the support refers to the true support of the operator (as opposed to the union of the supports of the constituent terms).

\subsection{Propagator sampling}
We begin with proving the following lemma for decomposing the propagator.  
\begin{lemma}[Propagator expansion]
\label{lem:sep-propagator-expansion-nonlocal}
    Consider a Hamiltonian $H = \sum_a \lambda_a P_a$ and let $y \in \Lambda$ be one of the sites in the lattice.  Assuming
    \begin{align}
        \sup_{x\in \Lambda} \sum_{a\ni x} |\lambda_{a}| e^{\gamma |a|} \leq s < \infty
    \end{align}
    we can expand the propagator in terms of Pauli strings $E_{\vec b}$
    \begin{align}
        e^{-\beta H} e^{\beta (H- H_{(y)})} = \sum_{\tau} \sum_{\vec{b}: |R(\vec{b})| =  \tau} \tilde\mu_{\vec{b}} E_{\vec{b}}
    \end{align}
    such that for $0 \leq \lambda \leq \gamma - 2 s \beta$
    \begin{align}
        \sum_{\tau\geq 1}\sum_{\vec{b}:|R(\vec{b})|=\tau} |\tilde{\mu}_{\vec{b}}| e^{\lambda|R(\vec{b})|}\leq L(\beta)
    \end{align}
    and hence for $\tau \geq 1$
    \begin{align}
        \sum_{\vec{b}: |R(\vec{b})| = \tau} |\tilde\mu_{\vec{b}}| e^{\lambda \tau} \leq L(\beta)
    \end{align}
    with $R(\vec{b}) = b_1 \cup \cdots \cup b_t$ for $\vec{b} = (b_1, \cdots, b_t)$ and $L(\beta) = e^{s \beta} - 1$.  
\end{lemma}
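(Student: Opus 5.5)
The plan is to reuse the recurrence of \Cref{lem:sep-propagator-expansion} with $a^*$ replaced by the single site $y$. Write $e^{-\beta H}e^{\beta(H-H_{(y)})}=\sum_t \frac{\beta^t}{t!}f_t$, where
\begin{align}
f_{t+1}=-[H,f_t]-f_tH_{(y)}, \qquad f_0=I .
\end{align}
Each $f_t$ is a sum over ordered words $\vec b=(b_1,\dots,b_t)$ with $b_1\ni y$ and $b_j\cap(\{y\}\cup b_1\cup\dots\cup b_{j-1})\neq\emptyset$. We then group the words by the size $\tau=|R(\vec b)|$ of the union of their supports, and set $\tilde\mu_{\vec b}=\frac{\beta^t}{t!}\mu_{\vec b}$ (up to phase), with $E_{\vec b}$ the Pauli product. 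The support bound $|\supp E_{\vec b}|\le |R(\vec b)|$ is then automatic, and the claimed expansion holds as an identity of absolutely convergent series once the weighted bound below is established.

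The core is a weighted version of the coefficient-mass induction. Define
\begin{align}
M_t=\sum_{\vec b\in Q^{(t)}_y}|\mu_{\vec b}|\,e^{\lambda|R(\vec b)|} .
\end{align}
When a new term $a$ is appended, it must touch $R(\vec b)\cup\{y\}$, and $|R(\vec b,a)|\le|R(\vec b)|+|a|$, so the weight grows by at most $e^{\lambda|a|}$. The commutator branch contributes at most $2\sum_{x\in R(\vec b)}\sum_{a\ni x}|\lambda_a|e^{\lambda|a|}$, and the $H_{(y)}$ branch contributes $\sum_{a\ni y}|\lambda_a|e^{\lambda|a|}\le s$. The commutator term is the obstacle: it carries a factor $|R(\vec b)|$, which is unbounded in $t$ because terms are nonlocal, so we cannot bound it by $kts$. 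The fix is to spend the spare exponential decay. Since $\lambda\le\gamma$, we have $e^{\lambda|a|}\le e^{\gamma|a|}e^{-(\gamma-\lambda)|a|}$. Rather than bounding $|R|$ term by term, we absorb the factor $|R(\vec b)|$ against the weight $e^{\lambda|R|}$ by working with a generating function in $\lambda$: differentiating in $\lambda$ produces a factor $|R|$. This gives a differential inequality of the form
\begin{align}
\partial_\beta F(\beta,\lambda)\le s\,(1+F)+2s\,\partial_\lambda F ,
\end{align}
with $F(\beta,\lambda)=\sum_t\frac{\beta^t}{t!}M_t(\lambda)$ and $F(0,\lambda)=0$ (the identity term is excluded).

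We then solve this transport inequality by characteristics, moving along $\lambda(\beta')=\lambda+2s(\beta-\beta')$. This is exactly why the statement requires $\lambda+2s\beta\le\gamma$: the characteristic must stay within the range where the base bound $\sum_{a\ni x}|\lambda_a|e^{\lambda'|a|}\le s$ holds, which needs $\lambda'\le\gamma$. Along the characteristic, $G=1+F$ satisfies $\frac{d}{d\beta'}G\le sG$, so $F\le e^{s\beta}-1=L(\beta)$. For this to go through, the commutator inequality must be stated with $\partial_\lambda$ acting on $F$, not on a bound, and the $\lambda$-derivative of $e^{\lambda|a|}$ must be handled consistently, since $|R(\vec b,a)|$ gains at most $|a|$. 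This bookkeeping, and confirming that the factor 2 from the two sides of the commutator matches the $2s\beta$ shift, is the main step to verify. The final claim, the bound for fixed $\tau\ge1$, then follows by dropping all other nonnegative terms.
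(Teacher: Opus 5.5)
Your proposal is correct and is essentially the paper's proof: the paper uses the same weighted masses $b_t(\lambda)=\sum_{\vec b}|\mu_{\vec b}|e^{\lambda|R(\vec b)|}$, derives $b_{t+1}\le(2s\partial_\lambda+s)b_t$, and integrates $\partial_\beta B-2s\partial_\lambda B\le sB$ for $B=1+F$ along the characteristic $u\mapsto(u,\lambda+2s\beta-2su)$, which is exactly where the condition $0\le\lambda\le\gamma-2s\beta$ enters. The bookkeeping you flag is harmless: one bounds $e^{\lambda|R(\vec b)\cup a|}\le e^{\lambda|R(\vec b)|}e^{\lambda|a|}$ and then $\sum_{a\ni x}|\lambda_a|e^{\lambda|a|}\le s$ (valid for $\lambda\le\gamma$), so $\partial_\lambda$ only ever acts on $e^{\lambda|R(\vec b)|}$.
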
 
\begin{proof}
We start with
\begin{align}
    e^{-\beta H} e^{\beta (H- H_{(y)})} = \sum_{t=0}^\infty \frac{\beta^t}{t!} f_t(H, H_{(y)})
\end{align}
where 
\begin{align}
    f_t(H, H_{(y)}) = \sum_{\vec{b} \in Q^{(t)}} \mu_{\vec{b}} E_{\vec{b}}.
\end{align}
We will focus on quantities of the form
\begin{align}
    B(\beta, \lambda) = \sum_{t \geq 0}\frac{\beta^t}{t!} b_t(\lambda)
\end{align}
where 
\begin{align}
    b_t(\lambda) = \sum_{\vec{b}\in Q_{y}^{(t)}}|\mu_{\vec{b}}|e^{\lambda |R(\vec{b})|}.
\end{align}
Above, $b_t(\lambda)$ should be interpreted as collecting the weighted sum of the terms that appear at the $t$th order of the expansion.  Let us now compute $b_{t+1}(\lambda)$.  We have that
\begin{align}
    b_{t+1}(\lambda) = \sum_{\vec{b}\in Q_{y}^{(t+1)}}|\mu_{\vec{b}}|e^{\lambda |R(\vec{b})|} 
    &\leq \sum_{\vec{b} \in Q_{y}^{(t)}} \left[2\sum_{x\in R(\vec{b})}\sum_{a:x\in\supp(h_a)}|\mu_{\vec{b}}| |\lambda_a|e^{\lambda |R(\vec{b}) \cup A| } +\sum_{A\ni y}|\mu_{\vec{b}}| |\lambda_a|e^{\lambda |R(\vec{b}) \cup A| }\right] \nonumber \\
    &\leq \sum_{\vec{b} \in Q_{y}^{(t)}} |\mu_{\vec{b}}|e^{\lambda |R(\vec{b})|}\left[2|R(\vec{b})|\sup_z\sum_{A\ni z} |\lambda_a| e^{\lambda |a| } +\sup_z\sum_{A\ni z} |\lambda_a| e^{\lambda |a| }\right] \nonumber \\
    &\leq \left(2s\partial_\lambda + s\right)\sum_{\vec{b} \in Q_{y}^{(t)}} |\mu_{\vec{b}}| e^{\lambda |R(\vec{b})|} \nonumber \\
    &=\left(2s\partial_\lambda + s\right) b_t(\lambda).
\end{align}
In order to bound $\sum_{a:x\in\supp(h_a)} |\lambda_a| e^{\lambda |a| }$ we have relied on $\lambda < \gamma$.  Next we get the full recurrence including the summation over $\beta$
\begin{align}
    B(\beta, \lambda) = \sum_{t=0}^\infty \frac{\beta^t}{t!}b_t(\lambda).
\end{align}
Using
\begin{align}
    \partial_\beta B(\beta, \lambda) &= \sum_{t=0}^\infty \frac{t\beta^{t-1}}{t!}b_t(\lambda) = \sum_{t=0}^\infty \frac{\beta^{t}}{t!}b_{t+1}(\lambda) \leq \sum_{t=0}^\infty \frac{\beta^{t}}{t!}\left(2s\partial_\lambda + s\right) b_t(\lambda) = \left(2s\partial_\lambda + s\right) B(\beta, \lambda)
\end{align}
we obtain the recurrence relation
\begin{align}
    \partial_\beta B(\beta, \lambda) - 2s \partial_\lambda B(\beta, \lambda) \leq s B(\beta, \lambda),
\end{align}
where 
\begin{align}
    B(\beta, \lambda) = \sum_{\tau=0}^\infty e^{\lambda \tau} A_\tau (\beta).
\end{align}
Next we show $B(\beta, \lambda) \leq e^{s\beta}$ which will imply the desired bound with $L(\beta) = e^{s \beta} -1$.  In order to derive this expression we will make use of the fact that we know the value at the initial point $B(0, \cdot)$.  Hence if we can control the behavior of the function along some path to $B(\beta, \lambda)$ we can infer its value at the endpoint.  To make the notation clear we will switch the inequality to
\begin{align}
\label{eq:inequality_eta}
    \partial_u B(u, \eta) - 2s \partial_\eta B(u, \eta) \leq sB(u, \eta).
\end{align}
The goal will be to show that for some range of $\lambda$
\begin{align}
    B(\beta, \lambda) \leq e^{s \beta}.
\end{align}
To do this we parametrize a path which varies from $(u, \eta) = (0, \lambda + 2s \beta)$ to $(u, \eta) = (\beta, \lambda)$.  We choose the specific parametrization where $B$ follows the path
\begin{align}
    B(u, \lambda + 2 s \beta - 2 s u)
\end{align}
from $u \in [0, \beta]$.  Evaluating the derivative with respect to $u$ we find
\begin{align}
    \frac{d}{du} B(u, \lambda + 2 s \beta - 2 s u) = \partial_u B(u, \lambda + 2 s \beta - 2 s u) -2s \partial_\eta B(u, \lambda + 2 s \beta - 2 s u) \leq s B(u, \lambda + 2 s \beta - 2 s u)
\end{align}
where we have used \cref{eq:inequality_eta}.  This inequality only holds when 
\begin{align}
    \lambda + 2 s \beta - 2 s u \leq \gamma, \quad \lambda + 2 s \beta - 2 s u\geq 0.
\end{align}
Since $u \in [0, \beta]$ this provides the constraint $0\leq \lambda  \leq \gamma-2s\beta$.  Evaluating the following inequality we find that
\begin{align}
    \frac{d}{d u}\left(e^{-su} B(u, \lambda + 2 s \beta - 2 s u)\right) = e^{-s u}\left(\frac{d}{du}B(u, \lambda + 2 s \beta - 2 s u)-s B(u, \lambda + 2 s \beta - 2 s u)\right)\leq 0,
\end{align}
where we have used that $\frac{d}{du} B(u, \lambda + 2 s \beta - 2 s u) \leq s B(u, \lambda + 2 s \beta - 2 s u)$.  Over the valid range of $u$ values this implies that the function $e^{-su} B(u, \lambda + 2 s \beta - 2 s u)$ is  non-increasing.  Hence 
\begin{align}
    e^{-s u} B(u, \lambda + 2 s \beta - 2 s u) \leq B(0, \lambda + 2 s \beta) = 1
\end{align}
where for the last equality we have used that $B(0, \cdot)$ is the propagator with $u\propto\beta=0$ which only has $A_{\tau}(0) \neq 0$ for $\tau=0$ so $B(0, \cdot) = 1$.  At this point we have
\begin{align}
    B(u, \lambda + 2 s \beta - 2 s u) \leq e^{s u}.
\end{align}
Evaluating at $u=\beta$ we find that 
\begin{align}
    B(\beta, \lambda) \leq e^{s \beta} 
\end{align}
and we recall that this is only valid for $0 \leq \lambda  \leq \gamma - 2 s \beta$.  Since
\begin{align}
    B(\beta, \lambda) \leq e^{s \beta}
\end{align}
we have 
\begin{align}
    \sum_{\tau \geq 1}e^{\lambda \tau} A_{\tau}(\beta) \leq e^{s \beta}-1.
\end{align}
For $\tau \geq 1$ we get 
\begin{align}
    A_\tau(\beta) \leq (e^{s \beta}-1) e^{-\lambda \tau}
\end{align}
as desired.
\end{proof}
\noindent Now we use this lemma to define a simple guarantee on sampling coefficients from the propagator.
\begin{lemma}[Propagator sampling]
\label{lem:sep-propagator-sampling-nonlocal}
    Under the conditions of \Cref{lem:sep-propagator-expansion-nonlocal} there exists a distribution over tuples $(b,E)$  such that
    \begin{align}
        e^{-\beta H} e^{\beta (H - H_{(y)})} = \sum_{i} p_i (I + b_i E_i) = \mathbb{E}_i [I + b_i E_i]
    \end{align}
    where $|b_i| \leq L(\beta)  e^{-\lambda |R(\vec{b})|}\leq L(\beta) e^{-\lambda |\operatorname{supp}(E_i)|}$ under the condition that $0\leq \lambda \leq \gamma - 2 s \beta$.  In this expression $L(\beta) = e^{s\beta}-1$ and $R(\vec{b}) = b_1 \cup \cdots \cup b_t$ is the union of the supports of the terms making up $E$.    
\end{lemma}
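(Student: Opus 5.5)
The plan is to mirror the proof of \Cref{lem:sep-propagator-sampling}, with the weights $e^{\lambda|R(\vec b)|}$ supplied by \Cref{lem:sep-propagator-expansion-nonlocal} playing the role that $(3ks)^t t!$ played there.

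First, write the non-identity part of the expansion from \Cref{lem:sep-propagator-expansion-nonlocal} as $\sum_{\tau\geq1}\sum_{\vec b:|R(\vec b)|=\tau}\tilde\mu_{\vec b}E_{\vec b}$. Each $E_{\vec b}$ is a Pauli string up to a phase, and the $\tau=0$ sector is exactly the identity, because $B(0,\cdot)=1$ and every nonidentity contribution involves at least one term touching $y$, so $|R(\vec b)|\geq1$. Set the normalizing mass
\begin{align}
    Z_\lambda=\sum_{\tau\geq1}\sum_{\vec b:|R(\vec b)|=\tau}|\tilde\mu_{\vec b}|e^{\lambda|R(\vec b)|}\leq L(\beta),
\end{align}
which is finite precisely when $0\leq\lambda\leq\gamma-2s\beta$. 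Sample $\vec b$ with probability
\begin{align}
    p_{\vec b}=\frac{|\tilde\mu_{\vec b}|e^{\lambda|R(\vec b)|}}{L(\beta)},
\end{align}
and assign the remaining probability $1-Z_\lambda/L(\beta)\geq0$ to the trivial outcome $(b,E)=(0,I)$. On the event $\vec b$, output
\begin{align}
    b_{\vec b}=L(\beta)e^{-\lambda|R(\vec b)|},\qquad E=\frac{\tilde\mu_{\vec b}}{|\tilde\mu_{\vec b}|}E_{\vec b}.
\end{align}
Then $\mathbb{E}[I+bE]=I+\sum_{\vec b}\tilde\mu_{\vec b}E_{\vec b}$, which is the propagator, and $|b|\leq L(\beta)e^{-\lambda|R(\vec b)|}$ by construction.

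The second inequality, $e^{-\lambda|R(\vec b)|}\leq e^{-\lambda|\supp(E)|}$, follows because $E_{\vec b}$ is a product of Paulis supported inside $R(\vec b)=b_1\cup\cdots\cup b_t$. Hence $\supp(E)\subseteq R(\vec b)$, and since $\lambda\geq0$ the exponential is monotone in the right direction.

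The step needing the most care is the absolute convergence used to rearrange the series into this mixture. The expansion in \Cref{lem:sep-propagator-expansion-nonlocal} groups the order-$t$ terms by their supports, and I would check that summing $|\tilde\mu_{\vec b}|e^{\lambda|R(\vec b)|}$ over all $\vec b$ is dominated by $B(\beta,\lambda)-1$. This is exactly what the differential-inequality argument bounds, using $|\tilde\mu|\leq\beta^t|\mu|/t!$ together with the triangle inequality. Dominated convergence then justifies exchanging sums, and the remaining steps are bookkeeping.
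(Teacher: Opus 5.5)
Your proposal is correct and is essentially the paper's argument: importance-sample each $\vec b$ with weight $|\tilde\mu_{\vec b}|e^{\lambda|R(\vec b)|}$, rescale so that $|b|\le L(\beta)e^{-\lambda|R(\vec b)|}$, and use $\supp(E)\subseteq R(\vec b)$ for the second inequality. The only difference is cosmetic. The paper normalizes by the exact mass $\sum_{\vec b}\|Y_{\vec b}\|e^{\lambda|R(\vec b)|}\le L(\beta)$, so its $b$ is that mass times $e^{-\lambda|R(\vec b)|}$. You instead normalize by $L(\beta)$ and put the leftover probability on $(0,I)$, which also covers the degenerate case where that mass is zero.
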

\begin{proof}
    Using the shorthand $Y_{\vec{b}} = \frac{\beta^{|\vec{b}|}}{|\vec{b}|!}\mu_{\vec{b}} E_{\vec{b}}$ where $|\vec{b}|$ is the number of terms in $\vec{b}$ we get
    \begin{align}
        I + \sum_{\vec{b}\neq \emptyset} Y_{\vec{b}}  =  \sum_{\vec{b}\neq \emptyset} \left(\frac{\Vert Y_{\vec{b}} \Vert e^{\lambda |R(\vec{b})|}}{\sum_{\vec{b}\neq \emptyset}\Vert Y_{\vec{b}} \Vert e^{\lambda |R(\vec{b})|}}\right)\left(I + e^{-\lambda |R(\vec{b})|}\sum_{\vec{b}\neq \emptyset}\Vert Y_{\vec{b}} \Vert e^{\lambda |R(\vec{b})|}\frac{Y_{\vec{b}}}{\Vert Y_{\vec{b}} \Vert}\right).
    \end{align}
    If we sample with probability $p_i = \frac{\Vert Y_{\vec{b}} \Vert e^{\lambda |R(\vec{b})|}}{\sum_{\vec{b}\neq \emptyset}\Vert Y_{\vec{b}} \Vert e^{\lambda |R(\vec{b})|}}$ the outcome with Pauli $\frac{Y_{\vec{b}}}{\Vert Y_{\vec{b}} \Vert}$ and coefficient $b_i = e^{-\lambda |R(\vec{b})|}\sum_{\vec{b}\neq \emptyset}\Vert Y_{\vec{b}} \Vert e^{\lambda |R(\vec{b})|}$ the expression matches the form of an expectation $\mathbb{E}_i [I + b_i E_i]$.  Since we have that $\sum_{\vec{b}\neq \emptyset}\Vert Y_{\vec{b}} \Vert e^{\lambda |R(\vec{b})|} \leq L(\beta)$ then $|b_i| \leq L(\beta) e^{-\lambda |R(\vec{b})|} \leq L(\beta) e^{-\lambda |\operatorname{supp}(E_i)|}$.  
\end{proof}

\subsection{Pinning procedure}
We give the algorithm we use for pinning in the nonlocal setting with the main difference from \Cref{alg:sep-iterative-pinning} being that we pin site by site instead of term by term.  
\begin{algorithm}[H]
\caption{\textsc{Iterative Pinning}}
\label{alg:sep-iterative-pinning-nonlocal}
\begin{algorithmic}[1]
\REQUIRE Hamiltonian $H = \sum_a \lambda_a P_a$ and sampling primitive satisfying $\mathbb{E}\left[I + b E\right] =  e^{-\eta H^{(S)}} e^{\eta \left(H^{(S)} - H^{(S)}_{(x^*)}\right)}$ with $|b| \leq L(\eta) q^{|\operatorname{supp}(E)|}$.
\ENSURE A configuration $\chi = \{(c_1, X_1), \cdots, (c_j, X_j)\}$ such that $\mathbb{E}[\sigma(\chi)]= e^{-\beta H}$.
\STATE $S = \Lambda$, $\chi = \emptyset$, $l=0$.
\WHILE{$\mathcal{A}^{(S)} \neq \emptyset$}
    \IF{$l \geq 1$ and $\exists \hat{x} \in S \cap \operatorname{supp}(X_l)$}
        \STATE $x^* \gets \hat{x}$ and set $\hat{l} \gets l$
    \ELSE
        \STATE $c_{l+1} \gets 0$, $X_{l+1} \gets I$, add $(c_{l+1}, X_{l+1})$ to $\chi$, $\hat{l} \gets l+1$, choose $a \in \mathcal{A}^{(S)}$ and set $x^*$ to any site in $a$.
    \ENDIF
    \STATE Sample $b_1$, $E_1$ with $\eta \rightarrow \beta/2$, $H \rightarrow H^{(S)}$, and selected site $x^*$ so that $\mathbb{E}[I + b_1 E_1] = e^{-\frac{\beta}{2}H^{(S)}} e^{\frac{\beta}{2}(H^{(S)} - H^{(S)}_{(x^*)})}$.
    \STATE Sample $b_2$, $E_2$ with $\eta \rightarrow \beta/2$, $H \rightarrow H^{(S)}$, and selected site $x^*$ so that $\mathbb{E}[I + b_2 E_2] = e^{-\frac{\beta}{2}H^{(S)}} e^{\frac{\beta}{2}(H^{(S)} - H^{(S)}_{(x^*)})}$.
    \STATE Sample $J\in \{1,2,3,4,5,6,7\}$ with probabilities $p_1 = q$, $p_{\neq 1} = (1-q)/6$.
    \IF{$J=1$}
        \STATE $\hat{c} \gets (1/p_1) c_{\hat{l}}$,  $\hat{X} \gets X_{\hat{l}}$.
    \ELSIF{$J=2$}
        \STATE $\hat{c} \gets (1/p_2)b_1$, $\hat{X} \gets (E_1+E_1^\dagger)/2$.
    \ELSIF{$J=3$}
        \STATE $\hat{c} \gets (1/p_3)b_2$, $\hat{X} \gets (E_2+E_2^\dagger)/2$.
    \ELSIF{$J=4$}
        \STATE $\hat{c} \gets (1/p_4)b_1 c_{\hat{l}}$, $\hat{X} \gets (E_1^\dagger X_{\hat{l}} +X_{\hat{l}} E_1)/2$.
    \ELSIF{$J=5$}
        \STATE $\hat{c} \gets (1/p_5)b_2 c_{\hat{l}}$, $\hat{X} \gets (E_2^\dagger X_{\hat{l}} +X_{\hat{l}} E_2)/2$.
    \ELSIF{$J=6$}
        \STATE $\hat{c} \gets (1/p_6)b_1 b_2$, $\hat{X} \gets (E_1^\dagger E_2 +E_2^\dagger E_1)/2$.
    \ELSE
        \STATE $\hat{c} \gets (1/p_7)b_1 b_2 c_{\hat{l}}$, $\hat{X} \gets (E_2^\dagger X_{\hat{l}} E_1 +E_1^\dagger X_{\hat{l}} E_2)/2$.
    \ENDIF
    \STATE Set $c_{\hat{l}} \gets \hat{c}$, $X_{\hat{l}} \gets \hat{X}$; if $\hat{X}=0$, set $(c_{\hat{l}}, X_{\hat{l}})$ to $(0,I)$.
    \STATE $l \gets \hat{l}$, $S \gets S \setminus \{x^*\}$
\ENDWHILE
\RETURN $\chi$.
\end{algorithmic}
\end{algorithm}

\noindent We now prove that this algorithm works as expected.  First, we prove the following lemma.
\begin{lemma}[Pinning validity]
\label{lem:sep-pinning-validity-nonlocal}
    Every iteration of the algorithm completes and produces a valid configuration if the input is a valid configuration.  Here a valid configuration is one where the $X_i$ making up the configuration are disjoint and each $X_i$ is indeed a Hermitian monomial.  Furthermore if for $i\in [l-1]$ we have $S \cap \operatorname{supp}(X_i) = \emptyset$ then after the loop iteration for $i\in [\hat{l}-1]$ we have $\hat{S} \cap \operatorname{supp}(X_i) = \emptyset$.  Note that if an update gives $\hat{X}=0$ we discard the corresponding factor from the configuration.
\end{lemma}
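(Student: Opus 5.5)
We follow the proof of \Cref{lem:sep-pinning-validity} almost verbatim, replacing term labels by sites: the invariant is now on true supports, $S\cap\operatorname{supp}(X_i)=\emptyset$, rather than on $u_S(X_i)$. We carry out three steps in order: termination, preservation of the ``finished factors are disjoint from $S$'' invariant, and validity of the new configuration.

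\emph{Termination.} We first check that each iteration removes a site of $S$. In the branch $\hat x\in S\cap\operatorname{supp}(X_l)$ this holds by construction. In the other branch we pick $a\in\mathcal{A}^{(S)}$, so $a\subseteq S$ and $x^*\in a\subseteq S$. Hence $|S|$ strictly decreases, and since $\Lambda$ is finite the loop terminates. The sampling primitive is well defined because $H^{(S)}$ again satisfies the exponential-decay hypothesis with the same $s$, so \Cref{lem:sep-propagator-sampling-nonlocal} applies.

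\emph{Invariant.} Assume $S\cap\operatorname{supp}(X_i)=\emptyset$ for all $i<l$.
\begin{itemize}
\item In the first branch only $X_l$ is modified, and $\hat S\subseteq S$. Hence $\hat S\cap\operatorname{supp}(X_i)=\emptyset$ continues to hold for $i<l=\hat l$.
\item In the second branch we have $\hat l=l+1$. The previous last factor $X_l$ satisfies $S\cap\operatorname{supp}(X_l)=\emptyset$, because that is exactly the failure condition of the \textbf{if}. Shrinking $S$ preserves this, so the claim holds for all $i\le l=\hat l-1$.
\end{itemize}

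\emph{Validity.} The key point, and the only place any care is needed, is disjointness of the updated last factor from the earlier ones. The sampled $E_1,E_2$ are products of terms of $H^{(S)}$, so each has true support contained in $S$. In every branch, $\operatorname{supp}(\hat X)\subseteq \operatorname{supp}(X_{\hat l})\cup\operatorname{supp}(E_1)\cup\operatorname{supp}(E_2)$. This set is contained in $\operatorname{supp}(X_{\hat l})\cup S$, and we now check that it misses every $\operatorname{supp}(X_i)$ with $i<\hat l$.
\begin{itemize}
\item $S$ is disjoint from each such $\operatorname{supp}(X_i)$ by the invariant.
\item $\operatorname{supp}(X_{\hat l})$ is disjoint from them by validity of the input configuration, or trivially when $X_{\hat l}=I$ is freshly appended.
\end{itemize}

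A subtlety is that we now use true supports rather than unions of term supports. Cancellations can only shrink a product's support, so the containment above still holds. Finally, $\hat X$ is again a Hermitian monomial: each branch is of the form $(Y+Y^\dagger)/2$ with $Y$ a phased product of Paulis, so \Cref{lem:sep-hermitian-monomial-pauli} gives $\hat X\in\{0,\pm P\}$. The case $\hat X=0$ is handled by resetting the factor to $(0,I)$, which is trivially valid.
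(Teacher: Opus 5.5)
Your proof is correct and follows the paper's argument. Termination holds because each iteration removes some $x^*\in S$. The invariant $S\cap\operatorname{supp}(X_i)=\emptyset$ for finished factors is preserved by the failure condition of the \textbf{if}. The updated last factor stays disjoint from earlier ones because $E_1,E_2$ are built from terms of $H^{(S)}$ and so are supported in $S$. Your extra checks ($x^*\in a\subseteq S$ in the else branch, true supports only shrinking under cancellation, and the $\hat X=0$ reset) make rigorous steps the paper states in passing.
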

\begin{proof}
    At every iteration of the algorithm an $x^* \in S$ is chosen.  In either case $x^* \in S$ so the size of $S$ is strictly decreased and since $S$ is initialized with a finite size the algorithm will eventually terminate.  

    Assume that at the beginning of an iteration of the loop the monomials $X_i$ satisfy that $S \cap \operatorname{supp}(X_i) = \emptyset$ for all $i < l$.  During the loop iteration the first case is that there $\exists \hat{x} \in S \cap \operatorname{supp}(X_l)$ in which case the final monomial is updated but only using terms which have support fully in $S$, while $S \cap \operatorname{supp}(X_i)=\emptyset$ for $i < l$ means the earlier monomials have no terms intersecting $S$.  Alternatively $\nexists \hat{x} \in S \cap \operatorname{supp}(X_l)$ and a new monomial is made with $\hat{l} = l+1$ but the previous last monomial $\hat{S} \cap \operatorname{supp}(X_l)=\emptyset$ since the new monomial was only made because there were no $\hat{x}$ touching $S$.  
    
    Since the initial configuration in one loop iteration is valid and the loop only modifies the last monomial of the configuration then the new configuration is also valid because the invariant guarantees the final monomial is disjoint from the earlier monomials.  Each $X_i$ is by definition a Hermitian monomial since it is computed by starting with a Hermitian monomial and multiplying by the $E_i$ which are products of terms in the Hamiltonian.
\end{proof}
\noindent Next we show that throughout the procedure the expectation of the current monomials is consistent with the Gibbs state
\begin{lemma}[Gibbs state validity]
\label{lem:sep-gibbs-validity-nonlocal}
    The procedure produces a $\hat{S}$ and new configuration $\hat{\chi}$ such that 
    \begin{align}
        \mathbb{E}\left[e^{-\frac{\beta}{2} H^{(\hat S)}} \sigma(\hat\chi) e^{-\frac{\beta}{2} H^{(\hat S)}}\right] = e^{-\frac{\beta}{2} H^{(S)}} \sigma(\chi) e^{-\frac{\beta}{2} H^{(S)}}
    \end{align}
\end{lemma}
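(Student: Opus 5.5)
The plan is to follow the proof of \Cref{lem:sep-gibbs-validity} essentially verbatim, replacing the pinned label $a^*$ by the pinned site $x^*$ and the term-based invariant $u_S(X_i)=\emptyset$ by the support-based invariant $S\cap\supp(X_i)=\emptyset$ of \Cref{lem:sep-pinning-validity-nonlocal}.

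First fix the configuration $\chi$ and the set $S$ at the start of an iteration, and let $\hat l$ be the index of the factor being updated. Write $R=e^{-\frac{\beta}{2}H^{(S)}}e^{\frac{\beta}{2}(H^{(S)}-H^{(S)}_{(x^*)})}$. Note that $H^{(S)}-H^{(S)}_{(x^*)}$ consists exactly of the terms supported in $S\setminus\{x^*\}=\hat S$, so $R=e^{-\frac{\beta}{2}H^{(S)}}e^{\frac{\beta}{2}H^{(\hat S)}}$.

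Next average over the seven branches $J$ with probabilities $p_1=q$ and $p_{J\neq1}=(1-q)/6$. The $1/p_J$ reweighting exactly cancels the branch probabilities, so
\begin{align}
\EE_J[I+\hat c\hat X]=\tfrac12\bigl[(I+b_1E_1)^\dagger(I+c_{\hat l}X_{\hat l})(I+b_2E_2)+(I+b_2E_2)^\dagger(I+c_{\hat l}X_{\hat l})(I+b_1E_1)\bigr].
\end{align}
Here each $b_i$ is real. Taking the expectation over the independent samples $(b_1,E_1)$ and $(b_2,E_2)$, each with mean $R$ and adjoint mean $R^\dagger$, gives
\begin{align}
\mathbb{E}[I+\hat c\hat X]=R^\dagger(I+c_{\hat l}X_{\hat l})R=e^{\frac{\beta}{2}H^{(\hat S)}}e^{-\frac{\beta}{2}H^{(S)}}(I+c_{\hat l}X_{\hat l})e^{-\frac{\beta}{2}H^{(S)}}e^{\frac{\beta}{2}H^{(\hat S)}}.
\end{align}
Here $H^{(S)}$ is Hermitian. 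In the branch where a fresh factor $(0,I)$ is appended, the same identity holds with $c_{\hat l}=0$. Also, replacing $\hat X=0$ by $(0,I)$ leaves the operator $I+\hat c\hat X$ unchanged.

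Then handle the finished factors $j<\hat l$. By \Cref{lem:sep-pinning-validity-nonlocal}, these satisfy $\supp(X_j)\cap S=\emptyset$. Since every term of $H^{(S)}$ and $H^{(\hat S)}$ is supported inside $S$, the factors $I+c_jX_j$ commute with all the exponentials and can be pulled through them. Conjugating the identity above by $e^{-\frac{\beta}{2}H^{(\hat S)}}$ on both sides cancels the outer exponentials, which yields the claimed equality. Iterating from $S=\Lambda$, $\sigma(\chi)=I$ until $\mathcal A^{(S)}=\emptyset$ then gives $e^{-\beta H}=\mathbb{E}[\sigma(\chi)]$ by the tower property.

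The only delicate point is the support bookkeeping. Here $\supp$ means the true support of the operator, not the union of the term supports as in \Cref{def:sep-hermitian-monomial}. So I must check that the invariant used, true support disjoint from $S$, is exactly what commutation with $H^{(S)}$ requires. It is, because an operator acting trivially on $S$ commutes with any operator supported in $S$. This is slightly weaker than the term-based condition but suffices. I also need that each $E_i$, built from terms of $H^{(S)}$, is supported in $S$, so that the updated factor stays disjoint from the finished ones; this is already supplied by \Cref{lem:sep-pinning-validity-nonlocal}.
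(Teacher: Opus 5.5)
Your proposal is correct and is essentially the paper's proof, which likewise reruns the term-pinning Gibbs-validity argument with the pinned site $x^*$ in place of $a^*$ (so $H^{(S)}-H^{(S)}_{(x^*)}=H^{(\hat S)}$, and averaging the seven branches gives $R^\dagger(I+c_{\hat l}X_{\hat l})R$) and then uses the invariant $S\cap\supp(X_j)=\emptyset$ for $j<\hat l$ to commute the finished factors through the exponentials. Your remark that this true-support condition is weaker than the term-based one but still suffices for commutation is exactly the point the paper relies on.
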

\begin{proof}
    Identical to \Cref{lem:sep-gibbs-validity} except with a pinned site instead of a whole term.  The invariant $S \cap \operatorname{supp}(X_j) = \emptyset$ for $j < \hat{l}$ can be invoked to conclude that every earlier $X_j$ commutes with $H^{(S)}$.  
\end{proof}
\noindent Next we show that the coefficient $|c|$ for each monomial stays less than 1.
\begin{lemma}[Coefficient control]
\label{lem:sep-constant-controlled-nonlocal}
    If the sampling primitive satisfies $|b_i| \leq L(\beta/2) q^{|\operatorname{supp}(E_i)|}$ and
    \begin{align}
        0 < q < 1, \quad \frac{6 L(\beta/2)}{1-q}\leq 1, \quad L(\beta/2) \leq 1
    \end{align}
    then at each step of the algorithm each monomial satisfies 
    \begin{align}
        |c_i| \leq q^{|S \cap \operatorname{supp}(X_i)|}.
    \end{align}
    Furthermore at the end of the algorithm $|c_i| \leq 1$ for all monomials.
\end{lemma}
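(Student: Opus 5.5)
We argue by induction over iterations of \Cref{alg:sep-iterative-pinning-nonlocal}, mirroring \Cref{lem:sep-constant-controlled}. The potential $|u_S(X)|$ is replaced by the number of unpinned sites in the true support, $|S\cap\supp(X_i)|$. The invariant $|c_i|\leq q^{|S\cap\supp(X_i)|}$ holds trivially at initialization: a freshly appended factor has $c=0$, and the configuration starts empty. By \Cref{lem:sep-pinning-validity-nonlocal}, every monomial other than the last satisfies $S\cap\supp(X_i)=\emptyset$. Since $S$ only shrinks, these monomials keep their invariant automatically. It therefore suffices to check that, whichever branch $J$ is taken, the updated last factor $(\hat c,\hat X)$ satisfies $|\hat c|\leq q^{|\hat S\cap \supp(\hat X)|}$ with $\hat S=S\setminus\{x^*\}$.

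The case analysis runs as follows.

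\textbf{Branch $J=1$.} Either the factor was just created, so $c=0$, or $x^*$ was chosen in $S\cap\supp(X)$. In the latter case $|\hat S\cap\supp(X)|=|S\cap\supp(X)|-1$, so multiplying by $1/q$ preserves the bound exactly.

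\textbf{Branches $J=2,3,6$.} The new monomial is $(Y+Y^\dagger)/2$ with $Y$ equal to $E_i$ or $E_1^\dagger E_2$. The key fact is that true support is subadditive under products: $\supp(WZ)\subseteq\supp(W)\cup\supp(Z)$. Hence $|\hat S\cap\supp(\hat X)|\leq|\supp(E_1)|+|\supp(E_2)|$, and the worst case is $J=6$. Using $|b_i|\leq L q^{|\supp(E_i)|}$, $L\leq 1$, and $6L/(1-q)\leq 1$, we get
\[
|\hat c|\leq \frac{6}{1-q}L^2 q^{|\supp(E_1)|+|\supp(E_2)|}\leq q^{|\hat S\cap\supp(\hat X)|}.
\]

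\textbf{Branches $J=4,5,7$.} These include the old factor. By subadditivity,
\[
|\hat S\cap\supp(\hat X)|\leq|S\cap\supp(X)|+|\supp(E_1)|+|\supp(E_2)|,
\]
and combining this with the inductive hypothesis on $|c|$ gives the same product bound. The condition $L\leq 1$ is what absorbs the extra factor of $L$ in the two-propagator branches.

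The main subtlety is that the earlier lemma counted term labels, whereas here we count sites of the \emph{true} support. We must ensure the sampling guarantee is stated in the same currency. By \Cref{lem:sep-propagator-sampling-nonlocal}, $|b|\leq L e^{-\lambda|R(\vec b)|}\leq L e^{-\lambda|\supp(E)|}$, so $q=e^{-\lambda}$ works with the true support of $E$. The support of $X$ used in the algorithm's test $S\cap\supp(X_l)$ must also be the true support. A second point is that $\supp(X)$ may contain already-pinned sites. This causes no problem, because $S\cap\supp(\hat X)$ only counts unpinned sites and we bound it by the full supports of the $E_i$, which are contained in $S$ anyway.

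At termination the invariant gives $|c_i|\leq q^{0}=1$ for every monomial.
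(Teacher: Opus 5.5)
Your proposal is correct and takes essentially the same route as the paper. It uses the invariant $|c_i|\le q^{|S\cap\supp(X_i)|}$ and checks it branch by branch, using subadditivity of the true support together with $6L(\beta/2)/(1-q)\le 1$ (and $L(\beta/2)\le 1$ for the two-propagator branches), plus the observation that branch $J=1$ pins a site of $S\cap\supp(X)$. One minor slip: the loop stops when $\mathcal{A}^{(S)}=\emptyset$, not when $S=\emptyset$, so $S\cap\supp(X_i)$ need not be empty at termination; the final bound $|c_i|\le 1$ follows instead from $q^{m}\le 1$ for every $m\ge 0$, as in the paper.
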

\begin{proof}
We consider the following invariant of a given monomial
\begin{align}
    |c| \leq q^{|S \cap \operatorname{supp}(X)|}
\end{align}
where $X$ is the current monomial.  Let $c$, $X$ correspond to the monomial before one round of the loop and $\hat{c}, \hat{X}$ to after the loop.  We would like to show that if the invariant is satisfied before the loop then it will also be satisfied after.  The sampling primitive guarantee gives 
\begin{align}
    |b_1| \leq L(\beta/2) q^{|\operatorname{supp}(E_1)|} , \quad |b_2| \leq L(\beta/2) q^{|\operatorname{supp}(E_2)|}.
\end{align}
Now for each of the possible branching cases we show that the invariant holds after.  
\begin{enumerate}
    \item $p_1$:  If $c = 0$ then $\hat{c}=0$ and the claim holds.  Otherwise we have  
    \begin{align}
        |\hat{c}| = \frac{1}{q} |c| \leq  \frac{1}{q} q^{|S \cap \operatorname{supp}(X)|} \leq q^{|S \cap \operatorname{supp}(X)|-1} \leq q^{|\hat{S} \cap \operatorname{supp}(\hat{X})|}
    \end{align}
    where we have used that $|\hat{S} \cap \operatorname{supp}(\hat{X})|$ must have decreased relative to $|S \cap \operatorname{supp}(X)|$ since $\hat{X} = X$ but at least one site in the overlap was pinned based on how $x^*$ was chosen. 
    \item $p_2$:  Here we sample a new term and we make use of the guarantees from \Cref{lem:sep-propagator-sampling-nonlocal} that the coefficients are exponentially small in the length of the monomial.  We show the bound is met with the following operations 
    \begin{align}
        |\hat{c}| 
        & = \frac{6}{1-q} |b_1| \leq  \frac{6L}{1-q} q^{|\operatorname{supp}(E_1)|} \leq q^{|\operatorname{supp}(E_1)|} \leq q^{|\hat{S} \cap \operatorname{supp}(\hat{X})|}.
    \end{align}
    Here we have used that $\frac{6L}{1-q} \leq 1$.  
    \item $p_3$: Similar to $p_2$ 
    \item $p_4$: This follows with essentially the same manipulations as the $p_2$ case
    \begin{align}
        |\hat{c}|  
        = \frac{6}{1-q} |b_1| |c| 
        \leq  \frac{6L}{1-q} q^{|\operatorname{supp}(E_1)|} q^{|S \cap \operatorname{supp}(X)|} 
        \leq q^{|\operatorname{supp}(E_1)| + |S \cap \operatorname{supp}(X)|}  
        \leq q^{|\hat{S} \cap \operatorname{supp}(\hat{X})|}  
    \end{align}
    where in the last step we have used that $|\hat{S} \cap \operatorname{supp}(\hat{X})| \leq |S \cap \operatorname{supp}(X)|+|\operatorname{supp}(E_1)|$ since $|\operatorname{supp}(\hat{X})|$ may have at most grown by $|\operatorname{supp}(E_1)|$ and $\hat{S}$ can only have decreased relative to $S$.  
    \item $p_5$: Similar to $p_4$
    \item $p_6$: Similar to $p_2$
    \item $p_7$: Similar to $p_4$
\end{enumerate} 
Hence in all cases the invariant is preserved.  Since $q \leq 1$ at the end of the algorithm $|c|\leq 1$ for all monomials.
\end{proof}
\noindent Now we prove a sufficient condition for the algorithm to yield separability.
\begin{lemma}[Pinning procedure]
\label{lem:sep-pinning-nonlocal}
    Consider a sampling procedure, which takes a set of sites $S$ and a site $x^* \in \Lambda$ such that $x^* \in S$, and returns coefficients $b$ and Paulis $E$ such that
    \begin{align}
        \mathbb{E}\left[I + b E\right] =  e^{-\eta H^{(S)}} e^{\eta \left(H^{(S)} - H^{(S)}_{(x^*)}\right)} .
    \end{align}
    If the coefficient $b$ is bounded as 
    \begin{align}
        |b| \leq L(\eta) q^{|\operatorname{supp}(E)|}
    \end{align}
    then the Gibbs state $e^{-\beta H}$ is separable if
    \begin{align}
        0 < q < 1, \quad \frac{6 L(\beta/2)}{1-q} \leq 1.
    \end{align}
\end{lemma}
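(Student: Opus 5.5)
This follows the local-case proof of \Cref{lem:sep-pinning}, with the nonlocal lemmas substituted in. First, by \Cref{lem:sep-pinning-validity-nonlocal} the loop of \Cref{alg:sep-iterative-pinning-nonlocal} terminates, since each iteration removes one site $x^*\in S$. It terminates with $\mathcal{A}^{(S)}=\emptyset$, so $H^{(S)}=0$. The output is a valid configuration: the $X_i$ are pairwise disjoint Hermitian monomials. Iterating \Cref{lem:sep-gibbs-validity-nonlocal} from the initial identity $e^{-\beta H}=e^{-\frac{\beta}{2}H}\,I\,e^{-\frac{\beta}{2}H}$ then gives $e^{-\beta H}=\mathbb{E}[\sigma(\chi)]$ for the final configuration. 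To see this, take expectations step by step using the tower property; the number of steps is at most $|\Lambda|$, so no limit argument is needed.

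Next, each final $X_i$ is $0$ or a signed Pauli string by \Cref{lem:sep-hermitian-monomial-pauli}. Its proof carries over verbatim, since each $E_j$ returned by the sampling primitive is a Pauli product up to a phase. Factors with $\hat X=0$ were already replaced by $(0,I)$.

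For the coefficients we invoke \Cref{lem:sep-constant-controlled-nonlocal}. This lemma additionally asks for $L(\beta/2)\le 1$, but that follows from the hypothesis: $6L(\beta/2)\le 1-q<1$. It gives $|c_i|\le q^{|S\cap\supp(X_i)|}\le 1$ throughout the run, in particular at termination. Hence every term of $\mathbb{E}[\sigma(\chi)]$ is of the form $\bigotimes_l(I+c_lP_l)$ with disjoint Pauli blocks, real $c_l$, and $|c_l|\le 1$. Realness holds because the $b_j$ and branch probabilities are real, and $c_i$ is built from products of these.

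We then apply \Cref{lem:sep-pauli-factor} to conclude that $e^{-\beta H}$ is a positive combination of products of single-qubit stabilizer states, hence separable. The distribution may be countably infinite, since the propagator expansion has unboundedly many orders. We handle this as in the local case: the expectation is a convergent positive combination of separable operators with bounded trace, and $\Sep_n$ is closed, so the conclusion survives.

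The main point to check is that the coefficient invariant $q^{|S\cap\supp(X)|}$ uses true supports. The primitive therefore has to be stated with $|\supp(E)|$ rather than the union $R(\vec b)$. \Cref{lem:sep-propagator-sampling-nonlocal} already gives $e^{-\lambda|R|}\le e^{-\lambda|\supp E|}$ with $q=e^{-\lambda}$. The branch-1 step also needs $|\hat S\cap\supp X|$ to decrease strictly, and it does, because $x^*$ is chosen in $S\cap\supp(X_l)$.
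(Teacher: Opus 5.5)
Your proposal is correct and follows the paper's proof step for step. Termination and $e^{-\beta H}=\mathbb{E}[\sigma(\chi)]$ come from the two validity lemmas, and each $X_i$ is $0$ or a signed Pauli by the Hermitian-monomial lemma. The bound $|c_i|\le 1$ comes from coefficient control, with $L(\beta/2)\le 1$ deduced from $6L(\beta/2)\le 1-q$, and separability then follows from the Pauli-factor lemma. Your extra remarks are valid refinements that the paper leaves implicit: closedness of the separable cone for the countable mixture, realness of the coefficients, and the use of true supports.
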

\begin{proof}
    By \Cref{lem:sep-pinning-validity-nonlocal} and \Cref{lem:sep-gibbs-validity-nonlocal} the algorithm eventually terminates with $H^{(S)}=0$ and a valid configuration $\chi$ satisfying 
    \begin{align}
        e^{-\beta H} = \mathbb{E}[\sigma(\chi)].
    \end{align}
    This means the final distribution over configurations is equivalent to the Gibbs state.  Since the configuration is valid each term consists of a Hermitian monomial.  By \Cref{lem:sep-hermitian-monomial-pauli} each of the Hermitian monomials is either 0 or a signed Pauli.  By \Cref{lem:sep-constant-controlled-nonlocal} the coefficient on each term has magnitude $\leq 1$ since the lemma provides the required constraints on $q$ and $L(\beta/2)$ (since $0<q<1$, $L(\beta/2) \leq 1$ is implied by $\frac{6 L(\beta/2)}{1-q} \leq 1$).  Hence by \Cref{lem:sep-pauli-factor} the state is separable.
\end{proof}
\noindent Lastly we tie all these results together to prove separability of $(s,\gamma)$-nonlocal Pauli Hamiltonians.
\begin{corollary}[Nonlocal separability thresholds]
\label{cor:sep-thresholds-nonlocal}
    The Gibbs state for an $(s,\gamma)$-nonlocal Pauli Hamiltonian at inverse temperature 
    \begin{align}
        \beta \leq \min\left(\frac{\gamma}{16s}, \frac{1}{16s}\right)
    \end{align}
    is separable.
\end{corollary}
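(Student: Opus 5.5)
The corollary follows by feeding the sampling bound of \Cref{lem:sep-propagator-sampling-nonlocal} into the sufficient condition of \Cref{lem:sep-pinning-nonlocal}, so the plan is to pick the free parameter $\lambda$ and the ratio $q$ and check the inequalities.

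First, apply \Cref{lem:sep-propagator-sampling-nonlocal} to $H^{(S)}$ at inverse temperature $\eta=\beta/2$. Restricting to $H^{(S)}$ only removes terms, so the $(s,\gamma)$ condition is preserved. This gives $|b|\le L(\beta/2)\,e^{-\lambda|\supp(E)|}$ with $L(\beta/2)=e^{s\beta/2}-1$, valid whenever $0\le\lambda\le\gamma-s\beta$. Then set $q=e^{-\lambda}$, which matches the form $|b|\le L(\eta)q^{|\supp(E)|}$ required by \Cref{lem:sep-pinning-nonlocal}.

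Second, choose $\lambda$ as large as allowed. Take $\lambda=\min\{\gamma-s\beta,\;1\}$. If $\gamma$ is large, capping at $\lambda=1$ still respects $\lambda\le\gamma-s\beta$, and $\beta\le 1/(16s)$ makes $s\beta$ tiny. Under $\beta\le\gamma/(16s)$ we have $\gamma-s\beta\ge \tfrac{15}{16}\gamma$, so $\lambda\ge\min\{\tfrac{15}{16}\gamma,1\}>0$ and hence $0<q<1$.

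Third, verify $6L(\beta/2)/(1-q)\le1$.
\begin{itemize}
\item Since $s\beta/2\le 1/32$, we get $L(\beta/2)=e^{s\beta/2}-1\le 1.02\cdot s\beta/2$.
\item By convexity, $1-e^{-\lambda}\ge(1-e^{-1})\lambda\ge 0.63\lambda$ for $\lambda\le1$.
\item It therefore suffices that $3.06\,s\beta\le 0.63\lambda$, i.e.\ $s\beta\le 0.2\lambda$.
\end{itemize}
In the case $\lambda=1$ this holds because $s\beta\le 1/16$. In the case $\lambda=\gamma-s\beta\ge\tfrac{15}{16}\gamma$ it holds because $s\beta\le\gamma/16\le 0.2\cdot\tfrac{15}{16}\gamma$. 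The condition $L(\beta/2)\le1$ is then automatic, and \Cref{lem:sep-pinning-nonlocal} gives separability.

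The only delicate point is bookkeeping between the site-by-site pinning and the sampling lemma. \Cref{lem:sep-propagator-sampling-nonlocal} is stated for a site $y$, and it bounds $b$ by $|\supp(E)|$, the true support, which is exactly what the invariant in \Cref{lem:sep-constant-controlled-nonlocal} uses. So no further work is needed beyond the constant check above. If the constants turn out marginally off, adjust the cap on $\lambda$, for instance to $1/2$; the threshold $1/16$ leaves ample slack.
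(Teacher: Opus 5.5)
Your proposal is correct and is essentially the paper's argument: you combine \Cref{lem:sep-propagator-sampling-nonlocal} at $\eta=\beta/2$, which gives $\lambda\le\gamma-s\beta$, $q=e^{-\lambda}$ and $L(\beta/2)=e^{s\beta/2}-1$, with the sufficient condition of \Cref{lem:sep-pinning-nonlocal}, and then check $6(e^{s\beta/2}-1)\le 1-q$. The only difference is cosmetic: you cap $\lambda$ at $1$ and use the chord bound $1-e^{-\lambda}\ge(1-e^{-1})\lambda$ (which comes from concavity of $1-e^{-\lambda}$, not convexity), whereas the paper keeps $\lambda=\gamma-s\beta$ and splits into the cases $\gamma\le1$ and $\gamma\ge1$; your constants check out, so the fallback hedge at the end is unnecessary.
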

\begin{proof}
    Using the condition $0\leq \lambda \leq \gamma - s \beta$ we can invoke \Cref{lem:sep-propagator-sampling-nonlocal}
    for a $(s,\gamma)$-nonlocal Hamiltonian to sample coefficients $b$ and Paulis $E$ such that
    \begin{align}
        \mathbb{E}\left[I + b E\right] =  e^{-\eta H^{(S)}} e^{\eta \left(H^{(S)} - H^{(S)}_{(x^*)}\right)} 
    \end{align}
    with the guarantee that 
    \begin{align}
        |b| \leq L(\beta/2) q^{|\operatorname{supp}(E)|}
    \end{align}
    with $q = e^{-\lambda}$ and $L(\beta/2) = e^{s\beta/2} -1$.
    Hence by \Cref{lem:sep-pinning-nonlocal} the Gibbs state is separable when 
    \begin{align}
        0 < q < 1, \quad \frac{6 L(\beta/2)}{1-q} \leq 1, \quad 0\leq \lambda \leq \gamma - s \beta.
    \end{align}
    We choose $\lambda = \gamma - s\beta$ to maximize $1-q = 1-e^{-(\gamma-s\beta)}$, giving
    \begin{align}
        \frac{6 L(\beta/2)}{1-q} = \frac{6 (e^{s \beta/2} - 1)}{1-q} \leq 1 \implies 6 (e^{s \beta/2}-1) \leq 1-q.
    \end{align}
    We now use $q = e^{- (\gamma - s \beta)}$, where we use the largest allowable value for $\lambda$.  
    \begin{align}
        6 (e^{s \beta/2}-1) \leq 1-e^{- (\gamma - s \beta)}.
    \end{align}
    If we take $s\beta \leq \frac{\gamma}{16}$ and $\gamma \leq 1$ then the inequality is satisfied as we can see by the following relation  
    \begin{align}
       6 (e^{s \beta/2}-1) \leq 6 (e^{\gamma/32}-1) \leq \frac{6\gamma}{16} \leq \frac{15\gamma}{32}\leq 1-e^{-\frac{30\gamma}{32}} = 1-e^{\gamma/16} e^{-\gamma} \leq 1-e^{- (\gamma - s \beta)}
    \end{align}
    where in the second inequality we used that $e^x -1 < 2 x$ for $x<1$ and later we used that $\frac{x}{2} \leq 1- e^{-x}$.  Hence for $\gamma \leq 1$ we have that $\beta \leq \frac{\gamma}{16s}$ satisfies the inequality.  For $\gamma \geq 1$ and $\beta s \leq \frac{1}{16}$
    \begin{align}
        6 (e^{s \beta/2}-1) \leq 6 s\beta \leq \frac{6}{16} < 1-e^{-15/16} \leq 1-e^{-(\gamma-s\beta)}.
    \end{align}
    Hence the simplified bound is 
    \begin{align}
        \beta \leq \min\left(\frac{\gamma}{16s}, \frac{1}{16s}\right).
    \end{align}

\end{proof}

\end{document}